\def\BState{\State\hskip-\ALG@thistlm}
\DeclareFontFamily{U}{mathb}{\hyphenchar\font45}
\DeclareFontShape{U}{mathb}{m}{n}{
      <5> <6> <7> <8> <9> <10> gen * mathb
      <10.95> mathb10 <12> <14.4> <17.28> <20.74> <24.88> mathb12
      }{}
\DeclareSymbolFont{mathb}{U}{mathb}{m}{n}
\DeclareMathSymbol{\sqsubsetneqq}   {3}{mathb}{"90}
\DeclareMathSymbol{\sqsupsetneqq}   {3}{mathb}{"91}
\DeclareMathAlphabet{\mathcalligra}{T1}{calligra}{m}{n}
\DeclareFontShape{T1}{calligra}{m}{n}{<->s*[2.2]callig15}{}
\newcommand{\prea}{\text{$\text{\sf\em PRE}_{1}$}}
\newcommand{\preb}{\text{$\text{\sf\em PRE}_{2}$}}
\newcommand{\tia}{\text{$\text{\sf\em TI}_{1}$}}
\newcommand{\tib}{\text{$\text{\sf\em TI}_{2}$}}
\newcommand{\lb}{\text{$\text{\sf\em LB}$}}
\newcommand{\co}{\text{$\text{\sf\em CO}$}}
\newcommand{\su}{\text{$\text{\sf\em SU}$}}
\newcommand{\ddnew}{\text{$\text{\sf\em DD}$}}
\newcommand{\pred}{\text {\sf\em{pred}\/}}
\newcommand{\boundary}{\text{\sf\em BI}\xspace}
\newcommand{\Start}[1]{\text{$\mbox{\sf\em Start}_{#1}$}}
\newcommand{\End}[1]{\text{$\mbox{\sf\em End}_{#1}$}}
\newcommand{\allocsite}{\text{\sf\em allocsite\/}\xspace}
\newcommand{\origin}{\text{\sf\em origin\/}\xspace}
\newcommand{\lhs}{\text{$\ell$}\xspace}
\newcommand{\rhs}{\text{\scalebox{1.15}{\slshape r}}\xspace\ }
\newcommand{\balpha}{\text{$\overline{\alpha}$}\xspace}
\newcommand{\bbeta}{\text{$\overline{\beta}$}\xspace}
\newcommand{\blhs}{\text{$\overline{\ell}$}\xspace}
\newcommand{\brhs}{\text{$\overline{\scalebox{1.15}{\slshape r}}$}\xspace}
\newcommand{\oldDin}{\text{\sf\em OldDin\/}\xspace}
\newcommand{\oldAout}{\text{\sf\em OldAout\/}\xspace}
\newcommand{\ain}{\text{\sf\em Ain\/}\xspace}
\newcommand{\aout}{\text{\sf\em Aout\/}\xspace}
\newcommand{\din}{\text{\sf\em Din\/}\xspace}
\newcommand{\dout}{\text{\sf\em Dout\/}\xspace}
\newcommand{\Succ}{\text{\sf\em succ\/}\xspace}
\newcommand{\Pred}{\text{\sf\em pred\/}\xspace}
\newcommand{\AKill}{\text{\sf\em Akill\/}\xspace}
\newcommand{\AGen}{\text{\sf\em Agen\/}\xspace}
\newcommand{\dkill}{\text{\sf\em Dkill\/}\xspace}
\newcommand{\dgen}{\text{\sf\em Dgen\/}\xspace}
\newcommand{\ldgen}{\text{\sf\em LDgen\/}\xspace}
\newcommand{\rdgen}{\text{\sf\em RDgen\/}\xspace}
\newcommand{\nodes}{\text{$\Sigma$}\xspace}
\newcommand{\base}{\text{\sf\em base\/}\xspace}
\newcommand{\Null}{\text{\sf\em null\/}\xspace}
\newcommand{\new}{\text{\sf\em new\/}\xspace}
\newcommand{\dn}{\text{\sf\em absName\/}\xspace}
\newcommand{\vars}{\text{$\mathbb{V}$\/}\xspace}
\newcommand{\flds}{\text{$\mathbb{F}$\/}\xspace}
\newcommand{\pflds}{\text{{\sf\em F}$_p$}\xspace}
\newlength{\codeLineLength}
\newcommand{\codeLine}[4]{\darkgray\footnotesize\sf #1&%
\psframebox[framesep=0,fillstyle=solid,fillcolor=#4,
	linestyle=none]{\makebox[\codeLineLength][l]{%
	\rule[-.3em]{0em}{1.em}{\tt%
	 \NL{#2}{\text{#3}}}}}
\\ }
\newtheorem{thm}{Theorem}
\newtheorem{lemma}{Lemma}
\newtheorem{corollary}{Corollary}
\theoremstyle{definition}
\newcommand{\vname}{\text{\sf\em var}\xspace}
\newcommand{\aeset}{\text{$\mathbb{A}$}\xspace}
\newcommand{\pset}{\text{$\mathbb{P}$}\xspace}
\newcommand{\tset}{\text{$\mathbb{T}$}\xspace}
\newcommand{\oneae}{\text{$\alpha$}\xspace}
\newcommand{\addr}{\text{\sf\em isAddr\/}\xspace}
\newcommand{\addrexpr}{\text{\sf\em addrExpr}\xspace}
\newcommand{\addrTaken}{\text{\sf\em addrTaken}\xspace}
\newcommand{\aliasing}{\text{\sf\em A}\xspace}
\newcommand{\typing}{\text{\sf\em T}\xspace}
\newenvironment{example}{\exmp}{$\Box$}
\newcommand{\vecf}{\text{$\mathcal{F}$}\xspace}
\newcommand{\vecd}{\text{$\mathcal{D}$}\xspace}
\newcommand{\veca}{\text{$\mathcal{A}$}\xspace}
\newcommand{\bekic}{\text{Beki\'{c}}\xspace}
\newcommand{\dval}{\text{$\mathbf{D}$}\xspace}
\newcommand{\aval}{\text{$\mathbf{A}$}\xspace}
\newcommand{\mfp}{\text{\em MFP\/}\xspace}
\newcommand{\mop}{\text{\em MoP\/}\xspace}
\newcommand{\paths}[1]{\text{$\Pi_{#1}$}\xspace}
\newcommand{\fpaths}[1]{\text{$\overrightarrow{\Pi}_{#1}$}\xspace}
\newcommand{\bpaths}[1]{\text{$\overleftarrow{\Pi}_{#1}$}\xspace}
\newcommand{\mfpdval}{\text{$\dval^\text{\fp}$}\xspace}
\newcommand{\mfpaval}{\text{$\aval^\text{\fp}$}\xspace}
\newcommand{\mfpdvalseries}{\text{$\demand^\text{\fp}_n$}\xspace}
\newcommand{\mfpavalseries}{\text{$\aliasing^\text{\fp}_n$}\xspace}
\newcommand{\mfpvecd}{\text{$\mathcal{D}^\text{\fp}$}\xspace}
\newcommand{\mfpveca}{\text{$\mathcal{A}^\text{\fp}$}\xspace}
\newcommand{\mopdval}{\text{$\dval^\text{\op}$}\xspace}
\newcommand{\mopaval}{\text{$\aval^\text{\op}$}\xspace}
\newcommand{\mopdvalseries}{\text{$\demand^\text{\op}_n$}\xspace}
\newcommand{\mopavalseries}{\text{$\aliasing^\text{\op}_n$}\xspace}
\newcommand{\mopvecd}{\text{$\mathcal{D}^\text{\op}$}\xspace}
\newcommand{\mopveca}{\text{$\mathcal{A}^\text{\op}$}\xspace}
\newcommand{\eqmfpdval}[1]{\text{$\demand^\text{\fp}_{#1}$}\xspace}
\newcommand{\eqmfpaval}[1]{\text{$\aliasing^\text{\fp}_{#1}$}\xspace}
\newcommand{\eqmoptdval}[2]{\text{$\demand^{\,#1}_{#2}$}\xspace}
\newcommand{\eqmoptaval}[2]{\text{$\aliasing^{#1}_{#2}$}\xspace}
\newcommand{\eqmopdval}[1]{\text{$\demand^\text{op}_{#1}$}\xspace}
\newcommand{\ndd}{\text{$\delta$}\xspace}
\newcommand{\nda}{\text{$\gamma$}\xspace}
\newcommand{\gfunname}{\text{$\mathbf{X}$}\xspace}
\newcommand{\gfun}[2]{\text{$\dfv_{#1}^{\,#2}$}\xspace}
\newcommand{\gfunnode}[1]{\text{$\dfv_{#1}$}\xspace}
\newcommand{\gfunnodein}[2]{\text{\sf\em Xin$_{#1}^{\,#2}$}\xspace}
\newcommand{\gfunnodeout}[2]{\text{\sf\em Xout$_{#1}^{\,#2}$}\xspace}
\newcommand{\val}[1]{\text{$\mathbf{I}_{#1}$}\xspace}
\newcommand{\initval}[1]{\text{${\text{\sf\em I}}_{#1}$}\xspace}
\newcommand{\prog}{\text{$\mathcal{P}$}\xspace}
\newcommand{\tba}{\text{\sf\em tba\/}\xspace}
\newcommand{\asb}{\text{\sf\em asb\/}\xspace}
\newcommand{\id}{\text{\sf\em Id\/}\xspace}
\newcommand{\dd}{\text{\sf\em Dd\/}\xspace}
\newcommand{\jd}{\text{\sf\em Jd\/}\xspace}
\newcommand{\cd}{\text{\sf\em Cd\/}\xspace}
\newcommand{\ex}{\text{\sf\em Ex\/}\xspace}
\newcommand{\cm}{\text{\sf\em cm\/}\xspace}
\newcommand{\fp}{\text{\sf\em fp\/}\xspace}
\newcommand{\op}{\text{\sf\em op\/}\xspace}
\newcommand{\aid}{\text{$\aval^{\id}$}\xspace}
\newcommand{\add}{\text{$\aval^{\dd}$}\xspace}
\newcommand{\aex}{\text{$\aval^{\ex}$}\xspace}
\newcommand{\demand}{\text{\sf\em D}\xspace}
\newcommand{\ddd}{\text{$\dval^{\dd}$}\xspace}
\newcommand{\dcd}{\text{$\dval^{\cd}$}\xspace}
\newcommand{\afun}[2]{\text{$\aliasing_{#1}^{\,#2}$}\xspace}
\newcommand{\dfun}[2]{\text{$\demand_{#1}^{\,#2}$}\xspace}
\newcommand{\jop}{\text{\sf\em jp}\xspace}
\newcommand{\djoin}[2]{\text{$\demand_{#1}$}\xspace}
\newcommand{\afunin}[2]{\text{{\sf\em Ain}$_{#1}^{\;\,#2}$}\xspace}
\newcommand{\afunout}[2]{\text{{\sf\em Aout}$_{#1}^{\;\,#2}$}\xspace}
\newcommand{\dfunout}[2]{\text{{\sf\em Dout}$_{#1}^{\;\,#2}$}\xspace}
 \newcommand{\vecfid}{\text{$\mathcal{F^{\id}}$}\xspace}
 \newcommand{\vecfdd}{\text{$\mathcal{F^{\dd}}$}\xspace}
 \newcommand{\vecfjd}{\text{$\mathcal{F^{\jd}}$}\xspace}
\newcommand{\vecfcd}{\text{$\mathcal{F^{\cd}}$}\xspace}
\newcommand{\vecgex}{\text{$\mathcal{G^{\ex}}$}\xspace}
\newcommand{\restrict}[2]{\text{\sf Restrict ($#1$,$#2$)}} 
\newcommand{\Restrict}{\text{\sf Restrict}\xspace} 
\begin{document}
%\acmJournal{TOPLAS}
% % \acmVolume{9}
% % \acmNumber{4}
% % \acmArticle{39}
% % \acmYear{2010}
% % \acmMonth{3}
% % \copyrightyear{2009}

\editor{}

\title[Demand-driven Alias Analysis]{Demand-driven Alias Analysis : Formalizing Bidirectional Analyses for Soundness and Precision}
%\title{Formalizing Demand-driven Method for Soundness and Precision}

%%\title{Demand-driven Points-to Analysis}
%%\subtitle{Formalizing Bidirectional Analyses for Soundness and Precision}

 \author{Swati Jaiswal}
 \email{swati.j@iitb.ac.in}
 \author{Uday P. Khedker}
 \email{uday@cse.iitb.ac.in}
 \author{Supratik Chakraborty}
 \email{supratik@cse.iitb.ac.in}
%  \affiliation{CSE Department, IIT Bombay}
 
 \affiliation{%
\institution{Indian Institute of Technology Bombay}
%\department{Computer Science and Engineering}
% \streetaddress{Powai}
%\city{Mumbai}
%\state{Maharashtra}
%\postcode{400076}
\country{India}}

% \acmVolume{}
% \acmNumber{}
% \acmArticle{}
% \acmYear{}
% \acmMonth{}

\keywords{}

\begin{CCSXML}
<ccs2012>
<concept>
<concept_id>10003752.10010124.10010138.10010143</concept_id>
<concept_desc>Theory of computation~Program analysis</concept_desc>
<concept_significance>500</concept_significance>
</concept>
<concept>
<concept_id>10011007.10011006.10011008.10011009.10011011</concept_id>
<concept_desc>Software and its engineering~Object oriented languages</concept_desc>
<concept_significance>500</concept_significance>
</concept>
<concept>
<concept_id>10011007.10011006.10011041</concept_id>
<concept_desc>Software and its engineering~Compilers</concept_desc>
<concept_significance>500</concept_significance>
</concept>
<concept>
<concept_id>10011007.10011006.10011041.10011047</concept_id>
<concept_desc>Software and its engineering~Source code generation</concept_desc>
<concept_significance>500</concept_significance>
</concept>
</ccs2012>
\end{CCSXML}

\ccsdesc[500]{Theory of computation~Program analysis}
\ccsdesc[500]{Software and its engineering~Object oriented languages}
\ccsdesc[500]{Software and its engineering~Compilers}
\ccsdesc[500]{Software and its engineering~Source code generation}

%\received{December 2017}
% \received[revised]{March 2009}
% \received[accepted]{June 2009}

\begin{abstract}
A demand-driven approach to program analysis computes only the information that is
needed to serve a target demand. In contrast, an exhaustive approach computes all information in anticipation of
it being used later. 
Demand-driven methods have primarily been viewed as efficient algorithms as they compute only the information that is required 
to meet a given 
set of demands. However, for a given set of demands, they 
are believed to compute the same information that would be computed by the corresponding exhaustive methods. 
We investigate the precision and bidirectional nature of 
demand-driven methods and show that:
\begin{inparaenum}[(a)]
\item demand-driven methods can be formalized inherently as bidirectional data flow analysis, and
\item for some analyses, demand-driven method can compute more precise information than the corresponding
	exhaustive method.
\end{inparaenum}

The formalization as a bidirectional analysis follows because
the demands are propagated against the control flow and the information to satisfy the demands is
propagated along the control flow. 
We extend the formalization of the \emph{Meet Over Paths} solution to bidirectional flows, 
by 
introducing the concept of \emph{qualified control flow paths}
to explicate the forward and backward flows.
This formalization helps us to prove the soundness and precision of our analysis.

The approximation caused by data abstraction used for heap locations
(e.g. allocation-site-based abstraction or type-based abstraction) is a source of imprecision 
in pointer analysis.
Since a demand-driven method computes information for a smaller set of demands,
it has to deal with less imprecision caused by the data abstractions.
This could lead to more precise results than an exhaustive method.
We show that while this is indeed the case for Java, for C/C++, the precision
critically hinges on how indirect assignments are handled. We use this insight and propose a demand-driven alias
analysis that is more precise than an exhaustive analysis for C/C++ too.
We have chosen static resolution of virtual function calls as an application 
to demonstrate the precision gain of our demand-driven alias analysis for C++.

Our measurements show that our
method is more precise and more efficient (for both allocation-site-based and type-based abstractions) than both, the existing demand-driven method, as well as 
the corresponding exhaustive method. This precision is measured in terms of
the number of monomorphic call-sites, 
the number of virtual call edges, and 
the number of class types discovered by the methods. 
\end{abstract}

\maketitle
% \begin{keyword}
%% keywords here, in the form: keyword \sep keyword

%% PACS codes here, in the form: \PACS code \sep code

%% MSC codes here, in the form: \MSC code \sep code
%% or \MSC[2008] code \sep code (2000 is the default)

% \end{keyword}

% \end{frontmatter}

%%\clearpage

%\tableofcontents

\section{Introduction} \label{sec:intro}

Pointer analysis has received a lot of attention
because most analyses and applications 
need to disambiguate indirect manipulation of data and control that arises in the presence of pointers.
Some applications of pointer analysis require high efficiency
whereas some other applications need high precision.
Some applications require information of all pointers 
which requires an \textit{exhaustive method}.
Some other applications need only partial information which can be computed using \textit{demand-driven method}.
This paper investigates two aspects of demand-driven methods 
that do not seem to
have been explored so far.
\begin{itemize}
 \item 
Demand-driven methods compute only the information that is
required to meet a set of demands rather than all possible information. Hence,
they have always been thought of as efficient algorithms.
% when the amount of information to be computed is small.
They have also been known to compute the same information as computed by the corresponding exhaustive methods to serve a demand.
We show that, in the presence of data abstraction, demand-driven methods can compute more precise information than the corresponding
exhaustive methods. In other words, they are both more efficient and more precise than exhaustive methods in some situations.

 \item Demand-driven methods are inherently bidirectional
with demands being raised and propagated against the control flow and the requisite information to
satisfy the demands flowing along the control flow. We formalize them as 
bidirectional analyses by defining a general concept of \emph{meet over paths} (\mop) solution
for bidirectional flows.
\end{itemize}

\subsection{Precision Gain in Demand-driven Methods}
The two dimensions - \textit{precision} and \textit{efficiency} play a major role in selecting the 
right mix of features for any program analysis.
We classify the factors governing precision and efficiency of 
pointer analysis as 
% described in Figure~\ref{} under three broad categories, which are
\begin{inparaenum}[(i)]
 \item control flow abstractions,
 \item data abstractions, and
 \item the quantum of information required.
\end{inparaenum}
% These three categories play an interesting role in governing the precision and efficiency parameters.

\emph{Control flow abstractions} govern the over-approximation introduced by an analysis in the flow of control 
% for the purpose of computing the required information efficiently 
by striking a balance between precision and efficiency.
The intraprocedural control flow is abstracted in terms of
flow sensitivity (honoring control flow and computing distinct information for each control flow point)
or flow insensitivity (ignoring control flow and computing gross information common to all program points).
The interprocedural control flow is abstracted in terms of context sensitivity (distinguishing between
different calling contexts of a procedure) or context insensitivity (treating all calling contexts alike).

\emph{Data abstraction} governs 
the over-approximation introduced by an analysis in order to model objects on the stack or heap.
Generally heap is represented in terms of access paths~\cite{
Landi:1992:SAA:143095.143137,Khedker:2007:HRA:1290520.1290521}, 
allocation-site-based abstraction~\cite{Lhotak:2003:SJP:1765931.1765948, 
	Milanova:2005:POS:1044834.1044835,lhotak, Sridharan:2006:RCP:1133981.1134027},
or type-based abstraction~\cite{Palsberg:1991:OTI:117954.117965, Diwan:2001:UTA:383721.383732,Diwan:1998:TAA:277652.277670}.

The third factor governing the precision and efficiency of pointer 
analysis is the 
\emph{quantum} of the information required from an analysis.
% Some applications require information of all pointers 
% which requires an \textit{exhaustive method}.
% Some other applications need only partial information which can be computed using \textit{demand driven method}.
If the analysis desires information of all the pointers, 
it can be computed using an exhaustive method.
If the information desired by an analysis is sparse, it can be computed using demand-driven method.
Some examples when such a partial information may be required are:
\begin{inparaenum}[\em(a)]
 \item by a client (like user of a debugger or a slicer) 
 ~\cite{Sridharan:2006:RCP:1133981.1134027, Guyer:2003:CPA:1760267.1760284, Sridharan:2005:DPA:1094811.1094817, Yan:2011:DCA:2001420.2001440}, 
 \item by an application (like resolving indirect calls~\cite{Agrawal:2002:EDD:647478.727927,Heintze:2001:DPA:378795.378802,Sridharan:2005:DPA:1094811.1094817},
taint analysis~\cite{spth_et_al:LIPIcs:2016:6116,Arzt:2014:FPC:2594291.2594299,Huang:2016:DSD:2950290.2950348}), or
 \item by an analysis (like compute 
information only for live data~\cite{lfcpa} 
or for an incremental change in the source program~\cite{Saha:2005:IDP:1069774.1069785}).
\end{inparaenum}

% % % {\green
% % % In general, a demand driven analysis is more efficient than an exhaustive analysis except when the demand requires 
% % % computation of the entire information.
% % % In such a case an exhaustive analysis is faster. 
% % % In terms of precision, demand driven analysis is considered to be at most as precise as the exhaustive analysis
% % % % both techniques are considered to compute identical results,
% % % as described in~\cite{Heintze:2001:DPA:378795.378802, Zheng:2008:DAA:1328438.1328464, Duesterwald:1997:PFD:267959.269970,
% % % Agrawal:2002:EDD:647478.727927, Sridharan:2005:DPA:1094811.1094817}.}
%  

The effect of 
control flow abstractions and data abstractions on the precision and efficiency of pointer analysis
has been looked at in great detail in the literature.
While there has been a lot of work on demand-driven methods, almost all of it is primarily motivated by 
the quest for efficiency and precision is
assumed to be equivalent or less than the exhaustive 
counterpart~\cite{Heintze:2001:DPA:378795.378802, Zheng:2008:DAA:1328438.1328464, Duesterwald:1997:PFD:267959.269970,Agrawal:2002:EDD:647478.727927, Sridharan:2005:DPA:1094811.1094817}.

We argue that in the case of pointer
analyses, a demand-driven method could be more precise than
an exhaustive method in the presence of data abstraction. Although this has not been reported before, 
it is easy to see: 
since a demand-driven method computes only the information required to meet the demands, 
the imprecision
caused by data abstraction could also reduce. However, we observe the following subtlety for 
C/C++: in the presence of
indirect assignments, the conventional demand-driven methods (see Sections~\ref{sec:compare:cd:ex} and~\ref{sec:related.work})
fail to benefit from the possibility of reduced imprecision.
We identify the exact cause of this loss of precision
and this guides us to a different nature of raising demands that restores the relative precision gain
of a demand-driven method over an exhaustive method (Example~\ref{exmp:our.speculation} in Section~\ref{sec:key.idea}).

We choose static resolution of virtual function calls as an application and show that our
demand-driven method is also more efficient than the conventional demand-driven method 
apart from being more precise. 

\subsection{Bidirectional Nature of Demand-driven Methods}
At a more general level, we observe that demand-driven methods are inherently bidirectional
with demands being raised and propagated against the control flow and the requisite information to
satisfy the demands flowing along the control flow. Conventionally, these
bidirectional dependencies in demand-driven method have been viewed as a characteristic of an 
efficient algorithm that computes 
information related to the given set of demands.
In other words,
a demand-driven method is seen as two disparate analyses traversing the program in opposite directions
and whose interaction is controlled by an algorithm that examines the information computed by these analyses and
decides the analysis to be invoked on a need basis.
We move this dependency from an algorithm to 
the specification of the analysis. This is achieved by defining the interactions declaratively in terms of
a bidirectional data flow analysis and by formalizing the concept of the \emph{Meet Over Paths} Solution (\mop) 
of general bidirectional analyses. 

We also show that the corresponding \emph{Maximum Fixed Point} 
Solution (\mfp) is a sound approximation of the \mop. Both these have been known for 
unidirectional analyses~\cite{Kildall:1973:UAG:512927.512945,Nielson:1999:PPA:555142,Khedker:2009:DFA:1592955} and have been 
formally defined for them. The \mop for bidirectional analyses has been
formally defined only for bit vector frameworks in the context of partial redundancy elimination~\cite{Khedker:1994:GTB:186025.186043}.
Our formalization is applicable not only to all the demand-driven methods but also
to the liveness-based points-to analysis that explicitly uses
bidirectional flow in the same analysis~\cite{lfcpa}.

\subsection{Duality of Alias and Points-to Analysis}
We view pointer analysis to comprise of two different 
but related analyses that are duals of each other:
points-to analysis and alias analysis. 
If locations are named, which is the case with our data abstractions,
then points-to information can be derived from aliases and vice-versa.
However, in some cases, the alias view of the information is more convenient
whereas in some other cases, the points-to view of the same information is more convenient.

We formulate our analysis as a bidirectional alias analysis where
demands are raised against the control flow and aliases for the demand
raised are propagated along the control flow.
Formulation in terms of alias analysis is done for convenience and
to ease the readability.
It is easy to find the aliases of a pointer using an alias analysis.
Such information can be computed from a points-to graph as well,
 but it is not directly available.
Instead of cluttering the formulation to compute points-to graph
and compute aliases from them, we compute aliases directly.
We view these two analyses as similar, with the difference in 
representation.

Alias pair $(x, \&y)$ is nothing but a points-to edge of the 
form $(x, y)$ (i.e. $x$ points-to $y$).
When it is easy to view an alias pair as a points-to edge
we refer to it in the form of a points-to edge.
While explaining the motivating example, we present points-to graph
as it is more convenient form of representation and 
is very easy to understand.
However, understanding the source of imprecision in terms of aliases is
much easier.
Hence, for convenience, we
use aliases and points-to graph interchangeably.

\subsection{Our Contributions and Organization of the Paper}

The main contributions of this work are:

\begin{itemize}
\item 
%%%{\red We present a demand-driven method for resolving virtual function calls as 
%%%      an alias analysis with data abstraction. }
%%%{\blue
We present a demand-driven alias analysis method with data abstraction for static resolution of virtual function calls.
%%%}
\item We show that a slight shift in the nature of demands raised could increase the precision of
      our analysis significantly for languages like C and C++.
\item We formalize our demand-driven method as a bidirectional data flow analysis.
\item We prove that our analysis is sound and is more precise than the 
exhaustive method and 
the conventional demand-driven methods. 
\item We present compelling empirical evidence to show the precision and efficiency of our method.
\item 
%%%{\red At a more general level, we review all known bidirectional dependencies in data flow analyses and formalize 
%%%      \mop and \mfp for general bidirectional analyses.}
%%%{\blue
We formalize \mop solution for general bidirectional analyses by introducing the concept of \emph{qualified control flow paths}
to explicate the forward and backward flows.
We also establish that \mfp is a sound approximation of \mop for bidirectional analyses.
%%%}
\end{itemize}

Our empirical measurements shows that our proposed demand-driven method
is far efficient as compared to the exhaustive method
with a speedup factor greater than two in most cases.
Our measurements also show the precision of
the conventional demand-driven method and the exhaustive method is identical in terms of
the number of monomorphic call-sites identified, the number of virtual call edges discovered and the class types identified for
the objects used in the program. 
This is in concurrence with our formal proof that the precision of the two methods is
equivalent.

Our method out performs these two methods in terms of precision for both allocation-site-based and type-based abstractions:
The number of monomorphic call-sites discovered by our method is larger, 
the number of virtual call edges discovered in the call graph by our method is smaller and 
the number of class types discovered by our method is far smaller.
It is interesting to note that with type-based abstraction, 
we identify 24\% fewer types in 7 cases with the reduction increasing significantly to
 50\% and 60\%  for \texttt{motti} and \texttt{dealII} programs respectively.
 Similar traits are seen when allocation-site-based abstraction is
used---the reduction in the number of types is 39\% and 49\% 
 for \texttt{gperf} and \texttt{gengetopt} programs respectively.

The rest of the paper is organized as follows:
Section~\ref{sec:causes.of.imprecision} discusses the interplay between data abstraction and demand speculation.
It also presents an overview of our key idea to mitigate the imprecision caused by them. 
Section~\ref{sec:formulation} formulates our proposed demand-driven approach
as a data flow analysis.
Section~\ref{sec:formalizing-bidirectional} presents a big picture view of bidirectional analyses 
and introduces a generalization of control flow paths for bidirectional data flows
which is then used for defining \mfp and \mop for bidirectional data flow analyses.
The subsequent sections then return to our method:
Section~\ref{sec:mop:demand} 
instantiates the general \mfp and \mop to our method.
Section~\ref{sec:soundness-proof} and Section~\ref{sec:precision-proof} 
use the generalization in control flow paths to formally prove the
soundness and precision of our proposed analysis.
%Section~\ref{sec:soundness} discusses the soundness and precision of our analysis.
Section~\ref{sec:measurements} presents the empirical results. 
Section~\ref{sec:related.work} describes the related work.
Section~\ref{sec:conclusions} concludes the paper.

\section{The Roles of Data Abstraction and the Nature of Demands in Pointer Analysis}
\label{sec:causes.of.imprecision}

This section examines the interplay between data abstraction objects and the nature of demands. Our observations
lead to our key ideas that allow us to restore the precision gain of a demand-driven method over
an exhaustive method.

\begin{figure}[!t]
\centering
% \begin{tabular}{c}
\begin{tabular}{@{}c|c@{}}
\begin{tabular}{@{}c@{}}

\setlength{\codeLineLength}{35mm}
	\begin{tabular}{@{}r@{}c}
	\codeLine{}{0}{X **p,\,**q; \rule{0em}{1.25em}}{white}
	\codeLine{}{0}{X *t,\,*x,\,*y,\,*z; \rule{0em}{1.25em}}{white}
\\
	\codeLine{03}{1}{{q = \&z;}}{white}
	\codeLine{04}{1}{{p = \&z;}}{white}
	\codeLine{05}{1}{{x = new X;}}{white}
	\codeLine{}{1}{$\vdots$}{white}
	\codeLine{14}{1}{{y = new X;}}{white}
	\codeLine{15}{1}{{*p = x;}}{white}
	\codeLine{}{1}{$\vdots$}{white}
	\codeLine{23}{1}{{x->f = new Y;}}{white}
	\codeLine{24}{1}{{y->f = new Z;}}{white}
	\codeLine{}{1}{$\vdots$}{white}
	\codeLine{27}{1}{{t = z->f;}}{white}
	\codeLine{28}{1}{{t->vfun ();}}{white}
	
	\end{tabular}
	\\
	(a) Example Program
	\rule{0em}{1.75em}
	\end{tabular}
& %%% 3rd column
\begin{tabular}{@{}l@{}}
 	\begin{pspicture}(0,0)(30,22)
%   	\psframe(0,0)(20,22)
	\psset{nodesep=-1.5}
	
	\putnode{z}{origin}{12}{18}{\pscirclebox[linestyle=none]{$z$}}
	\putnode{p}{z}{-10}{0}{\pscirclebox[linestyle=none]{$p$}}
	\putnode{x}{z}{0}{-7}{\pscirclebox[linestyle=none]{$x$}}
	\putnode{y}{x}{0}{-7}{\pscirclebox[linestyle=none]{$y$}}

	\putnode{A}{x}{10}{0}{\pscirclebox[linestyle=none]{$X$}}
	\putnode{B}{A}{10}{5}{\pscirclebox[linestyle=none]{$Y$}}
	\putnode{C}{A}{10}{-5}{\pscirclebox[linestyle=none]{$Z$}}
	\putnode{t}{A}{20}{0}{\pscirclebox[linestyle=none]{$t$}}
	
	\ncline[nodesep=-1]{->}{p}{z}
	\ncline[nodesep=-1]{->}{x}{A}
	\ncline[nodesep=-1]{->}{y}{A}
	\ncline[nodesep=-1]{->}{z}{A}
	\ncline[nodesep=-1]{->}{A}{B}
	\aput[1pt](.4){$f$}
	\ncline[nodesep=-1]{->}{A}{C}
	\bput[1pt](.4){$f$}
	\ncline[nodesep=-1]{->}{t}{B}
	\ncline[nodesep=-1]{->}{t}{C}
	\end{pspicture} \\
	(b) Demand-driven analysis with\\
	 conventional speculation strategy \rule[-1.5em]{0em}{1em}
	\\\hline
		 	\begin{pspicture}(0,0)(30,22)
%   	\psframe(0,0)(20,22)
	\psset{nodesep=-1.5}
	
	\putnode{z}{origin}{12}{13}{\pscirclebox[linestyle=none]{$z$}}
	\putnode{p}{z}{-10}{5}{\pscirclebox[linestyle=none]{$p$}}
	\putnode{q}{z}{-10}{-5}{\pscirclebox[linestyle=none]{$q$}}
	\putnode{x}{z}{0}{-10}{\pscirclebox[linestyle=none]{$x$}}
	\putnode{t}{x}{10}{0}{\pscirclebox[linestyle=none]{$t$}}
% 	\putnode{y}{x}{0}{-7}{\pscirclebox[linestyle=none]{$y$}}

	\putnode{A}{x}{10}{5}{\pscirclebox[linestyle=none]{$X$}}
	\putnode{B}{A}{10}{0}{\pscirclebox[linestyle=none]{$Y$}}
% 	\putnode{C}{A}{10}{-5}{\pscirclebox[linestyle=none]{$C$}}
	
	\ncline[nodesep=-1]{->}{q}{z}
	\ncline[nodesep=-1]{->}{p}{z}
	\ncline[nodesep=-1]{->}{x}{A}
	\ncline[nodesep=-1]{->}{t}{B}
% 	\ncline[nodesep=-1]{->}{y}{A}
	\ncline[nodesep=-1]{->}{z}{A}
	\ncline[nodesep=-1]{->}{A}{B}
	\aput[1pt](.4){$f$}
% 	\ncline[nodesep=-1]{->}{A}{C}
% 	\bput[1pt](.4){$f$}
	\end{pspicture}
	\\
	(c) Demand-driven analysis with\\ 
	our proposed speculation strategy
% 	Type-based abstraction
\end{tabular}

\end{tabular}
% \end{tabular}
\caption{Points-to graph using type-based abstraction for static resolution of
virtual call at line 28. 
Objects and allocation sites are annotated by their respective 
types in points-to graph shown in (b) \& (c).
Virtual function \emph{vfun} is defined in all the classes $\{X,Y,Z\}$.
Member $f$ is pointer to class $X$ and is declared in class $X$.
Class hierarchy is $X \rightarrow Y \rightarrow Z$.
Detailed working of both the kinds of speculations is described in the Appendix~\ref{detailed-working}.
}
\label{m-eg-type}
\end{figure}

\subsection{Data Abstraction and its Effect on Precision in Pointer Analysis}

A pointer analysis needs to employ data abstraction to represent an unbounded heap 
which is generally represented by
a store-based model or a storeless model~\cite{Kanvar:2016:HAS:2966278.2931098}. 
A storeless model names locations in terms of access paths that are sequences of field names
following a variable. Each of these paths correspond to paths in the memory graph and hence the set of
access paths needs an explicit summarization to be bounded.
A store-based model names memory locations 
using allocation-site-based abstraction
or type-based abstraction. Since this bounds the number of heap locations, it
obviates the need for any other summarization.

We focus on these two abstractions for a store-based model.
An \emph{allocation-site-based abstraction}\footnote{Although the objects on stack are named using variable names (because no
heap allocation statements are involved), we ignore this
minor distinction and continue to call the abstraction as allocation-site-based abstraction.}
uses variable names for objects on stack and allocation site names for objects in 
heap~\cite{Lhotak:2003:SJP:1765931.1765948, Milanova:2005:POS:1044834.1044835,lhotak, Sridharan:2006:RCP:1133981.1134027}.
A \emph{type-based abstraction} names objects in terms of 
types for objects on both stack and heap~\cite{Palsberg:1991:OTI:117954.117965, Diwan:2001:UTA:383721.383732,
 Diwan:1998:TAA:277652.277670}.
In allocation-site-based abstraction, all objects created at the same allocation site are treated alike under the
assumption that they are likely to be used alike. In a type-based abstraction, all objects with the same type are treated alike.
It is easy to see that a type-based abstraction is more imprecise compared to allocation-site-based abstraction; however,
it is far more efficient to compute. It suffices for type dependent clients like call graph construction, virtual call resolution,
may-fail cast etc. as discussed in~\cite{Tan:2017:EPP:3062341.3062360} where its coarseness across different objects of the same type
does not matter.

% % % Let \pset denote the set of all pointer expressions and let \tset denote the set of all types. 
% % % Then, in the presence of data abstraction, we have two interacting relations: 
% % % 
% % % {\blue New :
% % % \begin{itemize}
% % % \item a \emph{points-to} relation \text{$\pointsto \subseteq \pset \times \pset$},
% % % \item a \emph{typing} relation \text{$\typing \subseteq \pset \times \tset$} between pointer expressions and types, and
% % % \item an \emph{aliasing} relation \text{$\aliasing \subseteq \pset \times \pset$}
% % % 		between pointer expressions, where \aliasing is built using the information in \typing.
% % % 
% % % \end{itemize}
% % % }
% % % {\red Old:
% % % \begin{itemize}
% % % \item an \emph{aliasing} relation \text{$\aliasing \subseteq \pset \times \pset$}
% % % 		between pointer expressions, and
% % % \item a \emph{typing} relation \text{$\typing \subseteq \pset \times \tset$} between pointer expressions and types.
% % % \end{itemize}
% % % }
% % % 
% % % Formally, the data abstraction is represented by the influence of \typing on \aliasing.
% % % A key observation that we make is:
% % % \begin{quote}
% % % \typing causes (a sound) over-approximation of \aliasing, which in turn causes (a sound) over-approximation of \typing.
% % % \end{quote}

We consider static resolution of virtual function calls as our application.
A demand for our approach originates at the point of a virtual function call.
For this type-dependent application, we explain the concepts using type-based abstraction to model objects on heap and stack.

% % % \begin{example}
% % % \label{exmp:type.imprecision}
% % % (\emph{Interplay between aliasing and typing}).
% % % Consider the program in Figure~\ref{m-eg-type}(a) for pointer analysis with type based abstraction. 
% % % We have
% % % \[
% % % \{(x, X), (y,X), (x\rightarrow f, Y) (y\rightarrow f, Z) \} \subseteq \typing
% % % \]
% % % indicating that both $x$ and $y$ points to an object of type $X$ 
% % %        but \text{$x\rightarrow f$} points to an object of type $Y$ and
% % %        \text{$y\rightarrow f$} points to an object of type $Z$.
% % % This causes \aliasing to be over-approximated to include \text{$(x,y) \in \aliasing$} implying that
% % % $x$ and $y$ are aliases even when they are not aliases in terms of the address of the objects that they point to.
% % % This causes the following suprious pairs to be included in \typing leading to
% % % imprecision in \typing: \text{$(x\rightarrow f, Z)$}, and \text{$(y\rightarrow f, Y)$}. 
% % % 
% % % It is important to note that this imprecision is not introduced by control flow abstraction but by
% % % data abstraction.  If we can somehow avoid computing the information of $x$ and $y$ together, they will not be considered as aliases
% % % avoiding the imprecision.
% % % \end{example}

\begin{example}
\label{exmp:type.imprecision}
(\emph{Imprecision caused by data abstraction}).
Consider the program in Figure~\ref{m-eg-type}(a) 
for pointer analysis with type-based abstraction.
Both $x$ and $y$ point to an object of type $X$ resulting in
$x$ and $y$ being considered as aliases
even when they are not aliases.
This leads to the following imprecision:
$x\rightarrow f$ may point-to an object of type $Z$ (apart from an
object of type $Y$) and $y\rightarrow f$ may point-to an object of type $Y$ (apart from an object of type $Z$).
It is important to note that this imprecision is not introduced by a control flow abstraction but is purely due to
the modelling of data abstractions.
If we can avoid computing the information of $x$ and $y$ together, they will not be considered as aliases,
avoiding the imprecision.
\end{example}

It is easy to see that data abstraction introduces redundant aliases resulting in imprecision.
Similar imprecision could also be introduced by allocation-site-based abstraction when two pointers point to the same allocation-site,
introducing redundant aliases. A coarser abstraction leads to a larger imprecision.
% % % % Context sensitive heap abstraction~\cite{Sridharan:2006:RCP:1133981.1134027} can be used curb such imprecision.
% % % % In order to avoid imprecision, we wish to circumvent computation of information associated with such pointers simultaneously.

%
% % % Similar imprecision could also be introduced by allocation-site based abstraction when two pointers point to the same allocation-site, 
% % % introducing redundant aliases. A coarser abstraction leads to a larger imprecision.
% % % % Context sensitive heap abstraction~\cite{Sridharan:2006:RCP:1133981.1134027} can be used curb such imprecision.
% % % % In order to avoid imprecision, we wish to circumvent computation of information associated with such pointers simultaneously.

% % {\blue New :
For the above example, we view precision in two different dimensions, which are:
\begin{itemize}
 \item  an \emph{aliasing} relation \text{$\aliasing \subseteq \pset \times \pset$}
		between pointer expressions, and
 \item a \emph{typing} relation \text{$\typing \subseteq \pset \times \tset$} between pointer expressions and types.
\end{itemize}

We explain this by making the following distinction between two kinds of aliases.
Two aliased pointer expressions are \emph{node aliases} when their subexpressions are not aliased; otherwise they are
\emph{link aliases}~\cite{Khedker:2007:HRA:1290520.1290521}. Then,
the data abstraction is represented by its influence on \aliasing and \typing which can be described as
% A key observation that we make is:
\begin{quote}
Data abstraction causes (a sound) over-approximation of \aliasing by introducing redundant node aliases
which leads to (sound) over-approximation in the link aliases.
Such an over-approximation in node aliases does not over-approximate \typing, but
over-approximation in link aliases results in (sound) over-approximation of \typing.
\end{quote}

\begin{example}
\label{exmp:type.imprecision.1}
(\emph{The role of node and link aliasing in imprecision}).
In Example~\ref{exmp:type.imprecision}, \typing contains 
\text{$(x,X)$} and \text{$(y,X)$} which is precise; however it causes 
$x$ and $y$ to be identified as node aliases which is imprecise. This imprecision 
leads to \text{$x\rightarrow f$} and \text{$y\rightarrow f$} being spuriously considered as link aliases.
Since we have \text{$(x\rightarrow f, Y)$} and \text{$(y\rightarrow f, Z)$} in \typing, link aliasing
causes inclusion of \text{$(x\rightarrow f, Z)$} and \text{$(y\rightarrow f, Y)$} too.
This suggests that
both $x\rightarrow f$  and $y\rightarrow f$ 
hold objects of types $Y$ and $Z$ which is imprecise.
\end{example}
% In this case not only imprecise alias relationship is being computed, but imprecise pointer information is also computed.

% % Imprecision in node aliases does not result in imprecision in pointer information.
% % However, imprecision in link aliases results in imprecision in the pointer information.
% % }

\subsection{The Effect of Speculation on the Precision of Demand-driven Pointer Analysis}
\label{sec:speculation}

We first explain the need of speculation and then describe how it causes imprecision in pointer analysis. 
This discussion forms the basis of our key idea in Section~\ref{sec:key.idea}.

\begin{figure}[!t]
\centering
% \begin{tabular}{c}
\begin{tabular}{@{}c|c@{}}
\begin{tabular}{@{}c@{}}
\setlength{\codeLineLength}{25mm}
	\begin{tabular}{@{}r@{}c}
	\codeLine{}{1}{X **p,\,\,*z,\,\,a; }{white}
\\
	\codeLine{01}{1}{{p = \&z;}}{white}
	\codeLine{02}{1}{{*p = \&a}}{white}
	\codeLine{03}{1}{{z->vfun ();}}{white}	
	\end{tabular}
	\end{tabular}
& 
\begin{tabular}{@{}c@{}}
\setlength{\codeLineLength}{33mm}
	\begin{tabular}{@{}r@{}c}
	\codeLine{}{1}{X *x,\,\,*y,\,\,*z,\,\,a; }{white}
\\
	\codeLine{01}{1}{{y = \&a;}}{white}
	\codeLine{02}{1}{{x = \&a;}}{white}
	\codeLine{03}{1}{{y->f = \ldots}}{white}
	\codeLine{04}{1}{{z = x->f;}}{white}	
	\codeLine{05}{1}{{z->vfun ();}}{white}	
	\end{tabular}
	\end{tabular}
\\
(a) & (b)
	\end{tabular}

\caption{The need of speculation for meeting a demand in the presence of indirect pointer assignments 
}
\label{need-for-speculation-eg1}
\end{figure}

%%\subsubsection{The Need for Speculation}

A demand-driven approach seeks to serve a target demand by computing as little information as possible
but it needs to raise further demands internally to meet the target demands.
Most of the internal demands are a consequence of an assignment statement that creates
aliases of the target or other internal demands. However, in some cases
in C/C++, a demand may have to be raised in order to \emph{find}
an alias of the target demand and not because an alias of the target demand
has been found. We call such a demand, a \emph{speculated} demand.
It is possible that eventually such a demand may turn out to be
irrelevant; worse still, it may cause imprecision because of data abstraction. 
This speculation
is required due to the use of the \emph{address-of} operator `\&'.

\begin{example}
\label{exmp:demand.imprecision.1}
\emph{(Conventional speculation to ensure soundness)}.
For the program in Figure~\ref{need-for-speculation-eg1}(a), 
raising a demand for $z$ is not sufficient to identify its aliases because there is
no assignment in which $z$ appears on LHS or RHS. Further, the demand for $z$ cannot be propagated across 
indirect assignments (such as statement 02) because such an assignment could have
a side-effect of defining $z$ (which is indeed the case in our example).
In order to handle such a situation,
the conventional speculation strategy
%%\footnote{%
%%See~\cite{Heintze:2001:DPA:378795.378802,lfcpa,Hirzel:2002:UTL:586088.586089} for the conventional strategy albeit in a different context.}
kills the demand propagation of $z$ and generates a demand for $p$ at line 02
to ensure soundness.

For the program in Figure~\ref{need-for-speculation-eg1}(b), we wish to identify the pointees of $x\rightarrow f$
in order to resolve the virtual function call at line 05.
The fact that $(x,y) \in \aliasing$ will not be discovered by raising demand for $x$ alone
because the alias is not created by a direct assignment between $x$ and $y$ but by assigning $\&a$ to both $x$ and
$y$ separately on lines 01 and 02.
In the absence of alias \text{$(x,y)$}, the effect of the indirect assignment of $x\rightarrow f$ in line 03
cannot be incorporated.
Thus, it is essential to 
perform some form of speculation for soundness.
The conventional speculation kills the demand for $x\rightarrow f$ and generates a demand for $y$. 
The demands for $x$ and $y$ help identify the alias relationship between them, thereby meeting
the demand of $x\rightarrow f$ soundly.
\end{example}

In the absence of `\&' operator in Java, no speculation is required
making a demand-driven method for Java more precise than an exhaustive method.

%%\subsubsection{Conventional Speculation Retains the Imprecision Caused by Data Abstraction}

The conventional speculation raises an internal demand at each indirect assignment. Some of these demands could be
irrelevant thereby leading to
an avoidable loss of precision caused by data abstraction.

\begin{example}
\label{exmp:demand.imprecision.2}
(\emph{Conventional speculation causes imprecision}).
Consider the program in Figure~\ref{m-eg-type}(a).
In order to resolve the virtual function call at line 28, we wish to find out the pointees of $t$
which in turn depends on the pointees of $z\rightarrow f$ at line 27.
We thus raise demand for $z$ and $z\rightarrow f$.
Conventional speculation will in turn raise demand for $y$, $x$ and $p$ because of the indirect assignment
statements at line 24, 23 and 15 respectively.
Raising demand for $x$ and $y$ simultaneously introduces redundant node alias between them, 
which leads to imprecision in the link aliases as discussed in Example~\ref{exmp:type.imprecision.1}.
% {\blue New : due to the interplay between alias and typing relation as discussed in Example~\ref{exmp:demand.imprecision.1}.}
% {\red Old : caused by the data abstraction.}
This further contributes to the imprecision by introducing aliasing between
$x\rightarrow f$ and $y\rightarrow f$. As a consequence, 
$x\rightarrow f$ spuriously points-to $Z$ and $y\rightarrow f$ spuriously points-to $Y$ as 
shown in the Figure~\ref{m-eg-type}(b).
\end{example}

\subsection{Our Key Idea}
\label{sec:key.idea}

Note that in the above examples, data abstraction introduced a spurious {node} alias between two variables which caused
spurious {link} aliases between their fields.
In order to eliminate the imprecision in link aliases, node aliases should be computed precisely.
Redundant node aliases can be eliminated if a redundant speculation at the indirect assignments can be avoided.
This can be achieved if we know the alias relationship and raise demand only when required.

Consider the situation when a demand for $z$ is raised. Such a demand seeks the pointees of $z$
which can be identified by taking into account the effect of the statements $z = \ldots$ or $\ldots = z$.
Since the demand $z$ does not seek its pointers,
 it will miss on capturing the effect of the statement $\ldots = \&z$.
The effect of such a statement, if required, can be captured by raising speculated demands for indirect assignment statements.

We propose an alternative speculation strategy where we additionally raise a demand for the address-of a variable
involved in the target demand.

\begin{example}
\label{exmp:our.speculation}
(\emph{Raising demand for address-of a variable leads to precision)}.
For the program in Example~(\ref{exmp:demand.imprecision.2}) we wish to identify the alias relation $(p, \&z)$.
Conventional speculation achieves it by raising demand for $p$.
We propose an alternative speculation by raising demand for $\&z$. 
Demand for $\&z$ will identify the alias relation $(p, \&z)$ without having to raise the demand for $p$
because a demand for $\&z$ is unaffected by an indirect assignment.
Thus in order to identify an alias relation, raising demand for either value of the alias pair should suffice.

It is easy to see that if there is an indirect assignment statement which influences our target demand, 
it will be captured by raising demand for address-of a variable involved in meeting the target demand;
invalidating the need to raise demand at each indirect assignment statement.

For the program in Example~(\ref{exmp:demand.imprecision.2}),
we raise a demand for $\&z$ apart from $z$ thereby seeking to find pointers to $z$.
This captures the effect of statement 04 without having to raise the demand at 
indirect assignments (statements 15, 23, and 24) thereby 
eliminating the speculation of the conventional approach and yet finding that
$p$ points-to $z$. Thus we avoid computing the pointer information for $y$ thereby eliminating the imprecision
and computing precise pointer information as shown in Figure~\ref{m-eg-type}(c).
Demand for $\&z$ also identifies that $q$ points-to $z$ at line 03.
This does not over-approximate \typing as 
% Note that the demand for $\&z$ does not cause any imprecision in \typing because 
$\&z$ seeks for a pointer
and not an object.
It also does not over-approximate \aliasing
as $p$ aliased to $q$ is a precise aliasing relation.
However, it is a redundant information as manipulation to $p$ or its further links does not take place through $q$.
\footnote{We propose a solution to eliminate computation of such redundant information in Section~\ref{sec:seek-pointers}.}
\end{example}

Both speculations compute sound results but our speculation is more precise than 
the conventional speculation in terms of \typing because:
% % % \UKdel{
% % % \begin{quote}
% % % Our speculation eliminates the over-approximation in \typing that
% % % would be caused by the conventional speculation when 
% % % two pointers are  not aliased in terms of the addreses they store, but get
% % % alised because of data abstraction.
% % % \end{quote}
% % % 
% % % 
% % % % % % {\red Old:
% % % % % % \begin{example}
% % % % % % (\emph{Raising demand for address of a variable leads to precision)}.
% % % % % % For the program in Example~(\ref{exmp:demand.imprecision.2}),
% % % % % % we raise a demand for $\&z$ apart from $z$ thereby seeking to find pointers to $z$.
% % % % % % This captures the effect of statement 04 without having to raise demand at 
% % % % % % indirect assignments (statements 15, 23, and 24) thereby 
% % % % % % eliminating the speculation of the conventional approach and yet finding that
% % % % % % $p$ points-to $z$. Thus we avoid computing the pointer information for $y$ thereby eliminating the imprecision
% % % % % % and computing precise pointer information as shown in Figure~\ref{m-eg-type}(c).
% % % % % % Demand for $\&z$ also identifies that $q$ points-to $z$ at line 03.
% % % % % % Note that the demand for $\&z$ does not cause any imprecision because of type based abstraction because $\&z$ seeks for a pointer
% % % % % % and not an object.
% % % % % % \end{example}
% % % % % % }
% % % 
% % % % % % Both speculations compute sound results but our speculation is more precise than 
% % % % % % the conventional speculation because:
% % % Alternatively, we could describe the difference in the two methods as
% % % }
\begin{quote}
Our speculation raises a demand seeking the pointer of a variable whereas the conventional speculation raises a demand
seeking the pointee of a (pointer) variable. The main difference between the two is that a pointer \emph{cannot} be an object
whereas the latter could be an object. Thus our speculation avoids the imprecision caused by type-based abstraction
while the conventional speculation cannot do so.
\end{quote}

Our proposed speculation strategy for demand-driven method is more precise as compared to 
\begin{inparaenum}[(i)]
\item exhaustive method and 
\item demand-driven method with the conventional speculation.
\end{inparaenum}
In the absence of `\&' operator in Java, speculation is not required.
In general, demand-driven method for Java is more precise than the exhaustive method. We prove this formally in Section~\ref{sec:precision-proof}.

\section{Formulating Demand-driven Method With Improved Speculation Strategy}\label{sec:formulation}
Our proposed demand-driven method is bidirectional: demands are raised in a backward flow whereas aliases of the demands raised are computed
in a forward analysis---this may in turn require raising more demands, and the process continues until no further demands
are raised or no further aliases are computed.

\subsection{Explanation of Basic Concepts}
We discuss the concept of abstract name and aliases computed by
our analysis.

\subsubsection*{Access expressions}

Let \vars be the set of variables, \text{$\pset\subseteq \vars$} be the
set of pointer variables, and \tset be the set of types (which is the set of classes for our application) involved in
class hierarchies containing virtual functions. Let \flds be the set of
field members of classes in \tset.
% and \text{$\pflds \subseteq \flds$}
% be the set of pointer members. 
We assume the program to be in 3-address code and the assignment statements consist 
of  \emph{access expressions} appearing in the left hand side \lhs and the right hand side
\rhs.   Let \text{$x \in \vars$},
\text{$y \in \vars$},
\text{$f \in \flds$}, and
\text{$\tau \in \tset$}. Then, \lhs and \rhs{} are defined as:
\begin{align*}
\lhs & :=  x \mid *x \mid x\rightarrow f \mid x.f %\ \mid \&x
\\
\rhs & :=  y \mid *y \mid y\rightarrow f \mid y.f \mid \&y \mid \new\ \tau \mid \Null
\end{align*}

We restrict ourselves to the
access expressions appearing in the program. 
% % {\green We call the access expressions of the form
% % $x$, \text{$\new\ \tau$}, and \Null as the \emph{primary} access expressions;
% % access expressions of the form  $x\rightarrow f$, $*x$, $x.f$, or $\&x$ are called the
% % \emph{secondary} access expressions.} 
We use the following two functions that extract
parts of an access expression $\oneae$. For mathematical convenience, they
compute either a singleton set or $\emptyset$.

\begin{itemize}
\item Function $\vname(\oneae)$ identifies the variable in $\oneae$.
	For access expressions $x$, $x\rightarrow f$, $*x$, $x.f$, and $\&x$,
	$\vname(\oneae)$ is $\{x\}$; for \text{$\new\ \tau$} and \Null, it is $\emptyset$.

\item Function $\base(\oneae)$  identifies the pointer variable dereferenced by
      $\oneae$.  For access expressions $x\rightarrow f$ and $*x$,
      $\base(\oneae)$ is $\{x\}$; for $x$, $\&x$, $x.f$, \text{$\new\ \tau$} and \Null, it is $\emptyset$.
\end{itemize}
Besides, predicate $\addr(\oneae)$ holds if $\oneae$ is of the form $\&x$.
Predicate $\addrTaken(\oneae)$ holds if $\oneae$ which is of the form $\&x$ occurs in the program.

\subsubsection*{Aliases}

% % % % % We  abstract memory in terms of program variables (including temporaries) and allocation sites using
% % % % % a store-based abstraction~\cite{Kanvar:2016:HAS:2966278.2931098} rather than a storeless abstraction. In a storeless
% % % % % abstraction, a linked list rooted at $x$ along a pointer chain on field $f$ would be viewed as memory locations 
% % % % % that are named $x$, 
% % % % % $x\rightarrow f$,
% % % % % $x\rightarrow f\rightarrow f$, etc. The abstraction could use a summarization technique and name the
% % % % % first node as $x$ and all remaining nodes summarized into a single abstract node $x\rightarrow f^+$.
% % % % % This requires constructing \emph{access paths} that go beyond the access expressions appearing in the 
% % % % % program.
% % % % % 
% % % % % A store-based model represents memory locations under alocation-site based abstraction in terms of allocation sites.
% % % % % Assuming that the allocation site of the header node is $s_1$ while those of the others are $s_2$, the resulting
% % % % % abstract nodes are named by the access paths $x$ and $x\rightarrow f$ that coincide with the access expressions
% % % % % appearing in the program. This obviates the need of creating new access paths and summarizing them.
% % % % % This property continues to hold when we abstract the memory further in terms of types.

We compute aliases of access expressions to find out all possible
ways of accessing a location using the access expressions appearing in the program. Let \aeset
denote the set of all possible access expressions that could occur in a three-address version of the program. Clearly, \aeset is
finite. We use two  different views of aliases that are semantically equivalent: First, aliasing is a relation
\text{$\aliasing \subseteq \aeset \times \aeset$}. In this case, aliasing is a set of alias pairs of access expressions of the
kind \text{$(\oneae_1,\oneae_2)$}. Alternatively, and sometimes this view is more convenient, aliasing
is a function \text{$\aliasing: \aeset \mapsto 2^\aeset$}. In this case, we can identify all access expressions that are aliased to 
a given access expressions. We use the following notation with an alias relation $\aliasing$:
\begin{align}
\aliasing(\oneae_1) & =
		\left\{ \oneae_2 \mid (\oneae_1,\oneae_2) \in \aliasing, \oneae_2 \in \aeset \right\}
		\label{eq:aliasclosure1}
	\\
\aliasing(X) & =
		\left\{ \oneae_2 \mid (\oneae_1,\oneae_2) \in \aliasing, \oneae_1 \in X, \oneae_2 \in \aeset \right\}
		\label{eq:aliasclosure2}
\end{align}

We do not compute all possible aliases. We compute aliases on demand where demands for access expressions are raised
in terms of abstract names.

\subsubsection*{Abstract name}
\begin{table*}
\caption{Abstract names for different access expressions}
\label{table-dn}
\footnotesize
 \begin{tabular}{|c|c|c|c|c|c|c|c|c|}
\hline
$\alpha$	
% 	& p
	& $x$
	& $\&x$
	& $\&a$
	& $*y$
% 	& *x
	& $x\rightarrow f$
	& $a.f$
	& new $\tau$
	& null
		\\ \hline\hline

$\dn (\oneae, \aliasing)_{\text{\em\sf asb}}$
\rule[-.6em]{0em}{1.6em}
% 	& \{ p\}
	& $\{ x\}$
	& $\{ \&x\}$
	& $\{ \&a\}$
	& $\{ x\mid (y, \&x) \in \aliasing\}$
	& $\{ a.f\mid (x, \&a) \in \aliasing\}$
	& $\{a.f\}$
	& $\{\&\allocsite\}$
	& $\emptyset$
		\\ \hline
$\dn (\oneae, \aliasing)_{\text{\em\sf tba}}$
\rule[-.6em]{0em}{1.6em}
% 	& \{ p\}
	& $\{ x\}$ 
	& $\{ \&x\}$
	& $\{\&\tau\}$
	& $\{x \mid (y, \&x) \in \aliasing\}$
% 	& {\red \{\&\tau \mid (x, \&\tau) \in A \}}
	& $\{ \tau.f\mid (x, \&\tau) \in \aliasing\}$ 
	& $\{\tau.f\}$
	& $\{\&\tau\}$
	& $\emptyset$
		\\ \hline
 
 \end{tabular}

\end{table*}
We compute demand in terms of abstract names. 
The abstract names of an access expression  $\alpha$
are normalized access expressions obtained by 
\begin{itemize}
\item eliminating pointer indirections ($*$ and $\rightarrow$) in $\alpha$ by
	their pointee variables in
	\aliasing, and 
\item replacing the variables by 
% the corresponding objects (in case of allocation-site based abstraction) or 
its type in case of type-based abstraction.
\end{itemize}

Let $x$ be a pointer to an object, $y$ be a pointer to a pointer to an object and $a$ be an object of type $\tau$.
Abstract names for different access expressions is shown in Table~\ref{table-dn}.
Entry in row $\dn (\oneae, \aliasing)_{\text{\em\sf asb}}$ depicts the abstract names computed using allocation-site-based abstraction and 
$\dn (\oneae, \aliasing)_{\text{\em\sf tba}}$ depicts the abstract names computed using type-based abstraction.
Abstract name for access expression $*x$ will never be encountered.
This is because program will model such a use by copy of an object which copies all the members of one object to respective members of another object.

\algnewcommand\algorithmicforeach{\textbf{for each}}
\algdef{S}[FOR]{ForEach}[1]{\algorithmicforeach\ #1\ \algorithmicdo}

%\begin{figure}
\begin{algorithm}[t]
%\caption{Worklist based algorithm to perform bidirectional demand-alias computation}\label{algo:id}
\begin{algorithmic}[1]
\Procedure{\id}{}
\ForEach {node $n$ in program \prog}
\State Set $\din_n, \dout_n, \ain_n, \aout_n$ as $\emptyset$
\EndFor
\State $\text{Dworklist} = \{n \mid n\in \origin\}$
\State $\text{Aworklist} = \{\text{startNode} \}$
% \State $change = true$
\While {Dworklist is not empty or Aworklist is not empty}
% \State $change = false$
\While {Dworklist is not empty}
\State Select a node $n$ from Dworklist
\State $\oldDin_n = \din_n$
\State Compute $\dout_n$, $\dgen_n$, $\dkill_n$ and $\din_n$ using Equations~\ref{eq:dout},~\ref{eq:dgen},~\ref{eq:dkill},~\ref{eq:din}.
\If {$\oldDin_n != \din_n$}
\State add predecessors of $n$ to Dworklist and Aworklist
% \State change = true
\EndIf
\EndWhile
\While {Aworklist is not empty}
\State Select a node $n$ from Aworklist
\State $\oldAout_n = \aout_n$
% \State $OldAin_n = Ain_n$
\State Compute $\ain_n$, $\AGen_n$, $\AKill_n$ and $\aout_n$ using Equations~\ref{eq:ain},~\ref{eq:agen},~\ref{eq:akill},~\ref{eq:aout}.
% \State Compute $DGen_n$ using Equation~\ref{eq:dgen}.

% \If {$OldAin_n != Ain_n$}
% \State add n to Dworklist
% % \State change = true
% % \State break
% \EndIf

\If {$\oldAout_n != \aout_n$}
\State add successors of $n$ to Aworklist and Dworklist
% \State change = true
\EndIf
\EndWhile
\EndWhile
\EndProcedure
\end{algorithmic}
%  \captionof{figure}{Worklist based demand-driven alias analysis algorithm with improved speculation (\id)}\label{algo:id}
\caption{Worklist based demand-driven alias analysis algorithm with improved speculation (\id)}\label{algo:id}
\end{algorithm}
%\end{figure}

\subsection{Data Flow Equations for Intraprocedural Version}

%As is customary, we present the analysis at the intraprocedural level
%by ignoring function calls except the virtual function call statements
%contained in 
Let the virtual call statements be recorded in the set \origin\footnote{
The nature of statements recorded in the set \origin is 
governed by the application. The set \origin can be suitably 
redefined for any other demand-driven application.
}.	
% We assume that the statements are in 3-address code format.
% 
% 
Our data flow equations compute
the following for each statement \text{$n\!: \lhs_n = \rhs_n$}:
\begin{inparaenum}[(i)]
\item the demands $(\din_n/\dout_n)$, and
% \item the object store $O$, and 
% \item the points-to-class graph $(\gin_n/\gout_n)$.
\item the alias relationships $(\ain_n/\aout_n)$.
\end{inparaenum}
The equations have a bidirectional dependency because of the dependence of $\din_n/\dout_n$ (Equations~\ref{eq:din} and~\ref{eq:dout}) on 
$\ain_n/$ $\aout_n$  (Equations~\ref{eq:ain} and~\ref{eq:aout}).

\Start{p} and \End{p} denote the entry and exit nodes of procedure
$p$. At the intraprocedural level, the boundary information
\boundary associated with these nodes  is $\emptyset$. 
The changes for interprocedural propagation is described towards the end of this section.

The algorithm to perform our proposed demand-driven method
with improved speculation (henceforth denoted as \id)
is presented in Algorithm~\ref{algo:id}.
The algorithm terminates when simultaneous fixed point computation of both
the aliases and demands are reached.
The inner loops represent individual fixed point computations of demands and 
aliases for a round 
of mutual dependence between them.
The first inner \emph{while} loop represents fixed point computation
of demands using aliases from the previous round and 
the second inner \emph{while} loop represents fixed point computation 
of aliases using demands from the same round.
This is more formally represented and proved to be the \mfp solution
 in Lemma~\ref{lemma:mfp}
 in Section~\ref{sec:soundness-proof}.

\subsubsection*{Computing the demand.}
The equations for $\din_n/\dout_n$ are backward data flow equations which 
raises demands in the form of abstract name. 
We thus transform $\lhs$ and $\rhs$ of statement $n$ to its appropriate abstract name form as
 \begin{align*}
 \blhs_n & = \dn(\lhs_n, \ain_n)
 \\
 \brhs_n & = \dn(\rhs_n, \ain_n)
 \end{align*}
 
 Complete demand at a program point is computed with the help of the alias closure (Equation~\ref{eq:aliasclosure2})
 for the demand stored in $\din_n/\dout_n$.
 Consider $\alpha \in \dout_n$ and an alias $(\alpha, \beta) \in \aout_n$, then alias closure of the demand will 
 identify $\{\alpha, \beta\} \subseteq \dout'_n$.
 We use the below notation to denote the complete demand computed by taking alias closure at a program point.
 \begin{align*}
%  \adin_n & = \ain_n(\din_n)
%  \\
 \dout'_n & = \aout_n(\dout_n)
 \end{align*}
 
 Equation~\ref{eq:dgen} represents the demand generated for statement $n$.
 When $\blhs_n$ belongs to the demand raised at out of statement $n$, we raise demand for $\rhs_n$ and 
 when $\brhs_n$ belongs to the demand raised at out of statement $n$, we raise demand for $\lhs_n$.
 This gives rise to four cases while computing $\dgen_n$ which are,
 \begin{inparaenum}[\em(a)] 
 \item when both $\lhs$ and $\rhs$ belongs to the demand at out,
 \item when only $\lhs$ belongs to the demand at out,
 \item when only $\rhs$ belongs to the demand at out, or
 \item when neither $\lhs$ nor $\rhs$ belongs to the demand at out.                           
 \end{inparaenum}
Also, demand needs to be raised at the point of virtual function call recorded in the set \origin.
This condition is combined with case (b) of the $\dgen_n$ Equation~\ref{eq:dgen}. 
We perform weak update due to the use of an abstraction to model objects on heap.
We kill demand only when $\lhs_n$ is of the form $x$. 

We overload $\subseteq$ operator such that $\blhs_n \subseteq \dout'_n$ implies $\blhs_n \neq \emptyset \wedge \blhs_n \subseteq \dout'_n$.
 \begin{align}
\din_n & =  (\dout_n - \dkill_n) \cup \dgen_n
\label{eq:din}
	\\
\dout_n & =  
		\begin{cases}
		\boundary & n \text{ is } \End{p}
		\\
		\displaystyle\bigcup_{s \in \Succ(n)} \din_s & \text{otherwise}
		\end{cases}
	\label{eq:dout}
	\\
\dgen_n & = 
	\begin{cases}
	\ldgen(\rhs_n, \ain_n) \; \cup \; \rdgen(\lhs_n)
		& \blhs_n \subseteq \dout'_n \wedge
		\brhs_n \subseteq \dout'_n
	\\
	\ldgen(\rhs_n, \ain_n) 
		& \blhs_n \subseteq \dout'_n \vee n \in \origin
	\\
	\rdgen(\lhs_n)
		& \brhs_n \subseteq \dout'_n
	\\
	\emptyset
		& \text{otherwise}
	\end{cases}
	\label{eq:dgen}
	\\
% \dkill_n & =  \{\oneae \mid \lhs_n \in \{\oneae, \base(\oneae)\} \wedge \oneae \in \dout_n	\}
\dkill_n & =  \{\lhs_n \mid \lhs_n \equiv x\}
\label{eq:dkill}
\end{align}

% % %  \begin{align}
% % % \din_n & =  (\dout_n - \dkill_n) \cup \dgen_n(\ain_n, \aout_n(\dout_n))
% % % 	\\
% % % \dout_n & =  
% % % 		\begin{cases}
% % % 		\boundary & n \text{ is } \End{p}
% % % 		\\
% % % 		\displaystyle\bigcup_{s \in \Succ(n)} \din_s & \text{otherwise}
% % % 		\end{cases}
% % % 	\label{eq:dout}
% % % 	\\
% % % \dgen_n(A, D) & = 
% % % 	\begin{cases}
% % % 	\ldgen(\rhs_n,A ) \; \cup \; \rdgen(\lhs_n)
% % % 		& \dn(\lhs_n, A) \in D \wedge
% % % 		\dn(\rhs_n, A) \in D
% % % 	\\
% % % 	\ldgen(\rhs_n,A) 
% % % 		& \dn(\lhs_n, A) \in D \vee n \in \origin
% % % 	\\
% % % 	\rdgen(\lhs_n)
% % % 		& \dn(\rhs_n, A) \in D 
% % % 	\\
% % % 	\emptyset
% % % 		& \text{otherwise}
% % % 	\end{cases}
% % % 	\label{eq:dgen}
% % % 	\\
% % % % \dkill_n & =  \{\oneae \mid \lhs_n \in \{\oneae, \base(\oneae)\} \wedge \oneae \in \dout_n	\}
% % % \dkill_n & =  \{\lhs_n \mid \lhs_n \equiv x\}
% % % \end{align}
We compute $\dgen$ as a union of $\ldgen$ and $\rdgen$.
$\ldgen$ is computed when abstract name of \lhs belongs to $\dout_n$. In such a case demand for \rhs{} needs to be raised.
$\rdgen$ is computed when abstract name of \rhs{} belongs to $\dout_n$. In such a case demand for \lhs needs to be raised.
Further, the demand for $\rhs$ is generated incrementally depending upon the aliases of the base of $\rhs$.

First case for $\ldgen$ involve access expression whose $\base (\rhs) \neq \emptyset$ which represents access
expressions of the form $*x$ or $x\rightarrow f$.
It raises demand for 
the \vname involved in the access expression and also 
the abstract name of the entire access expression.
Second case refers to the access expression of the form $\&x$ and 
third case
considers access expression of the form $x$ and $x.f$. 
Demand for address-of a variable is raised using $\addrexpr$ as per our proposed speculation.
 \begin{align}
%%\dgen_n & = 
%%	\begin{cases}
%%	\ldgen(n,\rhs,\ain_n) 
%%		& \lhs \in \ain_n(\dout_n) \wedge \rhs \notin \ain_n(\dout_n)
%%	\\
%%	\rdgen(\lhs)
%%		& \lhs \notin \ain_n(\dout_n) \wedge \rhs \in \ain_n(\dout_n)
%%	\\
%%	\ldgen(n,\rhs,\ain_n) \; \cup \; \rdgen(\lhs)
%%		& \lhs \in \ain_n(\dout_n) \wedge \rhs \in \ain_n(\dout_n)
%%	\\
%%	\emptyset
%%		& \text{otherwise}
%%	\end{cases}
%%	\\
\ldgen(\rhs,\aliasing) & = 
% 	\\
\begin{cases}
% 	\green \dn(\rhs, A)	&	\base(\rhs) \neq \emptyset \wedge \dn(\rhs, A) \neq \emptyset
% 	\\
% 	\green \vname(\rhs) \cup \addrexpr(\rhs)	& \base(\rhs) \neq \emptyset \wedge \dn(\rhs, A) = \emptyset
% 	\\
	\vname(\rhs) \cup \addrexpr(\rhs) \cup \dn(\rhs, \aliasing) & \base(\rhs) \neq \emptyset
	\\
	\{\rhs\}	&	\addr(\rhs)
	\\
	\dn(\rhs, \aliasing) \cup \addrexpr(\rhs)	&	\vname(\rhs) \neq \emptyset
	\\
	\emptyset 	& 		\text{otherwise}
	\end{cases}
	\label{eq:ldgen}
	\\
\rdgen(\lhs) & = 
	\base(\lhs) \cup \addrexpr(\lhs)
	\label{eq:rdgen}
\intertext{where,}
% % % \hasalias(A,\oneae) & = 
% % % 	\begin{cases}
% % % 	\true & A(\vname(\oneae)) \neq \vname(\oneae)
% % % 		\\
% % % 	\false & \text{otherwise}
% % % 	\end{cases}
% % % 	\\
\addrexpr(\oneae) & =
	\{ \&x \mid x \in \vname(\oneae) \wedge \addrTaken(\&x)\}
\end{align}

\subsubsection*{Computing aliases.}
Aliasing and typing relations are both computed by
the equations \text{$\ain_n/\aout_n$} which are forward data flow equations.
% which 
% basically adds alias pairs.
% We compute aliases for statement $n$ which is of the form $\lhs = \rhs$.
% 
% We compute alias as a pair of access expressions $\AE{}$.
% Aliases can be viewed as a pair of access expressions such that $A = (\AE{}, \AE{})$.
% The other view of alias representation is $A = \AE{} \rightarrow 2^{\AE{}}$. 
% For checking alias relationships in the equations we use the latter view whereas construction of aliases is depicted using the former view.
% This is purely done for ease of access and understanding	. 
Symmetric and transitive closure of alias relationships is handled internally.
We do not clutter the equations to represent the same.

We generate alias relationship between $\lhs$ and $\rhs$ of a statement $n$ 
when demand for $\blhs_n$ has been raised or when alias relationship for $\brhs_n$ has been computed 
as shown in the Equation~\ref{eq:agen} for $\AGen_n$.
We perform weak update and kill alias relationship only when $\lhs_n$ is of the form $x$ as shown in Equation~\ref{eq:akill} for $\AKill_n$.
% % % \begin{align}
% % % \ain_n & =  
% % % 		\begin{cases}
% % % 		\boundary & n \text{ is } \Start{p}
% % % 		\\
% % % 		\displaystyle\bigcup_{p \in \Pred(n)} \aout_p & \text{otherwise}
% % % 		\\
% % % 		\end{cases}
% % % 		\label{eq:ain}
% % % 	\\
% % % \aout_n & =  
% % % 		(\ain_n - \AKill_n) \cup
% % % 		\AGen_n
% % % 		\label{eq:aout}
% % % 	\\
% % % \AGen_n \;\; & = 
% % % % 	\\
% % % \begin{cases}
% % % 	(\lhs_n', \rhs_n')
% % % 	&
% % % 	\renewcommand{\arraystretch}{.9}
% % % 	\begin{array}{@{}l@{}}
% % % 	\lhs_n' \in \dout_n \vee \rhs_n' \in \din_n'
% % % % 	\vee 
% % % % 	\\
% % % % 	(\rhs \equiv \&y \wedge y \in \Dp(\dout_n)) \vee 
% % % % 	\\
% % % % 	((\oneae, \dn(\rhs_n, \ain_n)) \in \ain_n \wedge \oneae \in \dout_n)
% % % 	\end{array}\rule[-.85em]{0em}{1em}
% % % \\
% % % 	\emptyset 
% % % 	& 
% % % 	\text{otherwise}
% % % 	\label{eq:agen}
% % % 	\end{cases}
% % % 	\\
% % % \AKill_n \;\; & = 
% % % % 	\\
% % % \begin{cases}
% % % 	  \{(\lhs_n, \oneae)\mid \oneae\in\ain_n(\lhs_n) \wedge \oneae \not \equiv \lhs_n\}
% % % 	&
% % % 	\renewcommand{\arraystretch}{.9}
% % % 	\begin{array}{@{}l@{}}
% % % 	\lhs_n \equiv x
% % % 	\end{array}\rule[-.85em]{0em}{1em}
% % % \\
% % % 	\emptyset 
% % % 	& 
% % % 	\text{otherwise}
% % % 	\label{eq:akill}
% % % 	\end{cases}
% % % \end{align}
{
%\small
 \begin{align}
\ain_n & =  
		\begin{cases}
		\boundary & n \text{ is } \Start{p}
		\\
		\displaystyle\bigcup_{p \in \Pred(n)} \aout_p & \text{otherwise}
		\\
		\end{cases}
		\label{eq:ain}
	\\
\aout_n & =  
		(\ain_n - \AKill_n) \cup
		\AGen_n
		\label{eq:aout}
	\\
% % % % % % \AGen_n \;\; & = 
% % % % % % % 	\\
% % % % % % \begin{cases}
% % % % % % 	\lhs_n' \times \rhs_n'
% % % % % % 	&
% % % % % % 	\renewcommand{\arraystretch}{.9}
% % % % % % 	\begin{array}{@{}l@{}}
% % % % % % 	\lhs_n' \subseteq \dout_n \vee \ain_n (\rhs_n') \neq \{\rhs_n'\}
% % % % % % % 	\subseteq \dout_n \vee  
% % % % % % % 	\\
% % % % % % % 	((\alpha, \rhs_n') \in \ain_n \wedge \alpha \in \dout_n)
% % % % % % % 	\vee 
% % % % % % % 	\\
% % % % % % % 	(\rhs \equiv \&y \wedge y \in \Dp(\dout_n)) \vee 
% % % % % % % 	\\
% % % % % % % 	((\oneae, \dn(\rhs_n, \ain_n)) \in \ain_n \wedge \oneae \in \dout_n)
% % % % % % 	\end{array}\rule[-.85em]{0em}{1em}
% % % % % % \\
% % % % % % 	\emptyset 
% % % % % % 	& 
% % % % % % 	\text{otherwise}
% % % % % % 	\label{eq:agen}
% % % % % % 	\end{cases}
% % % % % % 	\\
	\AGen_n \;\; & = 
% 	\\
% \begin{cases}
	\{\blhs_n \times \brhs_n \mid \blhs_n \subseteq \dout_n \vee \ain_n (\brhs_n) \neq \{\brhs_n\}\}
	\label{eq:agen}
\\
% % % % \AKill_n \;\; & = 
% % % % % 	\\
% % % % \begin{cases}
% % % % 	  \{(\lhs_n, \oneae)\mid \oneae\in\ain_n(\lhs_n) \wedge \oneae \not \equiv \lhs_n\}
% % % % 	&
% % % % 	\renewcommand{\arraystretch}{.9}
% % % % 	\begin{array}{@{}l@{}}
% % % % 	\lhs_n \equiv x
% % % % 	\end{array}\rule[-.85em]{0em}{1em}
% % % % \\
% % % % 	\emptyset 
% % % % 	& 
% % % % 	\text{otherwise}
% % % % 	\label{eq:akill}
% % % % 	\end{cases}
% % % % 	\\
	\AKill_n \;\; & = 
	  \{(\lhs_n, \oneae)\mid \lhs_n \equiv x \wedge \oneae \not \equiv \lhs_n\}
	  	\label{eq:akill}
\end{align}
}
\subsubsection*{Lifting the Analysis to Interprocedural Level.}
% \label{sec:interproc}
We perform context insensitive analysis where for indirect calls arising from function pointers,
we identify the potential callees using the points-to analysis performed by GCC; 
and for indirect calls involving virtual functions, we identify potential 
callees on-the-fly with the help of alias information computed by our analysis.

\subsection{Enhancing Precision and Efficiency Even Further}
Precision can be further improved in case of type-based abstraction by maintaining an object store, which records the objects
accessed by an analysis.
Efficiency can be improved by seeking for only used pointers.
We discuss these improvements regarding precision and efficiency in this section.

\subsubsection*{Computing the object store.}
The use of object store enhances the precision by ensuring that type-based
abstraction does not include irrelevant objects. It also enhances the efficiency 
of the analysis by ignoring the statements that do not involve relevant
objects. 
While propagating a demand, a statement is considered relevant when we compute $\dgen$ for it as depicted in Equation~\ref{eq:dgen}.
If the access expression is of the form $\&a$ or $a.f$, 
% then we add $a$ in the object store.
% For such access expressions involving objects, 
its abstract name does not depend on the alias relationship.
Instead, the abstract name is computed based on the types of the objects as shown in Table~\ref{table-dn}. 
We thus maintain a record of objects being accessed by a statement to avoid generating abstract names for objects other than the ones recorded 
in the object store.
Figure~\ref{m-eg-type-obj-store} depicts the precision gain achieved by recording objects from the relevant access expressions.

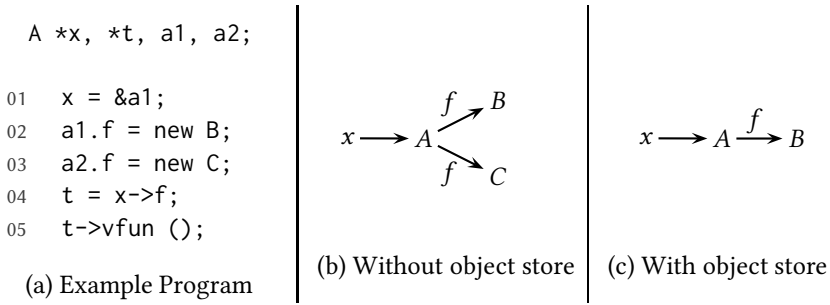
\begin{figure}[!t]
\centering
% \begin{tabular}{c}
\begin{tabular}{@{}c|c|c@{}}
\begin{tabular}{@{}c@{}}

\setlength{\codeLineLength}{32mm}
	\begin{tabular}{@{}r@{}c}
	%\codeLine{18}{0}{int main}{white}
	\codeLine{}{0}{A *x, *t, a1, a2; \rule{0em}{1.25em}}{white}
\\
	\codeLine{01}{1}{{x = \&a1;}}{white}
	\codeLine{02}{1}{{a1.f = new B;}}{white}	
	\codeLine{03}{1}{{a2.f = new C;}}{white}
	\codeLine{04}{1}{{t = x->f;}}{white}
	\codeLine{05}{1}{{t->vfun ();}}{white}
	
	\end{tabular}
	\\
	(a) Example Program
	\rule{0em}{1.75em}
	\end{tabular}
& %%% 3rd column
\begin{tabular}{@{}l@{}}
 	\begin{pspicture}(0,15)(30,30)
%   	\psframe(0,15)(30,30)
	\psset{nodesep=-1.5}
	
	\putnode{x}{origin}{5}{15}{\pscirclebox[linestyle=none]{$x$}}
	\putnode{a1}{x}{10}{0}{\pscirclebox[linestyle=none]{$A$}}
	\putnode{B}{a1}{10}{5}{\pscirclebox[linestyle=none]{$B$}}
	\putnode{C}{a1}{10}{-5}{\pscirclebox[linestyle=none]{$C$}}
% 	\putnode{t}{a1}{20}{0}{\pscirclebox[linestyle=none]{$t$}}
	
	\ncline[nodesep=-1]{->}{x}{a1}
	\ncline[nodesep=-1]{->}{a1}{B}
	\aput[1pt](.4){$f$}
	\ncline[nodesep=-1]{->}{a1}{C}
	\bput[1pt](.4){$f$}
	
% 	\ncline[nodesep=-1]{->}{t}{B}
% 	\ncline[nodesep=-1]{->}{t}{C}
	\end{pspicture} \\\\
	\rule[2.5em]{0em}{1em}
	(b) Without object store
% 	$a1$ and $a2$ represented by $A$\rule[-1.5em]{0em}{1em}
	\end{tabular}
	&
\begin{tabular}{@{}l@{}}
 	\begin{pspicture}(0,15)(30,30)
%   	\psframe(0,0)(20,22)
	\psset{nodesep=-1.5}
	
	\putnode{x}{origin}{6}{15}{\pscirclebox[linestyle=none]{$x$}}
	\putnode{a1}{x}{10}{0}{\pscirclebox[linestyle=none]{$A$}}
	\putnode{B}{a1}{10}{0}{\pscirclebox[linestyle=none]{$B$}}
% 	\putnode{C}{a1}{10}{-5}{\pscirclebox[linestyle=none]{$C$}}
% 	\putnode{t}{a1}{0}{10}{\pscirclebox[linestyle=none]{$t$}}
	
	\ncline[nodesep=-1]{->}{x}{a1}
	\ncline[nodesep=-1]{->}{a1}{B}
	\aput[1pt](.4){$f$}
% 	\ncline[nodesep=-1]{->}{a1}{C}
% 	\bput[1pt](.4){$f$}
	\end{pspicture} \\\\
	\rule[2.5em]{0em}{1em}
	(c) With object store
% 	$a1$ recorded as valid object\rule[-1.5em]{0em}{1em}
% 	Type-based abstraction
\end{tabular}

\end{tabular}
% \end{tabular}
\caption{Illustrating precision gain with the use of an object store.
Abstract names for both the access expressions \text{\sf a1.f} and 
\text{\sf a2.f} 
is \text{\sf A.f}.
Without an object store, 
both the access expressions will be considered,
resulting in the points-to graph 
as shown in (b).
If we record the objects accessed with respect to demand \text{\sf x}, 
only \text{\sf a1} will belong to the object store.
Thus, statement 3 can be considered irrelevant;
resulting in a precise points-to graph as shown in (c).
}
\label{m-eg-type-obj-store}
\end{figure}

\subsubsection*{Seek for only used pointers}\label{sec:seek-pointers}
In our motivating example shown in Figure~\ref{m-eg-type} (a), we seek for pointers for $\&z$ using our proposed speculation.
We end up identifying that both $p$ and $q$ points-to $z$. 
As $q$ is not used in the program, it will not influence the typing relation \typing.
It is a redundant aliasing relation computed.
We can eliminate computation of such redundant aliasing relations by maintaining a store of the pointers involved in indirect assignments.
As $p$ is involved in indirect assignment on line 15, it will be recorded in the store.
When we seek to find pointers to $\&z$, we can now consider the statement in line 04 involving $p$
and eliminate considering the statement in line 03 involving $q$.
This will not have any effect on the precision but will make the analysis efficient by eliminating unnecessary information from the points-to graph.

\section{Formalizing the Solutions of Bidirectional Analyses}
\label{sec:formalizing-bidirectional}
%%\begin{figure}[t]
%% 
%% \begin{center}
%% \begin{pspicture}(0,0)(140,55)
%%  \psframe(0,0)(140,55)
%%
%%  
%%\putnode{a0}{origin}{63}{50}{\psframebox{Data flow analysis}}
%%
%%
%%\putnode{a2}{a0}{-35}{-13}{\psframebox{Unidirectional analysis}}
%%\putnode{c1}{a2}{-15}{-13}{\psframebox{Forward flow}}
%%\putnode{c2}{a2}{15}{-13}{\psframebox{Backward flow}}
%%
%%\putnode{d1}{c1}{0}{-10}{\psframebox[linestyle=none]{Live variables}}
%%\putnode{d2}{c2}{0}{-10}{\psframebox[linestyle=none]{Available expressions}}
%%
%%\putnode{a1}{a0}{35}{-13}{\psframebox{Bidirectional analysis}}
%%\putnode{b1}{a1}{20}{-13}{\psframebox{
%%\begin{tabular}{@{}l@{}}
%%Not restricted to\\
%%control flow paths
%%\end{tabular}
%%}}
%%\putnode{b2}{a1}{-20}{-13}{\psframebox{
%%\begin{tabular}{@{}l@{}}
%%Restricted to\\
%%control flow paths
%%\end{tabular}
%%}}
%%
%%\putnode{d3}{b1}{-10}{-13}{\psframebox[linestyle=none]{PRE}}
%%\putnode{d4}{b1}{10}{-13}{\psframebox[linestyle=none]{Type inference}}
%%\putnode{d5}{b2}{0}{-13}{\psframebox[linestyle=none]{Demand driven}}
%%% \putnode{d6}{b2}{11}{-13}{\psframebox[linestyle=none]{Taint analysis}}
%%
%%\ncline{->}{a0}{a1}
%%\ncline{->}{a0}{a2}
%%
%%\ncline{->}{a1}{b1}
%%\ncline{->}{a1}{b2}
%%
%%\ncline{->}{a2}{c1}
%%\ncline{->}{a2}{c2}
%%
%%\ncline{->}{c1}{d1}
%%\ncline{->}{c2}{d2}
%%\ncline{->}{b1}{d3}
%%\ncline{->}{b1}{d4}
%%\ncline{->}{b2}{d5}
%%\ncline{->}{b2}{d6}
%%
%% \end{pspicture}
%%
%%\caption{Classifying flow-sensitive data flow analysis}
%%\label{data-flow-analysis}
%%\end{center}
%%\end{figure}

Unlike a flow-insensitive analysis, a flow-sensitive analysis relies on the direction of control flow and 
can be either \emph{unidirectional} or \emph{bidirectional}. A unidirectional analysis may be
\begin{itemize}
\item a \emph{forward} analysis in which the information flows along the control flow i.e., the data flow values of a node
       are influenced by its ancestors (e.g. when computing alias information), or 
\item a \emph{backward} analysis in which the information flows against the control flow i.e., the data flow values of a node
       are influenced by its descendants (e.g. when computing demands).\footnote{%
	Ancestors and descendants represent the transitive closures of \Pred and
      \Succ relations, respectively.}
\end{itemize}
In a bidirectional flow analysis, the data flow values at a node are influenced by both ancestors as well as descendants.
Some examples of analyses that are recognized as bidirectional analyses are: partial redundancy 
elimination~\cite{Morel:1979:GOS:359060.359069} and its many variants,
type inferencing~\cite{Frade:2009:BDA:1480945.1480965,Khedker:2003:BDF:2295357.2295383}, and
liveness based points-to analysis~\cite{lfcpa}.

Interestingly, almost all demand-driven methods that compute flow-sensitive information, are
inherently bidirectional because demands are raised and propagated against control flow whereas the requisite information 
that satisfies the demands is discovered along the control flow. 
Yet, barring the liveness-based points-to analysis~\cite{lfcpa}, most demand-driven methods
have not been looked at as bidirectional analyses 
in the sense that the bidirectional dependency has been realised through an algorithm rather than
a declarative specification of the dependencies in terms of data flow equations.
Thus, in over two decades of their existence, they 
have been viewed as efficient versions of algorithms devised to compute 
only the information required to meet a given set of demands.
We lift these bidirectional dependencies from within
demand-driven algorithms to the specifications of analyses that the algorithms compute. 

\newcommand{\mylooparrow}{%
\psset{unit=1.75mm}
\begin{pspicture}(0,-.17)(2,1)
%\psframe(0,-.17)(2,1)
\pscurve[showpoints=false,linewidth=.115](0,0)(.85,.085)(1.2,.35)(1.3,.6)(1,.8)(.7,.6)(.8,.3)(1.15,.085)(2,0)
\psline[showpoints=false,linewidth=.115](.4,.4)(0,0)(.4,-.4)
\psline[showpoints=false,linewidth=.115](1.7,.3)(2,0)(1.7,-.3)
\end{pspicture}
}

\newcommand{\dfv}{\text{$\text{\sf\em X}$}\xspace}
\newcommand{\fdfv}{\text{$\overrightarrow{\dfv}$}\xspace}
\newcommand{\bdfv}{\text{$\overleftarrow{\dfv}$}\xspace}
\newcommand{\fbdfv}{\text{$\overleftrightarrow{\dfv}$}\xspace}
\newcommand{\odfv}{\text{$\overset{\mylooparrow}{\dfv}$}\xspace}

We formalize bidirectional dependencies by 
\begin{itemize}
\item defining a data flow value \dfv to contain multiple components distinguishing between the data flow
      information reaching a node from ancestors, descendants, or other nodes.

\item extending the concept of a \emph{meet over paths} solution (\mop) to the desired result of a bidirectional analysis, and
\item showing that the \emph{maximum fixed point} solution (\mfp) computed using the data flow equations defining
      a bidirectional analysis, is a sound approximation of the corresponding \mop.
\end{itemize}

Section~\ref{sec:bidirectional.big.picture} provides a big picture view of bidirectional dependencies. 
Section~\ref{sec:cfp.bidirectional} formalizes the control flow paths used for defining solutions of
a bidirectional analysis whereas Section~\ref{sec:mfp.mop.bidirectional} defines the \mop and \mfp solutions 
in terms of generic flow functions and shows the relationship between them.

\newcommand{\fe}{\text{$\overrightarrow{e}$}\xspace}
\newcommand{\fee}[1]{\text{$\overrightarrow{e_{#1}}$}\xspace}
\newcommand{\be}{\text{$\overleftarrow{e}$}\xspace}
\newcommand{\bee}[1]{\text{$\overleftarrow{e_{#1}}$}\xspace}
\newcommand{\In}[1]{\mbox{\em In$_{#1}$}}
\newcommand{\Out}[1]{\mbox{\em Out$_{#1}$}}
\newcommand{\ff}[1]{\mbox{$\overrightarrow{\!\!f}\!\!_{\!#1}$}}
\newcommand{\fb}[1]{\mbox{$\overleftarrow{f}\!\!_{#1}$}}

\subsection{The Big Picture View of Bidirectional Dependencies}
\label{sec:bidirectional.big.picture}
In this section we present the big picture view of bidirectional dependencies.

\subsubsection{Modelling Bidirectional Flows}

\begin{figure}[t]
\centering
\begin{tabular}{@{}c|c|c|c|c@{}}
\psset{unit=1.3cm}
\begin{pspicture}(2,-.4)(3.9,2.8)
%\psframe(2,-.4)(3.9,2.8)
\rput(3,2.7){Flow Functions}
\cnodeput{0}(3,2){n1}{\rule{0cm}{.6cm}}
\rput(3,2){$n$}
\rput(3,0){$m$}
\cnodeput{0}(3,0){n2}{\rule{0cm}{.6cm}}
\ncline{->}{n1}{n2}
\rput(2.4,2.4){\rnode{InN}{}}%\In{n}}}
\rput(2.4,1.5){\rnode{OutN}{}}%\Out{n}}}
\rput(2.4,0.5){\rnode{InS}{}}%\In{m}}}
\rput(2.4,-.4){\rnode{OutS}{}}%\Out{m}}}
\ncline[linewidth=.25mm,nodesepA=0mm,linestyle=dashed,dash=.6mm .5mm]{|->}{InN}{OutN}
\ncline[linewidth=.25mm,nodesepA=0mm,linestyle=dashed,dash=.6mm .5mm]{|->}{OutN}{InS}
\ncline[linewidth=.25mm,nodesepA=0mm,linestyle=dashed,dash=.6mm .5mm]{|->}{InS}{OutS}
\rput(2.45,2.05){\rnode{fn1}{\psframebox[linestyle=none,fillstyle=solid,framesep=0]{\ff{n}}}}
\rput(2.4,1){\rnode{fe}{\psframebox[linestyle=none,fillstyle=solid,framesep=0]{\rule{1mm}{0mm}\rule{0mm}{3mm}}}}
\rput(2.45,1.05){\rnode{fe}{\psframebox[linestyle=none,framesep=0]{\ff{n\rightarrow m}}}}
%\rput(0,.95){Forward Flows}
\rput(2.45,.10){\rnode{fn2}{\psframebox[linestyle=none,fillstyle=solid,framesep=0]{\ff{m}}}}
%%%%%%%%%%%%%%%%%%%%%%%%%%%%%%%%%%%%%%%%%%%
\rput(3.6,2.4){\rnode{InN}{}}%\In{n}}}
\rput(3.6,1.5){\rnode{OutN}{}}%\Out{n}}}
\rput(3.6,0.5){\rnode{InS}{}}%\In{m}}}
\rput(3.6,-.4){\rnode{OutS}{}}%\Out{m}}}
\ncline[linewidth=.25mm,nodesepB=0mm,linestyle=dashed,dash=.6mm .5mm]{<-|}{InN}{OutN}
\ncline[linewidth=.25mm,nodesepB=0mm,linestyle=dashed,dash=.6mm .5mm]{<-|}{OutN}{InS}
\ncline[linewidth=.25mm,nodesepB=0mm,linestyle=dashed,dash=.6mm .5mm]{<-|}{InS}{OutS}
\rput(3.6,1.95){\rnode{bn1}{\psframebox[linestyle=none,fillstyle=solid,framesep=0]{\fb{n}}}}
\rput(3.6,.96){\rnode{fe}{\psframebox[linestyle=none,fillstyle=solid,framesep=0]{\rule{1mm}{0mm}\rule{0mm}{3mm}}}}
\rput(3.6,1.0){\rnode{be}{\psframebox[linestyle=none,framesep=0]{\fb{n\rightarrow m}}}}
%\rput(6.,.95){Backward Flows}
\rput(3.6,0){\rnode{bn2}{\psframebox[linestyle=none,fillstyle=solid,framesep=0]{\fb{m}}}}
%%\rput[l](-1,3){\rnode{w}{\psframebox[linestyle=none]{Forward Node Flow Function}}}
%%\nccurve[angleA=270,angleB=180,offsetA=.2]{->}{w}{fn1}
%%\nccurve[angleA=270,angleB=180,offsetA=-.2]{->}{w}{fn2}
%%\rput[l](-1,-1){\rnode{w}{\psframebox[linestyle=none]{Forward Edge Flow Function}}}
%%\nccurve[angleA=90,angleB=180]{->}{w}{fe}
%%\rput[r](7,3){\rnode{w}{\psframebox[linestyle=none]{Backward Node Flow Function}}}
%%\nccurve[angleA=270,angleB=0,offsetA=-.2]{->}{w}{bn1}
%%\nccurve[angleA=270,angleB=0,offsetA=.2]{->}{w}{bn2}
%%\rput[r](7,-1){\rnode{w}{\psframebox[linestyle=none]{Backward Edge Flow Function}}}
%%\nccurve[angleA=90,angleB=0]{->}{w}{be}
\end{pspicture}
&
\begin{pspicture}(0,0)(24,41.5)
%\psframe(0,0)(24,41.5)
\putnode{w}{origin}{12}{40}{Forward Flow}
\putnode{w}{origin}{12}{34}{$\ff{k\rightarrow l}\!\circ\!\ff{k}\!\circ\!\ff{i\rightarrow k}$}
\putnode{i}{origin}{4}{25}{\pscirclebox{\white$j$}}
\putnode{j}{i}{16}{0}{\pscirclebox{\white$j$}}
\putnode{k}{i}{8}{-10}{\pscirclebox{\white$j$}}
\putnode{l}{k}{-8}{-10}{\pscirclebox{\white$j$}}
\putnode{m}{k}{8}{-10}{\pscirclebox{\white$j$}}
\putnode{x}{i}{0}{-5}{$i$}
\putnode{x}{j}{0}{-5}{$j$}
\putnode{x}{k}{-4}{0}{$k$}
\putnode{x}{l}{0}{5}{$l$}
\putnode{x}{m}{0}{5}{$m$}
\ncline[nodesep=-.75]{->}{i}{k}
\ncline[nodesep=-.75]{->}{j}{k}
\ncline[nodesep=-.75]{->}{k}{l}
\ncline[nodesep=-.75]{->}{k}{m}
\nccurve[angleA=-10,angleB=90,linewidth=.3mm,offsetB=-.75,offsetA=-1,
	linestyle=dashed,dash=.5mm .5mm]{->}{i}{k}
\nccurve[angleA=270,angleB=10,linewidth=.3mm,offsetA=-.75,offsetB=-1,
	linestyle=dashed,dash=.5mm .5mm]{->}{k}{l}
% Forw Node Flow 
\nccurve[angleA=270,angleB=90,linewidth=.3mm,offset=-.75,
	nodesepA=-2,nodesepB=-1,linestyle=dashed,dash=.5mm .5mm]{>-}{k}{k}
%% Back Node Flow 
%\nccurve[angleA=270,angleB=90,linewidth=.3mm,offset=.75,
%	nodesepA=-1,nodesepB=-2,linestyle=dashed,dash=.5mm .5mm]{-<}{k}{k}
\end{pspicture}
&
\begin{pspicture}(0,0)(24,41.5)
%\psframe(0,0)(24,41.5)
\putnode{w}{origin}{12}{40}{Backward Flow}
\putnode{w}{origin}{12}{34}{$\fb{i\rightarrow k}\!\circ\!\fb{k}\!\circ\!\fb{k\rightarrow l}$}
\putnode{i}{origin}{4}{25}{\pscirclebox{\white$j$}}
\putnode{j}{i}{16}{0}{\pscirclebox{\white$j$}}
\putnode{k}{i}{8}{-10}{\pscirclebox{\white$j$}}
\putnode{l}{k}{-8}{-10}{\pscirclebox{\white$j$}}
\putnode{m}{k}{8}{-10}{\pscirclebox{\white$j$}}
\putnode{x}{i}{0}{-5}{$i$}
\putnode{x}{j}{0}{-5}{$j$}
\putnode{x}{k}{-4}{0}{$k$}
\putnode{x}{l}{0}{5}{$l$}
\putnode{x}{m}{0}{5}{$m$}
\ncline[nodesep=-.75]{->}{i}{k}
\ncline[nodesep=-.75]{->}{j}{k}
\ncline[nodesep=-.75]{->}{k}{l}
\ncline[nodesep=-.75]{->}{k}{m}
\nccurve[angleA=-10,angleB=90,linewidth=.3mm,offsetB=-.75,offsetA=-1,
	linestyle=dashed,dash=.5mm .5mm]{<-}{i}{k}
\nccurve[angleA=270,angleB=10,linewidth=.3mm,offsetA=-.75,offsetB=-1,
	linestyle=dashed,dash=.5mm .5mm]{<-}{k}{l}
\nccurve[angleA=270,angleB=90,linewidth=.3mm,offset=-.75,
	nodesepA=-1,nodesepB=-2,linestyle=dashed,dash=.5mm .5mm]{-<}{k}{k}
\end{pspicture}
&
\begin{pspicture}(0,0)(24,41.5)
%\psframe(0,0)(24,41.5)
\putnode{w}{origin}{12}{40}{Bidirectional Flow}
\putnode{w}{origin}{12}{34}{$\fb{j\rightarrow k}\!\circ\!\ff{i\rightarrow k}$}
\putnode{i}{origin}{4}{25}{\pscirclebox{\white$j$}}
\putnode{j}{i}{16}{0}{\pscirclebox{\white$j$}}
\putnode{k}{i}{8}{-10}{\pscirclebox{\white$j$}}
\putnode{l}{k}{-8}{-10}{\pscirclebox{\white$j$}}
\putnode{m}{k}{8}{-10}{\pscirclebox{\white$j$}}
\putnode{x}{i}{0}{-5}{$i$}
\putnode{x}{j}{0}{-5}{$j$}
\putnode{x}{k}{-4}{0}{$k$}
\putnode{x}{l}{0}{5}{$l$}
\putnode{x}{m}{0}{5}{$m$}
\ncline[nodesep=-.75]{->}{i}{k}
\ncline[nodesep=-.75]{->}{j}{k}
\ncline[nodesep=-.75]{->}{k}{l}
\ncline[nodesep=-.75]{->}{k}{m}
%
% Forw Edge Flow 1
\nccurve[angleA=-10,angleB=90,linewidth=.3mm,offsetB=-.75,offsetA=-1,
	linestyle=dashed,dash=.5mm .5mm]{->}{i}{k}
%% Forw Edge Flow 2
%\nccurve[angleA=270,angleB=10,linewidth=.3mm,offsetA=-.75,offsetB=-1,
%	linestyle=dashed,dash=.5mm .5mm]{->}{k}{l}
% Back Edge Flow 2
\nccurve[angleA=90,angleB=190,linewidth=.3mm,offsetA=-.75,offsetB=-1,
	linestyle=dashed,dash=.5mm .5mm]{->}{k}{j}
%% Back Edge Flow 1
%\nccurve[angleA=270,angleB=170,linewidth=.3mm,offsetA=.75,offsetB=1,
%	linestyle=dashed,dash=.5mm .5mm]{<-}{k}{m}
%%% Forw Node Flow 
%%\nccurve[angleA=270,angleB=90,linewidth=.3mm,offset=-.75,
%%	nodesepA=-2,nodesepB=-1,linestyle=dashed,dash=.5mm .5mm]{>-}{k}{k}
%%% Back Node Flow 
%%\nccurve[angleA=270,angleB=90,linewidth=.3mm,offset=.75,
%%	nodesepA=-1,nodesepB=-2,linestyle=dashed,dash=.5mm .5mm]{-<}{k}{k}
\end{pspicture}
&
\begin{pspicture}(0,0)(24,41.5)
%\psframe(0,0)(24,41.5)
\putnode{w}{origin}{12}{40}{Bidirectional Flow}
\putnode{w}{origin}{12}{34}{$\ff{k\rightarrow l}\!\circ\!\fb{k\rightarrow m}$}
\putnode{i}{origin}{4}{25}{\pscirclebox{\white$j$}}
\putnode{j}{i}{16}{0}{\pscirclebox{\white$j$}}
\putnode{k}{i}{8}{-10}{\pscirclebox{\white$j$}}
\putnode{l}{k}{-8}{-10}{\pscirclebox{\white$j$}}
\putnode{m}{k}{8}{-10}{\pscirclebox{\white$j$}}
\putnode{x}{i}{0}{-5}{$i$}
\putnode{x}{j}{0}{-5}{$j$}
\putnode{x}{k}{-4}{0}{$k$}
\putnode{x}{l}{0}{5}{$l$}
\putnode{x}{m}{0}{5}{$m$}
\ncline[nodesep=-.75]{->}{i}{k}
\ncline[nodesep=-.75]{->}{j}{k}
\ncline[nodesep=-.75]{->}{k}{l}
\ncline[nodesep=-.75]{->}{k}{m}
%
%% Forw Edge Flow 1
%\nccurve[angleA=-10,angleB=90,linewidth=.3mm,offsetB=-.75,offsetA=-1,
%	linestyle=dashed,dash=.5mm .5mm]{->}{i}{k}
% Forw Edge Flow 2
\nccurve[angleA=270,angleB=10,linewidth=.3mm,offsetA=-.75,offsetB=-1,
	linestyle=dashed,dash=.5mm .5mm]{->}{k}{l}
%% Back Edge Flow 2
%\nccurve[angleA=90,angleB=190,linewidth=.3mm,offsetA=-.75,offsetB=-1,
%	linestyle=dashed,dash=.5mm .5mm]{->}{k}{j}
% Back Edge Flow 1
\nccurve[angleA=270,angleB=170,linewidth=.3mm,offsetA=.75,offsetB=1,
	linestyle=dashed,dash=.5mm .5mm]{<-}{k}{m}
%% Forw Node Flow 
%\nccurve[angleA=270,angleB=90,linewidth=.3mm,offset=-.75,
%	nodesepA=-2,nodesepB=-1,linestyle=dashed,dash=.5mm .5mm]{>-}{k}{k}
%% Back Node Flow 
%\nccurve[angleA=270,angleB=90,linewidth=.3mm,offset=.75,
%	nodesepA=-1,nodesepB=-2,linestyle=dashed,dash=.5mm .5mm]{-<}{k}{k}
\end{pspicture}
\end{tabular}
\caption{Modelling General Flow Functions and General Flows Using Information Flow
		Paths~\protect\cite{Khedker:1994:GTB:186025.186043}}
\label{fig:genth.modelling}
\end{figure}
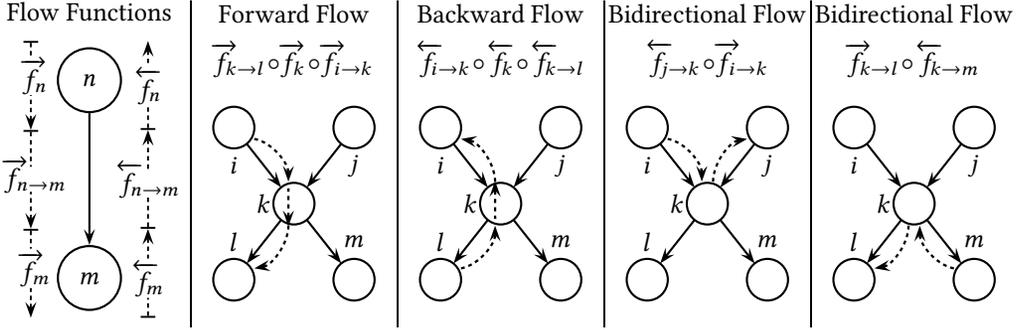

Bidirectional dependencies can be explained by making a distinction between a
\emph{node flow function} and an \emph{edge flow function}~\cite{Khedker:1994:GTB:186025.186043, 
		Khedker:1999:BDF:606666.606676, Khedker:2009:DFA:1592955}. Besides,
a distinction is also made between a \emph{forward flow function} and a \emph{backward flow function} as shown 
in Figure~\ref{fig:genth.modelling}. This leads to the following basic flow functions:
\begin{itemize}
\item a forward flow function $\ff{n}$ for node $n$, is used to compute \Out{n} from \In{n},
\item a backward flow function $\fb{n}$ for node $n$, is used to compute \In{n} from \Out{n},
\item a forward flow function $\ff{n\rightarrow m}$ for edge \text{$n\rightarrow m$} is used to compute \In{m} from \Out{n}, and 
\item a backward flow function $\fb{n\rightarrow m}$ for edge \text{$n\rightarrow m$} is used to compute \Out{n} from
\In{m}.
\end{itemize}
If a particular dependency does not exist in an analysis, the corresponding flow function is defined to compute the
$\top$ value of the lattice. Since \text{$\dfv \sqcap \top = \dfv$} for every \dfv, this correctly models the absence of
the dependency.
As illustrated in Figure~\ref{fig:genth.modelling}, this facilitates modelling all kinds of flows.

\begin{figure}[t]
\begin{pspicture}(0,0)(138,138)
% \psframe(0,0)(138,138)	
\putnode{p1}{origin}{90}{135}{\begin{tabular}{@{}c@{}}
			All flows 
			\end{tabular}}
\putnode{p2}{p1}{-27}{-13}{\begin{tabular}{@{}c@{}}
			Bidirectional Flows
			\end{tabular}}
\putnode{p3}{p1}{27}{-13}{\begin{tabular}{@{}c@{}}
			Unidirectional Flows
			\end{tabular}}

\ncline{->}{p1}{p2}
\ncline{->}{p1}{p3}
			
\putnode{q1}{p2}{-41}{-18}{\begin{tabular}{@{}l@{}}
				Only 	
				unidirectional 
				\\
				interacting 
				components
			\end{tabular}}

\putnode{q2}{p2}{0}{-20}{\begin{tabular}{@{}c@{}}
				Only \\
				bidirectional \\ components 
			\end{tabular}}

\putnode{q3}{p2}{41}{-18}{\begin{tabular}{@{}c@{}}
				Both unidirectional and \\ bidirectional components 
			\end{tabular}}

\ncline{->}{p2}{q1}
\ncline{->}{p2}{q2}
\ncline{->}{p2}{q3}
%%%%%
\putnode{p4}{q1}{0}{-17}{\begin{tabular}{@{}l@{}}
				$\dfv = \langle \fdfv, \bdfv \rangle$
			\end{tabular}}
\putnode{p5}{q3}{0}{-17}{\begin{tabular}{@{}l@{}}
				$\dfv = \langle \fdfv, \bdfv, \odfv \rangle$
			\end{tabular}}
\putnode{p6}{q2}{0}{-17}{\begin{tabular}{@{}c@{}}
				$\dfv = \fbdfv$
			\end{tabular}}
\ncline{->}{q1}{p4}
\ncline{->}{q3}{p5}
\ncline{->}{q2}{p6}

\putnode{p8}{p6}{22}{-17}{%
		\begin{tabular}{@{}c@{}}
		Bounded number 
		\\ 
		of segments 
			\\
			%(List item 2.b.ii)
		\end{tabular}}
\putnode{p9}{p6}{-22}{-17}{\begin{tabular}{@{}c@{}}
		Unbounded number \\
		of segments 
			\\
			%(List item 2.b.i)
		\end{tabular}}
\ncline[nodesepA=1.25,nodesepB=1]{->}{p6}{p8}
\ncline[nodesepA=1.25,nodesepB=1]{->}{p6}{p9}
%%%%%%%%%%%%%%%
%%{
%%\psset{linestyle=none,fillstyle=solid,fillcolor=lightblue,framesep=.5}
%%\putnode{w}{p5}{0}{-7}{\psframebox{\begin{tabular}{@{}l@{}}
%%				$(\fe \vert \be)^+$
%%				\end{tabular}}}
%%\putnode{w}{p4}{0}{-7}{\psframebox{\begin{tabular}{@{}l@{}}
%%				$(\fe \vert \be)^+$
%%				\end{tabular}}}
%%\putnode{w}{p9}{0}{-9}{\psframebox{\begin{tabular}{@{}l@{}}
%%				$\left((\be)^+ \!\cdot(\fe)^k \right)^+$\! or
%%				$\left(\left(\fe \!\cdot \be\right)^+ \!\mid \left(\be \!\cdot \fe\right)^+\right)$
%%				\end{tabular}}}
%%\putnode{w}{p3}{0}{-6}{\psframebox{\begin{tabular}{@{}l@{}}
%%				Either $(\fe)^+$ or $(\be)^+$
%%				\end{tabular}}}
%%\putnode{w}{p8}{0}{-9}{\psframebox{\begin{tabular}{@{}l@{}}
%%			$\left((\be)^+ \!\cdot(\fe)^+\right)^k$
%%				\end{tabular}}}
%%}
%%%%%%%%%%%%%%

%%%%%%%%%%%%%%%
\putnode{w}{p5}{0}{-6}{\psframebox[linestyle=none, fillstyle=solid, fillcolor=pink]{\tia}}
\putnode{w}{p4}{0}{-6}{\psframebox[linestyle=none, fillstyle=solid, fillcolor=pink]{\lb, \id, \ddnew}}
\putnode{w}{p9}{0}{-8}{\psframebox[linestyle=none, fillstyle=solid, fillcolor=pink]{\tib, \prea, \co, \su}}
% \putnode{w}{p3}{0}{-13}{\psframebox[linestyle=none, fillstyle=solid, fillcolor=pink]{LB, ID4}}
\putnode{w}{p8}{0}{-8}{\psframebox[linestyle=none, fillstyle=solid, fillcolor=pink]{\preb}}
%%%%%%%%%%%%%%

\putnode{l1}{p4}{50}{-60}{\begin{tabular}{@{}rl@{}}
\multicolumn{2}{l}{\em List of bidirectional analyses}\\
\lb & Liveness based points-to analysis~\cite{lfcpa}.\\
\id & Our proposed demand-driven method. \\
\ddnew & Other demand-driven points-to/alias analyses~\cite{Sridharan:2006:RCP:1133981.1134027,%
Guyer:2003:CPA:1760267.1760284,%
Sridharan:2005:DPA:1094811.1094817,%
Yan:2011:DCA:2001420.2001440,%
spth_et_al:LIPIcs:2016:6116,%
Saha:2005:IDP:1069774.1069785}. There are many \\ & other demand-driven analysis that we do not cite here.\\
\tia & Type inferencing with multiple components~\cite{Khedker:2003:BDF:2295357.2295383, Singer04sparsebidirectional}. \\
\tib & Type inferencing with a single component~\cite{Frade:2009:BDA:1480945.1480965}.\\
\prea & Variants of partial redundancy elimination without edge splitting~\cite{Morel:1979:GOS:359060.359069,
Dhamdhere:1993:CBD:158511.158696,Khedker:1994:GTB:186025.186043,Khedker:2009:DFA:1592955}.  
\\
\preb & Variants of partial redundancy elimination with edge splitting or edge placement
\\
& ~\cite{Khedker:1999:BDF:606666.606676,Khedker:2009:DFA:1592955,
Briggs:1994:EPR:178243.178257,%
			Dhamdhere:1992:ALP:143095.143135,%
			Drechsler:1988:SPM:48022.214509,%
			Drechsler:1993:VKR:152819.152823,%
			Hailperin:1998:CCM:295656.295664,%
			vonHanxleden:1994:GBC:773473.178253,%
			Knoop:1992:LCM:143095.143136,%
			Knoop:1994:OCM:183432.183443,%
			Knoop:1994:PDC:178243.178256,%
			Knoop:1995:PAM:223428.207150,%
			krs.strength_red,%
			Dhamdhere:1988:FAC:51607.51621,%
			Dhamdhere:1993:EAB:169701.169684,%
			DBLP:journals/jpl/DhaneshwarD95,%
			Paleri:1998:SAP:307824.307851}.
\\
\co & Coalescing used in bottom-up points-to analysis~\cite{GPG.pta}. \\
\su & Stack usage optimization~\cite{SAABAS2007103,Frade:2009:BDA:1480945.1480965}.
			%(List item 2.b.i)
		\end{tabular}}

\end{pspicture}

\caption{The big picture view of data flow analysis. 
%\paths{n} contains all control flow paths that pass through node $n$.
%	The list items refer to the item number in the associated text 
%	(Section~\protect\ref{sec:bidirectional.big.picture}) that explains the case.
%The regular expressions in the blue boxes describe the nature of flows 
%as observed in the literature (in terms of edges traversed in a control flow graph).
The corresponding analyses are listed in the pink boxes.
}
\label{fig:bd.big.picture}
\end{figure}

We use this model for characterizing flows in terms of edges
traversed in the control flow graph by explicating the edge flow functions but
leaving the node flow functions implicit; given two adjacent edges, the common
node between them is known and the associated node flow function can be inferred as
is evident from Figure~\ref{fig:genth.modelling}. 
%Thus, we generically denote a forward
%flow along an edge $e$ by \fe and a backward flow along $e$ by \be. 
Hence we use the following notation:

\begin{center}
\begin{tabular}{|rl|}
\hline
\fe & Traversal of edge $e$ along the control flow for applying the edge flow function \ff{e}.
	\rule{0em}{1.3em}%
		\\
\be & Traversal of edge $e$ against the control flow for applying the edge flow function \fb{e}.
	\\ \hline
\end{tabular}
\end{center}

A contiguous sequence of 
flows in the same direction is called a \emph{segment}~\cite{Dhamdhere:1993:CBD:158511.158696,Khedker:1994:GTB:186025.186043}. Thus,
a sequence of \fe is a forward segment whereas 
a sequence of \be is a backward segment. By definition, a segment does not allow mixing \fe and \be.
The flows in unidirectional analyses consist of a single unbounded segment
described by the regular expressions \text{$\left(\fe\right)^+$} or \text{$\left(\be\right)^+$}. 
The flows in bidirectional analysis consist of an unbounded number of unbounded segments
described by the regular expression \text{$\left(\fe \middle\vert \be\right)^+$}.
%%The unrestricted nature of bidirectional dependencies leading to this gross generalization of 
%%flows makes it difficult to formalize bidirectional analysis.

In order to explain all kinds of flows precisely we define 
a data flow value of node $n$, $\dfv_n$, to possibly contain multiple components using the following notation:

\begin{center}
\setlength{\tabcolsep}{4pt}
\begin{tabular}{|rl|}
\hline
$\fdfv_n$ & Forward component of $\dfv_n$ representing the information generated in ancestors of $n$.
	\rule{0em}{1.3em}%
		\\
$\bdfv_n$ & Backward component of $\dfv_n$ representing the information generated in descendants of $n$.
		\\
$\odfv_n$ & Cross-over component of $\dfv_n$ representing the information generated in other nodes.
		\\
$\fbdfv_n$&  Universal component of $\dfv_n$ representing the information generated in 
		ancestors, \\ & descendants, or other nodes.
	\\ \hline
\end{tabular}
\end{center}

% \UKadd{
The values of the components \fdfv, \bdfv, and \odfv are interdependent. For example, demands (the value of \bdfv component)
govern the aliases (the value of \fdfv component) which in turn decide what further demands should be raised. 
The cross-over component \odfv represents a flow from one segment into another.
The universal component \fbdfv models the analyses that make no
distinction between the forward, backward, or cross-over components, and the same data flow value is propagated in
all directions. This component subsumes the other three components and hence appears exclusively rather than in combination
with other components.  
These components cover all data flows reported in the literature
as illustrated in Figure~\ref{fig:bd.big.picture}. 
The flow of these components is illustrated in Figure~\ref{fig:genth.modelling}.
For convenience, we name edges 
\text{$i \rightarrow k\equiv e_{ik}$},
\text{$j \rightarrow k\equiv e_{jk}$},
\text{$k \rightarrow l\equiv e_{kl}$}, and 
\text{$k \rightarrow m\equiv e_{km}$}.
\begin{itemize}
\item The flow of unidirectional components is restricted to the designated directions along the control flow paths.

In the forward component, the information flows from node $i$ to $k$ and then to $l$. This 
is described by the sequence \text{$\fee{ik}, \fee{kl}$} which forms a subpath of a control flow path.
Similarly in the backward component, the information flows from 
$l$ to $k$ and then to $i$. It is described by the sequence
\text{$\bee{kl}, \bee{ik}$} which also coincides with a control flow path (although against the control flow).

\item The flow of cross-over and universal components is \emph{not} restricted to the control flow paths.

The cross-over component is highlighted under bidirectional flow where the information flows from 
$i$ to $k$ and then to $j$ or from $m$ to $k$ and then to $l$.
In the first case, information flow is described by the sequence
\text{$\fee{ik}, \bee{jk}$} which does not coincide with any control flow path. 
In the second case, information flow is described by the sequence
\text{$\bee{km}, \fee{kl}$} which also does not coincide with any control flow path. 

\end{itemize}

% % Note that the interdependence of \fdfv and \bdfv can have
% % the indirect effect similar to \odfv. For example, demands that flow backwards, lead to generation of
% % aliases that flow forwards (and vice-versa)---this is anaologus to an effect crossing over from a backward segment to a backward
% % segment. Yet, the demands continue to be propagated in backward direction 
% % and aliases continue to be propagated in the forward direction. Thus aliases at node $n$ flow only from the ancestors of $n$ and the
% % demands at $n$ flow only from the descendants of $n$. In other words, the forward component is restricted to forward flows and the
% % backward component is restricted to backward flows.
% }

The $k$-tuple framework~\cite{Masticola:1995:LFM:213978.213989}
classifies information reaching along different sources. This generalization allows modelling 
information flows along edges other than control flow edges (such as dependence edges or synchronization edges for concurrency).
Although this framework can also be used for bidirectional flows, their notion of separate components in the tuples
is not necessarily related to the direction of flows. As a consequence, their formulation for PRE is ad hoc (explained later in
Section~\ref{sec:bd.description}).
Further, there is no definition of \mop. We distinguish between
forward and backward components explicitly and generalize the notion of control flow paths to define \mop for
bidirectional flows.

\subsubsection{Illustrating Bidirectional Dependencies}

\begin{figure}[t]
\begin{tabular}{c|c}
\begin{tabular}{c}
\begin{pspicture}(0,0)(40,54)
% \psframe(0,0)(40,54)
\putnode{n1}{origin}{18}{50}{\psframebox{\white$a*b$}}
\putnode{n2}{n1}{-7}{-12}{\psframebox{$a*b$}}
\putnode{n3}{n2}{0}{-13}{\psframebox[framesep=1.1]{$b\!=\!5$}}
\putnode{m1}{n1}{7}{-9}{\psframebox{\white$a*b$}}
\putnode{n4}{m1}{0}{-10}{\psframebox{$a*b$}}
\putnode{n5}{n4}{0}{-18}{\psframebox{\white$a*b$}}
\putnode{n6}{n5}{-7}{-9}{\psframebox{$a*b$}}
\putnode{n7}{n4}{10}{-9}{\psframebox{$a\!=\!8$}}
%%%
\ncline{->}{n1}{n2}
\ncline{->}{n1}{m1}
\ncline{->}{m1}{n4}
\ncline{->}{n2}{n3}
\ncline{->}{n4}{n5}
\ncline{->}{n5}{n6}
\ncline{->}{n3}{n6}
\ncline{->}{n4}{n7}
\ncline{->}{n7}{n5}	
%\nccurve[angleA=240,angleB=120,ncurv=2.75]{->}{n3}{n2}
%\nccurve[angleA=300,angleB=60,ncurv=2.75]{->}{n5}{n4}
\ncloop[angleA=270,angleB=90,offset=2,loopsize=-14,armA=2.5,armB=3,linearc=.5]{->}{n5}{n4}
\ncloop[angleA=270,angleB=90,loopsize=9,armA=2.5,armB=3,linearc=.5]{->}{n3}{n2}
%%%
\putnode{w}{n1}{-6}{0}{1}
\putnode{w}{n2}{-6}{0}{2}
\putnode{w}{m1}{-6}{0}{4}
\putnode{w}{n3}{-6}{0}{3}
\putnode{w}{n4}{-6}{0}{5}
\putnode{w}{n7}{-6}{0}{6}
\putnode{w}{n5}{-6}{0}{7}
\putnode{w}{n6}{-6}{0}{8}
\end{pspicture}
\end{tabular}
&
\begin{tabular}{c}
\begin{minipage}{87mm}
No occurrence of $a*b$ can be hoisted. 
% It is partially redundant in nodes 5 and 7 but cannot be
% hoisted out because it is not partially available (and hence not partially redundant) in node 2.
The occurrences of $a*b$ in nodes 2 and 8 cannot be hoisted into node 3. Hence the occurrence of node 8 cannot
be hoisted into node 7. This suppresses the hoisting of $a*b$ from node 5.

 The effect of
$\In{2}=0$ reaches \In{5} along the path 
\text{$I_2, O_3
	  , I_8
	  , O_7
	  , I_5
	$}
which can be described by the sequence \be, \fe, \be, \fe. It does not correspond to any control flow path.
\end{minipage}
\end{tabular}
\end{tabular}
\caption{Information flows in PRE. We use the data flow equations with only Pavin as the CONST
	term~\protect\cite{Dhamdhere:1993:CBD:158511.158696,Khedker:1994:GTB:186025.186043}.
}
\label{fig:bd.pre.exmp}
\end{figure}

\begin{figure}[t]
\newcommand{\Type}{\text{\sf type}\xspace}
\newcommand{\Read}{\text{\sf read}\xspace}
\newcommand{\Use}{\text{\sf use}\xspace}
\newcommand{\Real}{\text{\sf real}\xspace}
\newcommand{\Int}{\text{\sf int}\xspace}
\newcommand{\Tr}{\text{r}\xspace}
\newcommand{\Ti}{\text{i}\xspace}
\newcommand{\Ts}{\text{s}\xspace}
\begin{tabular}{c|c}
\begin{tabular}{c}
\begin{pspicture}(0,0)(49,51)
%\psframe(0,0)(44,51)
\putnode{n1}{origin}{24}{48}{\psframebox{\makebox[16mm]{$\Read \; a$\rule[-.25em]{0em}{1em}}}}
\putnode{n2}{n1}{-13}{-11}{\psframebox{{$\Type(a) \!=\! \Int$}}}
\putnode{n3}{n2}{0}{-11}{\psframebox{\makebox[15mm]{$\Use \; b$\rule[-.25em]{0em}{1em}}}}
\putnode{n4}{n1}{13}{-11}{\psframebox{\makebox[16mm]{$\Use \; a$\rule[-.25em]{0em}{1em}}}}
\putnode{n5}{n4}{0}{-11}{\psframebox{$\Type(a) \!=\! \Real$}}
\putnode{n6}{n5}{-13}{-11}{\psframebox{{$\Type(b) \!=\! \Type(a)$\rule[-.25em]{0em}{1em}}}}
\putnode{n7}{n6}{0}{-11}{\psframebox{\makebox[15mm]{$\Use \; b$\rule[-.25em]{0em}{1em}}}}
%%%
\ncline{->}{n1}{n2}
\ncline{->}{n1}{n4}
\ncline{->}{n2}{n3}
\ncline{->}{n4}{n5}
\ncline{->}{n5}{n6}
\ncline{->}{n3}{n6}
\ncline{->}{n6}{n7}
%%%
\putnode{w}{n1}{-11}{0}{1}
\putnode{w}{n2}{-12}{0}{2}
\putnode{w}{n3}{-11}{0}{3}
\putnode{w}{n4}{-11}{0}{4}
\putnode{w}{n5}{-13}{0}{5\,}
\putnode{w}{n6}{-16}{0}{6}
\putnode{w}{n7}{-11}{0}{7}
\end{pspicture}
\end{tabular}
&
\begin{tabular}{c}
\begin{minipage}{78mm}
The statements in nodes 2, 5, and 6 model type constraints. Other statements do not influence types.
We show the types for \Use and \Read statements in the form \text{$\langle \fdfv,\bdfv,\odfv\rangle$} by abbreviating \Int by \Ti and
\Real by \Tr.

\bigskip

\begin{center}
\begin{tabular}{|c|l|l|}
\hline
Node 	& Types of variable $a$ 
	& Types of variable $b$ 
\\ \hline\hline
1 
	&  $\langle  \emptyset, \left\{\Ti,\Tr\right\}, \left\{\Ti,\Tr\right\} \rangle$
	&  $\langle  \emptyset, \left\{\Ti,\Tr\right\}, \left\{\Ti,\Tr\right\} \rangle$
\\ \hline
3 
	&  $\langle  \left\{\Ti\right\}, \emptyset, \left\{\Tr\right\} \rangle$
	&  $\langle  \emptyset, \left\{\Ti,\Tr\right\}, \left\{\Ti,\Tr\right\} \rangle$
\\ \hline
4 
	&  $\langle  \emptyset, \left\{\Tr\right\} ,\left\{\Ti\right\} \rangle$
	&  $\langle  \emptyset, \left\{\Ti,\Tr\right\}, \left\{\Ti,\Tr\right\} \rangle$
\\ \hline
7 
	&  $\langle  \left\{\Ti,\Tr\right\},\emptyset, \emptyset \rangle$
	&  $\langle  \left\{\Ti,\Tr\right\},\emptyset, \emptyset \rangle$
\\ \hline
\end{tabular}
\end{center}
\bigskip
For variable $a$, type \Tr reaches node 3 as part of cross-over component along the path 
\text{$O_5, I_6, O_3$}.
Similarly, type \Ti reaches node 4 along the path \text{$I_2, O_1, I_4$}.
\end{minipage}
\end{tabular}
\end{tabular}
\caption{An example of flow-sensitive type inferencing. We present a slightly spruced up modelling of the 
analysis~\protect\cite{Khedker:2003:BDF:2295357.2295383}.
At the end of the analysis, an empty set indicates no information and can be interpreted suitably 
depending upon the application of the analysis.
}
\label{fig:bd.type.inferencing}
\end{figure}
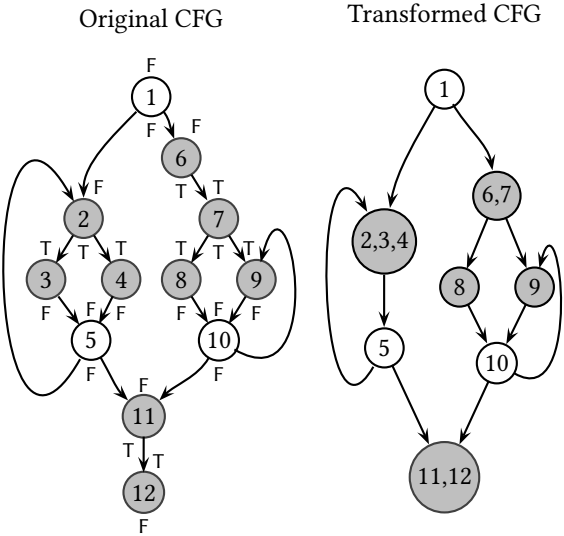

\begin{figure}[t]
\newcommand{\cc}{\text{\sf canCoalesce}\xspace}
\newcommand{\bT}{\text{\footnotesize\sf \raisebox{1.25mm}{T}}\xspace}
\newcommand{\aT}{\text{\footnotesize\sf \raisebox{-3.5mm}{T}}\xspace}
\newcommand{\bF}{\text{\footnotesize\sf \raisebox{1.25mm}{F}}\xspace}
\newcommand{\aF}{\text{\footnotesize\sf \raisebox{-3.5mm}{F}}\xspace}

\begin{tabular}{@{}ccc@{}}
\begin{tabular}{@{}c}
\begin{pspicture}(6,-3)(44,70)
%\psframe(2,-3)(48,70)
\putnode{n1}{origin}{23}{56}{\pscirclebox{1}}
\putnode{w}{n1}{0}{10}{Original CFG}
\putnode{n6}{n1}{4}{-8}{\pscirclebox[fillstyle=solid,fillcolor=lightgray,linecolor=darkgray]{6}}
\putnode{n7}{n6}{5}{-8}{\pscirclebox[fillstyle=solid,fillcolor=lightgray,linecolor=darkgray]{7}}
\putnode{n8}{n7}{-5}{-8}{\pscirclebox[fillstyle=solid,fillcolor=lightgray,linecolor=darkgray]{8}}
\putnode{n9}{n7}{5}{-8}{\pscirclebox[fillstyle=solid,fillcolor=lightgray,linecolor=darkgray]{9}}
\putnode{n10}{n8}{5}{-8}{\pscirclebox[framesep=.55]{10}}
\putnode{n11}{n10}{-10}{-10}{\pscirclebox[framesep=.7,fillstyle=solid,fillcolor=lightgray,linecolor=darkgray]{11}}
\putnode{n12}{n11}{0}{-10}{\pscirclebox[fillstyle=solid,fillcolor=lightgray,linecolor=darkgray,framesep=.6]{12}}
\putnode{n2}{n1}{-9}{-16}{\pscirclebox[fillstyle=solid,fillcolor=lightgray,linecolor=darkgray]{2}}
\putnode{n3}{n2}{-5}{-8}{\pscirclebox[fillstyle=solid,fillcolor=lightgray,linecolor=darkgray]{3}}
\putnode{n4}{n2}{5}{-8}{\pscirclebox[fillstyle=solid,fillcolor=lightgray,linecolor=darkgray]{4}}
\putnode{n5}{n3}{6}{-8}{\pscirclebox{5}}
\nccurve[angleA=330,angleB=75,ncurv=2.5,nodesepB=-.5,nodesepA=-.75]{->}{n10}{n9}
\nccurve[angleA=310,angleB=110,nodesepA=-.75,nodesepB=-.25]{->}{n1}{n6}
\nccurve[angleA=225,angleB=90,nodesepA=-.75,nodesepB=0]{->}{n1}{n2}
\nccurve[angleA=240,angleB=120,ncurv=2.5,nodesepB=-.5,nodesepA=-.25]{->}{n5}{n2}
\nccurve[angleA=300,angleB=135,nodesepA=-.5,nodesepB=-.9]{->}{n5}{n11}
\nccurve[angleA=240,angleB=45,nodesepA=-.5,nodesepB=-.9]{->}{n10}{n11}
\ncline[nodesepA=-.5,nodesepB=-.5]{->}{n7}{n9}
\ncline[nodesepA=-.5,nodesepB=-.6]{->}{n7}{n8}
\ncline[nodesepA=-.5,nodesepB=-.6]{->}{n2}{n3}
\ncline[nodesepA=-.5,nodesepB=-.6]{->}{n2}{n4}
\ncline[nodesepA=-.5,nodesepB=-.6]{->}{n4}{n5}
\ncline[nodesepA=-.5,nodesepB=-.7]{->}{n3}{n5}
\ncline[nodesepA=-.5,nodesepB=-.6]{->}{n8}{n10}
\ncline[nodesepA=-.5,nodesepB=-.7]{->}{n9}{n10}
\ncline[nodesep=-.3]{->}{n11}{n12}
\ncline[nodesepA=-.5,nodesepB=-.35]{->}{n6}{n7}
\putnode{w}{n1}{0}{5}{\aF}
\putnode{w}{n1}{0}{-5}{\bF}
\putnode{w}{n2}{2}{5}{\aF}
\putnode{w}{n2}{0}{-5}{\bT}
\putnode{w}{n3}{0}{5}{\aT}
\putnode{w}{n3}{0}{-5}{\bF}
\putnode{w}{n4}{0}{5}{\aT}
\putnode{w}{n4}{0}{-5}{\bF}
\putnode{w}{n5}{0}{5}{\aF}
\putnode{w}{n5}{0}{-5}{\bF}
\putnode{w}{n6}{2}{5}{\aF}
\putnode{w}{n6}{0}{-5}{\bT}
\putnode{w}{n7}{0}{5}{\aT}
\putnode{w}{n7}{0}{-5}{\bT}
\putnode{w}{n8}{0}{5}{\aT}
\putnode{w}{n8}{0}{-5}{\bF}
\putnode{w}{n9}{-1}{5}{\aT}
\putnode{w}{n9}{0}{-5}{\bF}
\putnode{w}{n10}{0}{5}{\aF}
\putnode{w}{n10}{0}{-5}{\bF}
\putnode{w}{n11}{0}{5}{\aF}
\putnode{w}{n11}{-2}{-5}{\bT}
\putnode{w}{n12}{2}{5}{\aT}
\putnode{w}{n12}{0}{-5}{\bF}
\end{pspicture}
\end{tabular}
&
\begin{tabular}{c@{}}
\begin{pspicture}(9,-3)(33,70)
%\psframe(2,-3)(48,70)
\putnode{n1}{origin}{20}{57}{\pscirclebox{1}}
\putnode{w}{n1}{0}{10}{Transformed CFG}
\putnode{n6}{n1}{7}{-14}{\pscirclebox[fillstyle=solid,fillcolor=lightgray,framesep=.6]{6,7}}
\putnode{n8}{n6}{-5}{-12}{\pscirclebox[fillstyle=solid,fillcolor=lightgray]{8}}
\putnode{n9}{n6}{5}{-12}{\pscirclebox[fillstyle=solid,fillcolor=lightgray]{9}}
\putnode{n10}{n8}{5}{-10}{\pscirclebox[framesep=.55]{10}}
\putnode{n12}{n10}{-7}{-15}{\pscirclebox[fillstyle=solid,fillcolor=lightgray,linecolor=darkgray,framesep=.6]{11,12}}
\putnode{n2}{n1}{-8}{-20}{\pscirclebox[fillstyle=solid,fillcolor=lightgray,framesep=.6]{2,3,4}}
\putnode{n5}{n2}{0}{-14}{\pscirclebox{5}}
\nccurve[angleA=330,angleB=75,ncurv=1.5,nodesepB=-.25,nodesepA=-.5]{->}{n10}{n9}
\nccurve[angleA=300,angleB=100,nodesepA=-.75]{->}{n1}{n6}
\ncline[nodesepA=-.5,nodesepB=-.25]{->}{n6}{n9}
\ncline[nodesepA=-.5,nodesepB=-.35]{->}{n6}{n8}
\nccurve[angleA=240,angleB=80,nodesepA=-.5,nodesepB=0]{->}{n1}{n2}
\ncline[nodesepA=-.5,nodesepB=-.5]{->}{n10}{n12}
\nccurve[angleA=270,angleB=90]{->}{n2}{n5}
\ncline[nodesepA=-.5,nodesepB=-.5]{->}{n5}{n12}
\nccurve[angleA=240,angleB=120,ncurv=1.5,nodesepB=-.5,nodesepA=-.25]{->}{n5}{n2}
\ncline[nodesepA=-.5,nodesepB=-.25]{->}{n8}{n10}
\ncline[nodesepA=-.5,nodesepB=-.35]{->}{n9}{n10}
\end{pspicture}
\end{tabular}
&
\begin{tabular}{@{}c}
\begin{minipage}{66mm}
\begin{itemize}
\item The shaded nodes in the original graph are candidate nodes for coalescing with adjacent nodes.
\item A node can be coalesced with a predecessor (successor) \emph{only} if it can be coalesced with \emph{all} predecessors
(successors).
The result of the analysis is marked as true or false for the entry and exit of each node.
\item The shaded nodes in the transformed graph are the coalesced nodes. 

      Node 9 cannot be coalesced with 10, so it cannot coalesced with the other predecessor 7 either. Hence,
      no successor of 7 can be coalesced with it, leaving 8 out of coalescing.
  
      This effect of 10 reaches node 8 along the path $O_{10}, I_9, O_7, I_8$.
\end{itemize}
\end{minipage}
\end{tabular}
\end{tabular}

\caption{An example of coalescing analysis~\cite{GPG.pta}. We present a simplified version by abstracting out the exact need of coalescing and
the features that allow a node to be considered for coalescing.
}
\label{fig:bd.coalsecing}
\end{figure}

\begin{figure}[t]
\newcommand{\load}{\text{\sf load}\xspace}
\newcommand{\store}{\text{\sf store}\xspace}
\newcommand{\pop}{\text{\sf pop}\xspace}
\newcommand{\ls}{\text{\sf ls}\xspace}

\begin{tabular}{c|c}
\begin{tabular}{c}
\begin{pspicture}(0,0)(40,54)
% \psframe(0,0)(40,54)
\putnode{n1}{origin}{18}{50}{\psframebox{\makebox[9mm]{\white$a*b$\rule[-.25em]{0em}{.9em}}}}
\putnode{n2}{n1}{-10}{-12}{\psframebox{\makebox[9mm]{\load x\rule[-.25em]{0em}{.9em}}}}
\putnode{n3}{n2}{0}{-12}{\psframebox{\makebox[9mm]{\ls\rule[-.25em]{0em}{.9em}}}}
\putnode{m1}{n1}{10}{-10}{\psframebox{\makebox[9mm]{\load y\rule[-.25em]{0em}{.9em}}}}
\putnode{n4}{m1}{0}{-11}{\psframebox{\makebox[9mm]{\white$a*b$\rule[-.25em]{0em}{.9em}}}}
\putnode{n5}{n4}{-8}{-11}{\psframebox{\makebox[9mm]{\ls\rule[-.25em]{0em}{.9em}}}}
\putnode{n6}{n3}{0}{-19}{\psframebox{\makebox[9mm]{\pop\rule[-.25em]{0em}{.9em}}}}
\putnode{n7}{n4}{8}{-11}{\psframebox{\makebox[9mm]{\store z\rule[-.25em]{0em}{.9em}}}}
%%%
\ncline{->}{n1}{n2}
\ncline{->}{n1}{m1}
\ncline{->}{m1}{n4}
\ncline{->}{n2}{n3}
\ncline{->}{n4}{n5}
\ncline{->}{n5}{n6}
\ncline{->}{n3}{n6}
\ncline{->}{n4}{n7}
% \ncline{->}{n7}{n5}	
%\nccurve[angleA=240,angleB=120,ncurv=2.75]{->}{n3}{n2}
%\nccurve[angleA=300,angleB=60,ncurv=2.75]{->}{n5}{n4}
% \ncloop[angleA=270,angleB=90,offset=2,loopsize=-14,armA=2.5,armB=3,linearc=.5]{->}{n5}{n4}
% \ncloop[angleA=270,angleB=90,loopsize=9,armA=2.5,armB=3,linearc=.5]{->}{n3}{n2}
%%%
\putnode{w}{n1}{-7}{0}{1}
\putnode{w}{n2}{-7}{0}{2}
\putnode{w}{m1}{-7}{0}{4}
\putnode{w}{n3}{-7}{0}{3}
\putnode{w}{n4}{-7}{0}{5}
\putnode{w}{n7}{-7}{0}{7}
\putnode{w}{n5}{-7}{0}{6}
\putnode{w}{n6}{-7}{0}{8}
\end{pspicture}
\end{tabular}
&
\begin{tabular}{c}
\begin{minipage}{87mm}
The \ls (level sequence) nodes represent a sequence of instructions 
such that these instructions do not affect the stack height and do not consume the values already 
present in the stack.

The \load-\pop pair along the control flow path $(1,2,3,8)$ and
$(1,4,5,6,8)$ can be considered for elimination.
However, the \store in node 7 requires a \load in node 4, which in turn makes it mandatory to retain
the \pop in node 8. Thus, the \load in node 2 cannot be eliminated.

The effect that the \load in node 2 is mandatory because of 
node 7, reaches node 2 along the path
\text{$I_7, O_5
	  , I_6
	  , O_6
	  , I_8
	  , O_3
	  , I_3
	  , O_2
	$}
which can be described by the sequence \be, \fe, \fe, \fe, \be, \be, \be. It does not correspond to any control flow path.

\end{minipage}
\end{tabular}
\end{tabular}
\caption{An example of \load-\pop stack optimization taken from~\cite{SAABAS2007103,Frade:2009:BDA:1480945.1480965}.
}
\label{fig:bd.su.load.pop}
\end{figure}

We illustrate bidirectional dependencies for 
partial redundancy elimination (aka PRE)~\cite{Dhamdhere:1993:CBD:158511.158696,Khedker:1994:GTB:186025.186043}
in Figure~\ref{fig:bd.pre.exmp},
for type inferencing of flow-sensitive types in Figure~\ref{fig:bd.type.inferencing},
for coalescing analysis in Figure~\ref{fig:bd.coalsecing}, and for
stack based code optimization for load-pop pairs in Figure~\ref{fig:bd.su.load.pop}.
Let $I_i$ and $O_i$ denote the entry and exit points of node $i$.
We refer to the examples of bidirectional analyses presented in Figure~\ref{fig:bd.big.picture}.
PRE (\prea), coalescing analysis, stack based 
(load-pop pair or store-load pair) 
optimization
 and type inferencing (\tib) have only the universal component whereas type inferencing (\tia) has forward, backward, and 
the cross-over components. 
It is easy to see that the flows of universal components and cross-over components
do not coincide with the directed paths in a control flow graph. They have been modelled using 
the concept of \emph{information flow path} which was introduced for 
PRE~\cite{Dhamdhere:1993:CBD:158511.158696,Khedker:1994:GTB:186025.186043}.
An information flow path is essentially a path
in the underlying undirected graph of a control flow graph. It represents a path along which data flow information could flow
depending on the nature of dependencies in an analysis. 
The flow function $f_\rho$ for an information
flow path $\rho$ is defined by composing appropriate flow functions for the nodes appearing in $\rho$ as illustrated in 
Figure~\ref{fig:genth.modelling}.

Let \paths{n} denote the set of control flow paths passing through node $n$. Then the paths in \paths{n} are sufficient to 
compute the values of $\fdfv_n$ and $\bdfv_n$. However, they are not sufficient to compute the 
values of $\odfv_n$ and $\fbdfv_n$. 
This is somewhat incompatible with the usual method of
establishing soundness of a static analysis which is based on showing that the information that the analysis computes for a node
is a conservative approximation of all possible execution paths involving the node; thus the paths in \paths{n} are
sufficient for reasoning about the analyses that have only
$\fdfv_n$ and $\bdfv_n$. 
More specifically,
we cannot reason about PRE, type inferencing of flow-sensitive types,
coalescing analysis or stack based (load-pop pair or store-load pair) optimization
 by comparing the analysis results
with the concrete values seen in the execution traces passing through a given node.

% % We illustrate the use of information flow paths for describing flows in PRE in Figure~\ref{fig:bd.pre.exmp},
% % for type inferencing of flow-sensitive types in Figure~\ref{fig:bd.type.inferencing},
% % and for colasecing analysis in Figure~\ref{fig:bd.coalsecing}.
% % From these examples, it is clear that the information flow paths do not coincide with graph paths in these
% % analyses. As a consequence, the data flow value at a node, say $m$, is influenced by some node $n$, even when 
% % there is no control flow path from $n$ to $m$ or vice-versa. In general, 
% % any node can influence the data flow value at every node depending upon the flow functions.

%%We conclude that these three analyses are outliers because in most flow sensitive analyses, the data flow value
%%at a node can be defined solely in terms of execution paths passing through the node  but not in these analyses. 
%%In all analyses except these three, 
%%the data flow value of node $n$
%%is
%%influenced only by the ancestors or descendants of $n$, but not by other nodes. 

\subsubsection{Bidirectional Analyses Reported in the Literature}
\label{sec:bd.description}

We classify bidirectional flows
into the following categories based on whether they have the cross-over components or not or
whether they combine all components into the lone universal component.
\begin{enumerate}
\item Bidirectional analyses with forward and backward components.

        Most demand-driven methods (such as ours) and liveness based flow- and context- sensitive points-to
        analysis~\cite{lfcpa} have distinct forward and backward components but no cross-over component: 
	a demand or liveness information flows backwards while 
        alias information or points-to information flows forwards. 
        The general bidirectional flows are described by the regular expression \text{$(\fe \vert \be)^+$};
	the demand-driven analyses that we know, start with a backward flow and have a 
        restricted regular expression \text{$\be (\fe \vert \be)^+$}.

        Although the regular expression \text{$(\fe \vert \be)^+$} also admits cross-over components where
        one segment ends and the next segment begins, the reversal in the direction in these analyses is for a different component.
        Information in a particular component flows only in a given direction thereby eliminating the need of
        a cross-over component.

\item Bidirectional analyses with forward, backward, and cross-over components.

      Type inferencing of flow sensitive types~\cite{Khedker:2003:BDF:2295357.2295383, Singer04sparsebidirectional} distinguishes between the
      type reaching a node from its ancestors, descendants, or other nodes. 
      These flows are described by the regular expression \text{$(\fe \vert \be)^+$}.
      This analysis is illustrated in Figure~\ref{fig:bd.type.inferencing}.

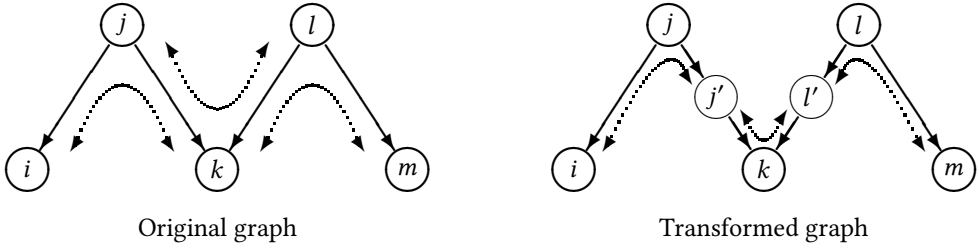
\begin{figure}[t]
\begin{pspicture}(-6,-9)(130,23)
%\psframe(-6,-9)(130,23)
\thicklines
\setlength{\unitlength}{.9pt}
\put(0,0){\circle{20}}
\put(0,0){\makebox(0,0){$i$}}
\put(40,60){\circle{20}}
\put(40,60){\makebox(0,0){$j$}}
\put(80,0){\circle{20}}
\put(80,0){\makebox(0,0){$k$}}
\put(120,60){\circle{20}}
\put(120,60){\makebox(0,0){$l$}}
\put(160,0){\circle{20}}
\put(160,0){\makebox(0,0){$m$}}
\put(80,-25){\makebox(0,0){Original graph}}

\put(34.45,51.68){\vector(-2,-3){29.00}}
\put(45.55,51.68){\vector(2,-3){29.00}}
%\put(65.55,21.68){\vector(2,-3){8.90}}
%\put(94.45,21.68){\vector(-2,-3){8.90}}
\put(114.45,51.68){\vector(-2,-3){29.00}}
\put(125.55,51.68){\vector(2,-3){29.00}}
%%%%%%%%%%%%%%%%%%%%%%%%%%%%%%%%%%%%%%%%%%
\qbezier[30](20,10)(40,60)(60,10)
\put(20,10){\vector(-2,-3){2}}
\put(60,10){\vector(2,-3){2}}
\qbezier[30](60,50)(80,0)(100,50)
\put(60,50){\vector(-2,3){2}}
\put(100,50){\vector(2,3){2}}
\qbezier[30](100,10)(120,60)(140,10)
\put(100,10){\vector(-2,-3){2}}
\put(140,10){\vector(2,-3){2}}
%%%%%%%%%%%%%%%%%%%%%%%%%%%%%%%%%%%%%%%%%%%%
\put(230,0){\circle{20}}
\put(230,0){\makebox(0,0){$i$}}
\put(270,60){\circle{20}}
\put(270,60){\makebox(0,0){$j$}}
{\thinlines
\put(290,30){\circle{20}}
\put(290,30){\makebox(0,0){$j'$}}
}
\put(310,0){\circle{20}}
\put(310,0){\makebox(0,0){$k$}}
{\thinlines
\put(330,30){\circle{20}}
\put(330,30){\makebox(0,0){$l'$}}
}
\put(350,60){\circle{20}}
\put(350,60){\makebox(0,0){$l$}}
\put(390,0){\circle{20}}
\put(390,0){\makebox(0,0){$m$}}
\put(310,-25){\makebox(0,0){Transformed graph}}
%%%%%%%%%%%%%%%%%%%%%%%%%%%%%%%%%%%%%%%%%%
\qbezier[30](244,10)(270,62)(278,38)
\put(244,10){\vector(-2,-3){2}}
\put(278,38){\vector(2,-3){2}}
\qbezier[30](342,38)(350,62)(376,10)
\put(342,38){\vector(-2,-3){2}}
\put(376,10){\vector(2,-3){2}}
\qbezier[15](302,22)(310,2)(318,22)
\put(302,22){\vector(-2,3){2}}
\put(318,22){\vector(2,3){2}}
%%%%%%%%%%%%%%%%%%%%%%%%%%%%%%%%%%%%%%%%%%

\put(264.45,51.68){\vector(-2,-3){29.00}}
\put(275.55,51.68){\vector(2,-3){8.90}}
\put(295.55,21.68){\vector(2,-3){8.90}}
\put(324.45,21.68){\vector(-2,-3){8.90}}
\put(344.45,51.68){\vector(-2,-3){8.90}}
\put(355.55,51.68){\vector(2,-3){29.00}}
\end{pspicture}
\caption{Edge-splitting}
\label{fig:split.exmp}
\end{figure}

\item Bidirectional analyses with the lone universal component.

        All variants of PRE have
        both forward and backward flows but the same information flows in both directions. 
	Further, most variants of PRE and the analyses inspired by
              PRE, have some form of restriction as explained below (a more detailed discussion of
              this class of analyses can be found in~\cite{Khedker:1999:BDF:606666.606676}).\footnote{This is admittedly an
              old reference but there has not been much work in this direction of late.}
	     \begin{enumerate}
	     \item The original PRE~\cite{Morel:1979:GOS:359060.359069}
		   has unbounded backward segments but the forward segments contain at most one edge because
                   the forward node flow function \ff{n} is defined to compute $\top$ in PRE (the value on \Out{n}
		   does not depend on that of \In{n}). Thus, the
		   flows in PRE can be described by the regular expression
		   \text{$\left((\be)^+ \!\cdot(\fe)^k
                   \right)^+$} where $k$ is 0 or 1~\cite{Dhamdhere:1993:CBD:158511.158696,Khedker:1994:GTB:186025.186043,Khedker:2009:DFA:1592955}.  
		    Since the length of forward segments is
                   bounded, this analysis has also been called a \emph{mostly backwards}
                   analysis~\cite{Dhamdhere:1992:ALP:143095.143135}. In spite of a bounded forward segment,
                   the number of segments in this analysis remains unbounded. 

		The $k$-tuple framework~\cite{Masticola:1995:LFM:213978.213989}
		distinguishes between the information reaching along forward and backward edges for PRE.
		This classification is ad hoc in the case of PRE
as the information at the entry of a node not only depends on the information reaching along forward edges
but it also depends on the information at the exit of the node. Since the information 
at the entry depends on the information flowing along both, forward and backward edges, distinction between
the information coming along different sources is redundant.

             \item Coalescing analysis~\cite{GPG.pta} tries to find out the nodes in a control flow graph that can be merged
                   together to eliminate redundant control flow for summarizing a procedure for a bottom-up flow- and
                   context-sensitive points-to analysis. It contains only edge flow functions but no node flow functions.
                   Hence each segment has exactly one edge and the flows can be described by the regular expression
		   \text{$\left(\fe \!\cdot \be\right)^+ \mid \left(\be \!\cdot \fe\right)^+$}.
  	     \item 
		   Stack based IR (such as byte code) has instructions such as \text{\sf load $x$} (push the value of $x$ into the
                   top of the stack), \text{\sf store $x$} (pop the top of the stack and store the value into $x$), and
                   \text{\sf pop} (discard the top of the stack). This creates the possibility of optimizing the
                   code to eliminate redundant \text{\sf load-pop} or \text{\sf store-load} 
                   pairs~\cite{SAABAS2007103, Frade:2009:BDA:1480945.1480965}
This decision to eliminate instructions requires the information to flow to all program nodes
and its flows can be described by the regular expression
\text{$(\fe \vert \be)^+$}. 

Similarly, type inferencing~\cite{Frade:2009:BDA:1480945.1480965} has universal component and the
type information flows to all program nodes.
This flow is also
described by the regular expression
\text{$(\fe \vert \be)^+$}. 

	     \item There is a plethora of investigations inspired by PRE that have tried to bound the number of
                   segments in an attempt to perform the analysis using a fixed number of cascaded forward and backward analyses.
                   These investigations were motivated by the belief that most bidirectional analyses can be
                   solved using a bounded sequence of unidirectional flows~\cite{Muchnick:1998:ACD:286076}. These investigations used the
                   following two approaches that are different but achieve the same effect (see~\cite{Khedker:1999:BDF:606666.606676} for more details): 
		   \begin{itemize}
		   \item Modifying the control flow graph using 
                         \emph{edge-spitting}~\cite{Khedker:1999:BDF:606666.606676,Khedker:2009:DFA:1592955}. 
                        Let an edge from a fork node (i.e. a node with multiple successors) to a join node (i.e. a node
			with multiple predecessors) be called a \emph{critical edge}. A large body of work is based on modifying the
                        control flow graph by splitting critical edges by inserting a node thereby replacing the edge by a
	                sequence of two edges as shown in Figure~\ref{fig:split.exmp}.
                        Since the forward segments are restricted to a single edge, this has the effect of prohibiting
                        a backward segment after a forward segment thereby bounding the number of segments.
                        Here we merely cite the papers that use this 
			idea~\cite{Briggs:1994:EPR:178243.178257,%
			Dhamdhere:1992:ALP:143095.143135,%
			Drechsler:1988:SPM:48022.214509,%
			Drechsler:1993:VKR:152819.152823,%
			Hailperin:1998:CCM:295656.295664,%
			vonHanxleden:1994:GBC:773473.178253,%
			Knoop:1992:LCM:143095.143136,%
			Knoop:1994:OCM:183432.183443,%
			Knoop:1994:PDC:178243.178256,%
			Knoop:1995:PAM:223428.207150,%
			krs.strength_red}. 
		\item Modifying the dependencies using \emph{edge-placement}. The idea is to place some computations along edges
                      instead of inserting them in the existing nodes. Effectively, this amounts to splitting the chosen edges. The main
                      difference is that in this approach, the edges are not split a-priori before the analysis but are split on demand after the analysis.
			Again, we merely cite the papers here~\cite{
			Dhamdhere:1988:FAC:51607.51621,%
			Dhamdhere:1993:EAB:169701.169684,%
			DBLP:journals/jpl/DhaneshwarD95,%
			Paleri:1998:SAP:307824.307851} that use edge-placement.
                   \end{itemize}
		Observe that edge-splitting (or edge-placement) reduces PRE to a bounded sequence of cascaded unidirectional analyses 
		because it does not have all the four flow functions. Hence the number of segments in the flows is bounded.
                It is easy to see that this transformation will not reduce other reported bidirectional flows to a bounded sequence of
                unidirectional analyses unless some segment is bounded.
	     \end{enumerate}
	\end{enumerate}

%%Since unidirectional flows are a special case of bidirectional flows, it is possible to lift the definition of \mop to
%%bidirectional flows in a manner that it applies uniformly to unidirectional and bidirectional flows. 
%%Section~\ref{sec:cfp.bidirectional} defines a more general notion of control flow paths to enable a general
%%definition of \mop in Section~\ref{sec:mfp.mop.bidirectional}.
%%

\subsection{Defining Control Flow Paths for Bidirectional Data Flow Analysis}
\label{sec:cfp.bidirectional}

The classical definition of the meet-over-paths solution (\mop) is as 
	follows~\cite{Kildall:1973:UAG:512927.512945, Khedker:2009:DFA:1592955, Nielson:1999:PPA:555142}:
\begin{align*}
\forall n \in \nodes: \;\;\;\;\; \mop(n) \;\;\; = \displaystyle\bigsqcap\limits_{\scriptsize\rho \,\in\, \mathit{Paths}(n)} 
	f_\rho (\boundary)
\end{align*}
where 
\nodes represent the set of all nodes in program \prog,
\boundary{} (short for Boundary Information) denotes the external data flow information reaching the procedure from its calls,
\text{$\mathit{Paths}(n)$} denotes the set of control flow paths reaching $n$, and 
$f_\rho$ represents the composition of the flow functions of the nodes appearing on
path $\rho$.

Intuitively, $\mop(n)$ represents the greatest lower bound of the data flow information that can reach node $n$
along all paths that involve $n$.  We use the following notation for denoting sets of paths involving node $n$
\begin{center}
\begin{tabular}{|rl|}
\hline
\fpaths{n} & The set of control flow paths from \Start{} to the entry of node $n$.
	\\
\bpaths{n} & The set of control flow paths from the exit of node $n$ to \End{}. 
\\ \hline
\end{tabular}
\end{center}
In the presence of loops, nodes could appear multiple times in a control 
flow paths. Hence, node $n$ could occur in paths in \fpaths{n} and \bpaths{n}.
Then, $\mop(n)$ for forward flows is defined using the paths in \fpaths{n} whereas $\mop(n)$ for
backward flows is defined using the paths in \bpaths{n}.
For bidirectional flows, $\mop(n)$ could be defined in a similar manner provided 
\text{$\mathit{Paths}(n)$} can be defined unambiguously because
bidirectional flows require both kinds of paths (i.e. paths in \fpaths{n} and \bpaths{n}).

Since most bidirectional methods (and in particular, demand-driven methods) do not have a cross-over component, we 
do not use information flow paths which are defined in terms of the paths in the underlying undirected graph of the control flow graph.
Instead, we model the computation of forward and backward components of a data flow value by
defining the concept of a \emph{qualified control flow path} involving node $n$ in terms of paths from  
\Start{} to the entry of $n$ and from the exit of $n$ to \End{}. 
Formally,
\begin{align}
\paths{n} & = \left\{ \left(\overrightarrow{\rho},n, \overleftarrow{\rho}\right) \; \middle\vert \; 
		\overrightarrow{\rho} \in \fpaths{n}, \; \overleftarrow{\rho} \in \bpaths{n} 
		\right\}
\\
\Pred\left(\left(
		\overrightarrow{\rho},
		n,
		\overleftarrow{\rho},
		\right), n \right) 
		& = \text{ last node in } \overrightarrow{\rho}
	\nonumber
	\\
\Succ\left(\left(
		\overrightarrow{\rho},
		n,
		\overleftarrow{\rho},
		\right), n \right) 
		& = \text{ first node in } \overleftarrow{\rho}
	\nonumber
\end{align}
With this view, a qualified path \text{$\rho \in \paths{n}$} corresponds to a particular occurrence of node 
in the underlying unqualified control flow path. Consider two qualified control flow paths
$\rho_1$ and $\rho_2$ such that
\text{$\rho_1 = (\overrightarrow{1,2,3},5,\overleftarrow{2,4,5,6})$} 
and \text{$\rho_2 = (\overrightarrow{1,2,3,5,2,4},5,\overleftarrow{6})$}. Both $\rho_1$ and $\rho_2$
correspond to the same underlying control flow path
\text{$(1,2,3,5,2,4,5,6)$}. However, 
$\rho_1$ is used for computing the data flow information reaching the
first occurrence of node 5 in the underlying control flow path whereas 
$\rho_2$ is used for computing the data flow information reaching the
second occurrence of node 5. Also,
\text{$\Pred(\rho_1,5) = 3$},
\text{$\Succ(\rho_1,5) = 2$},
\text{$\Pred(\rho_2,5) = 4$}, and
\text{$\Succ(\rho_2,5) = 6$}.
Note that we overload the \Pred and \Succ functions such that 
$\Pred(n)$ returns a set of predecessors for node $n$ in a program \prog whereas,
$\Pred(\rho, n)$ returns a predecessor of node $n$ along a path $\rho \in \paths{n}$.

If we model type inferencing without the cross-over component as
$\dfv = \langle \fdfv, \bdfv \rangle$,
we could use our definition of qualified control flow paths instead of information flow paths.
However, in case of PRE and coalescing we need to use information flow paths.

We uniformly use the qualified control flow paths with the notation \text{$\rho \in \paths{n}$} leaving it implicit that
the forward flows would use the forward component $\overrightarrow{\rho}$ of $\rho$ whereas the
backward flows would use the backward component $\overleftarrow{\rho}$. 

For our analysis, we assume that only \End{} statement contains a virtual call statement and other statements in $\rho$ are
only pointer assignment statements. The statements other than pointer assignments can be modelled by a \emph{skip} statement for alias analysis and hence
can be effectively removed from $\rho$.

\subsection{Defining \protect\mop and \protect\mfp for Bidirectional Data Flow Analysis}
\label{sec:mfp.mop.bidirectional}

This section defines the \mop for bidirectional flows.  This definition is a generalization of the original definition
for unidirectional flows~\cite{Kildall:1973:UAG:512927.512945,Khedker:2009:DFA:1592955} but is 
different from the definition based on information flow paths~\cite{Khedker:1994:GTB:186025.186043}
in that it excludes the cross-over components.

Let the initialization for a program \prog be described by
vector $\val{} = (\boundary_{\Start{}}, \top, \top, \ldots \boundary_{\End{}})$. The initialization for node $n$ is denoted by
\initval{n}\footnote{
Note that we represent vectors in boldface (e.g. \val{}, \gfunname) and 
an element in the vector using sans-serif and italicized font (e.g. \initval{n}, \gfunnode{n}).
}. 
The vector of data flow values is denoted by \gfunname and the data flow value for node $n$ is denoted by
\gfunnode{n}. 
\gfunnode{n} can contain multiple components depending on the analysis it is defined for.
In general, it represents the values at the entry and exit of a node,
modelled as $\gfunnode{n} = (\gfunnodein{n}{}, \gfunnodeout{n}{})$.
For a bidirectional analysis such as demand-driven alias analysis, we define
$\gfunnodein{n}{} = (\ain_n, \din_n)$ and 
$\gfunnodeout{n}{} = (\aout_n, \dout_n)$. 
We use such a modelling in the further sections.

Let $\gfun{n}{\prog}$ represent the final
value at program point $n$ across all paths involving $n$ in program \prog.
It is the \mfp solution for program \prog.  
Let $\gfun{n}{\rho}$ represent the final value at program point $n$ along
the qualified control flow path \text{$\rho \in \paths{n}$}; it is used for computing \mop solution by merging
values along all paths in \paths{n}. 
The corresponding intermediate values computed in step 
$i>0$ are denoted by $\gfun{n}{\prog,\,i}$ and $\gfun{n}{\rho,\,i}$.
Step $i$ computes values for all the nodes $n\in \nodes$ i.e. 
the entire vector
$\gfunname^{\prog, i}$ and $\gfunname^{\rho, i}$
using the values of 
$\gfunname^{\prog, i\!-\!1}$ and $\gfunname^{\rho, i\!-\!1}$, $i>0$. 
%%%%%\footnote{
%%%%%For convenience of reasoning,
%%%%%we use Gauss-Seidel method instead of Jacobi method to 
%%%%%solve the equations. 
%%%%%%%{\red as it is convenient to represent
%%%%%%%the \mop and \mfp solution for step $i$ in terms of step $i-1$.}
%%%%%{\red references}}
%%%\mfp and \mop solutions are obtained by repeated application
%%%of the flow functions until fixed point solutions are obtained.
%%%Steps in \mop are synchronized with the steps in \mfp where each 
%%%step is an application of the flow function at all nodes in the program.
%%%If either \mfp or \mop obtains the fixed point solution 
%%%earlier than the other,
%%%repeated application of the flow functions will result in fixed point 
%%%solution of the other as well.

$\gfun{n}{\prog,\,i}$ and $\gfun{n}{\rho,\,i}$
are computed by applying flow function $f_n$ as follows:
\begin{align*}
%%\forall n \in \prog &
%%	\\
\gfun{n}{\prog,\,0} 	&= \initval{n}
	\\
\forall i > 0, \gfun{n}{\prog,\,i}  &=
 \displaystyle\bigsqcap_{\renewcommand{\arraystretch}{.8}%
			\scriptsize%
			\begin{array}{@{}c@{}}
			\scriptsize p \in \Pred(n) 
			\\
			\scriptsize s \in \Succ(n)
			\end{array}} 
	f_n \left(
	\gfun{p}{\prog,\,i-1} ,
	\gfun{s}{\prog,\,i-1} 
		\right)
\end{align*}
Similarly, 
\begin{align*}
%%\forall \rho \in \paths{n}, \forall n \in \rho &
%%	\\
\gfun{n}{\rho,\,0} 	&= \initval{n}
		\\
\forall i > 0, \gfun{n}{\rho,\,i}  &=
	f_n \left(
	\gfun{p}{\rho,\,i-1} ,
	\gfun{s}{\rho,\,i-1} 
		\right)
	& \text{ where } p = \Pred(\rho,n) \text{ and }
			s = \Succ(\rho,n)
\end{align*}

Intuitively, the main difference between \mfp and \mop is that
\mop does not merge data flow values at the intermediate program
points whereas \mfp does.  We use the superscript ``\text{\op}'' for
the former and ``\text{\fp}'' for the latter.

\begin{align}
\gfun{n}{\fp}  &= \gfun{n}{\prog}
	\label{eq:mfp.def}
	\\
\gfun{n}{\op}  &=
 \displaystyle\bigsqcap_{\renewcommand{\arraystretch}{.8}%
			\scriptsize%
			\begin{array}{@{}c@{}}
			\scriptsize \rho \in \paths{n}
			\end{array}} 
		\gfun{n}{\rho}
	\label{eq:mop.def}
\end{align}

Since $\gfunnode{n} = (\gfunnodein{n}{}, \gfunnodeout{n}{})$, 
meet for \gfun{n}{\op} is defined over the components as described below.
\begin{align*}
\gfunnodein{n}{\op}  &=
 \displaystyle\bigsqcap_{\renewcommand{\arraystretch}{.8}%
			\scriptsize%
			\begin{array}{@{}c@{}}
			\scriptsize \rho \in \paths{n}
			\end{array}} 
		\gfunnodein{n}{\rho}
\\
\gfunnodeout{n}{\op}  &=
 \displaystyle\bigsqcap_{\renewcommand{\arraystretch}{.8}%
			\scriptsize%
			\begin{array}{@{}c@{}}
			\scriptsize \rho \in \paths{n}
			\end{array}} 
		\gfunnodeout{n}{\rho}
\end{align*}

Note that 
\gfunnodein{n}{\op}  and \gfunnodeout{n}{\op} are pairs of forward and backward components.

\begin{lemma}
   $\forall n \in \nodes, \forall \rho \in \paths{n}, \forall i\geq0, \gfun{n}{\prog,\,i} \sqsubseteq \gfun{n}{\rho,\,i}$
   \label{lemma:mfp:paths}
  \end{lemma}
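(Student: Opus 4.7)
The plan is to proceed by induction on the iteration index $i$, exploiting two standard ingredients: the fact that $\bigsqcap$ over a larger set is $\sqsubseteq$ any particular element, and the (assumed) monotonicity of the node flow functions $f_n$.

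For the base case $i=0$, both sides equal $\initval{n}$ by the initialization clauses of the two recurrences, so $\gfun{n}{\prog,0} \sqsubseteq \gfun{n}{\rho,0}$ trivially. For the inductive step, I would first observe that for the given $\rho = (\overrightarrow{\rho},n,\overleftarrow{\rho}) \in \paths{n}$, the nodes $p^* = \Pred(\rho,n)$ and $s^* = \Succ(\rho,n)$ are, respectively, members of $\Pred(n)$ and $\Succ(n)$. Consequently the specific term $f_n(\gfun{p^*}{\prog,\,i-1}, \gfun{s^*}{\prog,\,i-1})$ appears in the $\bigsqcap$ defining $\gfun{n}{\prog,\,i}$, giving the pointwise inequality
\begin{align*}
\gfun{n}{\prog,\,i} \;\sqsubseteq\; f_n\!\left(\gfun{p^*}{\prog,\,i-1},\,\gfun{s^*}{\prog,\,i-1}\right).
\end{align*}

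Next, I would explain how $\rho$ induces qualified paths through $p^*$ and $s^*$: splitting $\overrightarrow{\rho}$ at its last node yields a path $\rho_p \in \paths{p^*}$ whose forward part is a prefix of $\overrightarrow{\rho}$ and whose backward part begins with $n$ followed by $\overleftarrow{\rho}$; symmetrically for $\rho_s \in \paths{s^*}$. Treating $\gfun{p^*}{\rho,\,i-1}$ and $\gfun{s^*}{\rho,\,i-1}$ in the MOP recurrence as shorthand for the values along these induced subpaths, the induction hypothesis gives
\begin{align*}
\gfun{p^*}{\prog,\,i-1} \sqsubseteq \gfun{p^*}{\rho,\,i-1}, \qquad
\gfun{s^*}{\prog,\,i-1} \sqsubseteq \gfun{s^*}{\rho,\,i-1}.
\end{align*}
Applying monotonicity of $f_n$ in both argument slots and chaining with the earlier inequality yields
\begin{align*}
\gfun{n}{\prog,\,i} \;\sqsubseteq\; f_n\!\left(\gfun{p^*}{\rho,\,i-1},\,\gfun{s^*}{\rho,\,i-1}\right) \;=\; \gfun{n}{\rho,\,i},
\end{align*}
which completes the induction.

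The main obstacle I anticipate is the bidirectional bookkeeping in the inductive step: unlike the unidirectional case, the MOP recurrence here consults \emph{both} a predecessor and a successor of $n$ along $\rho$, so one must argue simultaneously that $\rho$ cleanly induces a well-defined qualified path in $\paths{p^*}$ and in $\paths{s^*}$, and that the notation $\gfun{p^*}{\rho,\cdot}$ and $\gfun{s^*}{\rho,\cdot}$ is interpreted consistently with those induced paths. Once this path-decomposition lemma is stated and the monotonicity of $f_n$ (in both its forward and backward inputs) is made explicit as a standing assumption on the framework, the algebraic manipulation above goes through uniformly; the remainder is purely routine.
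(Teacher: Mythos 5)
Your proof is correct and follows essentially the same route as the paper's: induction on the iteration index, with the base case from the common initialization and the inductive step combining monotonicity of $f_n$ with the observation that the meet defining $\gfun{n}{\prog,\,i}$ is bounded above by the particular term for $\left(\Pred(\rho,n),\Succ(\rho,n)\right)$. Your explicit construction of the induced subpaths $\rho_p \in \paths{p^*}$ and $\rho_s \in \paths{s^*}$ makes precise a bookkeeping step the paper leaves implicit (it simply writes $\gfun{p}{\rho,\,k}$ and invokes the hypothesis ``for $q=p$''), but this is a refinement of the same argument rather than a different one.
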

\begin{proof}
We prove the lemma by inducting on steps $i$ required for computing the solution. Termination in a finite number of
steps is ensured by the fact that our flow functions are monotonic and all strictly descending chains in the lattice are finite.

  The base case for induction is step $0$ where value for all nodes are initialized to $\initval{n}$ for both
   \text{$\gfun{n}{\prog,\,0}$} and \text{$\gfun{n}{\rho,\, 0}$} for all $n$ and all $\rho$.  Thus the base case trivially holds.

   For the inductive hypothesis, assume that the lemma holds for some step $k$.
   Then, 
	\[
	\forall n \in \nodes,\forall \rho \in \paths{n}, \gfun{n}{\prog,\,k} \sqsubseteq \gfun{n}{\rho,\,k}
	\]
% % {\red There is bit of problem here. There is no meaning of ``$\forall l \in \rho$'' because for some node other than $n$,
% % we cannot use the same qualified path $\rho$. We need to use some $\rho'$ (similar to what we have done for the inductive
% % step for the soundness lemma.
% % }
   For the inductive step of the proof, consider an arbitrary node $n$ in an arbitrary qualified control flow path $\rho$ such that
   $p$ is a predecessor of $n$ in $\rho$ and $s$ is a successor of $n$ in $\rho$.
   \begin{align*}
   %%\forall l \in \prog &,
   %%\\
     \gfun{n}{\prog,\,k+1}  &=
 \displaystyle\bigsqcap_{\renewcommand{\arraystretch}{.8}%
			\scriptsize%
			\begin{array}{@{}c@{}}
			\scriptsize q \in \Pred(n) 
			\\
			\scriptsize r \in \Succ(n)
			\end{array}} 
	f_n \left(
	\gfun{q}{\prog,\,k} ,
	\gfun{r}{\prog,\,k} 
		\right) &
	%%\\
	%%\forall \rho \in \paths{n}, \forall l \in \rho &,
	\\
  \gfun{n}{\rho,\,k+1}  &=
	f_n \left(
	\gfun{p}{\rho,\,k} ,
	\gfun{s}{\rho,\,k} 
		\right)
	\\
\gfun{p}{\prog,\,k} & \sqsubseteq \gfun{p}{\rho,\,k} & \text{(inductive hypothesis for $q=p$)}
	\\
\gfun{s}{\prog,\,k} & \sqsubseteq \gfun{s}{\rho,\,k} & \text{(inductive hypothesis for $r=s$)}
	\\
	f_n \left(
	\gfun{p}{\prog,\,k} ,
	\gfun{s}{\prog,\,k} 
		\right) 
	& \sqsubseteq
	f_n \left(
	\gfun{p}{\rho,\,k} ,
	\gfun{s}{\rho,\,k} 
		\right) 
		& \text{(applying $f_n$)}
	\\
  \gfun{n}{\prog,\,k+1} & \sqsubseteq \gfun{n}{\rho,\,k+1}
   \end{align*}
  Hence the lemma.
\end{proof}

\begin{lemma}
   $\forall n\in \nodes, \gfun{n}{\text{fp}} \sqsubseteq \gfun{n}{\text{op}}$
   \label{lemma:mfp:mop}
  \end{lemma}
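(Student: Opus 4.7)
The plan is to derive this directly from Lemma~\ref{lemma:mfp:paths} by passing to the fixed point and then taking the meet over all qualified control flow paths. Since the flow functions are monotonic and the lattice has no infinite strictly descending chains, both iterative sequences $\{\gfun{n}{\prog,\,i}\}_{i\geq 0}$ and $\{\gfun{n}{\rho,\,i}\}_{i\geq 0}$ stabilize after finitely many steps, so the suprema of these iterations coincide with the fixed-point values appearing in Equations~\ref{eq:mfp.def} and~\ref{eq:mop.def}.

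First, I would fix an arbitrary node $n \in \nodes$ and an arbitrary qualified control flow path $\rho \in \paths{n}$, and let $N$ be large enough that both iterations have stabilized, i.e.\ $\gfun{n}{\prog} = \gfun{n}{\prog,\,N}$ and $\gfun{n}{\rho} = \gfun{n}{\rho,\,N}$. By Lemma~\ref{lemma:mfp:paths} applied at step $N$, we obtain $\gfun{n}{\prog,\,N} \sqsubseteq \gfun{n}{\rho,\,N}$, and therefore $\gfun{n}{\prog} \sqsubseteq \gfun{n}{\rho}$. Since $\rho$ was arbitrary in $\paths{n}$, this inequality holds uniformly across all qualified control flow paths involving $n$.

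Next, I would take the greatest lower bound over all $\rho \in \paths{n}$ on the right-hand side. Because $\gfun{n}{\prog}$ is a common lower bound of the family $\{\gfun{n}{\rho} \mid \rho \in \paths{n}\}$, the meet property yields
\begin{align*}
\gfun{n}{\prog} \;\sqsubseteq\; \bigsqcap_{\rho \in \paths{n}} \gfun{n}{\rho}.
\end{align*}
Unfolding the definitions in Equations~\ref{eq:mfp.def} and~\ref{eq:mop.def}, the left-hand side is $\gfun{n}{\fp}$ and the right-hand side is $\gfun{n}{\op}$, giving $\gfun{n}{\fp} \sqsubseteq \gfun{n}{\op}$. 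Since this holds component-wise for the in- and out-components $(\gfunnodein{n}{}, \gfunnodeout{n}{})$ of the data flow value, the conclusion follows for all $n \in \nodes$.

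I do not expect a real obstacle here: the hard work was already done in Lemma~\ref{lemma:mfp:paths}, whose inductive argument establishes that the MFP iteration is pointwise at least as conservative as the per-path MOP iteration at every step. The only subtlety worth stating explicitly in the writeup is the justification for passing from the step-indexed inequality to an inequality between fixed points, which rests on monotonicity of the flow functions together with finiteness of strictly descending chains --- exactly the conditions already invoked in the proof of Lemma~\ref{lemma:mfp:paths}. Once that is spelled out, taking the meet over $\paths{n}$ is immediate from the universal property of $\sqcap$.
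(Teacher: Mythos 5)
Your proof is correct and follows exactly the route the paper takes: the paper's own proof of this lemma is the one-line remark that it ``follows from Definitions~(\ref{eq:mfp.def}), (\ref{eq:mop.def}) and Lemma~\ref{lemma:mfp:paths},'' and your writeup simply makes explicit the two steps that remark compresses (passing from the step-indexed inequality of Lemma~\ref{lemma:mfp:paths} to the stabilized fixed-point values, then taking the meet over $\paths{n}$). No gap; your version is just a more detailed rendering of the same argument.
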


  \begin{proof}
	Follows from Definitions~(\ref{eq:mfp.def}),~(\ref{eq:mop.def}) and Lemma~\ref{lemma:mfp:paths}.
%%  Lemma~\ref{eq:mfp:paths} proves that the data flow value computed for a program \prog is weaker than the information computed for a path and 
%%  this results holds for all the paths.
%%  
%%  $ \forall \rho \in \paths {n}, \forall n \in \rho, \gfun{n}{\prog} \sqsubseteq \gfun{n}{\rho}$
%%  
%%  As the information $\gfun{n}{\prog}$ is weaker for all the paths $\gfun{n}{\rho}$, it will also be weaker if we take meet of the 
%%  information computed for all the paths.
%%
%%  $   \forall \rho \in \paths {n}, \forall n \in \rho, \gfun{n}{\prog} \sqsubseteq  \displaystyle\bigsqcap_{\renewcommand{\arraystretch}{.8}%
%%			\scriptsize%
%%			\begin{array}{@{}c@{}}
%%			\scriptsize \rho \in \paths{n}
%%			\end{array}} 
%%\gfun{n}{\rho}
%%$
%%
%%In the above equation, left hand side is the \mfp solution and right hand side is the \mop solution.
%%Thus we get,
%%
%%$\forall n\in P, \gfun{n}{\text{fp}} \sqsubseteq \gfun{n}{\text{op}}$
%% 
%%%  \begin{align}
%%%    
%%%    \label{mfp:mop:result1}
%%%    \\
%%% \\
%%% 
%%%    \end{align}
%%%  
%%Thus the lemma holds.
%%
  \end{proof}

    \section{Defining \mfp and \mop for Our Method}
    \label{sec:mop:demand}

In this section, we instantiate the \mfp and \mop defined for general bidirectional problems in
Section~\ref{sec:mfp.mop.bidirectional}, to our demand-driven alias analysis.

%\subsection{Defining \mfp and \mop solutions for Demands and Aliases}\label{sec:mop:demand:subsec1}

Let \aval and \dval denote the vectors of demand and alias data flow
values associated with all nodes (program points) in the control flow
graph.  In general, $\aval$ includes both $\ain$ and $\aout$ values,
and $\dval$ includes both $\din$ and $\dout$ values.  Our algorithm
computes a pair \text{$(\aval,\dval)$} as the maximum fixed point of
function \text{$\vecf: \aval\times \dval \to \aval\times \dval$} which
is defined in terms of mutually recursive functions \text{$\vecd:
  \aval\times \dval \to \dval$} and 
\text{$\veca: \aval \times \dval \to \aval$}
that compute \dval and \aval, respectively.  
\begin{align*}
(\aval,\dval)  & = \vecf\, (\aval,\dval)
	\\
\aval          & = \veca\, (\aval,\dval) 
	\\
\dval          & = \vecd\, (\aval,\dval) 
\end{align*}
% % This is computed as a series
% % \text{$(\aval_0, \dval_0) \rightsquigarrow (\aval_1,\dval_1) \rightsquigarrow \ldots \rightsquigarrow (\aval_k, \dval_k)$}
% % culminating in \text{$(\aval_k, \dval_k)$} as a fixed point where $X \rightsquigarrow Y $ indicates that $Y$ is computed from $X$.

We require
\vecd and \veca to be monotonic. 
As is usual in data flow analysis, we distinguish between the \mfp (maximum fixed point) and \mop (meet over paths) solutions. 
% Intuitively, the main difference between the two is that \mop does not merge data flow values at the intermediate program 
% points whereas \mfp does. 
 We use the superscript ``\emph{fp}'' for the former and ``\emph{op}'' for the latter.
\begin{align*}
\mfpaval & = \mfpveca\, (\mfpaval,\mfpdval) 
	\\
\mfpdval & = \mfpvecd\, (\mfpaval,\mfpdval) 
	\\
\mopaval & = \mopveca\, (\mopaval,\mopdval) 
	\\
\mopdval & = \mopvecd\, (\mopaval,\mopdval) 
\end{align*}
Note that \mfp is a fixed point computation of the data flow values over a control flow graph. 
%%%%%{\blue
%%%%%This fixed
%%%%%point computation is orthogonal to the mutual fixed point computation between \dval and \aval.
%%%%%% % % In other words,
%%%%%% % % \mop values \mopdval and \mopaval have a mutual recursion in terms of each other but not at the level of the
%%%%%% % % values associated with program points. 
%%%%%We explain this below.}
%
Let $\nda_n$ and $\ndd_n$ denote the node level flow functions for computing aliases and demands respectively.
We instantiate the \mfp for general bidirectional problems in
Section~\ref{sec:mfp.mop.bidirectional}.
Then, $\forall n \in \nodes$,
the \mfp solution \text{$(\mfpavalseries, \mfpdvalseries)$} is computed as follows:

\begin{align}
\gfun{n}{\fp} = (\afun{n}{\fp}, \dfun{n}{\fp}) &=  (\afun{n}{\prog}, \dfun{n}{\prog})
\label{eq:mfp.def.ad} 
\intertext{where,}
\afun{n}{\prog,\,0} &= \top
\nonumber
\\
\dfun{n}{\prog,\,0} &= \top
\nonumber
\\
 \forall i > 0, \afun{n}{\prog,\,i} &=
 \displaystyle\bigsqcap_{\renewcommand{\arraystretch}{.8}%
			\scriptsize%
			\begin{array}{@{}c@{}}
			\scriptsize p \in \Pred(n) \\
			\scriptsize s \in \Succ(n) 
			\end{array}} 
	\nda_n \left(
	\afun{p}{\prog,\,i-1} ,
	\dfun{s}{\prog,\,i-1} 
		\right)
	\label{afun:ff}	\\
\forall i > 0, \dfun{n}{\prog,\,i} &=
 \displaystyle\bigsqcap_{\renewcommand{\arraystretch}{.8}%
			\scriptsize%
			\begin{array}{@{}c@{}}
			\scriptsize p \in \Pred(n) \\
			\scriptsize s \in \Succ(n)
			\end{array}} 
	\ndd_n \left(
	\afun{p}{\prog,\,i-1} ,
	\dfun{s}{\prog,\,i-1} 
		\right)
		\label{dfun:ff}	
\end{align}

We instantiate the \mop for general bidirectional problems in
Section~\ref{sec:mfp.mop.bidirectional}.
Then, $\forall n \in \nodes$,
the \mop solution \text{$(\mopavalseries, \mopdvalseries)$} is computed as follows:
\begin{align}
\gfun{n}{\op} = (\afun{n}{\op}, \dfun{n}{\op}) &= 
	(\displaystyle\bigsqcap_{\rho \in \paths{n}} \; \eqmoptaval{\rho}{n} ,
	\displaystyle\bigsqcap_{\rho \in \paths{n}} \; \eqmoptdval{\rho}{n})
\label{eq:mop.def.ad} 
\intertext{where, for a path $\rho \in \paths{n}$,}
\eqmoptaval{\rho}{n} &=
		\begin{cases}
		\boundary & n \text{ is } \Start{}
		\\
% 		\displaystyle\bigsqcap_{p \in \Pred(n)} 
			\nda_n\left(\eqmoptaval{\rho}{p}, \eqmoptdval{\rho}{s}\right)
		& p = \Pred(\rho, n), s = \Succ(\rho, n)
		\\
		\end{cases}
		\label{eqmoptaval}
	\\
\eqmoptdval{\rho}{n} &=
		\begin{cases}
		\boundary & n \text{ is } \End{}
		\\
% 		\displaystyle\bigsqcap_{s \in \Succ(n)} 
			\ndd_n\left(\eqmoptaval{\rho}{p}, \eqmoptdval{\rho}{s}\right)
		& p = \Pred(\rho, n),s = \Succ(\rho, n)%\text{otherwise}
		\\
		\end{cases}
		\label{eqmoptdval}
\end{align}

Note that aggregated set of demands at node $n$ represented by
$\eqmopdval{n}$ may never be used in computing alias information at
$n$ along any qualified control flow path.  However, every demand in
$\eqmopdval{n}$ is necessarily used to compute alias information at
$n$ along some qualified control flow path.

%%Using \bekic's theorem, computation of \text{$(\dval, \aval)$}  
%%can be seen as a series of computations 
%%\text{$ \aval_0 \rightsquigarrow \dval_0 \rightsquigarrow \aval_1 \rightsquigarrow \dval_1 \rightsquigarrow \aval_2
%%\rightsquigarrow \ldots \rightsquigarrow \aval_k \rightsquigarrow \dval_k$}
%%culminating in \text{$(\dval_k, \aval_k)$} as a fixed point where $\rightsquigarrow$ indicates the direction of dependence. 

%%This series is defined for \mfp and \mop solutions as follows. Let the node level flow function computing

% \subsection{Defining Flow Functions for Our Method}

  \newcommand{\aconc}[2]{\text{$\mathbb{A}_{#1}^{#2}$}\xspace}

  \section{Soundness Proof}\label{sec:soundness-proof}

\newcommand{\vcnode}{\text{$v$}\xspace}

In this section, we prove the soundness of our
demand-driven alias analysis algorithm (\id).

Consider a program \prog with a virtual function call
of the form\footnote{We assume that all calls are normalized to the form  $x\rightarrow\text{\emph{vfun}} ()$ by 
introducing temporary variables, if the need be.} $x\rightarrow\text{\emph{vfun}} ()$ at a node \vcnode. Then,
the objective of our analysis is to compute a sound approximation of all aliases
of $x$ at \vcnode  
considering all possible executions
of the program. We formalize this notion below.

%Let Restrict(A, D) ....  -- para before Theorem 1.
%    \newcommand{\restrict}[2]{$#1\vert_#2$} 

Following the notational convention introduced in the previous section,
let $\afun{n}{\id}$ denote the set of aliases computed by algorithm
$\id$ at an arbitrary node $n$ of the program \prog.  Similarly, let
$\dfun{n}{\id}$ denote the set of access expressions for which demands
are raised at node $n$, when executing algorithm $\id$. 
%%%%{\blue
%%%%We use
%%%%$\djoin{n}{\jop}$ to denote $\displaystyle\bigsqcup_{\rho \in
%%%%  \paths{n}}\,\dfun{n}{\rho}$, i.e. the common set of demands that are
%%%%raised at $n$ (using equations (\ref{eqmoptaval}) and
%%%%(\ref{eqmoptdval})) along every qualified control flow path through
%%%%$n$.  In general, $\dfun{n}{\rho}$ has $\din$ and $\dout$ components,
%%%%and the join ($\sqcup$) is taken component-wise, as before.  For
%%%%notational convenience, we use $\djoinin{n}{\jop}$ and
%%%%$\djoinout{n}{\jop}$ to refer to the $\din$ and $\dout$ components of
%%%%$\djoin{n}{\jop}$.
%%%%}
  By abuse of notation, we also use $\afun{n}{\cm}$ to
denote the alias information computed in the concrete memory
at $n$. Note that this implies $\forall n\in \nodes, \afun{n}{\cm} =
  \displaystyle\bigsqcap_{\rho \in \paths{n}}\afun{n}{\cm,\rho}$.

%%%   Let $\afun{n}{\cm}$ describe the alias information computed in the concrete memory for node $n$.
%%%   Similarly, let $\afun{n}{\am}$ describe the alias information computed in the abstract memory for node $n$.
%%%   When we merge the information at intermediate nodes, $\am$ is represented as $\fp$ and 
%%%   when we compute \mop solution, $\am$ is represented as $\op$.
%%We use \nodes to represent the set of all nodes in \prog.

   We define function \text{\restrict{\aliasing}{\demand}}
 to restrict alias
   information in \aliasing to the demands in \demand.
Since, 
\text{\aliasing = (\ain, \aout)} and \text{\demand = (\din, \dout)}, 
function \Restrict is computed component-wise 
to yield a pair \text{(\ain, \aout)}.
Formally, we define it as
\begin{align} 
\restrict{\aliasing}{\demand} = (\{(a,b)\vert a\in \din, (a,b) \in \ain\}, 
\{(a,b)\vert a\in \dout, (a,b) \in \aout\})
\label{eq:restrict}
\end{align}
  As an example,
   if $\ain = \{(d_1, a_1), (d_2, a_2)\}$, $\aout = \{(d_1,a_3), (d_2, a_4)\}$,
$\din = \{d_1\}$ and $\dout = \{d_2\}$, then
   $\restrict{\aliasing}{\demand} = \left(\left\{(d_1, a_1)\}, \{(d_2, a_4)\right\}\right)$.

   \newcommand{\alg}{\text{{\sc aa}}}
   \newcommand{\dfunhat}[2]{\text{$\widehat{\demand}_{#1}^{\,#2}$}\xspace}

\begin{definition}
  An alias analysis algorithm $\alg$ is \emph{sound for a program
    $\prog$ with respect to a set of demands $\dfun{n}{}$ at node $n$}
    iff $\restrict{\afun{n}{\alg}}{\dfun{n}{}} \sqsubseteq
  \restrict{\afun{n}{\cm}}{\dfun{n}{}}$.
\end{definition}
Since the end goal of our analysis is to compute the set of all
incoming aliases of $x$ at node $\vcnode \in \origin$, where
$x\rightarrow \text{\em vfun}()$ is a virtual function call at
\vcnode, we define the overall soundness of $\alg$ as follows.
\begin{definition}
  \label{def:sound}
An alias analysis algorithm $\alg$ is \emph{sound} iff for every program
$\prog$ and every node $v \in \origin$ with a virtual function call of the form
$x\rightarrow \text{\em vfun}()$, $\alg$ is sound for
$\prog$ with respect to $\dfun{\vcnode}{} = (\{x\}, \emptyset)$.
\end{definition}
Note that $\dfun{\vcnode}{}$ has two components in general, and we
are interested only in the demand $x$ in the component representing
the demands at the entry of $\vcnode$.

 \begin{thm}
Algorithm \id is sound.
  \label{thm:soundness}
 \end{thm}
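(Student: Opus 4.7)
The plan is to reduce soundness to a path-wise soundness property of the MOP solution via Lemma~\ref{lemma:mfp:mop}, then argue the latter by induction on qualified control flow paths. Instantiating Lemma~\ref{lemma:mfp:mop} to our analysis gives $(\afun{\vcnode}{\fp}, \dfun{\vcnode}{\fp}) \sqsubseteq (\afun{\vcnode}{\op}, \dfun{\vcnode}{\op})$. Since \Restrict is monotone in both arguments with respect to $\sqsubseteq$, it suffices to show the stronger statement $\restrict{\afun{\vcnode}{\op}}{\dfun{\vcnode}{}} \sqsubseteq \restrict{\afun{\vcnode}{\cm}}{\dfun{\vcnode}{}}$ for every $\vcnode \in \origin$ with the virtual call $x\rightarrow \mathit{vfun}()$ and seed demand $\dfun{\vcnode}{} = (\{x\}, \emptyset)$.

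I would fix an arbitrary $\rho \in \paths{\vcnode}$ together with a matching concrete execution trace, and prove by induction on the number of assignment statements traversed in $\rho$ that every concrete alias pair $(x, \alpha)$ in $\afun{\vcnode}{\cm, \rho}$ belongs to $\eqmoptaval{\rho}{\vcnode}$. The base case (no preceding assignments) is immediate as both sides restrict to the empty alias relation. For the inductive step, consider the last assignment $n : \lhs = \rhs$ in $\rho$ that concretely creates or preserves the alias. If it is a direct assignment, Equation~(\ref{eq:agen}) generates $(\blhs_n, \brhs_n)$ because the backward-propagated demand contains $\blhs_n$ (raised at $\vcnode$ and propagated via the union of $\ldgen$ and $\rdgen$ in Equations~(\ref{eq:ldgen}) and~(\ref{eq:rdgen})), and closure under symmetry and transitivity carries the alias forward to $\vcnode$. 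If it is an indirect assignment such as $*p = \rhs$ that concretely writes to the location named $y$, then along $\rho$ there must be an earlier assignment of the form $q = \&y$ (or a chain of copies originating from one) from which $p$ inherits $\&y$ as a pointee; our speculation raises the demand $\&y$ via the $\addrexpr$ clause in Equation~(\ref{eq:ldgen}) whenever a demand $\alpha$ with $\vname(\alpha) = \{y\}$ and $\addrTaken(\&y)$ is propagated, so the pair $(q, \&y)$ is generated at that earlier assignment and, by alias closure, $(p, \&y) \in \aout_n$, whence Equation~(\ref{eq:agen}) (interpreted through the alias-closed $\dout'_n$) captures the effect of the indirect write on $y$'s aliases.

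The main obstacle will be cleanly formalising the invariant that underpins the indirect-assignment case, namely: along any $\rho$, whenever a demand $\alpha$ with $y \in \vname(\alpha)$ and $\addrTaken(\&y)$ reaches a program point, the demand $\&y$ has already been propagated backwards past that point. This invariant is precisely what distinguishes our improved speculation from the conventional one, and it must be proved by a nested induction on the backward-propagation steps of $\din_n/\dout_n$ using Equations~(\ref{eq:ldgen}) and~(\ref{eq:rdgen}), with the corner case where $\addrTaken(\&y)$ fails being trivial because no indirect write can then define $y$. Once path-wise soundness is established, taking the meet over $\paths{\vcnode}$ per Equation~(\ref{eq:mop.def.ad}) yields $\afun{\vcnode}{\op}$ as a sound over-approximation of $\afun{\vcnode}{\cm}$ restricted to $\dfun{\vcnode}{}$, and the chain of $\sqsubseteq$ back through Lemma~\ref{lemma:mfp:mop} delivers soundness of Algorithm~\id.
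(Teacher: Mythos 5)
Your overall architecture matches the paper's: reduce soundness of \id to soundness of the \mop solution via the \mfp-\mop inequality, establish the \mop side by a path-wise induction on the number of assignment statements in a qualified control flow path, and recover the global statement by taking the meet over \paths{\vcnode}. Two things are missing, one minor and one substantive. The minor one: you go from Lemma~\ref{lemma:mfp:mop} directly to ``soundness of Algorithm \id,'' but that chain only connects $\afun{\vcnode}{\fp}$ to $\afun{\vcnode}{\op}$; you still owe the identification $\afun{n}{\id}=\afun{n}{\fp}$, i.e.\ that the worklist algorithm with its alternating demand/alias inner loops actually converges to the maximum fixed point of the mutually recursive system. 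The paper discharges this separately (Lemma~\ref{lemma:mfp}, via monotonicity of the flow functions and Beki\'{c}'s theorem); without it the theorem is about \mfp, not about the algorithm.

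The substantive gap is your induction hypothesis. You propose to prove, by induction on the number of assignments in $\rho$, that every concrete alias pair $(x,\alpha)$ at \vcnode belongs to $\eqmoptaval{\rho}{\vcnode}$ --- a statement localized at the single node \vcnode and the single seed demand $x$. That hypothesis is too weak to push through the inductive step: the alias of $x$ at \vcnode is derived from aliases of \emph{other} demanded expressions (e.g.\ $z$, $\&z$, $x\rightarrow f$) at \emph{earlier} nodes, so the step needs, as its hypothesis, soundness at the predecessor node for every access expression demanded there, not just for $x$ at the end. Your own inductive argument tacitly uses exactly this (the pair $(q,\&y)$ generated at an earlier assignment), so the statement being inducted on must be generalized to what the paper proves in Lemma~\ref{lemma-soundness-flow-function-proof}: for all $n$ and all $\rho\in\paths{n}$, $(\alphacm,\betacm)\in\afunout{n}{\cm,\rho}\wedge\balpha\subseteq\dfunout{n}{\rho}\Rightarrow\balpha\times\bbeta\subseteq\afunout{n}{\rho}$, with the case analysis over all combinations of $\alpha,\beta$ against $\lhs_n,\rhs_n$. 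On the positive side, your explicit formulation of the speculation invariant --- that whenever a demand with $y\in\vname(\alpha)$ and $\addrTaken(\&y)$ is propagated, the demand $\&y$ travels with it, so indirect writes through any pointer to $y$ are caught --- is exactly the content the paper leaves half-implicit inside its $\alpha=\lhs,\ \beta=\rhs$ case, and making it a named nested induction is a genuine improvement in clarity. You should also state (as the paper does via Lemma~\ref{lemma:restrict:distributive}) why \Restrict commutes with the meet over paths before concluding.
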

\begin{proof}

  Let $\prog$ be a program of the form discussed in Section~\ref{sec:formulation}
  and let $\vcnode$ be a node in the set $\origin$, where
  $x \rightarrow {\text {\em vfun}}()$ is the instruction at
  $\vcnode$.  To prove the theorem, we need to show that
  $\restrict{\afun{\vcnode}{\id}}{\dfun{\vcnode}{}} \sqsubseteq
  \restrict{\afun{\vcnode}{\cm}}{\dfun{\vcnode}{}}$, where
  $\dfun{\vcnode}{}$ is as in Definition~\ref{def:sound}.  We actually
  show something stronger.  Specifically, we prove soundness of $\id$
  for $\prog$ with respect to the common set of demands $\dfunhat{n}{}$
  that are raised at $n$, (using equations (\ref{eqmoptaval}) and
  (\ref{eqmoptdval})) along every qualified control flow path through
  $n$.  %% We show that $\forall n\in \nodes,\,
  %% \restrict{\afun{n}{\id}}{\djoin{n}{\jop}} \sqsubseteq
  %% \restrict{\afun{n}{\cm}}{\djoin{n}{\jop}}$. %% Note that this
%% implies that $\restrict{\afun{n}{\id}}{\dfun{n}{\op}} \sqsubseteq
%% \restrict{\restrict{\afun{n}{\cm}}{\dfun{n}{\op}}}{\dfun{n}{\op}}$ $=
%% \restrict{\afun{n}{\cm}}{\dfun{n}{\op}}$.
  In addition, we show in Lemma~\ref{lemma-initial-demand-sound}
  that $x \in \din_{\vcnode}^{\rho}$ for every $\vcnode \in
  \origin$ and for every qualified control flow path $\rho \in \paths{\vcnode}$.
  Together, these prove Theorem~\ref{thm:soundness}.
%% Specifically, we show that $\forall n\in \nodes, \afun{n}{\fp}
%% \sqsubseteq \afun{n}{\op} \sqsubseteq
%% \restrict{\afun{n}{\cm}}{\dfun{n}{\op}}$, .

In order to establish that $\forall n\in \nodes,
\restrict{\afun{n}{\id}}{\dfunhat{n}{}} \sqsubseteq
  \restrict{\afun{n}{\cm}}{\dfunhat{n}{}}$, we proceed in three
  steps. 
\begin{itemize}
\item First, we show in Lemma~\ref{lemma:mfp} that our algorithm
  computes the \mfp solution for alias information. More
  formally, $\forall n \in \nodes, \afun{n}{\id} = \afun{n}{\fp}$.
\item  Next, we show in Lemma~\ref{lemma:mfp-mop} that $\mfp$ is sound with
  respect to the \mop solution.  In other words, $\forall n\in \nodes,
  \afun{n}{\fp} \sqsubseteq \afun{n}{\op}$.  
\item Finally, we show in
  Lemma~\ref{lemma:mop:ideal} that the $\mop$ solution for aliases is
  sound with respect to $\dfunhat{n}{}$.  In other words, $\forall
  n\in \nodes, \restrict{\afun{n}{\op}}{\dfunhat{n}{}} \sqsubseteq
  \restrict{\afun{n}{\cm}}{\dfunhat{n}{}}$.
\end{itemize}
%%
%%  \begin{inparaenum}
%%  \item $\forall n\in \nodes, \afun{n}{\fp} \sqsubseteq \afun{n}{\op}$ and
%%  \item $\forall n\in \nodes, \afun{n}{\op} \sqsubseteq \restrict{\afun{n}{\cm}}{\dfun{n}{\op}}$.
%% \end{inparaenum}
%%
%% In order to prove $\forall n\in \nodes, \afun{n}{\fp} \sqsubseteq \afun{n}{\op}$ 
%% we need to show that the functions \veca and \vecd computes \mfp solution and
%% then prove $\forall n\in \nodes, \afun{n}{\fp} \sqsubseteq \afun{n}{\op}$.
%% We prove the former in lemma~\ref{lemma:mfp} and the latter  
%% % $\forall n\in \prog, \afun{n}{fp} \sqsubseteq \afun{n}{op}$
%% in lemma~\ref{lemma:mfp-mop}.
%%
%% In order to prove $\forall n\in \nodes, \afun{n}{\op} \sqsubseteq \restrict{\afun{n}{\cm}}{\dfun{n}{\op}}$
%% we need to show that the demand is correctly raised at the origin and 
%% for a given demand, the alias information computed for a path by our analysis
%% is sound. As the result holds for a path, it holds for meet over all the paths.
%% We prove the former in lemma~\ref{lemma-initial-demand-sound}
%% and the latter in lemma~\ref{lemma-soundness-flow-function-proof}.
Putting the parts together proves the theorem.
\end{proof}

We now discuss in detail the various lemmas used in the proof of
Theorem~\ref{thm:soundness}.

\subsection{Algorithm \id Computes \mfp Solution}
\label{sec:id-mfp}
In this section we prove that our flow functions are monotonic and algorithm \id computes \mfp solution.

   \begin{lemma}
   Flow functions $\nda_n$ and $\ndd_n$ which computes
 \afun{}{} and \dfun{}{} are monotonic flow functions.
 \label{lemma:monotonicff}
 \end{lemma}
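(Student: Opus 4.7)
The plan is to establish monotonicity by the standard ``build-up from primitives'' approach: show that each auxiliary operation appearing in the definitions of $\nda_n$ and $\ndd_n$ (namely $\dn$, $\ldgen$, $\rdgen$, $\AGen_n$, $\AKill_n$, $\dgen_n$, $\dkill_n$, and the confluence operators that form $\ain_n$ and $\dout_n$) is monotone in its inputs, then assemble them using the fact that composition and set-theoretic combinations of monotone functions are monotone. First, however, I would fix the lattice convention: the data flow values are sets of alias pairs and sets of abstract names, ordered so that $\sqcap$ is $\cup$ and $\top = \emptyset$, hence $X \sqsubseteq Y$ iff $X \supseteq Y$. This convention is consistent with Lemma~\ref{lemma:mfp:paths}, where the MFP over-approximates MOP, and with the confluence Equations~\eqref{eq:dout} and~\eqref{eq:ain}, which use union.

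Next I would dispatch the easy lemmas. Inspecting Table~\ref{table-dn}, $\dn(\oneae,\aliasing)$ is either independent of $\aliasing$ or a set comprehension of the form $\{\cdot \mid (\cdot,\cdot) \in \aliasing\}$, hence monotone in $\aliasing$. Consequently $\ldgen(\rhs_n,\ain_n)$ is monotone in $\ain_n$ via $\dn$, while $\rdgen(\lhs_n)$ and $\addrexpr$ depend only on syntax. The kill sets $\dkill_n$ and $\AKill_n$ also depend only on syntax, so they are constants. Set difference from a constant and set union are monotone, so $\aout_n = (\ain_n - \AKill_n)\cup \AGen_n$ and $\din_n = (\dout_n - \dkill_n)\cup \dgen_n$ are monotone provided the gen-sets are. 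The predecessor-/successor-based confluences for $\ain_n$ and $\dout_n$ are monotone in their operands by construction.

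The core obstacle, and what I would spend most of the work on, is monotonicity of the \emph{guards} that appear in $\dgen_n$ (Equation~\eqref{eq:dgen}) and $\AGen_n$ (Equation~\eqref{eq:agen}), i.e. conditions of the form $\blhs_n \subseteq \dout'_n$ and $\ain_n(\brhs_n)\neq\{\brhs_n\}$. The worry is that as $\ain_n$ grows, the abstract-name set $\blhs_n = \dn(\lhs_n,\ain_n)$ can also grow, which could a priori falsify $\blhs_n \subseteq \dout'_n$ and thereby shrink $\dgen_n$. I would address this in two stages. First, I would clarify the overloaded $\subseteq$-semantics spelled out just before Equation~\eqref{eq:din}: the guard requires $\blhs_n$ to be nonempty and included in $\dout'_n$, and because the alias closure $\dout'_n = \aout_n(\dout_n)$ is itself monotone in both $\aout_n$ and $\dout_n$, an increase in $\ain_n$ that enlarges $\blhs_n$ also enlarges $\aout_n$ (via $\AGen_n$), hence $\dout'_n$, by the same alias information that produced the new abstract names. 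Formally I would set this up as a coupled monotonicity argument: on the product lattice of (aliases, demands), the pair $(\aout_n,\dout'_n)$ grows at least as fast as $(\ain_n,\blhs_n)$ because every new element of $\blhs_n$ is witnessed by a new alias pair already present in $\aout_n$. The second case, $\ain_n(\brhs_n)\neq\{\brhs_n\}$, is immediate: enlarging $\ain_n$ only adds aliases to $\brhs_n$, so once the guard becomes true it stays true.

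Finally I would assemble the full result. Since $\nda_n$ returns $(\ain_n,\aout_n)$ as a set-theoretic combination of monotone primitives in $(\afun{p}{},\dfun{s}{})$, and $\ndd_n$ returns $(\din_n,\dout_n)$ analogously, both flow functions are monotone. I would also remark that monotonicity over the product lattice $\afun{n}{}\times\dfun{n}{}$ suffices for Algorithm~\ref{algo:id}'s termination and for the induction in Lemma~\ref{lemma:mfp:paths} to go through, tying this lemma back to where it will be used in Section~\ref{sec:soundness-proof}. The main technical risk, as noted, is entirely in the coupled argument for the gen-set guards; once that is done, everything else is routine set-theoretic bookkeeping.
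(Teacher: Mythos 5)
Your overall decomposition is the same as the paper's: its proof likewise observes that all the set operations involved are monotone except set difference, that set difference is harmless because $\dkill_n$ and $\AKill_n$ are constants, that \dn is monotone in \aliasing, and hence that $\ldgen$, $\rdgen$, $\dgen_n$, $\AGen_n$ and finally $\din_n$, $\dout_n$, $\ain_n$, $\aout_n$ are monotone. For that part your proposal and the paper coincide.

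Where you diverge is in the treatment of the guards $\blhs_n \subseteq \dout'_n$ in Equation~\ref{eq:dgen} and the condition in Equation~\ref{eq:agen}. The paper does not discuss these at all---it simply asserts that $\dgen_n$ is monotone because $\ldgen$ and $\rdgen$ are---so you have correctly isolated a step the paper glosses over. However, your proposed resolution does not close the gap. The coupling you invoke (``every new element of $\blhs_n$ is witnessed by a new alias pair already present in $\aout_n$'') relates the wrong objects: if $\lhs_n$ is $x\rightarrow f$ and a new alias $(x,\&a)$ enters $\ain_n$, the new element of $\blhs_n$ is $a.f$, and for the guard to stay true you need $a.f$ to lie in $\dout'_n = \aout_n(\dout_n)$, i.e.\ to be aliased to something already demanded. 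The pair $(x,\&a)$ witnesses an alias between $x$ and $\&a$, not between $a.f$ and any member of $\dout_n$, so nothing in your argument forces $\dout'_n$ to grow in lockstep with $\blhs_n$, and the guard can in principle flip from true to false as the inputs descend in the lattice. As written, the coupled-monotonicity step is an assertion rather than a proof; making it work would require establishing a genuine invariant linking the demand component to the alias component (jointly with the induction in which this lemma is used), which neither you nor the paper supplies. Your instinct that this is the main technical risk is right; the risk is not discharged.
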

\begin{proof}
 Flow functions $\nda_n$ and $\ndd_n$ computes
 \afun{}{} and \dfun{}{} as shown in Equations~\ref{afun:ff} and~\ref{dfun:ff}.
 $\nda_n$ is defined in terms of $\ain_n$ and $\aout_n$ (Equations~\ref{eq:ain} and~\ref{eq:aout}) and
 $\ndd_n$ is defined in terms of $\din_n$ and $\dout_n$ (Equations~\ref{eq:din} and~\ref{eq:dout}).
  
  All the set operations are monotone except the set difference. 
It is also monotonic for us because $\dkill_n$(Equation~\ref{eq:dkill}) 
is constant.
 Function \dn, identifies abstract names for access expressions appearing in the program, is monotone. Thus function $\ldgen_n$(Equation~\ref{eq:ldgen})
 and $\rdgen_n$(Equation~\ref{eq:rdgen}) are monotone.
 Function $\dgen_n$(Equation~\ref{eq:dgen}) is monotone as it is computed using $\ldgen_n$ and $\rdgen_n$.
 Thus $\din_n$ and $\dout_n$ which are computed using $\dgen_n$ and $\dkill_n$ are monotonic.
 
 Function $\AGen_n$(Equation~\ref{eq:agen}) is monotone. 
 Function $\AKill_n$(Equation~\ref{eq:akill}) is a constant.
 Thus $\ain_n$ and $\aout_n$ which are computed using $\AGen_n$ and $\AKill_n$ are monotonic.
 
 Thus the flow functions $\nda_n$ and $\ndd_n$ are monotonic.
\end{proof}

\begin{lemma}
Consider the series
\text{$\dval^{0}\rightsquigarrow 
\aval^{0}
\rightsquigarrow \dval^{1} \rightsquigarrow \aval^1
\rightsquigarrow \dval^2 \rightsquigarrow \aval^2
\ldots \dval^k \rightsquigarrow \aval^k
$} computed as alternate fixed points by keeping \aval constant while computing \dval and vice-versa using the following
equations:
 \begin{align*}
 \dval^{i} &= \vecd (\aval^{i-1}, \dval^{i})
% \label{eq:iterative:dval}
 \\
 \aval^i &= \veca (\aval^{i}, \dval^{i})
% \label{eq:iterative:aval}
 \end{align*}
The final fixed points
\text{$\aval^k ,\dval^k$} define the maximum fixed point of \text{$\;\vecf: \aval\times \dval \to \aval\times \dval$}.
\label{lemma:mfp}
\end{lemma}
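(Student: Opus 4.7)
The plan is to cast this lemma as an instance of \bekic's theorem on simultaneous fixed points of monotone operators over a product lattice. Lemma~\ref{lemma:monotonicff} established that the node-level flow functions $\nda_n$ and $\ndd_n$ are monotone; by pointwise lifting across all program points, the vector functionals \veca and \vecd are monotone in both arguments, so $\vecf = (\veca,\vecd)$ is a monotone operator on the product lattice $\aval\times\dval$. This product lattice satisfies the finite chain condition because the underlying alias and demand spaces are subsets of the finite set \aeset (or its square). These are precisely the hypotheses under which \bekic's theorem says that the joint fixed point of \vecf can be obtained by alternating inner fixed points, which is exactly the construction described in the lemma statement.

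I would carry out the proof in three steps. First, I would argue that each inner fixed point is well-defined: for $\aval^{i-1}$ held constant, the functional $\lambda\dval.\vecd(\aval^{i-1},\dval)$ is monotone on a lattice with finite chains, so Kleene iteration produces $\dval^i$ in finitely many steps, and symmetrically for $\aval^i$ given $\dval^i$. Second, I would show by induction on $i$ that the outer sequence $(\aval^i,\dval^i)$ is monotone in the product order and therefore stabilizes at some step $k$; the inductive step uses monotonicity of \veca and \vecd, since a monotone change in one argument of an inner functional induces a monotone change in its fixed point. At stabilization we have
\begin{align*}
\dval^k &= \vecd(\aval^k,\dval^k), &
\aval^k &= \veca(\aval^k,\dval^k),
\end{align*}
so $(\aval^k,\dval^k) = \vecf(\aval^k,\dval^k)$ is a fixed point of \vecf. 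Maximality would follow by a standard Kleene-style argument lifted to the product: any fixed point $(\aval^*,\dval^*)$ of \vecf is simultaneously a fixed point of each inner functional, and an induction on $i$ using monotonicity shows that $(\aval^*,\dval^*)$ stays on the correct side of $(\aval^i,\dval^i)$ throughout the sequence, hence of $(\aval^k,\dval^k)$.

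The main obstacle I anticipate lies in the inductive step for convergence of the outer sequence. After computing $\aval^i$ while holding $\dval^i$ fixed, the equation $\dval^i = \vecd(\aval^{i-1},\dval^i)$ that justified $\dval^i$ no longer automatically yields $\dval^i = \vecd(\aval^i,\dval^i)$, because $\aval$ has changed between rounds. One must show carefully, using monotonicity of \vecd in its first argument together with the monotone change from $\aval^{i-1}$ to $\aval^i$, that $\dval^{i+1}$ moves consistently with $\dval^i$ and that the outer chain continues to be monotone in the product order. This coordination between the inner fixed points across rounds is the heart of the \bekic-style argument and is the step that requires the most care in the formal proof.
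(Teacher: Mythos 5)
Your proposal takes essentially the same route as the paper: the paper's proof also appeals to \bekic's theorem, citing the monotonicity of \veca and \vecd from Lemma~\ref{lemma:monotonicff} and the finiteness (hence completeness) of the alias and demand lattices, with iteration starting from the $\top$ value. The only difference is one of elaboration --- the paper delegates the entire alternating-fixed-point machinery (well-definedness of the inner iterations, monotone stabilization of the outer sequence, and maximality) to the citation of \bekic's theorem, whereas you spell out how that argument would be carried through; your filled-in details are consistent with what the theorem provides.
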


\begin{proof}
%  \veca and \vecd are continuous functions where 
Lattice for aliases and demands are finite and thus complete lattice. 
 \veca and \vecd are monotone flow functions as discussed in Lemma~\ref{lemma:monotonicff} 
and computes maximum fixed point starting from \text{$\dval^0 = \vec{\emptyset}$} and 
 \text{$\aval^0 = \vec{\emptyset}$} which is the $\top$ 
value of the lattice. The lemma follows from
\bekic's theorem~\cite{Winskel:1993:FSP:151145}.
%%\footnote{\bekic's theorem holds if we have complete lattice and even if the functions
%%\veca and \vecd are not continuous functions.}
\end{proof}

\subsection{\mfp is a Sound Approximation of \mop}

In this section we prove that our algorithm which computes \mfp solution as discussed in Section~\ref{sec:id-mfp},
is a sound approximation of the \mop solution.
   \begin{lemma}
   $\forall n\in \nodes, (\afun{n}{\fp}, \dfun{n}{\fp}) \sqsubseteq (\afun{n}{\op}, \dfun{n}{\op})$
   \label{lemma:mfp-mop}
  \end{lemma}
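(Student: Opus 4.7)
The plan is to observe that this lemma is essentially a direct instantiation of the general bidirectional result in Lemma~\ref{lemma:mfp:mop}, combined with the monotonicity of our specific flow functions proved in Lemma~\ref{lemma:monotonicff}. The key insight is that the data flow value $\gfunnode{n}$ for our analysis decomposes as the pair $(\afun{n}{}, \dfun{n}{})$, with the combined flow function $f_n$ being the pairing $(\nda_n, \ndd_n)$ defined through Equations~\ref{afun:ff} and~\ref{dfun:ff}. Since both $\nda_n$ and $\ndd_n$ are monotonic, their pairing on the product lattice $\aval\times\dval$ is monotonic, satisfying the hypotheses needed for the general machinery.

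First, I would invoke Lemma~\ref{lemma:mfp:paths} and apply it to the combined pair, using the qualified control flow paths $\paths{n}$ introduced in Section~\ref{sec:cfp.bidirectional}. For any $\rho\in\paths{n}$, the \mop-style values $\eqmoptaval{\rho}{n}$ and $\eqmoptdval{\rho}{n}$ of Equations~\ref{eqmoptaval} and~\ref{eqmoptdval} are precisely the instantiations of $\gfun{n}{\rho}$ for our analysis, since they use $\Pred(\rho,n)$ and $\Succ(\rho,n)$ (i.e., a single predecessor along $\overrightarrow{\rho}$ and a single successor along $\overleftarrow{\rho}$), whereas the \mfp-style values $\afun{n}{\prog,i}$ and $\dfun{n}{\prog,i}$ in Equations~\ref{afun:ff} and~\ref{dfun:ff} additionally meet over all predecessors and successors at each iteration. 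Lemma~\ref{lemma:mfp:paths} then yields $\gfun{n}{\prog,i}\sqsubseteq\gfun{n}{\rho,i}$ for every $\rho\in\paths{n}$ and every $i\geq 0$, which component-wise gives
\[
\afun{n}{\prog,i}\sqsubseteq\eqpaval{\rho,\,i}{n} \quad\text{and}\quad \dfun{n}{\prog,i}\sqsubseteq\eqpdval{\rho,\,i}{n}.
\]

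Taking limits as $i$ grows to the fixed point (which exists because the underlying lattices are finite, as used already in Lemma~\ref{lemma:mfp}) and then meeting both inequalities over all $\rho\in\paths{n}$, the definitions in Equations~\ref{eq:mfp.def.ad} and~\ref{eq:mop.def.ad} immediately yield $\afun{n}{\fp}\sqsubseteq\afun{n}{\op}$ and $\dfun{n}{\fp}\sqsubseteq\dfun{n}{\op}$. Combining these under the product order on $\aval\times\dval$ gives the desired conclusion $(\afun{n}{\fp},\dfun{n}{\fp})\sqsubseteq(\afun{n}{\op},\dfun{n}{\op})$.

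The main subtlety, rather than an obstacle, is to make explicit that the bidirectional coupling is already baked into the qualified control flow path framework: when unrolling the inductive step of Lemma~\ref{lemma:mfp:paths} for our flow functions, $\afun{n}{\prog,i+1}$ depends on $\dfun{s}{\prog,i}$ for successors $s$ (a backward dependency) while $\dfun{n}{\prog,i+1}$ depends on $\afun{p}{\prog,i}$ for predecessors $p$ (a forward dependency). These cross-component dependencies are absorbed uniformly via $\Pred(\rho,n)$ and $\Succ(\rho,n)$, so no special reasoning beyond applying monotonicity of $\nda_n$ and $\ndd_n$ is required. Consequently, this lemma requires essentially no new inductive work beyond citing the already-proved Lemma~\ref{lemma:mfp:mop} on the product lattice.
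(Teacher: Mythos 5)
Your proposal is correct and follows essentially the same route as the paper: both instantiate the general result of Lemma~\ref{lemma:mfp:mop} (via Lemma~\ref{lemma:mfp:paths}) on the product lattice by taking $\dfv = (\afun{}{}, \dfun{}{})$ with $f_n = (\nda_n, \ndd_n)$ monotonic by Lemma~\ref{lemma:monotonicff}. Your write-up merely makes explicit the component-wise unrolling and the cross-component dependencies that the paper's two-line proof leaves implicit.
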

\begin{proof}
 Flow function $f_n$ is modelled as monotonic flow functions $\nda_n$ and $\ndd_n$ (Lemma~\ref{lemma:monotonicff})
 to compute
 \afun{}{} and \dfun{}{} as shown in~\ref{afun:ff} and~\ref{dfun:ff}.
 The lemma follows from considering $\dfv = (\afun{}{}, \dfun{}{})$ 
as shown in equations~\ref{eq:mfp.def.ad} and~\ref{eq:mop.def.ad} in 
 lemma~\ref{lemma:mfp:mop}. 
\end{proof}

\subsection{Function {\sf Restrict} Distributes Over Meet for Aliases}

\begin{lemma}
Function {\sf Restrict} distributes over meet for aliases.
\label{lemma:restrict:distributive}
\end{lemma}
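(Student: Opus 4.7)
The plan is to unfold both sides of the equation using the definition of $\Restrict$ in Equation~(\ref{eq:restrict}) and reduce the claim to a standard set-theoretic distributivity of a filter predicate over union. Since this is an alias analysis (a may-analysis) with $\top = \emptyset$ (cf.\ the initialization $\ain_n = \aout_n = \emptyset$ in Algorithm~\ref{algo:id}), the meet $\sqcap$ on the alias lattice is union, computed componentwise on the pair $(\ain,\aout)$. The lemma to be proved is therefore
\begin{align*}
\restrict{\bigsqcap_{i \in I} \aliasing_i}{\demand} \;=\; \bigsqcap_{i \in I} \restrict{\aliasing_i}{\demand},
\end{align*}
and it suffices (by an easy induction on $|I|$ or by a direct pointwise argument) to establish the binary version $\restrict{\aliasing_1 \sqcap \aliasing_2}{\demand} = \restrict{\aliasing_1}{\demand} \sqcap \restrict{\aliasing_2}{\demand}$.

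First I would write $\aliasing_j = (\ain_j, \aout_j)$ for $j = 1,2$ and $\demand = (\din, \dout)$. Expanding componentwise meet gives $\aliasing_1 \sqcap \aliasing_2 = (\ain_1 \cup \ain_2,\ \aout_1 \cup \aout_2)$. Applying the definition of $\Restrict$ to the left-hand side yields
\begin{align*}
\restrict{\aliasing_1 \sqcap \aliasing_2}{\demand}
 \;=\; \bigl(\{(a,b) \mid a \in \din,\ (a,b) \in \ain_1 \cup \ain_2\},\;
             \{(a,b) \mid a \in \dout,\ (a,b) \in \aout_1 \cup \aout_2\}\bigr).
\end{align*}
The selection predicate ``$a \in \din$'' depends only on the first component of the pair and is independent of which set contributes $(a,b)$, so it distributes over union:
\begin{align*}
\{(a,b) \mid a \in \din,\ (a,b) \in \ain_1 \cup \ain_2\}
 \;=\; \{(a,b) \mid a \in \din,\ (a,b) \in \ain_1\} \,\cup\, \{(a,b) \mid a \in \din,\ (a,b) \in \ain_2\}.
\end{align*}
The same reasoning applies to the $\aout$ component using the demand $\dout$. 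Reassembling the two components and recognising each summand as $\restrict{\aliasing_j}{\demand}$ gives the right-hand side, completing the binary case. The general case follows by induction.

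There is no real obstacle here; the only subtlety worth flagging is to be explicit that meet in the alias lattice (and hence inside the enclosing pair) is union, so that ``distributivity over meet'' truly reduces to distributivity of a unary filter over union of sets of pairs. Because the filter predicate never inspects the second component of a pair and never mixes $\ain$-pairs with $\aout$-pairs (the two coordinates of $\demand$ act independently), no cross terms or spurious duplicates arise, and the equality is exact rather than an inequality. This exactness is what makes the lemma usable downstream for transporting $\restrict{\cdot}{\demand}$ through the $\bigsqcap_{\rho \in \paths{n}}$ in Equation~(\ref{eq:mop.def.ad}) when comparing $\restrict{\afun{n}{\op}}{\dfunhat{n}{}}$ against $\restrict{\afun{n}{\cm}}{\dfunhat{n}{}}$ in Lemma~\ref{lemma:mop:ideal}.
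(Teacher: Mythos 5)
Your proof is correct and follows essentially the same route as the paper's: both reduce the claim to the observation that the meet on aliases is (componentwise) union and that the filter predicate $a \in \din$ (resp.\ $a \in \dout$) is independent of which member of the family contributes the pair $(a,b)$, so the restriction commutes with the union exactly. The only cosmetic difference is that the paper argues by chasing a single element through both inclusions for an infinite meet directly (treating one component and noting the other is symmetric), whereas you phrase it as distributivity of a filter over a binary union and then generalize; just make sure you lean on your ``direct pointwise argument'' rather than induction on $|I|$, since the meet in Lemma~\ref{lemma:mop:ideal} ranges over the possibly infinite set $\paths{n}$.
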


\begin{proof}
We show below that function \text{\sf Restrict} distributes over $\sqcap$ (which is $\cup$ for aliases).
We have formally defined {\sf Restrict} in Equation~\ref{eq:restrict}.
%\[\restrict{\aliasing}{\demand} = \left(\left\{(a,b)\vert a\in \din, (a,b) \in \ain\right\}, 
%\left\{(a,b)\vert a\in \dout, (a,b) \in \aout\right\}\right).\]
For convenience, we prove distributivity considering a single component---it is easy to see that the argument holds for multiple components also.

We first show that
$(a,b) \in \restrict{\displaystyle\bigsqcap_{i=1}^{\infty}\afun{i}{}}{\djoin{}{}} 
\Rightarrow (a,b) \in  \displaystyle\bigsqcap_{i=1}^{\infty} \restrict{\afun{i}{}}{\djoin{}{}}$ 
(Equation~\ref{eq:restrict:distributive1}) and then show that
$(a,b) \in  \displaystyle\bigsqcap_{i=1}^{\infty} \restrict{\afun{i}{}}{\djoin{}{}}
\Rightarrow (a,b) \in \restrict{\displaystyle\bigsqcap_{i=1}^{\infty}\afun{i}{}}{\djoin{}{}}$
(Equation~\ref{eq:restrict:distributive2}).

%%%%%\begin{align*}
%%%%%\restrict{\displaystyle\bigsqcap_{i=1}^{\infty}\afun{i}{}}{\djoin{}{}} &=  %
%%%%%\left\{(a,b)\vert a\in \djoin{}{}, (a,b)\in \displaystyle\bigsqcap_{i=1}^{\infty}\afun{i}{}\right\}
%%%%%\\
%%%%%&= \left\{(a,b)\vert a\in \djoin{}{}, (a,b)\in \afun{1}{}\right\} 
%%%%%\sqcap  \left\{(a,b)\vert a\in \djoin{}{}, (a,b)\in \afun{2}{}\right\}
%%%%%\sqcap \ldots
%%%%%\\
%%%%%&= \restrict{\afun{1}{}}{\djoin{}{}} \sqcap  \restrict{\afun{2}{}}{\djoin{}{}} \sqcap \ldots
%%%%%\\
%%%%%&= \displaystyle\bigsqcap_{i=1}^{\infty} \restrict{\afun{i}{}}{\djoin{}{}}
%%%%%%%%%\\
%%%%%%%%%\restrict{\aliasing_2}{\demand} &= \{(a,b)\vert a\in \demand, (a,b)\in \aliasing_2\}
%%%%%%%%%\nonumber
%%%%%%%%%\\
%%%%%%%%%\restrict{\aliasing_1\cup\aliasing_2}{\demand} &= \{(a,b)\vert a\in \demand, (a,b)\in \aliasing_1\cup\aliasing_2\}
%%%%%%%%%\nonumber
%%%%%%%%%\\
%%%%%%%%%&=  \{(a,b)\vert a\in \demand, (a,b)\in \aliasing_1\} \cup \{(a,b)\vert a\in \demand, (a,b)\in \aliasing_2\}
%%%%%%%%%\nonumber
%%%%%%%%%\\
%%%%%%%%%&=  \restrict{\aliasing_1}{\demand} \cup \restrict{\aliasing_2}{\demand}
%%%%%%\label{eq:restrict:distributive}
%%%%%\end{align*}

\begin{align}
\text{Let} (a,b) \in
	& \; \restrict{\displaystyle\bigsqcap_{i=1}^{\infty}\afun{i}{}}{\djoin{}{}} 
		\nonumber\\
\Rightarrow
	& \; a\in \djoin{}{}, (a,b) \in\displaystyle\bigsqcap_{i=1}^{\infty}\afun{i}{} & (\text{by definition})
		\nonumber\\
\Rightarrow
	& 	
		\; a\in\djoin{}{},\; \exists j, 1\leq j < \infty, (a,b) \in \afun{j}{}
		\nonumber\\
\Rightarrow
	& \; \exists j, 1\leq j<\infty, (a,b) \in \restrict{\afun{j}{}}{\djoin{}{}} 
		\nonumber\\
\Rightarrow
	& \; (a,b) \in  \displaystyle\bigsqcap_{i=1}^{\infty} \restrict{\afun{i}{}}{\djoin{}{}}
\label{eq:restrict:distributive1}
\end{align}

\begin{align}
\text{For the converse, let} (a,b) \in
	& \; \displaystyle\bigsqcap_{i=1}^{\infty} \restrict{\afun{i}{}}{\djoin{}{}}
		\nonumber\\
\Rightarrow
	& \; \exists j, 1\leq j<\infty, (a,b) \in \restrict{\afun{j}{}}{\djoin{}{}} 
		\nonumber\\
\Rightarrow
	& \; \exists j, 1\leq j < \infty, a\in\djoin{}{}, (a,b) \in \afun{j}{}
		\nonumber\\
\Rightarrow
	& \; a\in \djoin{}{}, (a,b) \in\displaystyle\bigsqcap_{i=1}^{\infty}\afun{i}{} 
		\nonumber\\
\Rightarrow
	& \; (a,b)\in\restrict{\displaystyle\bigsqcap_{i=1}^{\infty}\afun{i}{}}{\djoin{}{}} & (\text{by definition})
\label{eq:restrict:distributive2}
\end{align}

Hence, {\sf Restrict} distributes over meet for the alias component.
\end{proof}

%It is easy to see that {\sf Restrict} distributes over meet for demand component as well.
%However, {\sf Restrict} does not distribut over meet for both alias and demand component combined.

\subsection{\mop for Aliases is Sound}
\newcommand{\alphacm}{\llbracket\alpha\rrbracket\xspace}
\newcommand{\betacm}{\llbracket\beta\rrbracket\xspace}
\newcommand{\rhscm}{\llbracket\rhs\rrbracket\xspace}

The three lemmas in this section form the major part of our soundness proof.

%%In this section we prove that the \mop solution for aliases is sound with respect to the alias
%%information computed in the concrete memory.
Let $\alphacm$ represent the runtime location of access expression $\alpha$ in the concrete memory and
$\balpha$ represents the abstract name of $\alpha$ as computed by our heap abstraction.
They are illustrated in Figure~\ref{fig:cm-am}.

 \begin{lemma}
\mop for aliases at node $n$ is sound with respect to $\djoin{n}{\jop}$.
  \label{lemma:mop:ideal}
 \end{lemma}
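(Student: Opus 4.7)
The plan is to prove that, for every node $n$ and every qualified control flow path $\rho \in \paths{n}$, the alias information $\afun{n}{\rho}$ computed along $\rho$ is a sound over-approximation of the concrete alias information $\afun{n}{\cm,\rho}$ seen along the same path, once both are restricted to $\djoin{n}{\jop}$. From path-wise soundness we lift to \mop by invoking Lemma~\ref{lemma:restrict:distributive}, since $\afun{n}{\op}$ is a meet over qualified paths (Equation~\ref{eq:mop.def.ad}) and \Restrict distributes over meet.

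First I would fix $n$ and an arbitrary $\rho = (\overrightarrow{\rho}, n, \overleftarrow{\rho}) \in \paths{n}$, and induct on the length of the forward prefix $\overrightarrow{\rho}$. The base case is $n = \Start{}$, where $\afun{n}{\rho} = \boundary = \emptyset$ at the intraprocedural level, trivially giving $\restrict{\afun{n}{\rho}}{\djoin{n}{\jop}} \sqsubseteq \restrict{\afun{n}{\cm,\rho}}{\djoin{n}{\jop}}$. For the inductive step, suppose the claim holds at $p = \Pred(\rho, n)$. Using Equation~\ref{eqmoptaval}, $\afun{n}{\rho} = \nda_n(\afun{p}{\rho}, \dfun{s}{\rho})$ where $s = \Succ(\rho, n)$, which expands into the data flow equations for $\ain_n$ and $\aout_n$ (Equations~\ref{eq:ain}--\ref{eq:akill}). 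We need to verify, case by case on the statement form $\lhs_n = \rhs_n$, that every concrete alias pair $(\alphacm, \betacm)$ created at $n$ and relevant to $\djoin{n}{\jop}$ is captured by $\AGen_n$ after translation via \dn.

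The heart of the proof is a lemma I would state separately: for each statement shape (direct copy $x = y$, address-of $x = \&y$, load $x = \ast y$ or $x = y\rightarrow f$, store $\ast x = y$ or $x\rightarrow f = y$, and allocation $x = \text{new } \tau$), if $\blhs_n \subseteq \dout'_n$ or the $\brhs_n$ has aliases flowing in via $\ain_n$, then the generated pair $\blhs_n \times \brhs_n$ soundly covers the concrete effect after applying the heap abstraction. This requires two sub-arguments: (i) the abstract names produced by \dn soundly summarize the corresponding concrete locations (an easy property from Table~\ref{table-dn} and the definition of allocation-site-based/type-based abstractions), and (ii) the weak-update policy of $\AKill_n$ (killing only when $\lhs_n \equiv x$) correctly reflects the fact that indirect writes cannot be strongly killed under data abstraction. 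The dual obligation, that $\din$/$\dout$ actually contains every demand needed to trigger these $\AGen$ rules along $\rho$, will be discharged by also inducting (backwards) on $\overleftarrow{\rho}$ and using the $\ldgen$/$\rdgen$ definitions, including the $\addrexpr$ term that encodes our improved speculation.

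The hardest part will be step (ii): handling indirect assignments precisely enough that our alternative speculation, which raises $\addrexpr(\rhs_n)$ rather than generating demands at every indirect write, still captures all real aliases flowing in concrete executions. The subtlety is that an indirect write $\ast p = x$ creates the concrete alias $(\ast p, x)$, and this must be reflected in $\afun{n}{\rho}$ whenever $\blhs_n$ or $\brhs_n$ overlaps the demand. I would argue this by showing that if $\ast p$ aliases something in $\djoin{n}{\jop}$ concretely along $\rho$, then by the $\addrexpr$ rule applied at some earlier $\lhs = \&p'$ statement, the pointer demand reaches $p$ in $\dfun{}{\rho}$ in time, triggering $\AGen$ to install $(\blhs_n, \brhs_n)$. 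Once path-wise soundness is established, the lemma follows: $\restrict{\afun{n}{\op}}{\djoin{n}{\jop}} = \restrict{\bigsqcap_{\rho} \afun{n}{\rho}}{\djoin{n}{\jop}} = \bigsqcap_{\rho} \restrict{\afun{n}{\rho}}{\djoin{n}{\jop}} \sqsubseteq \bigsqcap_{\rho}\restrict{\afun{n}{\cm,\rho}}{\djoin{n}{\jop}} = \restrict{\afun{n}{\cm}}{\djoin{n}{\jop}}$, using Lemma~\ref{lemma:restrict:distributive} for the second equality and the path-wise bound for the inequality.
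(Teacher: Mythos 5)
Your proposal follows essentially the same route as the paper: path-wise soundness of the alias information along each qualified control flow path, established by induction on the forward component $\overrightarrow{\rho}$ with a case analysis at each assignment (the paper's Lemma~\ref{lemma-soundness-flow-function-proof} organizes the cases by how $\alpha,\beta$ relate to $\lhs,\rhs$ rather than by statement shape, but the obligations are identical), followed by lifting to \mop via monotonicity of meet and Lemma~\ref{lemma:restrict:distributive}. The final chain of (in)equalities you give is exactly the paper's concluding argument, so the proposal is correct and matches the paper's proof.
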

\begin{proof}
%% In order to prove this lemma we first show that
%% the demand computed at the origin is sound in Lemma~\ref{lemma-initial-demand-sound}.
%%%  We then wish to show the soundness of aliases computed by \mop
%%%  solution for the demand at origin.  We re-use notation introduced in
%%%  Section~\ref{sec:mfp.mop.bidirectional} while defining the \mop
%%%  solution.  Specifically, let $\afun{n}{\rho}$ denote the aliases
%%%  computed at $n$ by our data flow equations when considering a
%%%  qualified control flow path $\rho$ containing $n$ in the
%%%  program. Let $\afun{n}{\cm,\rho}$ denote the alias information
%%%  computed in the concrete memory at $n$ when considering the same
%%%  qualified control flow path. Note that this implies $\afun{n}{\cm} =
%%%  \sqcap_{\rho \in \paths{n}}\afun{n}{\cm,\rho}$.

  The lemma is proved by first showing that if we restrict our
  analysis to a single (arbitrary) qualified control flow path $\rho$
  in $\paths{n}$, the alias information computed by 
   Equations~(\ref{eqmoptaval}) and~(\ref{eqmoptdval}) soundly approximates
  the alias information computed in the concrete memory at $n$ when
  considering the same qualified control flow path.  Formally,
  \[\forall n\in \nodes, \forall \rho\in \paths{n} :
  \restrict{\afun{n}{\rho}}{\dfun{n}{\rho}} \sqsubseteq
  \restrict{\afun{n}{\cm, \rho}}{\dfun{n}{\rho}}.\] The above relation
  is established in Lemma~\ref{lemma-soundness-flow-function-proof}.
  Since $\forall n\in \nodes, \forall \rho\in\paths{n} : \dfun{n}{\rho} \sqsubseteq \dfunhat{n}{}$, 
where \dfunhat{n}{} represents the common set of demands along every qualified control flow path through $n$,
it follows
  that
  \[\forall n \in \nodes, \forall \rho\in \paths{n} : 
  \restrict{\afun{n}{\rho}}{\dfunhat{n}{}} \sqsubseteq
  \restrict{\afun{n}{\cm,\rho}}{\dfunhat{n}{}}.\]
  From monotonicity of the meet operation, we have
  \[\forall n \in \nodes : 
\displaystyle\bigsqcap_{\rho \in \paths{n}}\,\restrict{\afun{n}{\rho}}{\dfunhat{n}{}} \sqsubseteq 
\displaystyle\bigsqcap_{\rho \in \paths{n}}\,\restrict{\afun{n}{\cm,\rho}}{\dfunhat{n}{}}.\]
  We also know from Lemma~\ref{lemma:restrict:distributive} that {\sf Restrict}
  distributes over meet in its first argument (i.e. aliases). Therefore,
  we have
  \[\forall n \in \nodes : 
\restrict{\displaystyle\bigsqcap_{\rho \in \paths{n}}\afun{n}{\rho}}{\dfunhat{n}{}} \sqsubseteq 
\restrict{\displaystyle\bigsqcap_{\rho \in \paths{n}}\afun{n}{\cm,\rho}}{\dfunhat{n}{}}.\]

  By definition, $\forall n\in\nodes : \displaystyle\bigsqcap_{\rho \in \paths{n}}\afun{n}{\rho} = \afun{n}{\op}$
    and $\forall n\in\nodes : \displaystyle\bigsqcap_{\rho \in \paths{n}}\afun{n}{\cm,\rho} = \afun{n}{\cm}$.
      Therefore, 
\[\forall n \in \nodes : \restrict{\afun{n}{\op}}{\dfunhat{n}{}} \sqsubseteq \restrict{\afun{n}{\cm}}{\dfunhat{n}{}}.\]  
Hence the lemma is proved.
\end{proof}

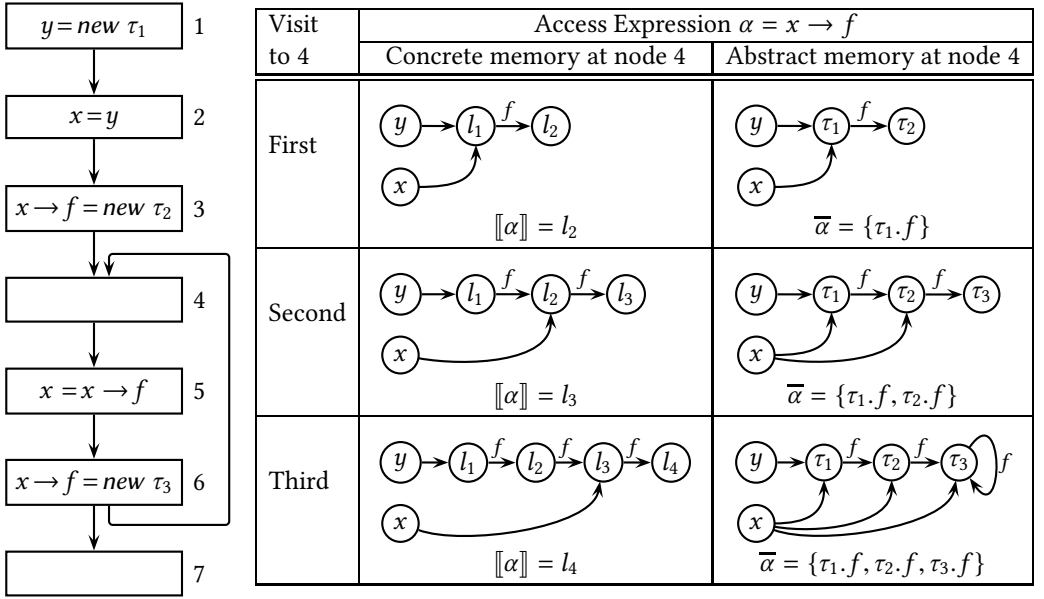
\begin{figure}[t]
 
 \begin{center}
\begin{tabular}{@{}cc@{}}
 \begin{tabular}{@{}c@{}}
  \begin{pspicture}(4,0)(36,80)
% \psframe(0,0)(36,75)	
\putnode{n1}{origin}{18}{75}{\psframebox{\makebox[21mm]{$y \!=\! \new\ \tau_1$\rule[-.25em]{0em}{1em}}}}
\putnode{n2}{n1}{0}{-12}{\psframebox{\makebox[21mm]{{\white A}$x \!=\! y${\white A}}}}
\putnode{n3}{n2}{0}{-12}{\psframebox{\makebox[21mm]{$x\!\rightarrow\! f \!=\! \new\ \tau_2$\rule[-.25em]{0em}{1em}}}}
\putnode{n4}{n3}{0}{-12}{\psframebox{\makebox[21mm]{\white$Type(a) \!\!=\!\! Real$}}}
\putnode{n5}{n4}{0}{-12}{\psframebox{\makebox[21mm]{$x =\! x\rightarrow\! f$\rule[-.25em]{0em}{1em}}}}
\putnode{n6}{n5}{0}{-12}{\psframebox{\makebox[21mm]{$x\!\rightarrow\! f \!=\! \new\ \tau_3$\rule[-.25em]{0em}{1em}}}}
\putnode{n7}{n6}{0}{-12}{\psframebox{\makebox[21mm]{\white$Type(a) \!=\! Real$}}}
%%%
\ncline{->}{n1}{n2}
\ncline{->}{n2}{n3}
\ncline{->}{n3}{n4}
\ncline{->}{n4}{n5}
\ncloop[angleA=270,angleB=90,offset=2,loopsize=-16,armA=2.5,armB=3,linearc=.5]{->}{n6}{n4}
%\ncloop[angleA=270,angleB=90,offset=-2,loopsize=14,armA=2.5,armB=3,linearc=.5]{->}{n4}{n7}
\ncline{->}{n5}{n6}
\ncline{->}{n6}{n7}
%%%
\putnode{w}{n1}{14}{0}{1}
\putnode{w}{n2}{14}{0}{2}
\putnode{w}{n3}{14}{0}{3}
\putnode{w}{n4}{14}{0}{4}
\putnode{w}{n5}{14}{0}{5}
\putnode{w}{n6}{14}{0}{6}
\putnode{w}{n7}{14}{0}{7}
\end{pspicture}
\end{tabular} &

\begin{tabular}{|l|c|c|}
\hline
%\multirow{2}{*}{\rotatebox{90}{Iterations}} &
%\multirow{2}{*}{{Iter.}} &
Visit &
\multicolumn{2}{|c|}{Access Expression $\alpha = x\rightarrow f$}
\\ \cline{2-3} 
to 4
&
%Concrete memory an $\text{\sf In}_7$ & Abstract memory at $\text{\sf In}_7$\\ \hline\hline
Concrete memory at node 4 & Abstract memory at node 4\\ \hline\hline
%\multicolumn{2}{|c|}{(a) When the loop is not executed even once}
%\\ \hline

First &

\begin{tabular}{c}
  	\begin{pspicture}(0,1)(40,17)
%   	\psframe(0,1)(45,17)
	\psset{nodesep=-1.5}
	
	\putnode{y}{origin}{2}{11}{\pscirclebox{$y$}}
	\putnode{l1}{y}{10}{0}{\pscirclebox[framesep=0.5]{$l_1$}}
	\putnode{l2}{l1}{10}{0}{\pscirclebox[framesep=0.5]{$l_2$}}
% 	\putnode{l3}{l2}{10}{0}{\pscirclebox{$l_3$}}
	\putnode{x}{y}{0}{-8}{\pscirclebox{$x$}}
	
	\ncline[nodesep=0]{->}{y}{l1}
% 	\nccurve[angleA=20,angleB=90,nodesep=-0.2]{->}{y}{l2}
	\ncline[nodesep=0]{->}{l1}{l2}
	\aput[1pt](.4){\footnotesize$f$}
% 	\ncline[nodesep=-1]{->}{l2}{l3}
% 	\aput[1pt](.4){$f$}

	% 	\ncline[nodesep=-1]{->}{x}{l1}
	\nccurve[angleA=0,angleB=270,nodesep=-0.2,ncurv=1]{->}{x}{l1}
	\end{pspicture}

\end{tabular}
&
\begin{tabular}{c}
  	\begin{pspicture}(0,1)(35,17)
%   	\psframe(0,1)(28,17)
	\psset{nodesep=-1.5}
	
	\putnode{y}{origin}{2}{11}{\pscirclebox{$y$}}
	\putnode{l1}{y}{10}{0}{\pscirclebox[framesep=0.5]{$\tau_1$}}
	\putnode{l2}{l1}{10}{0}{\pscirclebox[framesep=0.5]{$\tau_2$}}
	\putnode{x}{y}{0}{-8}{\pscirclebox{$x$}}
	
	\ncline[nodesep=0]{->}{y}{l1}
	\ncline[nodesep=0]{->}{l1}{l2}
	\aput[1pt](.4){\footnotesize$f$}
	\nccurve[angleA=0,angleB=270,nodesep=-0.2,ncurv=1]{->}{x}{l1}
	\end{pspicture}

\end{tabular} 
\\
\rule[-.5em]{0em}{1em}%
&
$\alphacm = l_2$ & $\balpha = \{\tau_1.f\}$ 
%\\\hline
%\multicolumn{2}{|c|}{(b) When the loop is executed once}
\\ \hline
Second &

\begin{tabular}{c}
  	\begin{pspicture}(0,1)(40,17)
%   	\psframe(0,1)(45,15)
	\psset{nodesep=-1.5}
	
	\putnode{y}{origin}{2}{11}{\pscirclebox{$y$}}
	\putnode{l1}{y}{10}{0}{\pscirclebox[framesep=0.5]{$l_1$}}
	\putnode{l2}{l1}{10}{0}{\pscirclebox[framesep=0.5]{$l_2$}}
	\putnode{l3}{l2}{10}{0}{\pscirclebox[framesep=0.5]{$l_3$}}
	\putnode{x}{y}{0}{-8}{\pscirclebox{$x$}}
	
	\ncline[nodesep=0]{->}{y}{l1}
% 	\nccurve[angleA=20,angleB=90,nodesep=-0.2]{->}{y}{l2}
	\ncline[nodesep=0]{->}{l1}{l2}
	\aput[1pt](.4){\footnotesize $f$}
	\ncline[nodesep=0]{->}{l2}{l3}
	\aput[1pt](.4){\footnotesize$f$}

	% 	\ncline[nodesep=-1]{->}{x}{l1}
	\nccurve[angleA=-10,angleB=270,nodesep=-0.2,ncurv=.7]{->}{x}{l2}
	\end{pspicture}

\end{tabular}
&
\begin{tabular}{c}
  	\begin{pspicture}(0,1)(35,17)
%   	\psframe(0,1)(28,15)
	\psset{nodesep=-1.5}
	
	\putnode{y}{origin}{2}{11}{\pscirclebox{$y$}}
	\putnode{l1}{y}{10}{0}{\pscirclebox[framesep=0.5]{$\tau_1$}}
	\putnode{l2}{l1}{10}{0}{\pscirclebox[framesep=0.5]{$\tau_2$}}
	\putnode{l3}{l2}{10}{0}{\pscirclebox[framesep=0.5]{$\tau_3$}}
	\putnode{x}{y}{0}{-8}{\pscirclebox{$x$}}
	
	\ncline[nodesep=0]{->}{y}{l1}
	\ncline[nodesep=0]{->}{l1}{l2}
	\aput[1pt](.4){\footnotesize$f$}
	\ncline[nodesep=0]{->}{l2}{l3}
	\aput[1pt](.4){\footnotesize$f$}
	\nccurve[angleA=0,angleB=270,nodesep=-0.2,ncurv=1]{->}{x}{l1}
	\nccurve[angleA=-10,angleB=270,nodesep=-0.2,ncurv=.7]{->}{x}{l2}
	\end{pspicture}

\end{tabular} 
\\
\rule[-.5em]{0em}{1em}%
&
$\alphacm = l_3$ & $\balpha = \{\tau_1.f,\tau_2.f\}$ 
%\\\hline
%\multicolumn{2}{|c|}{(c) When the loop is executed twice}
\\ \hline
Third &
\begin{tabular}{c}
  	\begin{pspicture}(0,1)(40,17)
   	%\psframe(0,1)(41,17)
	\psset{nodesep=-1.5}
	
	\putnode{y}{origin}{2}{11}{\pscirclebox{$y$}}
	\putnode{l1}{y}{9}{0}{\pscirclebox[framesep=0.5]{$l_1$}}
	\putnode{l2}{l1}{9}{0}{\pscirclebox[framesep=0.5]{$l_2$}}
	\putnode{l3}{l2}{9}{0}{\pscirclebox[framesep=0.5]{$l_3$}}
	\putnode{l4}{l3}{9}{0}{\pscirclebox[framesep=0.5]{$l_4$}}
	\putnode{x}{y}{0}{-8}{\pscirclebox{$x$}}
	
	\ncline[nodesep=0]{->}{y}{l1}
	\ncline[nodesep=0]{->}{l1}{l2}
	\aput[1pt](.4){\footnotesize$f$}
	\ncline[nodesep=0]{->}{l2}{l3}
	\aput[1pt](.4){\footnotesize$f$}
	\ncline[nodesep=0]{->}{l3}{l4}
	\aput[1pt](.4){\footnotesize$f$}
	\nccurve[angleA=-20,angleB=260,nodesep=-0.2,ncurv=.5]{->}{x}{l3}
	\end{pspicture}

\end{tabular}
&
\begin{tabular}{c}
  	\begin{pspicture}(0,1)(35,17)
   	%\psframe(0,1)(36,17)
	\psset{nodesep=-1.5}
	
	\putnode{y}{origin}{2}{11}{\pscirclebox{$y$}}
	\putnode{l1}{y}{9}{0}{\pscirclebox[framesep=0.5]{$\tau_1$}}
	\putnode{l2}{l1}{9}{0}{\pscirclebox[framesep=0.5]{$\tau_2$}}
	\putnode{l3}{l2}{9}{0}{\pscirclebox[framesep=0.5]{$\tau_3$}}
	\putnode{x}{y}{0}{-8}{\pscirclebox{$x$}}
	
	\ncline[nodesep=0]{->}{y}{l1}
	\ncline[nodesep=0]{->}{l1}{l2}
	\aput[1pt](.4){\footnotesize$f$}
	\ncline[nodesep=0]{->}{l2}{l3}
	\aput[1pt](.4){\footnotesize$f$}	
      \nccurve[angleA=55,angleB=-55,ncurv=4,nodesep=-.5]{->}{l3}{l3}
      \aput[1pt](.5){\footnotesize$f$}
      \nccurve[angleA=0,angleB=270,nodesep=-0.2,ncurv=1]{->}{x}{l1}
	\nccurve[angleA=-10,angleB=270,nodesep=-0.2,ncurv=.7]{->}{x}{l2}
	\nccurve[angleA=-20,angleB=260,nodesep=-0.2,ncurv=.5]{->}{x}{l3}
	\end{pspicture}

\end{tabular}
\\
\rule[-.5em]{0em}{1em}%
&
$\alphacm = l_4$ & $\balpha = \{\tau_1.f,\tau_2.f,\tau_3.f\}$
\\ \hline
\end{tabular}

\end{tabular}
\caption{
\protect
Illustrating the 
abstract names and concrete locations of an access expression $\alpha$, denoted by $\balpha$ and
$\alphacm$, respectively.
We show the concrete memory and the abstract memory (under type-based abstraction) for different iterations of the loop in the
program.}
\label{fig:cm-am}

\end{center}
\end{figure}

\begin{lemma}
The alias information computed by  Equations~(\ref{eqmoptaval}) and 
	(\ref{eqmoptdval}) soundly approximates
  the alias information computed in the concrete memory at $n$ when
  considering the same qualified control flow path.
 
 \label{lemma-soundness-flow-function-proof}
\end{lemma}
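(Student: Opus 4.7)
The plan is to prove the lemma by simultaneous induction on the structure of the qualified control flow path $\rho = (\overrightarrow{\rho}, n, \overleftarrow{\rho})$, exploiting the fact that along a single qualified path there is no meet at intermediate points, so each node has a unique predecessor and successor in $\rho$. Because the alias and demand components are mutually dependent via Equations~(\ref{eqmoptaval}) and~(\ref{eqmoptdval}), I would set up a joint invariant that is carried along the path: for every node $m$ appearing in $\rho$ and every access expression $\alpha$ with $\balpha \in \dfun{m}{\rho}$, every concrete alias $(\alphacm, \betacm)$ in the concrete memory at $m$ along $\rho$ is witnessed by some pair $(\balpha, \bbeta) \in \afun{m}{\rho}$ with $\bbeta \in \dn(\beta, \afun{m}{\rho})$. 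Once this invariant is established at every $m$ along $\rho$, instantiating at $n$ yields exactly the containment required by \Restrict.

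I would establish the invariant in two interleaved passes. First, a backward pass along $\overleftarrow{\rho}$ starting from $\End{}$ (where $\dfun{\End{}}{\rho} = \boundary = \emptyset$, except for the generated demand at nodes in $\origin$) shows that the demand set $\dfun{m}{\rho}$ at each node contains the abstract names of every access expression whose concrete location at $m$ could flow, via the remainder of $\overleftarrow{\rho}$, to an access expression in the original target demand. Second, a forward pass along $\overrightarrow{\rho}$ starting from $\Start{}$ (where $\afun{\Start{}}{\rho} = \emptyset$ and no concrete aliases hold by assumption) shows the alias invariant, proceeding by case analysis on the statement at the current node $m\!:\lhs_m = \rhs_m$. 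The interesting cases are: (i) a direct copy $x = y$, where $\dn$ collapses to the singleton $\{x\}$ and $\AGen_m$ matches the concrete assignment exactly; (ii) an address-of $x = \&y$, where $\addr(\rhs_m)$ places $\&y$ into $\afun{m}{\rho}(x)$ verbatim, preserving concrete precision; and (iii) the two indirect cases $\lhs_m \equiv *p$ or $\lhs_m \equiv p\rightarrow f$ and dually on the right-hand side, where the weak update in $\AKill_m$ and the closure $\dn(\lhs_m, \ain_m)$ jointly ensure that whichever concrete location is written is represented by some abstract name in $\AGen_m$. A symmetric argument handles $\rhs_m$.

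The main obstacle, and the crux of the argument, lies in the indirect-assignment case in conjunction with the new speculation strategy. When the concrete assignment $*p = \rhs$ modifies a location whose access expression $x$ belongs (via some alias chain) to a demanded expression downstream, the analysis must raise the demand for $p$ before it can include $(x, \rhs)$ in $\AGen$. Our equations do not raise this demand at the indirect-assignment statement itself; they only raise it through the speculation $\addrexpr$ in $\ldgen$ when a demand for $x$ propagates through an assignment $\ldots = \&x$. I will therefore need a separate \emph{speculation adequacy} claim: whenever a concrete alias between $x$ and $*p$ can arise along $\rho$, either (a) $p$ is written using some expression $\&x$ appearing in the program, in which case the demand for $\&x$ suffices to discover $(p,\&x)\in \afun{}{\rho}$ before the indirect assignment is reached, or (b) $p$ acquires its value by copy from another pointer that ultimately derives from such an $\&x$, in which case transitive closure of aliases in $\AGen$ again propagates the demand for $\&x$ far enough upstream. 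This adequacy claim depends on $\addrTaken$ being syntactically complete on the program, and is what licenses the inductive step in case (iii) above. The remaining inductive steps, together with Lemma~\ref{lemma:mfp:paths} applied to $\rho$, then close the argument and yield the required containment.
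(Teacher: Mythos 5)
Your overall skeleton --- an induction along a single qualified control flow path, with an invariant that is \emph{conditional} on the demand ($\balpha \subseteq \dfun{m}{\rho}$) and concluded by instantiating at $n$ --- is essentially the paper's proof, which inducts on the number of assignment statements in the forward component $\overrightarrow{\rho}$, takes as base case the first assignment (necessarily of the form $x = \new\ \tau$ or $x = \&a$), and in the inductive step does a case analysis on the roles of $\alpha$ and $\beta$ relative to \lhs and \rhs (direct generation, derived generation, pure propagation) rather than on the syntactic form of the statement. The two places where you add machinery beyond this are also the two places where your plan runs into trouble.

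First, the backward ``demand pass'' proves the wrong thing: you ask that $\dfun{m}{\rho}$ contain the abstract name of \emph{every} access expression whose concrete location could flow to the target. That is a completeness property of demand propagation; it is not needed for this lemma, whose conclusion is filtered through \Restrict and is therefore relative to whatever demands the equations actually raise, and it is stronger than anything the equations guarantee. What the inductive step actually needs is weaker and purely syntactic: whenever $\blhs$ is demanded at $\lhs = \rhs$ with $\rhs \in \{*y,\, y\rightarrow f\}$, the $\dgen$/$\ldgen$/$\rdgen$ equations (via \vname, \base and \addrexpr) put $\{y, \&y\}$ into the demand at the predecessor, so the conditional inductive hypothesis applied there already yields the aliases of $y$ needed to compute $\brhs = \dn(\rhs,\ain)$. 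Second, and for the same reason, the standalone ``speculation adequacy'' lemma is unnecessary, and its case (b) --- $p$ ``ultimately derives from'' some $\&x$ through copies --- would commit you to a separate induction on derivation chains whose well-foundedness in the presence of loops you would still have to argue. The paper absorbs this into the ordinary inductive step: a demand $\&x$ is never killed (\dkill only removes demands of the form $x$), so once raised it reaches every earlier node of $\rho$; the alias $(q,\&x)$ is generated directly at $q = \&x$ and then derived through each copy by \AGen; and the induction on path position handles arbitrarily long chains, including those unrolled from loops, with no extra lemma. Drop the demand-completeness pass, replace the adequacy claim by the observation about $\ldgen$/$\rdgen$/\addrexpr above, and reorganize the case split around whether $\alpha,\beta$ coincide with \lhs/\rhs, and your proof becomes the paper's.
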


\begin{proof}
  %% We re-use notation introduced in
  %% Section~\ref{sec:mfp.mop.bidirectional} while defining the \mop
  %% solution.  Specifically, let $\afun{n}{\rho}$ denote the aliases
  %% computed at $n$ by our data flow equations when considering a
  %% qualified control flow path $\rho$ containing $n$ in the
  %% program. Let \afun{n}{\cm,\rho} denote the alias information
  %% computed in the concrete memory at $n$ when considering the same
  %% qualified control flow path.
  
  We show below that the alias information computed in $\afun{n}{\rho}$
  is a sound approximation of $\afun{n}{\cm,\rho}$. Formally, 
\[\forall n\in \nodes, \forall \rho\in \paths{n} : \restrict{\afun{n}{\rho}}{\dfun{n}{\rho}} 
\sqsubseteq \restrict{\afun{n}{\cm, \rho}}{\dfun{n}{\rho}}.\]
For convenience,
  we re-write this proof obligation in the following equivalent form.
\[\forall n \in \nodes, \forall \rho \in \paths{n}:\;\; 
(\alphacm, \betacm) \in \afun{n}{\cm,\rho} 
\wedge \balpha \subseteq \dfun{n}{\rho}
\Rightarrow \balpha \times \bbeta \subseteq \afun{n}{\rho} 
\wedge \balpha \subseteq \dfun{n}{\rho}
\]

The proof is organized along the following steps:
\begin{enumerate}[(a)]
\item Devising a notation for inducting on the forward components of qualified control flow paths.
\item Refining the proof obligation to use the data flow values at the exit of a node
      (i.e. using \afunout{n}{\cm,\rho} and \dfunout{n}{\rho}).
\item Proving the basis of the induction.
\item Proving the inductive step. This step uses the case analysis illustrated in Figure~\ref{alpha-beta-l-r-soundness}.
\end{enumerate}

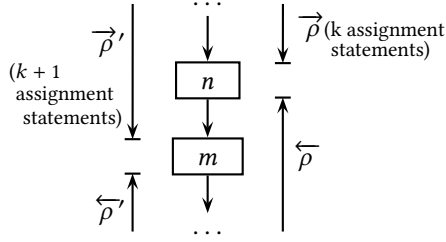
\begin{figure}[t]
 
 \begin{center}
 \begin{pspicture}(0,0)(65,40)
%  \psframe(0,0)(65,40)

\putnode{a0}{origin}{29}{35}{\psframebox[linestyle=none]{${\white\ }\ldots{\white\ }$}}
\putnode{a1}{a0}{0}{-10}{\psframebox{${\white A}n{\white B}$}}
\putnode{a2}{a1}{0}{-10}{\psframebox{${\white A}m{\white B}$}}
\putnode{a3}{a2}{0}{-10}{\psframebox[linestyle=none]{${\white\ }\ldots{\white\ }$}}

\putnode{b0}{a0}{-10}{3}{\psframebox[linestyle=none]{$\white\alpha, \beta$}}
\putnode{b1}{a2}{-10}{0}{\psframebox[linestyle=none]{$\white\alpha$}}
\putnode{b2}{a3}{-10}{0}{\psframebox[linestyle=none]{$\white\alpha$}}

\putnode{c0}{a1}{-13}{5}{\psframebox[linestyle=none]{$\overrightarrow{\rho}'$}}
\putnode{c00}{c0}{-10}{-4}{\psframebox[linestyle=none]{\footnotesize{($k+1$}}}
\putnode{c00}{c0}{-6}{-7}{\psframebox[linestyle=none]{\footnotesize{assignment}}}
\putnode{c00}{c0}{-5}{-10}{\psframebox[linestyle=none]{\footnotesize{statements)}}}

\putnode{c1}{a3}{-13}{3}{\psframebox[linestyle=none]{$\overleftarrow{\rho}'$}}

\putnode{d0}{a0}{10}{3}{\psframebox[linestyle=none]{$\white\alpha, \beta$}}
\putnode{d1}{a1}{10}{0}{\psframebox[linestyle=none]{$\white\alpha$}}
\putnode{d2}{a3}{10}{0}{\psframebox[linestyle=none]{$\white\alpha$}}

\putnode{e0}{a0}{22}{-3}{\psframebox[linestyle=none]{$\overrightarrow{\rho} \text{\footnotesize{(k assignment}}$}}
\putnode{e00}{e0}{1}{-3}{\psframebox[linestyle=none]{\footnotesize{statements)}}}
\putnode{e1}{a2}{13}{0}{\psframebox[linestyle=none]{$\overleftarrow{\rho}$}}

\ncline{->}{a0}{a1}
\ncline{->}{a1}{a2}
\ncline[nodesepB=1]{->}{a2}{a3}

\ncline{->|}{b0}{b1}
\ncline[nodesepA=-2]{->|}{b2}{b1}
\ncline{->|}{d0}{d1}
\ncline[nodesepA=-2]{->|}{d2}{d1}

 \end{pspicture}
\end{center}

\caption{Illustrating the qualified control flow paths used for induction in Lemma~\ref{lemma-soundness-flow-function-proof}}
\label{fig:describe:paths:lemma}
\end{figure}

Consider a qualified control flow path
\text{$\rho = (\overrightarrow{\rho}, n, \overleftarrow{\rho})$} such that
$\overrightarrow{\rho}$ contains $k$ assignment statements as illustrated in Figure~\ref{fig:describe:paths:lemma}.
Statement $n$ is the $k+1^{th}$ assignment statement in $\rho$.
Consider $\rho'$ obtained by extending \text{$\overrightarrow{\rho}$} to include $n$
and remove $m$ from \text{$\overleftarrow{\rho}$}. 
Thus both $\rho$ and $\rho'$ have the same sequence of statements except that the
forward and backward components are different.
 Formally,
\text{$\rho' = (\overrightarrow{\rho'}, m, \overleftarrow{\rho'})$} such that
\text{$\Succ(\rho,n)$} is $m$. Thus,
\text{$\overleftarrow{\rho} = (m, \overleftarrow{\rho'})$} and
\text{$\overrightarrow{\rho'} = (\overrightarrow{\rho}, n)$} and
\text{$\overrightarrow{\rho'}$} contains $k+1$ assignment statements.
When we refer to node $n$, we use the qualified control flow path $\rho$; when we refer to node $m$, 
we use the path $\rho'$.
For $\rho'$,
\afunin{m}{\rho'} is same as \afunout{n}{\rho}
where $n$ is a predecessor of $m$ as described in 
Figure~\ref{fig:describe:paths:lemma}.
Computation of \ain along a qualified control flow path
is described below.
\begin{align*}
\forall m\in \nodes, \forall \rho'\in \paths{m} : \;
\afunin{m}{\rho'} & =  
		\begin{cases}
		\boundary & m \text{ is } \Start{p}
		\\
		\afunout{n}{\rho} & n = \Pred(\rho', m)
		\end{cases}
\end{align*}

Since \ain is nothing but the \aout at predecessor along a path,
it thus suffices to prove soundness with respect to only $\aout$ as described below.
\[\forall n \in \nodes, \forall \rho \in \paths{n}:\;\; 
(\alphacm, \betacm) \in \afunout{n}{\cm,\rho} \wedge \balpha \subseteq \dfunout{n}{\rho}
\Rightarrow \balpha \times \bbeta \subseteq \afunout{n}{\rho} \]

Note that \afunout{n}{\rho} by definition is computed 
by restricting to \dfunout{n}{\rho}. Thus such a restriction 
on the right-hand side in the above equation is not necessary and thus not 
mentioned.

% Since $\afunout{n}{\mop} = \bigsqcap_{\rho}$
 We prove the lemma by induction on the number of assignment statements in the forward component $\overrightarrow{\rho}$ of $\rho$.
 The basis of the induction contains 0 assignment statement in the forward component of the qualified control flow path.
 Thus the statement of interest is the first assignment statement and it must be of the form \text{$x = \rhs$} where \rhs{} is either 
 $ \new\ \tau$ or $\&a$ where $a$ is an object of class $\tau$.
For both the cases we have \text{$(x, \rhscm) \in \afunout{n}{\cm,\rho}$}.
For the former case, we have \text{$(x, \&l) \in \afunout{n}{\cm,\rho}$} where $l$ is the location of the allocated object;
for the latter case, we have \text{$(x, \&a) \in \afunout{n}{\cm,\rho}$}.
From Equation~(\ref{eq:agen}),
if $x \in \dfunout{n}{\rho}$
we have \text{$(x, \brhs\,) \in \afunout{n}{\rho}$} where \brhs is the abstract name of \rhs. By definition, \brhs is an
over-approximation of \rhs{} under type-based abstraction (or allocation-site-based abstraction as defined in abstract name) and hence the basis follows.

 For the inductive hypothesis, assume that the lemma holds for a path
$\rho$ reaching node $n$.
Path $\rho'$ is obtained by extending \text{$\overrightarrow{\rho}$} to include $n$.
We prove the inductive step for statement $m$ in the qualified control flow path $\rho'$, which is of the form $\lhs = \rhs$.
Consider \text{$(\alphacm, \betacm) \in \afunout{m}{\cm,\rho'} \wedge \balpha \subseteq \dfunout{m}{\rho'}$}.
Now access expressions $\alpha$ and $\beta$ can take the combinations with respect to \lhs and \rhs{} as described in Figure~\ref{alpha-beta-l-r-soundness}.
We discuss these cases below.

\begin{figure}[t]
 
 \begin{center}
 \begin{pspicture}(0,0)(120,55)
  %\psframe(0,0)(120,55)

\putnode{a0}{origin}{53}{50}{\psframebox{$\alpha, \beta$}}

\putnode{a2}{a0}{-38}{-10}{\psframebox{$\alpha = r$}}
\putnode{g1}{a2}{19}{-10}{\psframebox[linestyle=none,fillstyle=solid, fillcolor=lpink]{\rule{30mm}{0mm}\rule{0mm}{7mm}}}
% \putnode{w}{a2}{20}{-10}{\psframebox*[fillcolor=pink]{\rule{35mm}{0mm}\rule{0mm}{5mm}}}
\putnode{c1}{a2}{9}{-10}{\psframebox{$\beta = l$}}
\putnode{c2}{a2}{-9}{-10}{\psframebox{$\beta \neq l$}}

\putnode{a1}{a0}{0}{-10}{\psframebox{$\alpha = l$}}
\putnode[l]{g2}{a1}{3}{-10}{\psframebox[linestyle=none,fillstyle=solid, fillcolor=lightblue]{\rule{24.5mm}{0mm}\rule{0mm}{7mm}}}
\putnode{b1}{a1}{-9}{-10}{\psframebox{$\beta = r$}}
\putnode{b2}{a1}{9}{-10}{\psframebox{$\beta \neq r$}}

\putnode{a3}{a0}{38}{-10}{\psframebox{$\alpha \neq l, \alpha \neq r$}}
\putnode{d1}{a3}{-14}{-10}{\psframebox{$\beta = l$}}
\putnode{d2}{a3}{0}{-10}{\psframebox{$\beta = r$}}
\putnode{d3}{a3}{18}{-10}{\psframebox{$\beta \neq l, \beta \neq r$}}

% \putnode{g1}{c1}{5}{0}{\psframebox{$\white a\beta\gamma\zeta\varphi\varsigma\chi$}}

\putnode{e1}{g1}{0}{-9}{\psframebox[linestyle=none]{
\begin{tabular}{l}
 Generation of \\ direct alias 
\end{tabular}
}}

\putnode{e2}{g2}{13}{-9}{\psframebox[linestyle=none]{
\begin{tabular}{l}
 Generation of \\derived alias 
\end{tabular}
}}

\putnode{e3}{d1}{-15}{-23}{\psframebox[linestyle=none]{
\begin{tabular}{l}
No generation of alias.  \\ Only  propagation of  alias. 
\end{tabular}
}}

\ncline{->}{a0}{a1}
\ncline{->}{a0}{a2}
\ncline{->}{a0}{a3}

\ncline{->}{a1}{b1}
\ncline{->}{a1}{b2}

\ncline{->}{a2}{c1}
\ncline{->}{a2}{c2}

\ncline{->}{a3}{d1}
\ncline{->}{a3}{d2}
\ncline{->}{a3}{d3}

%\ncline[doubleline=true,nodesepA=-2]{->}{e1}{g1}
%\ncline[doubleline=true,nodesepA=-2]{->}{e2}{g2}

\nccurve[angleA=145,angleB=270,linestyle=dashed,dash=.6 .6,offsetA=10,nodesepA=5]{->}{e3}{c2}
\nccurve[angleA=75,angleB=270,linestyle=dashed,dash=.6 .6,nodesepA=1,offsetA=-8]{->}{e3}{d2}
\nccurve[angleA=45,angleB=270,linestyle=dashed,dash=.6 .6,offsetA=-10,nodesepA=4]{->}{e3}{d3}
 \end{pspicture}
\end{center}

{\small
\begin{itemize}
\item Cases $(\alpha = \rhs, \beta = \lhs)$ and $(\alpha = \lhs, \beta = \rhs)$
      correspond to aliases $(\blhs\times\brhs)$ reaching the exit of node $m$. These
     aliases are generated by $m$  directly without using the aliases
      reaching the entry of $m$.
\item Cases $(\alpha = \lhs, \beta \neq \rhs)$ and $(\alpha \neq \lhs, \alpha \neq \rhs, \beta = \lhs$) 
      correspond to aliases $(\blhs\times\bbeta)$ or $(\blhs\times \balpha)$ reaching the
      exit of node $m$. They are derived using the aliases reaching the entry of $m$.
\item The remaining cases correspond to the aliases that are propagated from the entry of $m$ to  the exit of $m$ 
      without any change. They are 
 \text{$(\brhs\times\bbeta)$}, \text{$(\brhs\times \balpha)$} or \text{$(\balpha\times\bbeta)$}.

\end{itemize}
}

\caption{Given the aliases 
\text{$(\alphacm, \betacm)$} reaching the exit of node $m$ in the concrete memory, the picture shows various
cases of the aliases reaching the exit of node $m$ in the abstract memory in terms of 
$\alpha$ and $\beta$ (Lemma~\ref{lemma-soundness-flow-function-proof}).}
\label{alpha-beta-l-r-soundness}
\end{figure}

\begin{itemize}
 
 \item Generation of direct alias :
 \begin{itemize}
 \item $\alpha = \lhs, \beta = \rhs$
 
 Since
$\balpha \subseteq \dfunout{m}{\rho'}$ demand for \rhs{} will be raised appropriately. 
 From  Equations~(\ref{eq:dgen}) and~(\ref{eq:ldgen}), 
 if $\rhs \in \{*x, x\rightarrow f\}$ then $\{x, \&x\} \subseteq \dfunout{n}{\rho}$.
 If $\rhs \not\in \{*x, x\rightarrow f\}$, \brhs can be computed directly. Otherwise,
 the inductive hypothesis ensures that we have the alias information of $x$ and hence we can compute \brhs.
% }
% {\green
%  Based on the inductive hypothesis, using the alias information of $x$ we can compute \brhs.
%  If $\rhs \not\in \{*x, x\rightarrow f\}$, \brhs can be computed directly.
% }
 From Equation~(\ref{eq:agen}), $\balpha \subseteq \dfunout{m}{\rho'} \Rightarrow 
\balpha \times \bbeta \subseteq \afunout{m}{\rho'}$.
%%%Thus, $(\alphacm, \betacm) \in \afunout{m}{\cm,\rho'} \wedge \balpha \subseteq \dfunout{m}{\rho'}
%%%\Rightarrow 
%%%\balpha \times \bbeta \subseteq \afunout{m}{\rho'}$.

  \item $\alpha = \rhs, \beta = \lhs$
 
%%%%\UKchange{
This is a dual of the previous case with \rhs{} and \lhs interchanged. It holds because aliasing is symmetric.
%%%%}{
%%%% Since
%%%%$\balpha \subseteq \dfunout{m}{\rho'}$ demand for \lhs will be raised appropriately. 
%%%% From Equations~(\ref{eq:dgen}) and~(\ref{eq:ldgen}), 
%%%% if $\lhs \in \{*x, x\rightarrow f\}$ then $\{x, \&x\} \subseteq \dfunout{n}{\rho}$.
%%%%% {\blue
%%%% If $\lhs \not\in \{*x, x\rightarrow f\}$, \blhs can be computed directly. Otherwise,
%%%% the inductive hypothesis ensures that we have the alias information of $x$ and hence we can compute \blhs.
%%%%% % }
%%%%% % {\green
%%%%% %  Based on the inductive hypothesis, using the alias information of $x$ we can compute \blhs.
%%%%% %  If $\lhs \not\in \{*x, x\rightarrow f\}$, \blhs can be computed directly.
%%%%% % }
%%%% From Equation~(\ref{eq:agen}), 
%%%% $\balpha \subseteq \dfunout{m}{\rho'} \Rightarrow \balpha \times \bbeta \subseteq \afunout{m}{\rho'}$.
%%%%}
%%%%
 \end{itemize}

 \item Generation of derived alias :
 \begin{itemize}
\item $\alpha = \lhs, \beta \neq \rhs$ 

 Since
\text{$\balpha \subseteq \dfunout{m}{\rho'}$} appropriate demand for \rhs{} will be raised. 
Also, since \text{$\alpha = \lhs, \beta \neq \rhs$} and
\text{$(\alphacm, \betacm) \in \afunout{m}{\cm,\rho'}$},
such an alias can be derived from the alias
\text{$(\rhscm, \betacm) \in \afunout{n}{\cm,\rho}$}.
% % %  From the data flow equations~\ref{eq:dgen} and~\ref{eq:ldgen}, 
% % %  if $\rhs \in \{*x, x\rightarrow f\}$ then $\{x, \&x\} \subseteq \dfunout{n}{\rho}$.
% % %  If $\rhs \not\in \{*x, x\rightarrow f\}$, \brhs can be computed directly. Otherwise,
% % %  the inductive hypothesis ensures that we have the alias information of $x$ and hence we can compute \brhs.
 Based on the inductive hypothesis, we have the alias information \text{$\brhs \times \bbeta \subseteq \afunout{n}{\rho}$} and
 we compute derived alias between \balpha (which is \blhs, by raising appropriate demand for \lhs) and \bbeta at node $m$.
 Thus, \text{$\balpha \subseteq \dfunout{m}{\rho'} \Rightarrow \balpha \times \bbeta \subseteq \afunout{m}{\rho'}$}.

\item $\alpha \neq \lhs, \alpha \neq \rhs, \beta = \lhs$.  

 Since
\text{$\balpha \subseteq \dfunout{m}{\rho'}$} and 
\text{$\alpha \neq \lhs, \alpha \neq \rhs
\Rightarrow
\balpha \subseteq \dfunout{n}{\rho}$}.
Also, since 
\text{$(\alphacm, \betacm) \in \afunout{m}{\cm,\rho'}$} and
\text{$\alpha \neq \lhs, \alpha \neq \rhs, \beta = \lhs$},
such an alias can be derived from the alias
\text{$(\alphacm, \rhscm) \in \afunout{n}{\cm,\rho}$}.
This is a dual of the previous case with $\alpha$ and $\beta$ interchanged. 
%%%%Also, since $(\alphacm, \betacm) \in \afunout{m}{\cm,\rho'}$ and $\alpha \neq \lhs, \alpha \neq \rhs, \beta = \lhs$,
%%%%such an alias can be derived from the alias
%%%%$(\alphacm, \rhscm) \in \afunout{n}{\cm,\rho}$.
%%%% Based on the inductive hypothesis, we have the alias information $\balpha \times \brhs \subseteq \afunout{n}{\rho}$ and
%%%% we compute derived alias between \balpha and \bbeta   (which is \blhs, by raising appropriate demand for \lhs) at node $m$.
%%%% Thus, $\balpha \subseteq \dfunout{m}{\rho'} \Rightarrow \balpha \times \bbeta \subseteq \afunout{m}{\rho'}$.
%%%%
\end{itemize}

 \item Propagation of alias from predecessor : 
 \begin{inparaenum}[\em (a)]
 \item $\alpha = \rhs, \beta \neq \lhs$, 
 \item $\alpha \neq \lhs, \alpha \neq \rhs, \beta = \rhs$, and 
 \item $\alpha \neq \lhs, \alpha \neq \rhs, \beta \neq \lhs, \beta \neq \rhs$ 
 \end{inparaenum}

 From  Equations~(\ref{eq:din}) and~(\ref{eq:dout}), 
 \text{$\balpha \subseteq \dfunout{m}{\rho'} \Rightarrow \balpha \subseteq \dfunout{n}{\rho}$}.
 Since we have \text{$(\alphacm, \betacm) \in \afunout{m}{\cm,\rho'}$}, 
such an alias
 relationship is not computed by statement $m$, instead it is propagated from the predecessors of $m$ unmodified.
 Thus \text{$(\alphacm, \betacm) \in \afunout{n}{\cm,\rho}$}.
 From the inductive hypothesis, 
\text{$(\alphacm, \betacm) \in \afunout{n}{\cm,\rho} \wedge \balpha \subseteq \dfunout{n}{\rho}
 \Rightarrow 
% {\green (\balpha, \bbeta) \in \afunout{n}{\rho}}
\balpha \times \bbeta \subseteq \afunout{n}{\rho}$}.
 This alias relationship will be propagated to statement $m$.
 From  Equations~(\ref{eq:ain}) and~(\ref{eq:aout}), 
 \text{$\balpha \subseteq \dfunout{m}{\rho'} \Rightarrow 
% {\green (\balpha, \bbeta) \in \afunout{m}{\rho}}
\balpha \times \bbeta \subseteq \afunout{m}{\rho'}$}.

\end{itemize}

    Thus the lemma holds.
 \end{proof}

\begin{lemma}
  Let a program \prog contain the virtual function call $x \rightarrow {\text{\em vfun}}()$ at
  node $\vcnode \in \origin$.  Then $x$ is in $\din_{\vcnode}^{\rho}$ for all qualified control flow paths
  $\rho \in \paths{\vcnode}$.
 \label{lemma-initial-demand-sound}
\end{lemma}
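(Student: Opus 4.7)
The plan is to argue that the demand $x$ is generated at node $\vcnode$ unconditionally, solely because $\vcnode \in \origin$, and therefore appears in $\din_\vcnode^\rho$ regardless of which qualified control flow path $\rho$ is considered. The proof will rely directly on the defining equations~(\ref{eq:din}),~(\ref{eq:dgen}),~(\ref{eq:ldgen}) instantiated through the path-level recurrence~(\ref{eqmoptdval}), and will be essentially a brief inspection argument rather than an induction.

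More concretely, I would proceed in three steps. First, I would observe that by Equation~(\ref{eq:dgen}), the second clause of the case analysis is triggered whenever $n \in \origin$, irrespective of the current values of $\dout_n'$ or $\ain_n$; thus at node $\vcnode$, for any path $\rho \in \paths{\vcnode}$, the term $\ldgen(\rhs_\vcnode, \ain_\vcnode^\rho)$ is unconditionally included in $\dgen_\vcnode^\rho$. Second, I would instantiate the virtual call $x \rightarrow \text{\em vfun}()$ by treating its right-hand side as an access expression $\rhs_\vcnode$ with $\base(\rhs_\vcnode) = \{x\}$; this puts us in the first clause of Equation~(\ref{eq:ldgen}), which explicitly contributes $\vname(\rhs_\vcnode) = \{x\}$ to $\ldgen$. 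Consequently, $x \in \dgen_\vcnode^\rho$.

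Third, I would invoke Equation~(\ref{eq:din}) in its path-level form (obtained from Equation~(\ref{eqmoptdval}) by unfolding $\ndd_\vcnode$), which reads $\din_\vcnode^\rho = (\dout_\vcnode^\rho - \dkill_\vcnode) \cup \dgen_\vcnode^\rho$. Since $\dgen_\vcnode^\rho$ is unioned in after the kill, we obtain $x \in \din_\vcnode^\rho$ without needing to reason about what $\dkill_\vcnode$ removes or about the specific value of $\dout_\vcnode^\rho$ along $\rho$. Because the argument fixed an arbitrary $\rho \in \paths{\vcnode}$, the conclusion holds for every qualified control flow path through $\vcnode$.

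I do not anticipate a real obstacle: the lemma is essentially a syntactic consequence of the way $\dgen$ is seeded from $\origin$ in Equation~(\ref{eq:dgen}). The one subtlety worth stating explicitly is that the clause $n \in \origin$ is pure, i.e.\ it depends neither on the alias component $\ain_\vcnode^\rho$ nor on the backward-propagated demand $\dout_\vcnode^\rho$; this purity is what ensures uniformity across all $\rho$, and the proof should make this observation visible before concluding.
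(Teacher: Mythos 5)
Your proof is correct and takes essentially the same route as the paper's: both rest on the observation that the $n \in \origin$ disjunct in Equation~(\ref{eq:dgen}) forces the $\ldgen$ term to be included unconditionally at $\vcnode$, and that Equation~(\ref{eq:din}) unions $\dgen$ in after the kill, so $x$ lands in $\din_{\vcnode}^{\rho}$ for every qualified control flow path $\rho$. The only (harmless) discrepancy is that you route the receiver through the first clause of Equation~(\ref{eq:ldgen}) by taking $\base(\rhs_{\vcnode}) = \{x\}$, whereas the paper treats the relevant access expression as $x$ itself and invokes the third clause ($\vname(x) \neq \emptyset$, with $\dn(x,\cdot) = \{x\}$ by Table~\ref{table-dn}); either clause contributes $x$, so the conclusion is unaffected.
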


\begin{proof}
  Let $\rho$ be an arbitrary qualified control flow path in
  $\paths{\vcnode}$.  The demand corresponding to the virtual call
  statement in node $\vcnode$ is computed using \ldgen
  (Equation~\ref{eq:dgen}).  Specifically, the third case
  ($\vname(x)\neq \emptyset$) in Equation~\ref{eq:dgen} applies and a
  demand for the abstract name of $x$, i.e. $x$ itself
  (Table~\ref{table-dn}), is generated.  Thus a demand for $x$ is
  generated by Equations~\ref{eq:din},~\ref{eq:dgen}
  and~\ref{eq:ldgen}.  Hence $x \in \din_{\vcnode}^{\rho}$.
\end{proof}

\section{Precision Proof}\label{sec:precision-proof}
In this section, we compare the precision of alias information
computed by algorithm $\id$ with that of two other algorithms.
These algorithms are  
\begin{enumerate}[(i)]
\item a demand-driven alias analysis using
conventional speculation, referred to earlier in
Section~\ref{sec:speculation} and henceforth denoted as $\cd$, and
\item an exhaustive algorithm, referred to in Section~\ref{sec:intro}
and henceforth denoted as $\ex$.  
\end{enumerate}
Specifically, we show that $\id$ is
at least as precise as, and sometimes more precise than
$\ex$, while $\cd$ and $\ex$ enjoy exactly the same degree of
precision.
To formalize this notion, we introduce some terminologies.
%% {\blue We use
%%   the notation $x \sqsupsetneqq y$ to define that $x$ is atleast as
%%   precise as $y$.  This implies that there are cases where $x$ could
%%   be more precise than $y$ but the reverse does not hold.  This can be
%%   formally represented as $((x \sqsupset y \vee x = y) \wedge y
%%   \nsqsupset x)$.
\newcommand{\alga}{\text{\sc aa1}\xspace}
\newcommand{\algb}{\text{\sc aa2}\xspace}

\begin{definition}
An alias analysis algorithm \alga is \emph{at least as precise} as
another alias analysis algorithm \algb for a program $\prog$ with
respect to a set of demands $\dfun{n}{}$ at node $n$ iff
\[\restrict {\afun{n}{\algb}}{\dfun{n}{}} \sqsubseteq \restrict {\afun{n}{\alga}}{\dfun{n}{}} 
.\]
\end{definition}

As explained in the soundness proofs,
since the end goal of our analysis is to compute
the set of all incoming aliases of $x$ at node $\vcnode \in \origin$,
where 
the virtual function call at
\vcnode is of the form
$x\rightarrow \text{\em vfun}()$,
we compare the overall precision of two alias analysis
algorithms as follows.

\begin{definition}
(Precision comparison of alias analysis algorithms \alga and \algb across all programs for virtual call resolution.)
  \begin{enumerate}
  \item Algorithm \alga is \emph{at least as precise} as algorithm \algb
    iff for every program $\prog$ and every node $\vcnode \in \origin$
    with a virtual function call of the form 
\text{$x \rightarrow {\text{\em vfun}}()$}
 in $\prog$, $\alga$ is at least as precise as $\algb$
    for $\prog$ with respect to 
\text{$\dfun{\vcnode}{} = (\{x\},\emptyset)$}.
  \item Algorithms $\alga$ and $\algb$ are \emph{equiprecise} iff
    $\alga$ is at least as precise as $\algb$ and vice versa.
  \item Algorithm $\alga$ is \emph{more precise} than algorithm
    $\algb$ iff $\alga$ is at least as precise as $\algb$ but 
$\alga$ and $\algb$ are not equiprecise.
  \end{enumerate}
\end{definition}

Note that if $\alga$ is more precise than $\algb$, then it
does not imply that the alias information computed by $\algb$ is a
strict over-approximation of that computed by $\algb$ for all programs.
However, there is at least one program $\prog$ and one 
\text{$\vcnode \in \origin$} where this holds.

 \begin{thm}
   Algorithm \id is more precise than algorithm \ex, while algorithms \cd and \ex are equiprecise.
  \label{thm:precision}
  \end{thm}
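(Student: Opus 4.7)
The proof decomposes into three claims: (a) \id is at least as precise as \ex; (b) there exists a program on which \id is strictly more precise than \ex; and (c) $\restrict{\afun{\vcnode}{\cd}}{\{x\}}$ and $\restrict{\afun{\vcnode}{\ex}}{\{x\}}$ coincide for every program. We model both \cd and \ex inside the bidirectional framework of Section~\ref{sec:mop:demand}: \ex uses the universal demand vector (\aeset at every node), while \cd differs from \id only in Equation~(\ref{eq:ldgen}), which in its conventional form replaces the \addrexpr-clause by a fresh demand for the base pointer of each indirect assignment whose LHS abstract name is already demanded.

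For claim~(a), a direct inspection of Equation~(\ref{eq:agen}) shows that the set of alias pairs generated at a node is monotone in the demand set, so a pointwise larger demand vector yields a pointwise larger alias vector. Since the demand vector propagated by \id is contained in the universal vector used by \ex, Lemma~\ref{lemma:monotonicff} together with Lemma~\ref{lemma:mfp} give $\afun{n}{\id} \subseteq \afun{n}{\ex}$ at every node. Lemma~\ref{lemma-initial-demand-sound} guarantees $x \in \din_{\vcnode}^{\id}$, and Lemma~\ref{lemma:restrict:distributive} states that \Restrict distributes over meet in its first argument. Restricting both sides to $\{x\}$ therefore yields $\restrict{\afun{\vcnode}{\ex}}{\{x\}} \sqsubseteq \restrict{\afun{\vcnode}{\id}}{\{x\}}$.

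For claim~(b), the motivating example of Figure~\ref{m-eg-type} is an explicit witness under type-based abstraction. Unwinding Equations~(\ref{eq:ldgen}) and~(\ref{eq:agen}) shows that \id, after raising the demand $\&z$ at line~04, identifies $(p,\&z)$ without ever raising a demand for $y$; consequently the spurious node alias $(x,y)$ that type-based abstraction would introduce at line~14 never enters the abstract state, and the spurious link aliases $(x\rightarrow f,Z)$ and $(y\rightarrow f,Y)$ of Example~\ref{exmp:type.imprecision.1} are avoided, so \id reports $t$ pointing only to $Y$. \ex, in contrast, maintains aliases for every pointer, produces $(x,y)$, and thereby reports $t$ pointing to both $Y$ and $Z$. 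Hence the inequality established in~(a) is strict on this program, and a symmetric construction provides a witness under allocation-site-based abstraction.

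For claim~(c), the direction $\restrict{\afun{\vcnode}{\ex}}{\{x\}} \sqsubseteq \restrict{\afun{\vcnode}{\cd}}{\{x\}}$ follows from the same monotonicity argument used in~(a). The converse direction is the main technical obstacle and requires showing that the conventional speculation cascade in \cd raises a demand set broad enough to replicate every alias of $x$ generated by \ex under the data abstraction. The plan is to define by least-fixed-point a relevance set $R$: $x\in R$, and whenever an indirect write $*p=\cdots$ or $p\rightarrow f=\cdots$ has an LHS whose abstract name can coincide (under the already-computed aliasing) with a demand already in $R$, then $p$ is added to $R$. One then proves by induction on this closure, using Equations~(\ref{eq:dgen})--(\ref{eq:rdgen}) adapted to the conventional speculation, that \cd raises a demand for every $v\in R$ at each node where $v$ can influence an alias of $x$; this in turn implies that \cd generates exactly the same alias pairs involving $x$ that \ex does. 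Combining this closure property with the forward direction and the distributivity lemma yields the desired equality and completes the proof.
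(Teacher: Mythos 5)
Your claims (a) and (b) are essentially the paper's own route: the paper proves (a) in Lemma~\ref{lemma-id-precise} by viewing \ex as \vecfid run with the demand vector pinned to $\bot$ and inducting on fixed-point iterations (your monotonicity observation is the same content), and (b) in Lemma~\ref{lemma-id-precise-eg} using exactly the Figure~\ref{m-eg-type} witness. The problem is claim (c), whose hard direction (\ex at least as precise as \cd) is the technical heart of the theorem, and there your proposal has two genuine gaps. First, you mischaracterize \cd. The paper's conventional speculation raises a demand for $\base(\lhs)$ at \emph{every} indirect assignment ``irrespective of the demand being raised''; your version conditions the speculative demand on the LHS abstract name already coinciding with a demanded name under the already-computed aliasing. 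That condition reintroduces precisely the chicken-and-egg problem that forces unconditional speculation in Example~\ref{exmp:demand.imprecision.1}: deciding whether the abstract name of $*p$ coincides with a demanded name requires the pointees of $p$, which are only available after a demand for $p$ has been raised. Under a least-fixed-point reading your relevance set $R$ need not grow beyond its seed, so the \cd you model would compute strictly fewer aliases than \ex and the equiprecision claim would fail for your own construction.

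Second, even with \cd modelled correctly, inducting ``on the closure of $R$'' is not a well-founded argument for the alias-containment obligation. An alias $\balpha\times\bbeta$ at node $n$ may be a \emph{derived} alias whose justification is another alias at a predecessor, which may itself be derived, possibly around loops; the demand-closure structure says nothing about how to ground this chain. The paper's Lemma~\ref{lemma-cd-ex-equiv} instead inducts on the number $m$ of fixed-point iterations after which the pair first appears in $\aout^{\ex,m}_n$, and then performs a case analysis on the positions of $\alpha$ and $\beta$ relative to $\lhs_n$ and $\rhs_n$ (Figure~\ref{alpha-beta-l-r-precision}); the inductive hypothesis at step $k$ is exactly what supplies the supporting aliases at predecessors in the derived-alias and propagation cases. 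Your sketch states the target (``this in turn implies that \cd generates exactly the same alias pairs involving $x$ that \ex does'') but supplies no induction measure under which the derived-alias case closes, so the argument as written does not go through.
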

\begin{proof}
To show that \id is more precise than \ex, we first show in
Lemma~\ref{lemma-id-precise} that \id is at least as precise as \ex.
Subsequently, we show by means of an example in
Lemma~\ref{lemma-id-precise-eg} that 
there exists a case when \id is more precise than \ex.

The proof of equiprecision of \cd and \ex follows from Lemmas~\ref{lemma-cd-precise} and ~\ref{lemma-cd-ex-equiv}, which show that \cd is at least as precise
as \ex and vice versa.
\end{proof}

%%%{\green 
%%%Note that algorithms \id and \cd perform demand-driven alias analysis for C/C++ programs whereas algorithm \ex performs
%%%       exhaustive alias analysis for both C/C++ and Java (for Java, it ignores the dereferencing and address-of operators of C/C++). 
%%%}
        Section~\ref{sec:java.precision}
       modifies algorithm \id to define algorithm \jd for demand-driven alias analysis of Java programs
       by eliminating the speculation of demands required for C/C++. 
      Then the section shows that unlike \cd and \ex which are equiprecise, \jd is more precise than \ex.

\subsection{Algorithm \id is More Precise than Algorithm \ex}
The following two lemmas establish the desired result.
\begin{lemma}
%  Consider $A^{id}$ and $A^{ex}$ represent the alias information computed using our proposed demand driven method and an exhaustive method respectively.
%  Consider demand $D^{id}$ represent the demand computed using our proposed demand driven method.
Algorithm \id is at least as precise as algorithm \ex.
%   For a demand $d$, 
%   $\mathcal{P} (A^{id}, d) \subseteq \mathcal{P} (A^{ex}, d)$
  \label{lemma-id-precise}
 \end{lemma}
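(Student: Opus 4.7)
The plan is to establish the stronger statement that at every node $n$ in every program $\prog$, the alias set computed by $\id$ is set-theoretically contained in that computed by $\ex$: $\afun{n}{\id} \subseteq \afun{n}{\ex}$. Since the restriction operator $\Restrict$ is monotone in its first argument under set inclusion (immediate from Equation~(\ref{eq:restrict})), this yields $\restrict{\afun{\vcnode}{\id}}{\dfun{\vcnode}{}} \subseteq \restrict{\afun{\vcnode}{\ex}}{\dfun{\vcnode}{}}$, which in the alias lattice (where meet is set union and $\sqsubseteq$ is reverse inclusion) is exactly the defining condition $\restrict{\afun{\vcnode}{\ex}}{\dfun{\vcnode}{}} \sqsubseteq \restrict{\afun{\vcnode}{\id}}{\dfun{\vcnode}{}}$ for ``$\id$ at least as precise as $\ex$''. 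The key observation is that $\ex$ can be viewed as an instance of our demand-driven formulation in which every access expression in $\aeset$ is demanded at every node, so that the guard $\blhs_n \subseteq \dout_n$ in Equation~(\ref{eq:agen}) is trivially satisfied and $\AGen_n^{\ex}$ reduces to the unconditional $\{\blhs_n \times \brhs_n\}$; $\AKill_n$ (Equation~(\ref{eq:akill})) is purely syntactic and hence identical in both algorithms.

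I would prove the per-node containment by induction on the number of assignment statements in the forward component of a qualified control flow path $\rho$, mirroring the structure used in Lemma~\ref{lemma-soundness-flow-function-proof}. The base case is immediate since both analyses start from $\boundary = \emptyset$ and compute identical initial values. For the inductive step at a statement $n : \lhs_n = \rhs_n$, fix an arbitrary $(\balpha, \bbeta) \in \afunout{n}{\id,\rho}$. Either the pair was freshly generated at $n$ (so $\id$'s disjunctive guard $\blhs_n \subseteq \dout_n \vee \ain_n(\brhs_n) \neq \{\brhs_n\}$ fired), in which case $\ex$ generates $\blhs_n \times \brhs_n$ unconditionally and the same pair lies in $\afunout{n}{\ex,\rho}$; or it was propagated from a predecessor $p$ through $\ain_n \setminus \AKill_n$, in which case the inductive hypothesis supplies $(\balpha,\bbeta) \in \afunout{p}{\id,\rho} \subseteq \afunout{p}{\ex,\rho}$, and since $\AKill_n$ is syntactic, $\ex$ propagates the pair identically. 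Lifting the per-path inclusion to the overall fixed point uses Lemma~\ref{lemma:mfp} (our algorithm computes $\mfp$), the monotonicity of the flow functions from Lemma~\ref{lemma:monotonicff}, and Lemma~\ref{lemma:mfp-mop} to bridge $\mfp$ and $\mop$.

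The main obstacle will be verifying that $\id$'s address-of speculation (the $\addrexpr$ term in Equation~(\ref{eq:ldgen})) does not introduce aliases absent from $\ex$. Concretely, when $\id$ raises a demand for $\&x$ and thereby participates in discovering an alias such as $(p, \&x)$ at an indirect-assignment node, we must check that $\ex$ also discovers it; this is immediate because $\ex$ unconditionally generates $\blhs_n \times \brhs_n$ at every statement along $\rho$, so the alias arises directly rather than via speculation. A subtler concern is that the guard $\ain_n(\brhs_n) \neq \{\brhs_n\}$ in $\id$'s $\AGen_n$ references $\id$'s own $\ain_n$, so one must ensure $\ain_n^{\id} \subseteq \ain_n^{\ex}$ before concluding containment of generated pairs at $n$. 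This is handled automatically by the induction, since $\ain_n$ is a union of predecessor $\aout$ values (Equation~(\ref{eq:ain})) and the hypothesis supplies inclusion at every predecessor, reducing the argument to a straightforward case analysis on Equation~(\ref{eq:agen}).
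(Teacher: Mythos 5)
Your key structural insight is the same one the paper uses: $\ex$ is just the demand-driven transfer function evaluated with the demand vector pinned at $\bot$ (every access expression demanded everywhere), so the guard in $\AGen_n$ fires unconditionally and the comparison reduces to monotonicity of $\veca^{\id}$ in its demand argument. However, the proof mechanism you choose to cash this in does not close. You propose a per-path induction on the forward component of a qualified control flow path and then ``lift'' the per-path inclusion to the computed solutions via Lemma~\ref{lemma:mfp} and Lemma~\ref{lemma:mfp-mop}. That lifting step fails: Lemma~\ref{lemma:mfp-mop} only gives $\afun{n}{\fp} \sqsubseteq \afun{n}{\op}$ (the \mfp solution over-approximates the \mop solution) for each analysis separately. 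From $\afun{n}{\op,\ex} \sqsubseteq \afun{n}{\op,\id}$ together with $\afun{n}{\fp,\ex} \sqsubseteq \afun{n}{\op,\ex}$ and $\afun{n}{\fp,\id} \sqsubseteq \afun{n}{\op,\id}$ you cannot derive the statement you actually need, namely $\afun{n}{\fp,\ex} \sqsubseteq \afun{n}{\fp,\id}$ --- all three inclusions point the wrong way for the final composition. A \mop-level comparison of two analyses says nothing, by itself, about the relationship between their fixed points.

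The paper avoids the detour through paths entirely: it inducts directly on the fixed-point iterations of the \mfp computation. The base case is $\aval^{\ex,0} = \aval^{\top} = \aval^{\id,0}$, and the inductive step is a single application of the monotonic function $\veca^{\id}$ to the pair of facts $\aval^{\ex,k} \sqsubseteq \aval^{\id,k}$ and $\dval^{\bot} \sqsubseteq \dval^{\id,k}$, yielding $\aval^{\ex,k+1} = \veca^{\id}(\aval^{\ex,k},\dval^{\bot}) \sqsubseteq \veca^{\id}(\aval^{\id,k},\dval^{\id,k}) = \aval^{\id,k+1}$. This works at the level at which the algorithms actually operate and sidesteps both the \mfp/\mop direction problem and the awkwardness of defining a single-sweep path induction for a bidirectional analysis (where, per Section~\ref{sec:mfp.mop.bidirectional}, even the per-path value $\gfun{n}{\rho}$ is itself computed by iteration rather than by one forward traversal). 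To repair your proof, replace the path induction and the \mop bridge with this iteration-indexed induction; your case analysis of $\AGen_n$ and your observation about $\AKill_n$ being syntactic then become unnecessary, since monotonicity of $\veca^{\id}$ (Lemma~\ref{lemma:monotonicff}) already packages them.
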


\begin{proof}
  Consider an arbitrary program $\prog$ and %% an arbitrary node
  %% $\vcnode$ in the set $\origin$, such that the instruction at
  %% $\vcnode$ is $x_{\vcnode} \rightarrow {\text{\em vfun}}()$.  Let
  let $\aid$ and $\aex$ represent the vectors of alias information
  at all nodes in $\prog$, as computed by algorithms \id and \ex
  respectively.  %% Similarly, let $\did$ represent the vector of demands
  %% computed by algorithm \id in the fixed point solution. 
  We prove the
  lemma by establishing a significantly stronger result: $\forall n \in \nodes$,
  $\afun{n}{\id} \sqsupseteq \afun{n}{\ex}$  %% Since
  %% $\vcnode \in \origin$, it follows from Equation~\ref{eq:dgen} that
  %% $x_{\vcnode} \in \din_{\vcnode}^{\id}$.  Putting the above results
  %% together, we see that $\id$ is at least as precise as $\ex$.
  instead of showing the minimal requirement:
  $\forall \vcnode \in \origin$, $\restrict{\afun{\vcnode}{\id}}{\dfun{\vcnode}{\id}} \sqsupseteq \restrict{\afun{\vcnode}{\ex}}{\dfun{\vcnode}{\id}}$.

  Let Function $\vecfid = (\veca^{\id}, \vecd^{\id})$ denote the vector of
  data flow functions ($\veca^{\id}$ for aliases and $\vecd^{\id}$ for
  demands) corresponding to algorithm \id.  %% computes aliases
  %% and demand. Let us denote \vecf as \vecfid in this lemma.\footnote{
  %%   This is done to distinguish \vecfid from the function \vecfcd used
  %% to perform demand-driven method using conventional speculation.}
  The initial vector of data flow values used in algorithm \id is
  $\val{}^{\id} = (\aval^{\top}, \dval^{\top})$, where $\aval^{\top}$
  (resp. $\dval^{\top}$) denotes the vector of aliases (resp. demands)
  at all nodes set to $\top$. The alias and demand information
  eventually computed by \id, denoted $(\aval^{\id}, \dval^{\id})$, is
  the fixed-point of $\vecfid$ starting from $\val{}^{\id}$.  We
  denote the alias and demand information computed by $\id$ in the
  $i^{th}$ fixed-point iteration by $\aval^{\id,i}$ and
  $\dval^{\id,i}$ respectively.
  
  Algorithm \ex, on the other hand, computes all possible aliases at
  each node, as if the demand raised at each node was the set of all
  possible access expressions.  Therefore, an alternate view of
  algorithm \ex is that it is the same as \vecfid but with demands at
  all nodes set to $\bot$ at each step.  Formally, let
  $\vecgex(\cdot)$ denote the data flow function corresponding to
  algorithm \ex.  Then, for every alias vector $\aval$,
  $\vecgex(\aval) = \veca^{\id}(\aval, \dval^{\bot})$, where
  $\dval^{\bot} = \langle \bot, \ldots \bot\rangle$, i.e. demand at
  every node is the set of all possible access expressions.  The alias
  information eventually computed by \ex, denoted $\aval^{\ex}$, is
  the fixed-point of $\vecgex$, starting from the initial vector of
  aliases $\val{}^{\ex} = \aval^{\top}$.  Let $\aval^{\ex,i}$ denote
  the alias information computed by $\ex$ in the $i^{th}$ fixed point
  iteration.  
We use induction on the fixed point iterations \text{$i\geq 0$} to show that
\text{$\aval^{\ex,i} \sqsubseteq \aval^{\id,i}$}.

Consider \text{$i=0$} to be the base case.
Then,
\text{$\aval^{\ex,0} = \val{}^{\ex} = \aval^{\top} = \aval^{\id,0}$}.
Thus the base case trivially holds.
For the inductive hypothesis, assume that the lemma holds for some 
\text{$k>0$}. Then,
\begin{align}
\aval^{\ex,k} \sqsubseteq \aval^{\id,k}
\label{proof.step.a}
\end{align}
For the inductive step, we show that the lemma holds for $k+1$.
%%%%\begin{align}
%%%%\aval^{\ex,k+1} &= \vecgex(\aval^{\ex,k}) = \veca^{\id} (\aval^{\ex,k}, \dval^{\bot})
%%%%\label{eq:id:precise:ex1}
%%%%\\
%%%%\aval^{\id,k+1} &= \veca^{\id} (\aval^{\id,k}, \dval^{\id,k})
%%%%\label{eq:id:precise:ex2}
%%%%\\
%%%%\aval^{\ex,k+1} &\sqsubseteq \aval^{\id,k+1} & \text{(Follows from Equations~\ref{eq:id:precise:ex1} and~\ref{eq:id:precise:ex2}
%%%%and inductive hypothesis)}
%%%%\end{align}
%%%%
%%%%%%%\begin{align*}
%%%%%%%\veca^{\id} (\aval^{\ex,k}, \dval^{\bot}) &\sqsubseteq \veca^{\id} (\aval^{\id,k}, \dval^{\id,k}) &
%%%%%%% \text{(applying $\veca^{\id}$ to (\ref{proof.step.a}), }
%%%%%%%\\ & & \text{$\veca^{\id}$ is monotonic, $\dval^{\bot} \sqsubseteq \dval^{\id,k}$)}
%%%%%%%%\text{(Follows from the inductive hypothesis)}
%%%%%%%\\
%%%%%%%\Rightarrow \vecgex (\aval^{\ex,k}) = \veca^{\id}(\aval^{\ex,k}, \dval^{\bot}) &\sqsubseteq \veca^{\id} (\aval^{\id,k}, \dval^{\id,k}) &
%%%%%%%%(\because \vecgex(\aval) = \veca^{\id}(\aval, \dval^{\bot}))
%%%%%%%\\
%%%%%%%\Rightarrow \aval^{\ex,k+1} &\sqsubseteq \aval^{\id,k+1} 
%%%%%%%\end{align*}

%%Let $m$ be the fixed point computation of \ex such that
%%\text{$\aval^{\ex,m} = \vecgex (\aval^{\ex, m})$}.
%%Similarly, let $n$ be the fixed point computation of \id such that
%%\text{$\aval^{\id,n} = \veca^{\id} (\aval^{\id, n}, \dval^{\id, n})$}.
%%Since the result holds for any $i\geq0$, it also holds for some $j$ such that $j\geq m$ and $j\geq n$.
%%Thus, 
%%\text{$\aval^{\ex,j} \sqsubseteq \aval^{\id,j} \Rightarrow  \aval^{\ex} \sqsubseteq \aval^{\id}$}
\[
\setlength{\arraycolsep}{2pt}
  \begin{array}{lcrll}
& & \veca^{\id} (\aval^{\ex,k}, \dval^{\bot}) &\sqsubseteq \veca^{\id} (\aval^{\id,k}, \dval^{\id,k}) &
 \text{(applying $\veca^{\id}$ to (\ref{proof.step.a}), $\dval^{\bot} \sqsubseteq \dval^{\id,k},$}
\\ & & & & \text{$\;\;\veca^{\id}$ is monotonic)}
\\
& \Rightarrow 
& \vecgex (\aval^{\ex,k}) = \veca^{\id}(\aval^{\ex,k}, \dval^{\bot}) 
&\sqsubseteq \veca^{\id} (\aval^{\id,k}, \dval^{\id,k}) 
&
   \\
& \Rightarrow 
& \aval^{\ex,k+1} 
&\sqsubseteq \aval^{\id,k+1} 
&
\end{array}
\]

Thus it follows that
   $ \forall n \in \nodes : \afun{n}{\ex} \sqsubseteq \afun{n}{\id}$.
 \end{proof}

\begin{corollary}
Let $\vecfdd = (\veca^{\dd}, \vecd^{\dd})$ be a vector of monotonic
flow functions that computes aliases and demands in an alias
analysis algorithm $\dd$.  Let $(\add, \ddd)$ denote the fixed-point
of $\vecfdd$ obtained by starting from the initial vector
$\val{}^{\dd} = (\aval^{\top}, \dval^{\top})$.  If $\vecgex(\aval) =
\veca^{\dd}(\aval, \dval^{\bot})$ for all alias vectors $\aval$, then $\aex \sqsubseteq \add$.
\label{corr:dd-ex}
\end{corollary}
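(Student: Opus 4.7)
The plan is to observe that the proof of Lemma~\ref{lemma-id-precise} never actually used any property specific to algorithm \id beyond the monotonicity of $\veca^{\id}$ and the structural identity $\vecgex(\aval) = \veca^{\id}(\aval, \dval^{\bot})$. The corollary simply abstracts these two ingredients, so the same induction argument carries over verbatim with $\id$ replaced by $\dd$. I would therefore frame the proof as a direct generalization rather than a fresh argument.

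First I would fix notation exactly as in Lemma~\ref{lemma-id-precise}: let $\aval^{\dd,i}$ and $\dval^{\dd,i}$ denote the alias and demand vectors produced in the $i$-th fixed-point iteration of $\vecfdd$ starting from $\val{}^{\dd} = (\aval^{\top}, \dval^{\top})$, and let $\aval^{\ex,i}$ denote the $i$-th iterate of $\vecgex$ starting from $\val{}^{\ex} = \aval^{\top}$. The goal is to show by induction on $i \ge 0$ that $\aval^{\ex,i} \sqsubseteq \aval^{\dd,i}$; taking $i \to \infty$ then yields $\aex \sqsubseteq \add$, which is what the corollary asserts.

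For the base case ($i = 0$), both sides equal $\aval^{\top}$, so equality trivially holds. For the inductive step, assume $\aval^{\ex,k} \sqsubseteq \aval^{\dd,k}$. Since $\dval^{\bot} \sqsubseteq \dval^{\dd,k}$ by definition of $\bot$, and since $\veca^{\dd}$ is monotonic in \emph{both} arguments (this is part of the hypothesis that $\vecfdd$ is a vector of monotonic flow functions, exactly as required of $\veca^{\id}$ in the lemma), we obtain
\[
\vecgex(\aval^{\ex,k}) \;=\; \veca^{\dd}(\aval^{\ex,k}, \dval^{\bot}) \;\sqsubseteq\; \veca^{\dd}(\aval^{\dd,k}, \dval^{\dd,k}),
\]
where the left equality uses the hypothesis $\vecgex(\aval) = \veca^{\dd}(\aval, \dval^{\bot})$. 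The left-hand side is $\aval^{\ex,k+1}$ and the right-hand side is (the alias component of) $\vecfdd(\aval^{\dd,k}, \dval^{\dd,k}) = \aval^{\dd,k+1}$, completing the inductive step.

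There is no real obstacle here — the mild subtlety is simply to make explicit that the proof of Lemma~\ref{lemma-id-precise} is ``parametric'' in the algorithm: nothing in it depended on the specific form of the flow equations of \id, only on monotonicity and on the $\vecgex$/$\veca^{\dd}$ relation. I would make this explicit in one sentence and then present the induction compactly, since it is essentially a cut-and-paste of the lemma's inductive step with $\id$ uniformly renamed to $\dd$.
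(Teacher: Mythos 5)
Your proposal is correct and matches the paper's own argument exactly: the paper proves Corollary~\ref{corr:dd-ex} by observing that Lemma~\ref{lemma-id-precise} only used monotonicity of the flow functions and the identity $\vecgex(\aval) = \veca^{\dd}(\aval, \dval^{\bot})$, which is precisely your observation. You merely spell out the induction that the paper leaves implicit.
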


\begin{proof}
Follows from lemma~\ref{lemma-id-precise} where the only requirement
is that the flow functions are monotonic, and that $\vecgex(\aval) =
\veca^{\dd}(\aval, \dval^{\bot})$.
\end{proof}

\begin{lemma}
Algorithm $\id$ and 
algorithm $\ex$ are not equiprecise.
  \label{lemma-id-precise-eg}
 \end{lemma}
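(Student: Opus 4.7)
The plan is to exhibit one witness program on which algorithm \id computes strictly fewer aliases at a virtual-call site than algorithm \ex. Because Lemma~\ref{lemma-id-precise} has already established $\restrict{\afun{\vcnode}{\ex}}{\dfun{\vcnode}{\id}} \sqsubseteq \restrict{\afun{\vcnode}{\id}}{\dfun{\vcnode}{\id}}$ for every program and every $\vcnode \in \origin$, the only remaining obligation is to find a single program where this containment is strict.

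The natural witness is the program \prog of Figure~\ref{m-eg-type}(a), analyzed under type-based abstraction with class hierarchy $X\rightarrow Y\rightarrow Z$. Take $\vcnode$ to be the node of the virtual call $t\rightarrow\text{\em vfun}()$ at line 28; then $\dfun{\vcnode}{}=(\{t\},\emptyset)$ and the comparison reduces to the alias set of $t$ at the entry of $\vcnode$.

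First I would compute $\afun{\vcnode}{\id}$ by instantiating the equations of Section~\ref{sec:formulation} on \prog. The improved speculation of Equation~(\ref{eq:ldgen}) raises the auxiliary demand $\&z$ alongside $z$, and this demand is discharged directly at statement 04 ($p = \&z$) without forcing further speculation at any indirect assignment; consequently no backward pass ever places $y$ in $\din$ or $\dout$. The ensuing forward pass therefore reproduces the points-to graph of Figure~\ref{m-eg-type}(c), yielding $\restrict{\afun{\vcnode}{\id}}{\dfun{\vcnode}{}}=\{(t,\&Y)\}$. Next I would compute $\afun{\vcnode}{\ex}$ via the flow function $\vecgex$ recalled in the proof of Lemma~\ref{lemma-id-precise}; since $\vecgex$ is insensitive to demands, it tracks aliases for every pointer in \prog. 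As already spelled out in Examples~\ref{exmp:type.imprecision} and~\ref{exmp:type.imprecision.1}, type-based abstraction then makes $x$ and $y$ node aliases through their common pointee of type $X$, which spuriously link-aliases $x\rightarrow f$ with $y\rightarrow f$; propagating through statements 24 and 27 gives the graph of Figure~\ref{m-eg-type}(b) and hence $\restrict{\afun{\vcnode}{\ex}}{\dfun{\vcnode}{}}=\{(t,\&Y),(t,\&Z)\}$.

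Since $\{(t,\&Y)\}$ is a strict subset of $\{(t,\&Y),(t,\&Z)\}$, one obtains $\restrict{\afun{\vcnode}{\ex}}{\dfun{\vcnode}{}} \sqsubsetneqq \restrict{\afun{\vcnode}{\id}}{\dfun{\vcnode}{}}$, so \id and \ex fail to be equiprecise. The main obstacle will be the bookkeeping around the mutual recursion of the forward and backward passes: one must certify formally that no intermediate round of the \id fixed-point iteration ever introduces $y$ into any demand set on this witness, and in particular that the aliases generated at statements 03 and 04 do not trigger a round of demands that percolates back to statement 14. Once that routine but tedious verification is discharged, the strictness of the containment, and hence the failure of equiprecision, is immediate.
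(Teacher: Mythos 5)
Your proposal is correct and takes essentially the same route as the paper: it uses the program of Figure~\ref{m-eg-type} as the witness, the virtual call at line 28 with demand $(\{t\},\emptyset)$, and the observation that the alias pair $(t,\&Z)$ is computed by \ex but not by \id, with Lemma~\ref{lemma-id-precise} supplying the containment in the other direction. The extra bookkeeping you flag (verifying that no round of the \id fixed point ever raises a demand for $y$) is exactly the detail the paper delegates to the worked trace in Appendix~\ref{detailed-working}.
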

 \begin{proof}
%%  We show that for a demand computed using \id, there exists an alias
%%  pair computed using \ex that is not computed by \id.
%% %\[\restrict{\aex}{\did} \sqsubset \restrict{\aid}{\did}\]
%% Consider $a$ and $b$ to be demand names. Then, 
%%   $\exists a, \exists b, a \in \dout^{\id}_n \wedge 
%%   (a, b) \in \aout^{\ex}_n \wedge (a, b) \not\in \aout^{\id}_n$.

   We prove the lemma by providing an example to show that
algorithm \id is strictly more precise than algorithm \ex.
Consider the
   example discussed in Figure~\ref{m-eg-type}.  We refer to the alias
   and demand computed at line 28, where the instruction is 
\text{$t\rightarrow {\text{\em vfun}}()$}.  We consider the demand
   \text{$\dfun{28}{\id} = (\{t\}, \emptyset)$}, 
and note that the alias pair
   \text{$(t, \&Z)\in \ain^{\ex}_{28}$}. 
However, \text{$(t, \&Z)\not\in \ain^{\id}_{28}$}.  
Thus, algorithm \id and algorithm \ex are not equiprecise.
 \end{proof}

\subsection{Algorithms \cd and \ex are Equiprecise}
\label{sec:compare:cd:ex}
The existing demand-driven methods~\cite{Heintze:2001:DPA:378795.378802,lfcpa}
are defined for C and do not deal with objects on the heap.
We extend their demand-driven strategy to include objects on the heap and call it as
the conventional demand-driven method (\cd). With this extension we can use it for static
resolution of virtual function calls in C++.

 We define \vecfcd to compute alias and demand using the conventional
 demand-driven method. Note that $\vecfcd$ can be obtained from
 \vecfid with minor changes.  In the conventional demand-driven
 method, we do not generate any demand of the form of an address-of a
 variable.  Also, if \lhs is of the form $*x$ of $x\rightarrow f$,
 then we generate demand for \base(\lhs) irrespective of the demand
 being raised.  With these modifications to our data flow equations
 presented in Section~\ref{sec:formulation}, we get the data flow
 equations that compute demands and aliases using the conventional
 speculation method. 
%%%%The changed parts of the data flow equations are described below.
%%%%
%%%%  \begin{align*}
%%%%\dgen_n & = \rdgen(\lhs_n) \cup 
%%%%	\begin{cases}
%%%%	\ldgen(\rhs_n, \ain_n) & \blhs_n \subseteq \dout'_n \vee n \in \origin
%%%%	\\
%%%%	\emptyset & \text{otherwise}
%%%%	\end{cases}
%%%%\\
%%%%\ldgen(\rhs,\aliasing) & = 
%%%%\begin{cases}
%%%%	\vname(\rhs) \cup \dn(\rhs, \aliasing) & \base(\rhs) \neq \emptyset
%%%%	\\
%%%%	\dn(\rhs, \aliasing) &	\vname(\rhs) \neq \emptyset \wedge \neg\addr(\rhs)
%%%%	\\
%%%%	\emptyset 	& 		\text{otherwise}
%%%%	\end{cases}
%%%%	\\
%%%%\rdgen(\lhs) & = \base(\lhs) 
%%%%\end{align*}
%%%%Note that demand for access expressions involving indirect assignment 
%%%% is raised speculatively irrespective of the demands by invoking \rdgen.

 \begin{lemma}
   %$\restrict{\acd}{\dcd} \sqsupseteq \restrict{\aex}{\dcd}$
   Algorithm \cd is at least as precise as algorithm \ex.
\label{lemma-cd-precise}
 \end{lemma}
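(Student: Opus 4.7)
The plan is to apply Corollary~\ref{corr:dd-ex} by instantiating $\dd$ with $\cd$. The corollary says that for any demand-driven alias analysis whose flow functions $\vecfcd = (\veca^{\cd}, \vecd^{\cd})$ are monotonic and whose alias flow function reduces to $\vecgex$ when the demand vector is set to $\dval^{\bot}$, the exhaustive algorithm is at least as imprecise. So my goal reduces to two verification tasks: (a) monotonicity of $\veca^{\cd}$ and $\vecd^{\cd}$, and (b) the identity $\vecgex(\aval) = \veca^{\cd}(\aval, \dval^{\bot})$ for every alias vector $\aval$. Granted these, a proof fully analogous to Lemma~\ref{lemma-id-precise} (base case $\aval^{\ex,0} = \aval^{\cd,0} = \aval^{\top}$, inductive step using $\dval^{\bot} \sqsubseteq \dval^{\cd,k}$ together with monotonicity of $\veca^{\cd}$) gives $\aex \sqsubseteq \acd$, and hence $\restrict{\afun{\vcnode}{\ex}}{\dfun{\vcnode}{\cd}} \sqsubseteq \restrict{\afun{\vcnode}{\cd}}{\dfun{\vcnode}{\cd}}$ for every virtual-call node $\vcnode \in \origin$, which is exactly the precision comparison required.

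For (a), I will appeal to Lemma~\ref{lemma:monotonicff}, which established monotonicity for the flow functions of $\id$, and argue that the only changes in moving from $\id$ to $\cd$ (namely, dropping the $\addrexpr$ contribution from $\ldgen$/$\rdgen$ and unconditionally generating a demand for $\base(\lhs)$ in indirect assignments) preserve monotonicity: the first change merely removes a constant-producing clause, while the second replaces a guarded union by an unguarded union of a constant set, both of which are monotone operations in the alias/demand arguments. The set-difference terms ($\dkill$, $\AKill$) remain constant, so the overall flow functions stay monotonic.

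For (b), I will exhibit the claim at the level of the equation defining $\AGen_n$ (Equation~\ref{eq:agen}), which is shared between $\cd$ and $\id$. When the second argument of $\veca^{\cd}$ is $\dval^{\bot}$, the guard $\blhs_n \subseteq \dout_n$ is satisfied vacuously for every statement, so $\AGen_n$ reduces to $\{\blhs_n \times \brhs_n\}$ for every $n$, and $\AKill_n$ is independent of the demand. Thus the alias propagation performed by $\veca^{\cd}(\cdot, \dval^{\bot})$ is precisely the unrestricted alias propagation performed by $\vecgex$, which is the defining equality required by the corollary.

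The main obstacle, and the only nontrivial verification, is checking that the \cd-specific modifications indeed leave $\veca^{\cd}(\cdot,\dval^{\bot}) = \vecgex(\cdot)$ unchanged---in particular, that the auxiliary demand that \cd unconditionally raises for $\base(\lhs)$ at indirect assignments does not secretly restrict the aliases generated. Since this extra demand only feeds back into the \emph{demand} component and the alias component already sees $\dval^{\bot}$ under the corollary's hypothesis, this check goes through; once it does, invoking Corollary~\ref{corr:dd-ex} closes the lemma.
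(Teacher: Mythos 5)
Your proposal is correct and follows essentially the same route as the paper, which also proves this lemma by invoking Corollary~\ref{corr:dd-ex} with the observations that \vecfcd is monotonic and that $\vecgex(\aval) = \vecfcd(\aval, \dval^{\bot})$. The extra detail you supply---checking that the \cd-specific modifications preserve monotonicity and that the $\AGen$ guard becomes vacuous under $\dval^{\bot}$---is a faithful elaboration of the verification the paper leaves implicit.
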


 \begin{proof}
Follows from corollary~\ref{corr:dd-ex} of
lemma~\ref{lemma-id-precise} since \vecfcd is a monotonic flow
function, and $\vecgex(\aval) = \vecfcd{(\aval, \dval^{\bot})}$.

%%%%  Function \vecfcd computes alias and demand using the conventional demand driven method.
%%%%%   \vecfcd can be easily computed from \vecfid with minor changes to the data flow equations.
%%%%%   In the conventional demand driven method, we do not generate any demand of the form of an address-of a variable. 
%%%%%  Also, if \lhs is of the form $*x$ of $x\rightarrow f$,	
%%%%%  then we generate demand for \base(\lhs) irrespective of the demand being raised. 
%%%%%  With these modifications to our data flow equations presented in Section~\ref{sec:formulation},
%%%%%  we get data flow equations which computes demands and aliases using
%%%%%  conventional speculation method.
%%%%  The initial vector for function \vecfcd for a demand driven method is $\val{}^{cd} : \langle \top, \top\rangle$.
%%%%%   
%%%%  Exhaustive method computes aliases at each program point. 
%%%%  Let function \vecgex compute aliases using exhaustive method as discussed in lemma~\ref{lemma-id-precise}.
%%%%  With the initial vector for an exhaustive method will be $\val{}^{ex} : \langle \top, \bot\rangle$
%%%%  and $\vecfcd (\aliasing, \bot) = \vecgex (\aliasing)$.
%%%%  Function \vecfcd is monotonic.
%%%%  Rest of the argument follows from lemma~\ref{lemma-id-precise} and we get,
%%%% 
%%%%\[\restrict{\acd}{\dcd} \sqsupseteq \restrict{\aex}{\dcd}\]
%%%%
%%%%  Thus the lemma holds.
 \end{proof}

\begin{lemma}
  % $\restrict{\aex}{\dcd} \sqsupseteq \restrict{\acd}{\dcd}$.
  Algorithm \ex is at least as precise as algorithm \cd.
 \label{lemma-cd-ex-equiv}
\end{lemma}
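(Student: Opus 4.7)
The plan is to establish the converse of Lemma~\ref{lemma-cd-precise}, namely that $\forall \vcnode \in \origin$ with virtual call $x \rightarrow \text{\em vfun}()$, we have $\restrict{\afun{\vcnode}{\cd}}{\dfun{\vcnode}{\cd}} \sqsubseteq \restrict{\afun{\vcnode}{\ex}}{\dfun{\vcnode}{\cd}}$. Combined with Lemma~\ref{lemma-cd-precise}, this yields equiprecision. Recalling the lattice convention (where $\top = \vec{\emptyset}$ and $\sqcap$ is union), this amounts to showing, in set terms, that every alias pair $(\alpha,\beta)$ generated by $\ex$ at a node $n$ with $\balpha \in \dfun{n}{\cd}$ is also generated by $\cd$ at $n$.

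The first step will be to establish a closure property of $\cd$'s demand set. Recall that the $\cd$ variant of $\rdgen$ (described in Section~\ref{sec:compare:cd:ex}) \emph{unconditionally} generates a demand for $\base(\lhs)$ at any statement whose $\lhs$ is of the form $*x$ or $x\rightarrow f$, rather than only when $\brhs \subseteq \dout'_n$ as in Equation~\ref{eq:dgen}. I will show that, as a consequence, $\dfun{n}{\cd}$ is closed under the following backward rule: if any abstract name of a live demand can be defined by an indirect assignment at a reachable statement $m$, then $\base(\lhs_m)$ is already in the propagated demand, and hence the aliases of $\base(\lhs_m)$ (and transitively, of everything $\ex$ would consult to derive those aliases) are all being computed by $\cd$.

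The second step will mirror the inductive structure of Lemma~\ref{lemma-id-precise}. I will strengthen the statement to a per-iteration invariant over the coupled fixed-point iterations and prove by induction on the iteration index $i$ that, for every node $n$,
\begin{align*}
\restrict{\afun{n}{\ex,\,i}}{\dfun{n}{\cd,\,i}} \;\sqsubseteq\; \afun{n}{\cd,\,i}.
\end{align*}
The base case is immediate since both vectors begin at $\aval^{\top}$. For the inductive step, I will do a case analysis that parallels Figure~\ref{alpha-beta-l-r-soundness}. For a direct-alias generation case $(\alpha,\beta)=(\lhs,\rhs)$ with $\blhs \in \dfun{n}{\cd}$, the modified $\ldgen$ of $\cd$ generates exactly the same demand as in Equation~\ref{eq:ldgen} (the only difference between $\id$ and $\cd$ is the absence of $\addrexpr$ demands, which do not participate in $\AGen_n$ at this kind of step), so $\AGen_n$ in $\cd$ generates $\blhs \times \brhs$ whenever $\ex$ does. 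For derived-alias and propagation cases involving an indirect $\lhs$ or indirect $\rhs$, the closure property from the first step guarantees that $\base(\lhs_n)$ (resp.\ $\base(\rhs_n)$) and thus all its abstract-name aliases required by $\ex$ to rewrite the statement are already tracked by $\cd$, so the derived alias $\blhs \times \bbeta$ generated by $\ex$ is also generated by $\cd$.

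The hard part will be precisely showing the closure property and then using it to justify the indirect-assignment derivation cases of the inductive step. In particular, I must rule out the situation where $\ex$ derives an alias $(\blhs,\bbeta)$ through an alias $(\brhs,\bbeta)$ at some predecessor in which $\bbeta$ itself was generated only because $\ex$ tracked a variable outside $\dfun{n}{\cd}$. The aggressive $\cd$ speculation---which raises a fresh demand at \emph{every} indirect assignment regardless of the prevailing demand---is precisely what forecloses this scenario, and articulating this formally (essentially by showing that the set of variables \emph{transitively reachable} via indirect-assignment dependencies from a demand is contained in $\dfun{n}{\cd}$) will be the technical crux. Once this is in place, the two directions combine through Definition (precision comparison) and Lemma~\ref{lemma-cd-precise} to yield equiprecision of $\cd$ and $\ex$.
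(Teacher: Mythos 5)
Your overall strategy is the one the paper uses: show element-wise that every alias pair $\balpha\times\bbeta$ computed by \ex at a node $n$ whose first component $\balpha$ is demanded by \cd is also computed by \cd, by an induction whose step does a case analysis on whether the pair is generated directly at $n$, derived from predecessor aliases, or merely propagated, exploiting \cd's unconditional demand at indirect assignments to guarantee that whatever predecessor aliases \ex consults are also tracked by \cd. The paper does not isolate your ``closure property'' as a separate lemma; it is absorbed into the inductive hypothesis, which only requires the \emph{first} component of a pair to be demanded, together with the observation that \cd's demand-generation rules raise exactly the needed demands at predecessors. Your instinct to make that closure explicit is reasonable but is not where the proof actually breaks.

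The genuine gap is in the inductive invariant. You propose the lock-step claim that, at every iteration $i$, the \ex aliases restricted to $\dfun{n}{\cd,\,i}$ are contained in $\afun{n}{\cd,\,i}$ (note also that your displayed $\sqsubseteq$ is oriented the wrong way under the convention $\top=\emptyset$, $\sqcap=\cup$ that you yourself recall). This is too strong, because \cd lags \ex: it must first propagate demands backward before it can compute any aliases forward. For a chain $x_0 = \new\ \tau;\; x_1 = x_0;\; \ldots;\; x_k\rightarrow\text{\em vfun}()$, the demand for $x_k$ is in \cd's demand at the call site from the very first step, and \ex reaches the alias $(x_k,\&\tau)$ there after roughly $k$ applications of the transfer functions, but \cd needs roughly $k$ further applications (backward demand propagation followed by forward alias propagation) before that alias appears; for intermediate $i$ the left-hand side of your invariant is nonempty while the right-hand side is empty. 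The paper avoids this by inducting on the step $m$ at which an alias pair \emph{first appears} in \ex's computation (the sets $\aout^{\ex,m}_n$) while comparing against \cd's \emph{converged} demand and alias vectors $\dout^{\cd}_n$ and $\aout^{\cd}_n$, so no synchronization between the two iteration sequences is ever asserted. If you replace the iteration-indexed right-hand side (and the restricting demand) by \cd's fixed-point values, your case analysis goes through essentially as in the paper.
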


\begin{proof}
%  $A^{\ex}$ is obtained by performing fixed point computation of function $\vecgex (A)$.
  % Since $\vecgex (A) = \vecfcd (A, \bot)$

  To prove this lemma, we establish a stronger result.  Specifically,
  we show that for a demand computed using \cd at a node $n$, if there
  exists an alias pair computed by \ex then such an alias pair will
  also be computed by \cd.  Formally, if \dcd is the vector of
  demands computed by algorithm $\cd$ after 
the convergence of the analysis on program
  $\prog$, then 
\[\forall n \in \nodes,
  \restrict{\afun{n}{\ex}}{\dfun{n}{\cd}} \sqsupseteq
  \restrict{\afun{n}{\cd}}{\dfun{n}{\cd}}.\]  
For convenience,
  we re-write this proof obligation in the following equivalent form.
 \[\forall m>0, \forall n\in \nodes, \forall \alpha, \forall \beta, \balpha\subseteq\dout^{\cd}_n \wedge 
 \balpha\times \bbeta\subseteq \aout^{\ex, m}_n \Rightarrow \balpha\times \bbeta\subseteq \aout^{\cd}_n
 \]
  Here, $\aval^{\ex}$ represents the vector of aliases obtained by
  performing fixed point computation of function $\vecfcd
  (\aval^{\ex}, \bot)$.  An entry in the vector denoted $\aout^{\ex,
    m}_n$ represents an alias pair that appeared for the first time in
  the set of aliases computed by algorithm \ex for node $n$ after $m$
  steps (iterations of fixed-point computation). Thus,
  \[\aout^{\ex}_n = \bigcup_{m=1}^{\infty} \aout^{\ex,m}_n\]
  
 Our proof is by induction on the number of steps $m$ it takes for an
 alias pair to appear at a node $n$ in the program.  Note that this
 number may not have any correlation with the number of instructions
 in a path from the start of the program to the node $n$.  This is
 especially true in the setting of bidirectional data flow analysis, as
 is the case here.  We prove for an arbitrarily chosen node $n$ and an
 arbitrary alias pair at $n$; hence the argument holds for all alias
 pairs at all nodes in the program.
 
 The base case for our induction is $m=1$. It corresponds to the 
situation when an
 alias pair appears in a single step. Since the alias pair appears in 
a single
 step, it means that the alias pair is added because of the statement
 in node $n$ and it could not have been added due to aliases from
 predecessor.  Thus statement in node $n$ ought to be an assignment
 statement.  Consider the statement $n$ to be of the form $\lhs =
 \rhs$.  In order for statement $n$ to generate alias of the form
 $\blhs\times\brhs${} 
without using alias information reaching $n$,
  $\lhs \in \{x, x.f\}$ and $\rhs \in \{y, \&a,
 y.f, \new\ \tau\}$. For other cases of \lhs and \text{\rhs,} abstract names
 \blhs and \brhs cannot be computed without the aliases from the
predecessors. Thus such an alias pair cannot
 appear in a single step.  Since $\balpha\subseteq\dout^{\cd}_n$ and
 $\blhs\times\brhs \subseteq \aout^{\ex,1}_n$ then $\alpha = \lhs$ or
 $\alpha = \rhs$.  In both the cases, $\blhs\times \brhs \subseteq
 \aout_n^{\cd}$.  Thus the lemma holds for the base case.
 
 For the inductive hypothesis, assume that the lemma holds for the aliases 
that appear in $k$ steps.
 We prove the inductive step by arguing the lemma for the aliases 
that appear in $k+1$ step.
Consider the case where for demand $\balpha\subseteq\dout^{\cd}_n$
the alias $\balpha\times \bbeta\subseteq \aout^{\ex, k+1}_n$ takes $k+1$ steps to appear 
at node $n$.
Statement $n$ could be of various kinds like an assignment statement, conditional statement, nop, print statement.
We classify different kinds of statements in two categories where
\begin{itemize}
 \item statement $n$ is not an assignment statement or
 \item statement $n$ is an assignment statement.
\end{itemize}
We explain these two categories below.

{\em Statement $n$ is not an assignment statement.} Such a statement does not generate an alias pair.
Thus, demand for $\balpha$ will be raised to the predecessors of $n$. 
If it takes $k+1$ steps for the alias to reach node $n$ then it certainly takes $\leq k$ steps for it to reach the 
predecessors of node $n$.
From the inductive hypothesis, $\forall p \in \pred(n), \balpha\subseteq\dout^{\cd}_p, 
\balpha\times\bbeta\subseteq\aout^{\ex, k}_p \Rightarrow 
\balpha\times\bbeta\subseteq\aout^{\cd}_p$.
If the alias belongs to a predecessor and statement $n$ does not update the alias, then the alias 
is propagated to statement $n$.
Thus $\balpha\times\bbeta\subseteq\aout^{\cd}_n$.

{\em Statement $n$ is an assignment statement.}
Demand $\balpha\subseteq\dout^{\cd}_n$ and 
the alias $\balpha\times \bbeta\subseteq \aout^{\ex, k+1}$.
Now access expressions $\alpha$ and $\beta$ could take the combinations with respect to \lhs and \rhs{} as 
described in Figure~\ref{alpha-beta-l-r-precision}.
 We discuss these cases below.

 \begin{figure}[t]
 
 \begin{center}
 \begin{pspicture}(0,0)(120,55)
%  \psframe(0,0)(120,55)

\putnode{a0}{origin}{50}{50}{\psframebox{$\alpha, \beta$}}

\putnode{a2}{a0}{-38}{-10}{\psframebox{$\alpha = r$}}
\putnode{g1}{a2}{19}{-10}{\psframebox[linestyle=none,fillstyle=solid, fillcolor=lpink]{\rule{30mm}{0mm}\rule{0mm}{7mm}}}
% \putnode{w}{a2}{20}{-10}{\psframebox*[fillcolor=pink]{\rule{35mm}{0mm}\rule{0mm}{5mm}}}
\putnode{c1}{a2}{9}{-10}{\psframebox{$\beta = l$}}
\putnode{c2}{a2}{-9}{-10}{\psframebox{$\beta \neq l$}}

\putnode{a1}{a0}{0}{-10}{\psframebox{$\alpha = l$}}
\putnode{b1}{a1}{-9}{-10}{\psframebox{$\beta = r$}}
\putnode{b2}{a1}{9}{-10}{\psframebox{$\beta \neq r$}}

\putnode{a3}{a0}{38}{-10}{\psframebox{$\alpha \neq l, \alpha \neq r$}}
\putnode{d1}{a3}{-14}{-10}{\psframebox{$\beta = l$}}
\putnode{d2}{a3}{0}{-10}{\psframebox{$\beta = r$}}
\putnode[l]{g2}{a3}{7}{-10}{\psframebox[linestyle=none,fillstyle=solid, fillcolor=lightblue]{\rule{20mm}{0mm}\rule{0mm}{7mm}}}
\putnode{d3}{a3}{18}{-10}{\psframebox{$\beta \neq l, \beta \neq r$}}

% \putnode{g1}{c1}{5}{0}{\psframebox{$\white a\beta\gamma\zeta\varphi\varsigma\chi$}}

\putnode{e1}{g1}{0}{-9}{\psframebox[linestyle=none]{
\begin{tabular}{l}
 Generation of \\ direct alias 
\end{tabular}
}}

\putnode{e2}{g2}{14}{-9}{\psframebox[linestyle=none]{
\begin{tabular}{l}
No generation of alias.  \\ Only  propagation of  alias. 
\end{tabular}
}}

\putnode{e3}{d1}{-15}{-23}{\psframebox[linestyle=none]{
\begin{tabular}{l}
 Generation of derived alias \\ or propagation of alias.
\end{tabular}
}}

\ncline{->}{a0}{a1}
\ncline{->}{a0}{a2}
\ncline{->}{a0}{a3}

\ncline{->}{a1}{b1}
\ncline{->}{a1}{b2}

\ncline{->}{a2}{c1}
\ncline{->}{a2}{c2}

\ncline{->}{a3}{d1}
\ncline{->}{a3}{d2}
\ncline{->}{a3}{d3}

%\ncline[doubleline=true,nodesepA=-2]{->}{e1}{g1}
%\ncline[doubleline=true,nodesepA=-2]{->}{e2}{g2}

\nccurve[angleA=145,angleB=270,linestyle=dashed,dash=.6 .6,offsetA=10,nodesepA=5]{->}{e3}{c2}
% \nccurve[angleA=75,angleB=270,linestyle=dashed,dash=.6 .6,nodesepA=1,offsetA=-8]{->}{e3}{d2}
\ncline[linestyle=dashed, dash=.6 .6]{->}{e3}{b2}
\ncline[linestyle=dashed, dash=.6 .6, offsetA=-8, nodesepA=4.2]{->}{e3}{d1}
\nccurve[angleA=45,angleB=270,linestyle=dashed,dash=.6 .6,offsetA=-12,nodesepA=6]{->}{e3}{d2}
 \end{pspicture}

{\small
\begin{itemize}
\item Cases $(\alpha = \rhs, \beta = \lhs)$ and $(\alpha = \lhs, \beta = \rhs)$
      correspond to aliases $\blhs\times\brhs$ reaching the exit of node $n$. These
     aliases are generated by $n$  directly without using the aliases
      reaching the entry of $n$.
%%%\item Cases $(\alpha = \rhs, \beta \neq \lhs)$, 
%%%$(\alpha = \lhs, \beta \neq \rhs)$, 
%%%$(\alpha \neq \lhs, \alpha \neq \rhs, \beta = \lhs$ and
%%%$(\alpha \neq \lhs, \alpha \neq \rhs, \beta = \rhs$) 
%%%      correspond to aliases $(\brhs\times\bbeta)$ or $(\blhs\times\bbeta)$ or $(\blhs\times \balpha)$ or $(\brhs\times\balpha)$ reaching the
%%%      exit of node $n$. They are either derived using the aliases reaching the entry of $n$ or are propagagated from the entry of $n$ to the exit of $n$.
\item Case $(\alpha \neq \lhs, \alpha \neq \rhs, 
\beta \neq \lhs, \beta \neq \rhs)$
 correspond to aliases
\text{$\balpha\times\bbeta$}
 that are propagated from the entry of $n$ to  the exit of $n$ 
      without any change. 
\item The remaining cases
      correspond to aliases that are either derived using the 
aliases reaching the entry of $n$ or are 
propagated from the entry of $n$ to the exit of $n$.
They are $\brhs\times\bbeta$, $\blhs\times\bbeta$, 
$\blhs\times \balpha$ and $\brhs\times\balpha$.
 
\end{itemize}
}

\caption{Given the aliases 
\text{$\balpha\times \bbeta$} reaching the exit of node $n$ in the
\ex method, the picture shows various
cases of the aliases reaching the exit of node $n$ in the 
\cd method (Lemma~\ref{lemma-cd-ex-equiv}).}
\label{alpha-beta-l-r-precision}
\end{center}
\end{figure}

 \begin{itemize}
 \item Generation of direct alias :
 \begin{itemize}
  \item $\alpha = \lhs, \beta = \rhs$
  
  If $\lhs \in \{x, x.f\}$ and $\rhs \in \{y, \&a, y.f, \new\ \tau\}$ then the alias would have been computed in single step.
  
  Let us look at the cases for other values of \lhs and \rhs.
  Since, $\balpha\subseteq\dout^{\cd}_n$, appropriate demand for \rhs{} will be raised at its predecessors.
If it takes $k+1$ steps for the alias $\balpha\times \bbeta\subseteq \aout^{\ex, k+1}_n$ 
to reach the exit of node $n$ then it certainly takes $\leq k$ steps for
the alias related to \rhs{} to reach the exit of the
predecessors of node $n$.
From the inductive hypothesis, the aliases of \rhs{} is present in both the methods \ex and \cd  at the exit of the predecessors of $n$.
Statement $n$ uses the alias for \rhs{} to compute \brhs.
Thus, $\balpha\subseteq\dout^{\cd}_n \Rightarrow \balpha\times\bbeta\subseteq\aout^{\cd}_n$.

  \item $\alpha = \rhs, \beta = \lhs$

  This is a dual of the previous case with \rhs{} and \lhs interchanged. It holds because aliasing is symmetric.
%
%%%%    If $\lhs \in \{x, x.f\}$ and $\rhs \in \{y, \&y, y.f, \new\ \tau\}$ then the alias would have been computed in single step.
%%%%  
%%%%  Let us look at the case for other values of \lhs and \rhs.
%%%%  Since, $\balpha\subseteq\dout^{\cd}_n$, appropriate demand for \lhs will be raised at its predecessors.
%%%%If it takes $k+1$ steps for the alias $\balpha\times \bbeta\subseteq \aout^{\ex, k+1}_n$ 
%%%%to reach the exit of node $n$ then it certainly takes $\leq k$ steps for  the alias related to \lhs to reach 
%%%%predecessors of node $n$.
%%%%From the inductive hypothesis, the aliases of \lhs are present in both the methods \ex and \cd  at the predecessors of $n$ .
%%%%Statement $n$ uses the alias for \lhs to compute \blhs.
%%%%Thus, $\balpha\subseteq\dout^{\cd}_n \Rightarrow \balpha\times\bbeta\subseteq\aout^{\cd}_n$.
%%%%
 \end{itemize}

\item Generation of derived alias or propagation of alias from predecessor:
\begin{itemize}

  \item $\alpha = \lhs, \beta \neq \rhs$
  
  Since, $\balpha\subseteq\dout^{\cd}_n$, appropriate demand for \lhs (if $\lhs \not\equiv x$) and \rhs{} will be raised to the predecessors of $n$.
If it takes $k+1$ steps for the alias $\balpha\times \bbeta\subseteq \aout^{\ex, k+1}_n$ 
to reach the exit of node $n$ then it certainly takes $\leq k$ steps for  the alias related to \lhs or \rhs{} to reach 
the predecessors of node $n$.
The alias at node $n$ is generated from the alias 
$\blhs\times\bbeta \subseteq \aout^{\ex, k}_p$ or $\brhs\times\bbeta \subseteq \aout^{\ex, k}_p$ at the predecessor $p$ of node $n$.
From the inductive hypothesis, the aliases for \lhs and \rhs{} at node $p$ is present in both the methods \ex and \cd.
Below are the cases which takes place with respect to the alias computed at predecessors of $n$ :
\begin{itemize}
 \item alias $\blhs\times\bbeta\subseteq\aout^{\cd}_p$ is propagated to node $n$ or
 \item alias $\brhs\times\bbeta\subseteq\aout^{\cd}_p$ is used to compute derived alias at node $n$.
\end{itemize}
Thus we get $\balpha\times\bbeta\subseteq\aout^{\cd}_n$.

  \item $\alpha = \rhs, \beta \neq \lhs$
 
  This is a dual of the previous case with \rhs{} and \lhs interchanged.  
%%%%  Since, $\balpha\subseteq\dout^{\cd}_n$, appropriate demand for \lhs (if $\lhs \not\equiv x$) and \rhs{} will be raised to the predecessors of $n$.
%%%%If it takes $k+1$ steps for the alias $\balpha\times \bbeta\subseteq \aout^{\ex, k+1}_n$ 
%%%%to reach \UKadd{the exit of} node $n$ then it certainly takes $\leq k$ steps for  the alias related to \lhs or \rhs{} to reach
%%%%predecessors of node $n$.
%%%%The alias at node $n$ is generated from the alias 
%%%%$\blhs\times\bbeta \subseteq \aout^{\ex, k}_p$ or $\brhs\times\bbeta \subseteq \aout^{\ex, k}_p$ at the predecessor $p$ of node $n$.
%%%%From the inductive hypothesis, the aliases for \lhs and \rhs{} are present at node $p$ in both the methods \ex and \cd.
%%%%Specifically, below are the cases which takes place with respect to the alias computed at predecessors of $n$ :
%%%%\begin{itemize}
%%%% \item alias $\blhs\times\bbeta\subseteq\aout^{\cd}_p$ is  used to compute derived alias at node $n$ or
%%%% \item alias $\brhs\times\bbeta\subseteq\aout^{\cd}_p$ is propagated to node $n$.
%%%%\end{itemize}
%%%%Thus we get $\balpha\times\bbeta\subseteq\aout^{\cd}_n$.

  \item $\alpha \neq \lhs, \alpha \neq \rhs, \beta = \lhs$
  
  Since, $\balpha\subseteq\dout^{\cd}_n$, demand for \balpha will be raised to the predecessors of $n$.
If it takes $k+1$ steps for the alias $\balpha\times \bbeta\subseteq \aout^{\ex, k+1}_n$ 
to reach the exit of node $n$ then it certainly takes $\leq k$ steps for  
the alias related to \balpha to reach the exit of
the predecessors of node $n$.
The alias at node $n$ is generated from the alias 
$\blhs\times\balpha \subseteq \aout^{\ex, k}_p$ (if $\lhs\not\equiv x$) or 
$\brhs\times\balpha \subseteq \aout^{\ex, k}_p$ at the predecessor $p$ of node $n$.
From the inductive hypothesis, the alias with respect to \balpha at node $p$ is present in both the methods \ex and \cd.
Below are the cases which takes place with respect to the alias computed at predecessors of $n$ :
\begin{itemize}
 \item alias $\blhs\times\balpha\subseteq\aout^{\cd}_p$ is propagated to node $n$ or
 \item alias $\brhs\times\balpha\subseteq\aout^{\cd}_p$ is used to compute derived alias at node $n$.
\end{itemize}
Thus we get $\balpha\times\bbeta\subseteq\aout^{\cd}_n$.

  \item $\alpha \neq \lhs, \alpha \neq \rhs, \beta = \rhs$

  This is a dual of the previous case with \rhs{} and \lhs interchanged.  
%%%  Since, $\balpha\subseteq\dout^{\cd}_n$, demand for \balpha will be raised to the predecessors of $n$.
%%%If it takes $k+1$ steps for the alias $\balpha\times \bbeta\subseteq \aout^{\ex, k+1}_n$ 
%%%to reach \UKadd{the exit of} node $n$ then it certainly takes $\leq k$ steps for  the alias related to \balpha to reach 
%%%predecessors of node $n$.
%%%The alias at node $n$ is generated from the alias 
%%%$\blhs\times\balpha \subseteq \aout^{\ex, k}_p$ (if $\lhs\not\equiv x$) or $\brhs\times\balpha \subseteq \aout^{\ex, k}_p$ at the predecessor $p$ of node $n$.
%%%From the inductive hypothesis, the alias with respect to \balpha at node $p$ is present in both the methods \ex and \cd.
%%%Specifically, below are the cases which takes place with respect to the alias computed at predecessors of $n$ :
%%%\begin{itemize}
%%% \item alias $\blhs\times\balpha\subseteq\aout^{\cd}_p$ is used to compute derived alias at node $n$ or
%%% \item alias $\brhs\times\balpha\subseteq\aout^{\cd}_p$ is propagated to node $n$.
%%%\end{itemize}
%%%Thus we get $\balpha\times\bbeta\subseteq\aout^{\cd}_n$. 
\end{itemize}
\item Propagation of alias from predecessor :  
$\alpha \neq \lhs, \alpha \neq \rhs, \beta \neq \lhs, \beta \neq \rhs$
  
  Since, $\balpha\subseteq\dout^{\cd}_n$, demand for \balpha will be raised to the predecessors of $n$.
If it takes $k+1$ steps for the alias $\balpha\times \bbeta\subseteq \aout^{\ex, k+1}_n$ 
to reach the exit of node $n$ then it certainly takes $\leq k$ steps for  the alias $\balpha\times \bbeta$ to reach the
predecessors $p$ of node $n$.
From the inductive hypothesis, the alias for demand $\balpha$ at the predecessors of $n$ is present in both the methods \ex and \cd.
Statement $n$ propagates the alias computed at the predecessor.
Thus,
$\balpha\subseteq\dout^{\cd}_n \Rightarrow \balpha\times\bbeta\subseteq\aout^{\cd}_n$. 
 \end{itemize}
\end{proof}

%%%%\subsection{Algorithm \id is More Precise than Algorithm \cd}
%%%%
%%%%\begin{lemma}
%%%%Algorithm \id is more precise than algorithm \cd.
%%%%\end{lemma}
%%%%\begin{proof}
%%%%To show that \id is more precise than \cd, we first show in 
%%%%Lemmas~\ref{lemma-id-precise} and~\ref{lemma-id-precise-eg}
%%%%that \id is more precise than \ex.
%%%%Formally,
%%%%\begin{align}
%%%%\forall\prog, \forall \vcnode \in \origin,
%%%%\restrictv{\afun{\vcnode}{\id}}{\dfun{\vcnode}{\id}} &\sqsupseteq 
%%%%\restrictv{\afun{\vcnode}{\ex}}{\dfun{\vcnode}{\id}}
%%%%\wedge
%%%%\\
%%%%\exists\prog, \exists \vcnode\in\origin, 
%%%%\restrictv{\afun{\vcnode}{\id}}{\dfun{\vcnode}{\id}} &\sqsupset 
%%%%\restrictv{\afun{\vcnode}{\ex}}{\dfun{\vcnode}{\id}}
%%%%\end{align}
%%%%
%%%%Subsequently, we show in Lemmas~\ref{lemma-cd-precise} 
%%%%and~\ref{lemma-cd-ex-equiv}
%%%%that \ex and \cd are equiprecise.
%%%%Formally,
%%%%\begin{align}
%%%%\forall\prog, \forall \vcnode \in \origin, 
%%%%\restrictv{\afun{\vcnode}{\cd}}{\dfun{\vcnode}{\cd}} &\sqsupseteq 
%%%%\restrictv{\afun{\vcnode}{\ex}}{\dfun{\vcnode}{\cd}}
%%%%\wedge
%%%%\\
%%%%\forall\prog, \forall \vcnode \in \origin, 
%%%%\restrictv{\afun{\vcnode}{\cd}}{\dfun{\vcnode}{\cd}} &\sqsubseteq 
%%%%\restrictv{\afun{\vcnode}{\ex}}{\dfun{\vcnode}{\cd}}
%%%%\end{align}
%%%%
%%%%The set of virtual call statements \origin is identical using both the methods \id and \cd.
%%%%Thus, $\forall \vcnode\in\origin, \dfun{\vcnode}{\id} = \dfun{\vcnode}{\cd}$.
%%%%Let us represent this set as $\dfun{\vcnode}{}$.
%%%%Combining above results proves the lemma.
%%%%\end{proof}

\subsection{Demand-driven Alias Analysis is More Precise than Algorithm \ex for Java}
\label{sec:java.precision}

 Let \vecfjd be the vector of monotonic flow functions that computes
aliases and demands using demand-driven method for Java.  
This is obtained by
 incorporating minor changes in the data flow equations presented for
 \vecfid.  Let the resulting algorithm for demand-driven alias 
 analysis of Java programs be called $\jd$.  Note that there is no
 need for speculation in Java, so demand for address-of a variable
 will not be raised.  Java does not have access expressions of the
 form $*x$ or $x\rightarrow f$.  Thus the computation of abstract names changes
and is described in Table~\ref{table-dn-Java}.  We
 present the modified equations for \ldgen and \rdgen below.
 \begin{align*}
\ldgen(\rhs,\aliasing) & = 
\begin{cases}
	\{x\} \cup \dn(\rhs, \aliasing) & \rhs \equiv x.f 
	\\
	\{x\}	&	\rhs \equiv x
	\\
	\emptyset 	& 		\text{otherwise}
	\end{cases}
	\\
\rdgen(\lhs) & = \{x\mid \lhs \equiv x.f\} 
\end{align*}

\begin{thm}
Algorithm \jd is more precise than algorithm \ex for Java programs.
\end{thm}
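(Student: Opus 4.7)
The proof will follow the template established by Theorem~\ref{thm:precision} for \id versus \ex, splitting the argument into two parts: (i) \jd is at least as precise as \ex on every Java program, and (ii) there exists a Java program where \jd is strictly more precise than \ex.

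For part (i), the plan is to apply Corollary~\ref{corr:dd-ex} directly. What I need to verify is that the vector of flow functions \vecfjd obtained by using the modified \ldgen and \rdgen for Java is monotonic, and that the exhaustive analog \ex for Java corresponds to $\vecgex(\aval) = \veca^{\jd}(\aval, \dval^{\bot})$. Monotonicity follows by the same argument as Lemma~\ref{lemma:monotonicff}: set union and the alias-generation function are monotone in both arguments, $\AKill_n$ and $\dkill_n$ remain constant, and the modified \ldgen/\rdgen for Java only drop cases (the $*x$, $x\rightarrow f$, and $\&x$ branches), so monotonicity is preserved. The correspondence with \ex follows because an exhaustive alias analysis for Java is obtained by forcing every access expression to be in demand at every node, which is exactly what $\dval^{\bot}$ encodes. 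Corollary~\ref{corr:dd-ex} then yields $\aex \sqsubseteq \afun{}{\jd}$ pointwise at every node, which in particular gives the required inclusion after applying \Restrict with the demand at any $\vcnode \in \origin$.

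For part (ii), I would construct a small Java witness program. The example should involve type-based abstraction and two variables that are not aliased in reality but become spuriously aliased under the data abstraction in \ex, where the spurious aliasing affects a virtual call resolution. A direct Java translation of Figure~\ref{m-eg-type} works: take two variables $x,y$ of type $X$ that are assigned distinct allocations, store $Y$ into $x.f$ and $Z$ into $y.f$, and then resolve a virtual call on $z.f$ where $z$ refers only to the object reachable from $x$ (with no indirect assignments since Java has no `\&' or $*p$). In \ex, abstracting both $x$ and $y$ to type $X$ forces $x.f$ and $y.f$ to share their abstract name, so the pointees of $z.f$ include both $Y$ and $Z$. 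Under \jd, starting from the demand for $z$, no demand is ever raised for $y$ (because raising demands against indirect assignments is unnecessary in Java), so $y$ and $x$ are never merged into a common node alias, and the computed pointees of $z.f$ exclude $Z$. This shows $\restrict{\afun{\vcnode}{\ex}}{\dfun{\vcnode}{\jd}} \not\sqsubseteq \restrict{\afun{\vcnode}{\jd}}{\dfun{\vcnode}{\jd}}$, so the two algorithms are not equiprecise.

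The main obstacle is making the witness program convincingly Java-idiomatic while preserving the essential precision-gap mechanism that was exploited in the C++ setting. In the C++ example, the imprecision arose via indirect assignments $*p = x$; Java lacks those, so I must rely purely on the data-abstraction-induced redundant node aliases between two separately allocated variables of the same type whose $f$ fields are assigned incompatible subtypes. I should also be careful to check that the modified \ldgen/\rdgen for Java do indeed keep $y$ out of the demand set throughout the analysis on the witness, by tracing the demand propagation: starting with $\{z\}$ at the call site, walking back through the assignment of $z$ from $x.f$ raises demand for $x$ and $x.f$, and neither of these ever triggers a demand for $y$ since there is no statement relating $y$ to $x$ directly and no address-of speculation applies. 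Once this trace is written out, the lemma (and hence the theorem) follows.
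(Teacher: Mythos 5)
Your proposal follows the paper's proof essentially step for step: part (i) is the paper's appeal to the corollary of the \id-versus-\ex precision lemma (monotonicity of \vecfjd together with $\vecgex(\aval) = \veca^{\jd}(\aval, \dval^{\bot})$), and part (ii) is the paper's witness, namely the Java translation of the motivating C++ example, where the demand $(\{t\},\emptyset)$ at the virtual call yields the alias $(t,\&Z)$ under \ex but not under \jd because no demand is ever raised for $y$. One small correction: the final displayed non-containment has its operands swapped --- what the witness establishes is $\restrict{\afun{\vcnode}{\jd}}{\dfun{\vcnode}{\jd}} \not\sqsubseteq \restrict{\afun{\vcnode}{\ex}}{\dfun{\vcnode}{\jd}}$, i.e.\ that \ex is not at least as precise as \jd, whereas the containment you wrote as failing is precisely the one that part (i) proves to hold.
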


\begin{proof}
Follows from lemma~\ref{lemma:java:jd-ex} and
lemma~\ref{lemma:java:jd-ex-eg}, proved below.
\end{proof}

\begin{lemma}
%Demand-driven method is precise than exhaustive method for Java.
%$\restrict{\ajd}{\djd} \sqsupseteq \restrict{\aex}{\djd}$
Algorithm \jd is at least as precise as algorithm \ex for Java programs.
  \label{lemma:java:jd-ex}
\end{lemma}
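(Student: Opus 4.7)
The plan is to mimic the structure of the proof of Lemma~\ref{lemma-id-precise} and invoke Corollary~\ref{corr:dd-ex}, since that corollary was stated in a deliberately generic form precisely to handle variants such as \jd. The corollary requires two things: that the flow-function vector \vecfjd be monotonic, and that the exhaustive flow function satisfy $\vecgex(\aval) = \veca^{\jd}(\aval, \dval^{\bot})$. Once these two facts are in place, the inequality $\aex \sqsubseteq \ajd$ follows immediately, which (by the definition of precision for our virtual-call target demands $\dfun{\vcnode}{} = (\{x\},\emptyset)$ and the fact that {\sf Restrict} is monotone in its first argument) gives that \jd is at least as precise as \ex.

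First I would establish monotonicity of \vecfjd by revisiting the argument of Lemma~\ref{lemma:monotonicff} and checking that the modifications for Java do not disturb it. The only changes are (i)~the simpler \ldgen and \rdgen shown immediately before this lemma, and (ii)~the simpler abstract-name function from Table~\ref{table-dn-Java}. Both are manifestly monotone: \rdgen is a constant restriction based on the syntactic form of \lhs, and \ldgen is a union of constant and monotone terms (the \dn lookup is monotone in \aliasing for Java since no speculation via $\addrexpr$ is present). The equations for $\ain, \aout, \din, \dout$ inherit their structure unchanged from Equations~(\ref{eq:ain})--(\ref{eq:akill}), so the monotonicity argument of Lemma~\ref{lemma:monotonicff} carries over verbatim.

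Next I would verify the key identity $\vecgex(\aval) = \veca^{\jd}(\aval, \dval^{\bot})$. This is the Java analogue of the corresponding identity used for \id, and holds for essentially the same reason: the exhaustive analysis \ex is nothing but the alias component of the demand-driven analysis evaluated under the assumption that every access expression is demanded at every program point. In the Java flow functions, the demand $\dout$ appears only as a guard controlling which generated aliases are retained (via the $\blhs \subseteq \dout$ test in $\AGen$); setting $\dval = \dval^{\bot}$ makes every such guard trivially satisfied, so $\veca^{\jd}(\cdot, \dval^{\bot})$ computes all aliases---which is precisely \ex. The absence of the `\&' operator in Java actually makes this correspondence cleaner than in the C/C++ case, because there is no $\addrexpr$-generated demand to worry about in the correspondence.

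With these two facts, Corollary~\ref{corr:dd-ex} yields $\forall n \in \nodes : \afun{n}{\ex} \sqsubseteq \afun{n}{\jd}$, and restricting to $\dfun{\vcnode}{}$ at each virtual call node $\vcnode \in \origin$ (using monotonicity of \Restrict in its first argument, as in Lemma~\ref{lemma:restrict:distributive}) gives the required $\restrict{\afun{\vcnode}{\ex}}{\dfun{\vcnode}{}} \sqsubseteq \restrict{\afun{\vcnode}{\jd}}{\dfun{\vcnode}{}}$. The main obstacle, if any, is purely a bookkeeping one: making sure the Java equations for \ldgen and \rdgen (together with the Table~\ref{table-dn-Java} abstraction) still satisfy the two hypotheses of Corollary~\ref{corr:dd-ex}; there is no conceptual difficulty, and the proof should be appreciably shorter than Lemma~\ref{lemma-id-precise} since no speculation mechanism needs to be tracked.
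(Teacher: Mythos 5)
Your proposal is correct and follows essentially the same route as the paper: the paper's proof is a one-line appeal to Corollary~\ref{corr:dd-ex}, citing monotonicity of \vecfjd and the identity $\vecgex(\aval) = \vecfjd(\aval, \dval^{\bot})$, which are exactly the two hypotheses you verify. Your additional detail on why the Java versions of \ldgen, \rdgen, and the abstract-name function preserve these hypotheses is more explicit than the paper but not a different argument.
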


\begin{proof}
Follows from corollary~\ref{corr:dd-ex} of lemma~\ref{lemma-id-precise}
since function \vecfjd is a monotonic flow function, 
and $\vecgex(\aval) = \vecfjd{(\aval, \dval^{\bot})}$.

\end{proof}
\begin{table*}
\caption{Abstract names for different access expressions}
\label{table-dn-Java}
\footnotesize
 \begin{tabular}{|c|c|c|c|c|}
\hline
$\alpha$	
% 	& p
	& $x$
%	& $\&x$
%	& $\&a$
%	& $*y$
%	& $x\rightarrow f$
	& $x.f$
	& new $\tau$
	& null
		\\ \hline\hline

$\dn (\oneae, \aliasing)_{\text{\em\sf asb}}$
\rule[-.6em]{0em}{1.6em}
% 	& \{ p\}
	& $\{ x\}$
%	& $\{ \&x\}$
%	& $\{ \&a\}$
%	& $\{ x\mid (y, \&x) \in \aliasing\}$
%	& $\{ a.f\mid (x, \&a) \in \aliasing\}$
	& $\{a.f\mid (x, \&a) \in \aliasing\}$
	& $\{\&\allocsite\}$
	& $\emptyset$
		\\ \hline
$\dn (\oneae, \aliasing)_{\text{\em\sf tba}}$
\rule[-.6em]{0em}{1.6em}
% 	& \{ p\}
	& $\{ x\}$ 
%	& $\{ \&x\}$
%	& $\{\&\tau\}$
%	& $\{x \mid (y, \&x) \in \aliasing\}$
% 	& {\red \{\&\tau \mid (x, \&\tau) \in A \}}
%	& $\{ \tau.f\mid (x, \&\tau) \in \aliasing\}$ 
	& $\{\tau.f\mid (x,\&tau) \in \aliasing\}$
	& $\{\&\tau\}$
	& $\emptyset$
		\\ \hline
 
 \end{tabular}

\end{table*}

 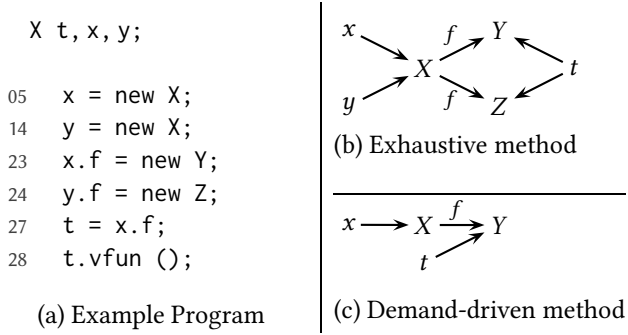
\begin{figure}[!t]
\centering
% \begin{tabular}{c}
\begin{tabular}{@{}c|c@{}}
\begin{tabular}{@{}c@{}}

\setlength{\codeLineLength}{35mm}
	\begin{tabular}{@{}r@{}c}
	\codeLine{}{0}{X t,\,x,\,y; \rule{0em}{1.25em}}{white}
\\
	\codeLine{05}{1}{{x = new X;}}{white}
	\codeLine{14}{1}{{y = new X;}}{white}
	\codeLine{23}{1}{{x.f = new Y;}}{white}
	\codeLine{24}{1}{{y.f = new Z;}}{white}
	\codeLine{27}{1}{{t = x.f;}}{white}
	\codeLine{28}{1}{{t.vfun ();}}{white}
	
	\end{tabular}
	\\
	(a) Example Program
	\rule{0em}{1.75em}
	\end{tabular}
& %%% 3rd column
\begin{tabular}{@{}l@{}}
 	\begin{pspicture}(0,0)(33,15)
 %  	\psframe(0,0)(33,15)
	\psset{nodesep=-1.5}
	
	\putnode{x}{origin}{2}{12}{\pscirclebox[linestyle=none]{$x$}}
%	\putnode{p}{z}{-10}{0}{\pscirclebox[linestyle=none]{$p$}}
	\putnode{y}{x}{0}{-10}{\pscirclebox[linestyle=none]{$y$}}

	\putnode{A}{x}{10}{-5}{\pscirclebox[linestyle=none]{$X$}}
	\putnode{B}{A}{10}{5}{\pscirclebox[linestyle=none]{$Y$}}
	\putnode{C}{A}{10}{-5}{\pscirclebox[linestyle=none]{$Z$}}
	\putnode{t}{A}{20}{0}{\pscirclebox[linestyle=none]{$t$}}
	
	\ncline[nodesep=-1]{->}{x}{A}
	\ncline[nodesep=-1]{->}{y}{A}
	\ncline[nodesep=-1]{->}{A}{B}
	\aput[1pt](.4){\footnotesize$f$}
	\ncline[nodesep=-1]{->}{A}{C}
	\bput[1pt](.4){\footnotesize$f$}
	\ncline[nodesep=-1]{->}{t}{B}
	\ncline[nodesep=-1]{->}{t}{C}
	\end{pspicture} \\
	(b) Exhaustive method
\rule[-1.5em]{0em}{1em}
	\\\hline
		 	\begin{pspicture}(0,0)(30,12)
%   	\psframe(0,0)(30,12)
	\psset{nodesep=-1.5}
	
	\putnode{x}{origin}{2}{8}{\pscirclebox[linestyle=none]{$x$}}
	\putnode{t}{x}{10}{-5}{\pscirclebox[linestyle=none]{$t$}}

	\putnode{A}{x}{10}{0}{\pscirclebox[linestyle=none]{$X$}}
	\putnode{B}{A}{10}{0}{\pscirclebox[linestyle=none]{$Y$}}
	
	\ncline[nodesep=-1]{->}{x}{A}
	\ncline[nodesep=-1]{->}{t}{B}
	\ncline[nodesep=-1]{->}{A}{B}
	\aput[1pt](.4){\footnotesize $f$}
	\end{pspicture}
	\\
	(c) Demand-driven method
\end{tabular}

\end{tabular}
% \end{tabular}
\caption{Example illustrating precision gain in demand-driven method over exhaustive method in Java.
Points-to graph using type-based abstraction for static resolution of
virtual call at line 28. 
Objects and allocation sites are annotated by their respective 
types in points-to graph shown in (b) \& (c).
Virtual function \emph{vfun} is defined in all the classes $\{X,Y,Z\}$.
Member $f$ is pointer to class $X$ and is declared in class $X$.
Class hierarchy is $X \rightarrow Y \rightarrow Z$.
}
\label{m-eg-java}
\end{figure}

 \begin{lemma}
%   Algorithm \jd is strictly more precise than algorithm \ex for at least one program.
Algorithm $\jd$ and 
algorithm $\ex$ are not equiprecise.
  \label{lemma:java:jd-ex-eg}
 \end{lemma}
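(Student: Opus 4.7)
The plan is to prove the lemma by exhibiting a concrete Java program where algorithm $\jd$ computes strictly fewer (and thus more precise) aliases than algorithm $\ex$ at a node containing a virtual call. Since Lemma~\ref{lemma:java:jd-ex} already gives us $\afun{n}{\jd} \sqsubseteq \afun{n}{\ex}$ after restriction to the relevant demand, it suffices to produce a single witness program and node where the containment is strict.

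The natural witness is the Java program in Figure~\ref{m-eg-java}, which is the Java counterpart of the C++ example from Figure~\ref{m-eg-type} used in Lemma~\ref{lemma-id-precise-eg}. I would focus on the virtual call \texttt{t.vfun()} at line 28, where the relevant demand is $\dfun{28}{\jd} = (\{t\}, \emptyset)$. First I would trace the exhaustive computation: because $\ex$ computes aliases for all pointer expressions flow-insensitively with respect to demand, under type-based abstraction both $x$ and $y$ acquire abstract name $X$ (from statements 05 and 14), so $x$ and $y$ become node aliases; consequently, $x.f$ and $y.f$ become link aliases with abstract name $X.f$, and the writes at lines 23 and 24 jointly make $X.f$ point to both $Y$ and $Z$. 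Therefore the alias pair $(t, \&Z) \in \ain^{\ex}_{28}$.

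Next I would trace the demand-driven computation under $\jd$. Starting from the demand $\{t\}$ at node 28, $\ldgen$ propagates backward through line 27 (since \texttt{t = x.f}) to raise the demand for $x$ and for the abstract name $X.f$; since Java has no \texttt{\&} operator, no additional speculated demand for $\&x$ or for any pointer-taken variable is raised (this is the crucial difference with the C/C++ case, where conventional speculation would pull in $y$). Consequently, the demand for $y$ is never raised anywhere in the program, so by Equation~\ref{eq:agen} the alias pair involving $y$ and $X$ is never generated. The abstract-name closure $X.f$ that reaches node 27 therefore includes only the pointee $Y$ contributed by statement 23 (since $x$'s alias with object $X$ is the only one present), and we obtain $(t, \&Y) \in \ain^{\jd}_{28}$ but $(t, \&Z) \notin \ain^{\jd}_{28}$.

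The step that requires the most care is arguing that statement 24 indeed does not contribute to $\aout^{\jd}_{24}$ for the demand reaching that node; this needs a short appeal to Equation~\ref{eq:agen}, observing that neither $\blhs_{24} = X.f$ nor an alias generator involving $y$ becomes active because the demand backward-propagated from node 28 never reaches $y$ (there is no statement of the form $x = y$ or similar that would introduce $y$ into the demand via $\ldgen$ or $\rdgen$). Once this is established, Lemma~\ref{lemma:java:jd-ex} gives one direction of containment and the witness gives strictness, so $\jd$ and $\ex$ are not equiprecise, completing the proof.
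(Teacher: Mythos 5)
Your proposal is correct and follows essentially the same route as the paper: the paper's proof of Lemma~\ref{lemma:java:jd-ex-eg} likewise exhibits the program of Figure~\ref{m-eg-java} as a witness, observes that $(t,\&Z)\in \ain^{\ex}_{28}$ while $(t,\&Z)\not\in \ain^{\jd}_{28}$ for the demand $\dfun{28}{\jd}=(\{t\},\emptyset)$, and concludes non-equiprecision. Your additional tracing of why the demand for $y$ is never raised under \jd is a more detailed justification of the same claim the paper asserts directly (your only slip is calling \ex ``flow-insensitive''---it is flow-sensitive but demand-insensitive, i.e.\ demands set to $\bot$).
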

 \begin{proof}
   We prove the lemma by providing an example to show that
algorithm \jd is strictly more precise than algorithm \ex.
Consider the
   example shown in Figure~\ref{m-eg-java}.  We refer to the alias
   and demand computed at line 28, where the virtual call statement is 
\text{$t.{\text{\em vfun}}()$}.  We consider the demand
   \text{$\dfun{28}{\jd} = (\{t\}, \emptyset)$}, 
and note that the alias pair
   \text{$(t, \&Z)\in \ain^{\ex}_{28}$}. 
However, \text{$(t, \&Z)\not\in \ain^{\jd}_{28}$}.  
Thus, algorithm \jd and algorithm \ex are not equiprecise.
\end{proof}

\section{Implementation and Measurements}
\label{sec:measurements}
In this section we present the implementation details, benchmark characteristics, and
efficiency and precision metrics for the different variants implemented.

\subsection{Implementation Details}
We have implemented the following three methods of alias analysis with data abstraction:
\begin{enumerate}[(a)]
\item  exhaustive method defined in Section~\ref{sec:precision-proof} (called \ex),
\item conventional demand-driven method defined in Section~\ref{sec:precision-proof} (called \cd), and 
\item our proposed demand-driven method with improved speculation defined in Section~\ref{sec:formulation} (called \id).
\end{enumerate}
We have used the following two data abstractions for modelling heap in
each of the above methods: 
\begin{enumerate}[(a)]
 \item type-based abstraction (called \tba), and 
 \item allocation-site-based abstraction (called \asb).
\end{enumerate}
This leads to six variants of alias analysis with data abstraction.
Our implementation is flow-sensitive, field-sensitive and context-insensitive for each of the six variants.

We have used a machine with 8 GB RAM with four 64-bit Intel i7 CPUs with 2.90GHz clock running Ubuntu 12.04.
We have implemented our analyses in GCC 4.7.2 as
interprocedural passes using the LTO (Link Time
Optimization) framework in order to perform whole program analysis.

\begin{table}[!t]
\begin{center}
\caption{Characteristics of benchmark programs. FP denotes indirect call-sites through function pointers}
\label{benchmark-details}
\begin{tabular}{|c|l|r|r|r|r|r|r|}
\hline
\multicolumn{2}{|c|}{\multirow{3}{*}{Program}}  & \multirow{3}{*}{LoC}  & \multicolumn{2}{c|}{\#
Functions} & \multicolumn{3}{c|}{\# Call sites} 
		\\ \cline{4-8}
\multicolumn{2}{|c|}{}			  &  & Total & Virtual & Total & Virtual & 
FP
	\\\hline
\hline

% Program  		& LoC  		& #Functions 	    	& # Call sites   	& CH Depth & Analysis time in ms    \\ \cline{3-8}
% 			&  	 	& Total & Virtual 	& Total & Virtual 		   & PTCA & PTA \\
% 
\multirow{4}{*}{
\begin{tabular}{c}
Open Source \\
GNU projects
\end{tabular}
} &
gengetopt	& 50280		& 247 	& 31		& 748	& 30 & 0	    	 	\\\cline{2-8}
%			&		&	&		&	&    &		  		\\\hline
& gperf 		& 6528	 	& 213 	& 22 		& 776 	& 31 & 0	   		\\\cline{2-8}
%			&		&	&		&	&    &		 		\\\hline
& gengen		& 18380		& 917	& 139		& 1061	& 55 & 2	   		\\\cline{2-8}
%			&		&	&		&	&    &		 		\\\hline
& motti		& 9294		& 1112	& 65		& 3383	& 243 & 1	 	  	\\\hline\hline
%			&		&	&		&	&    &		 		\\\hline
\multirow{4}{*}{SPEC CPU 2006} &
soplex		& 28277		& 1690	& 340		& 6324	& 402 & 2	   		\\\cline{2-8}
%			&		&	&		&	&    &		 		\\\hline
& povray		& 78705		& 2021	& 64		& 13058	& 105 & 52	   		\\\cline{2-8}
%			&		&	&		&	&    &		 		\\\hline
& dealII		& 92542		& 10074	& 1485		& 17932	& 134 & 8	   		\\\cline{2-8}
%			&		&	&		&	&    &		 		\\\hline
& omnetpp		& 27131		& 2505	& 747		& 8427	& 314 & 23	   		\\\hline\hline
%			&		&	&		&	&    &		 		\\\hline
%
\multirow{7}{*}{ITK} &
MSO                        &  63460 	& 4778	& 1601	& 6692 & 2554 & 4		\\\cline{2-8}
% (MSO)						&		&	&	&	&	& \\\hline
& RWSO                  &  61984	& 4602	& 1484	& 5193 & 1944 & 3	\\\cline{2-8}
% (RWSO)						&		&	&	&	&	& \\\hline
% 
& SOTISC &  34108	& 2614  & 720   & 2196  & 991  & 5	\\\cline{2-8}
% (SOTISC)						&		&	&	&	&	& \\\hline
& ICP3                      &  33439	& 3110  & 764   & 2331  & 943  & 8	\\\cline{2-8}
% (ICP3)						&		&	&	&	&	& \\\hline
%
% KdTreeBasedKMeansClustering                   &  23118	& 1841  & 481   & 1649  & 644   & 4\\\hline
% BayesianPluginClassifier                      &  21157	& 1738  & 418   & 1541  & 619   & 4\\\hline
%
% ImageMaskSpatialObject                        &  30008	& 2479  & 650   & 1342  & 590   & 3\\\hline
& ISO                         &  29108	& 2441  & 638   & 1295  & 582   & 3\\\cline{2-8}
% (ISO)						&		&	&	&	&	& \\\hline
% IterativeClosestPoint2                        &  34257	& 2838  & 579   & 1572  & 578   & 8	\\\hline
& EMMME  &  21584	& 1824  & 440   & 1658  & 575   & 4\\\cline{2-8}
% (EMMME)						&		&	&	&	&	& \\\hline
% IterativeClosestPoint1                        &  32160	& 2762  & 542   & 1485  & 561   & 8\\\hline
& ITLSA                 &  24863	& 2295  & 519   & 1402  & 529   & 5\\	\hline
% (ITLSA)						&		&	&	&	&	& \\\hline
% 
% WeightedSampleStatistics                      &  19987	& 1511  & 395   & 1258  & 469   & 4\\\hline
% TubeSpatialObject                             &  26151	& 2352  & 573   & 1016  & 451   & 3\\\hline
% SpatialObjectTreeContainer                    &  26474	& 2476  & 649   & 1080  & 448   & 4\\\hline
% VesselTubeSpatialObject                       &  26428	& 2364  & 579   & 1021  & 439   & 3\\\hline
% DTITubeSpatialObject                          &  26455	& 2388  & 579   & 1071  & 435   & 3\\\hline
% ArrowSpatialObject                            &  26566	& 2377  & 567   & 972   & 429   & 3\\\hline
% 

% gperf-3.0.4 		& 6528	 	& 213 	& 22 		& 776 	& 31 & 0	   		& 20.20 	& 164.06	\\
% 			&		&	&		&	&    &		 		& (8.12)	& 		\\\hline
% gengetopt-2.22.6	& 50280		& 247 	& 31		& 748	& 30 & 0	    	 	& 18.97 	& 60.07	  	\\
% 			&		&	&		&	&    &		  		& (3.17)	&          	\\\hline
% motti-3.1.0		& 9294		& 1112	& 65		& 3383	& 243 & 1	 	  	& 2337.37	& 893434.30	\\
% 			&		&	&		&	&    &		 		& (382.24)	&          	\\\hline
% 450.soplex		& 28277		& 1690	& 340		& 6324	& 402 & 2	   		& 25785.20	& 325808.11	\\
% 			&		&	&		&	&    &		 		& (12.64)	&          	\\\hline
% 453.povray		& 78705		& 2021	& 64		& 13058	& 105 & 52	   		& 3955.82	& 177463.54	\\
% 			&		&	&		&	&    &		 		& (44.86)	&          	\\\hline

\end{tabular}
\end{center}
\end{table}

\subsection{Benchmark Details}
\begin{table*}[!t]
\begin{center}
\caption{Analysis time (in ms) for different analyses}	
\label{time-data}
\begin{tabular}{|l|r|r|r||r|r|r|}
\hline
\multirow{2}{*}{Program} & \multicolumn{3}{c||}{\tba} &\multicolumn{3}{c|}{\asb} \\\cline{2-7}
& \multicolumn{1}{c|}{\id} & \multicolumn{1}{c|}{\cd} & \multicolumn{1}{c||}{\ex}& \multicolumn{1}{c|}{\id} & \multicolumn{1}{c|}{\cd} & \multicolumn{1}{c|}{\ex} 
\\\hline\hline

gengetopt-2.22.6	& 285		& 608		& 939		& 388 		& 630		& 1452		\\\hline
gperf-3.0.4 		& 94		& 82		& 108 		& 122		& 98		& 678		\\\hline
gengen-1.4.2		& 580		& 689		& 16384		& 681   	& 35076		& 38331     	\\\hline
motti-3.1.0 		& 1620	 	& 12945		& 870894 	& 4697	 	& 14529		& 909710 	\\\hline\hline
450.soplex		& 25781		& 108752	& 442668 	& 325808 	& 181503	& 540879 	\\\hline		
453.povray		& 12956		& 23898		& 44767		& 105171 	& 174075	& 370408 	\\\hline
447.dealII		& 120869	& 586811	& 2220179	& 358093 	& 610866	& 9535619 	\\\hline
471.omnetpp		& 141027	& 300127	& 408945	& 189240  	& 444046	& 8812114   	\\\hline\hline
MSO         	 	& 49702		& 87234		& 137123	& 136599 	& 320518	& 5581965 	\\\hline
RWSO 			& 49256		& 58887		& 128457	& 116221 	& 219636	& 3324397 	\\\hline
SOTISC      		& 3096		& 4686		& 5171		& 5112   	& 8172		& 359291	\\\hline
ICP3                   	& 1288		& 2877		& 3428		& 1320   	& 4395		& 424310	\\\hline
ISO               	& 1176		& 1271		& 1375		& 1349 		& 3052		& 90486		\\\hline
EMMME  		   	& 661		& 449		& 938		& 1067 		& 4352		& 77684		\\\hline
ITLSA                  	& 285 		& 240		& 611		& 372 		& 2392		& 6487		\\\hline

\end{tabular}
\end{center}
\end{table*}

We have evaluated the effectiveness of the six variants on a variety of C++ programs from 
open source GNU projects\footnote{
Available for download at : 
\htmladdnormallink{https://www.gnu.org/software/software.html}{https://www.gnu.org/software/software.html}
}, SPEC CPU 2006 benchmarks\footnote{
Details can be found at : 
\htmladdnormallink{https://www.spec.org/cpu2006/}{https://www.spec.org/cpu2006/}
}, and 
the Insight Segmentation and Registration Toolkit (ITK)\footnote{
Available for download at : 
\htmladdnormallink{https://itk.org/}{https://itk.org/}
} from the world of graphics.
SPEC CPU 2006 benchmarks are the standard set of benchmarks used for empirical measurements.
Since it contains few benchmarks for C++, we explored applications from open source 
GNU Projects. However, the applications we considered had less number of virtual function calls.
Our choice to include ITK toolkit
was dictated by the use of virtual function calls in C++ programs because our experiments
are done in the context of static resolution of virtual function calls.
The programs in ITK toolkit have a much larger number
of virtual function calls than the programs  
in the SPEC CPU 2006 benchmark suite.

The details of our programs
are shown in Table~\ref{benchmark-details}.
%%\begin{itemize}
%%\item 
%%Programs 
%%gengetopt-2.22.6, gperf-3.0.4, gengen-1.4.2 and motti-3.1.0 are open source GNU projects.
%%\item Programs
%%450.soplex, 453.povray, 447.dealII and 471.omnetpp are from SPEC CPU 2006 benchmark suite.
%%Programs
%%MeshSpatialObject (MSO), ReadWriteSpatialObject (RWSO), SpatialObjectToImageStatisticsCalculator (SOTISC),
%%IterativeClosestPoint3 (ICP3), ImageSpatialObject (ISO), ExpectationMaximizationMixtureModelEstimator (EMMME) and
%%ImageToListSampleAdaptor (ITLSA)
%%% remaining 7 
%%are example programs illustrating important concepts from the Insight Segmentation and Registration Toolkit (ITK).
%%\end{itemize}
%%
%%
We have not considered \text{444.namd} and \text{473.astar} SPEC CPU 2006 benchmark programs because they do not contain virtual 
call-sites. 
Also, we do not present data for the benchmark 483.xalancbmk from SPEC CPU 2006 because the size of the program is 275kLoC and 
our current implementation does not scale to it. 

\subsection{Empirical Observations}

In this section we describe our observations.

\subsubsection{Time Measurements}
Table~\ref{time-data} presents the time taken by each of the six variants. It is clear that
our proposed method \id is far more efficient than \ex{} with a 
speedup factor greater than two in 11 and 14 cases using abstractions \tba and \asb respectively.
It also achieves a speedup as high as 537 in case of motti using \tba and a speedup of 321 in case of ICP3 using \asb.
It is faster than \cd in most cases (12 under \tba and 13 under \asb) and slower than \cd in some cases (three under \tba and 
two under \asb). 
In the cases where it is slower, the performance degradation is not greater than two; in the cases where it is faster, the
improvement is significant going to a factor 
of seven in case of motti using \tba and 51 in case of gengen using \asb.

% % % PTCA-DD takes less time for computation because it constructs smaller graphs at each program point as shown in the Table~\ref{graph-size-data}.
% % % The numbers indicate the total count of graph nodes and edges constructed at each program point.
% % % The compression factor is greater than 1.5 in most of the cases when compared with both PTA-DD and PTCA-Ex.
% % % gengetopt-2.22.6 reports the highest reduction in the number of nodes and edges in PTA-DD
% % % by a compression factor as high as 7.95 and 14.38 respectively.
% % % gengetopt-2-22-6 also reports the highest compression factor 44.52 and 60.05 in the number of nodes and edges constructed as compared to PTCA-Ex.

\subsubsection{Precision Measurements}
In order to understand the precision of the variants implemented, we have measured the following data.
\begin{itemize}
\item Count of monomorphic call sites discovered (Table~\ref{devirt-data1}).
\item Count of virtual call edges identified in the call graph (Table~\ref{devirt-data2}).
\item Count of class types identified for objects used in the program (Table~\ref{class-count-data}).
\end{itemize}

The first metric indicates the number of cases in which a virtual call has been completely resolved 
eliminating
the need of indirect lookup in the virtual function table as such calls can be 
translated to direct calls. 
A larger number indicates more precise information.
The second metric reports the count of functions that can be invoked at the virtual function calls in the program.
Precision of this information has
an impact on the interprocedural analysis using such a call graph.
A smaller number indicates more precise information.
The third metric presents the count of types identified by all the objects used in the program.
Such an information is useful for type dependent clients like may-fail cast, or resolving other run time type information like use of
\emph{typeid} in C++ or \emph{instanceof} in Java~\cite{lhotak, Wang:2001:PCT:646158.679874, Sridharan:2013:AAO:2554511.2554523, Balatsouras:2013:CHC:2509136.2509530}.
For this metric also, a smaller number indicates more precise information.

\begin{table}
\begin{center}
\caption{Data related to virtual call-sites}
\label{devirt-data}

\begin{subtable}{0.45\linewidth}
               \centering

\caption{Count of monomorphic call-sites discovered. (A larger number indicates more precise information.)}
\label{devirt-data1}
\begin{tabular}{|l|r|r||r|r|}
\hline
% Program  		& LoC  		& #Functions 	    	& # Call sites   	& Analysis time in ms    \\ \cline{3-8}
% 			&  	 	& Total & Virtual 	& Total & Virtual 	& IUF & WUF \\
\multirow{2}{*}{Program} & \multicolumn{2}{c||}{\tba} &\multicolumn{2}{c|}{\asb} \\\cline{2-5}
  & \multicolumn{1}{c|}{\id} & \multicolumn{1}{c||}{\ex/\cd} & \multicolumn{1}{c|}{\id} & \multicolumn{1}{c|}{\ex/\cd} \\\hline\hline
gengetopt & 30 & 30 & 30 & 30 \\\hline
 gperf  & 18 & 18 & 18 & 18 \\\hline
 gengen & 54 & 51 & 54 & 51\\\hline
 motti   & 235 & 231 & 235 & 231 \\\hline\hline
 soplex   & 364 & 351 & 364 & 356\\\hline
 povray & 98 & 85 & 98 & 85\\\hline
 dealII   & 121 & 117 & 121 & 117 \\\hline
 omnetpp  & 313 & 313 & 313 & 313\\\hline\hline
MSO                        &2402 & 2394 &  2402 & 2395\\\hline
RWSO                      & 1815  & 1810   & 1815 &1815  \\\hline
SOTISC 		          & 934  & 929 & 934   & 934 \\\hline
ICP3                        &887 & 807 & 887  & 887\\\hline    
ISO                          & 547  & 537 & 548  & 547\\\hline
EMMME                      & 532 & 491 & 532 & 532\\\hline   
ITLSA                        & 497 & 490  & 497 & 496\\\hline                  
\end{tabular}

           \end{subtable}%
           \hspace*{.8em}
           \begin{subtable}{0.48\linewidth}
               \centering
\caption{Count of virtual call edges in the call graph. (A smaller number indicates more precise information.)}
\label{devirt-data2}
\begin{tabular}{|l|r|r||r|r|}
\hline
% Program  		& LoC  		& #Functions 	    	& # Call sites   	& Analysis time in ms    \\ \cline{3-8}
% 			&  	 	& Total & Virtual 	& Total & Virtual 	& IUF & WUF \\
\multirow{2}{*}{Program} & \multicolumn{2}{c||}{\tba} &\multicolumn{2}{c|}{\asb} \\\cline{2-5}
  & \multicolumn{1}{c|}{\id} & \multicolumn{1}{c||}{\ex/\cd} & \multicolumn{1}{c|}{\id} & \multicolumn{1}{c|}{\ex/\cd} \\\hline\hline
gengetopt & 30 & 30 & 30 & 30 \\\hline
 gperf  & 50 & 50 & 50 & 50 \\\hline
 gengen & 56 & 59 & 56 & 59\\\hline
 motti   & 271 & 278 & 271 & 278 \\\hline\hline
 soplex   & 505 & 538 & 505 & 523\\\hline
 povray & 112 & 126 & 112 & 126\\\hline
 dealII   & 148 & 160 & 148 & 160 \\\hline
 omnetpp  & 315 & 315 & 315 & 315\\\hline\hline
MSO                        &2801 & 2888 &  2800 & 2886\\\hline
RWSO                      & 2135  & 2221   & 2134 & 2210  \\\hline
SOTISC 		          & 1058  & 1088 & 1057   & 1087 \\\hline
ICP3                        &1009 & 1090 & 1009  & 1010\\\hline    
ISO                          & 640  & 649 & 618  & 639\\\hline
EMMME                      & 637 & 678 & 629 & 629\\\hline   
ITLSA                        & 565 & 575  & 562 & 566\\\hline                  
\end{tabular}

           \end{subtable}
\end{center}
\end{table}

\begin{table}
\begin{center}
\caption{Count of class types identified for the objects used in the program. (A smaller number indicates more precise information.)}
\label{class-count-data}
\begin{tabular}{|l|r|r||r|r|}
\hline
% Program  		& LoC  		& #Functions 	    	& # Call sites   	& Analysis time in ms    \\ \cline{3-8}
% 			&  	 	& Total & Virtual 	& Total & Virtual 	& IUF & WUF \\
\multirow{2}{*}{Program} & \multicolumn{2}{c||}{\tba} &\multicolumn{2}{c|}{\asb} \\\cline{2-5}
  & \multicolumn{1}{c|}{\id} & \multicolumn{1}{c||}{\ex/\cd} & \multicolumn{1}{c|}{\id} & \multicolumn{1}{c|}{\ex/\cd} \\\hline\hline
gengetopt 	& 328 		& 443 		& 162	 			& 320	 	\\\hline
gperf  		& 550		& 725 		&  398 				& 653 		\\\hline
gengen 		& 84230 	& 88707 	&  65102 			& 66129		\\\hline
motti   	& 22992		& 46274 	&  22520			& 23291		\\\hline\hline
soplex   	& 218547 	& 308779 	& 185053			& 219290 	\\\hline
povray 		& 14422		& 27945 	& 13919 			& 17027		\\\hline
dealII   	& 391729	& 992147	& 390136			& 404188	\\\hline
omnetpp  	& 61072		& 69375		& 60041 			& 60540		\\\hline\hline
MSO             & 170907	& 171018	& 170311 			& 170982	\\\hline
RWSO            & 139798	& 139839	& 106182 			& 108202	\\\hline
SOTISC          & 16739		& 16890		& 16010 			& 16053		\\\hline
ICP3            & 14189		& 23194		& 9412 				& 9595		\\\hline    
ISO             & 6859		& 7051		& 6749 				& 6819		\\\hline
EMMME           & 10301		& 10375		& 10129				& 10239		\\\hline   
ITLSA           & 7946		& 7988 		& 7706 				& 7747		\\\hline                  
\end{tabular}
\end{center}
\end{table}

The count of monomorphic call-sites identified, the count of virtual call edges discovered and the count of class types identified for
the objects used in the program are identical for \cd and \ex.
This concurs well with the formal proof 
of this equivalence presented in Section~\ref{sec:precision-proof}.

It is well known that \asb is more precise than \tba.
This is evidenced by our data as well. However, the improvement of \asb over \tba is marginal for our type dependent client.
For example,
the count of monomorphic call-sites identified by \asb-\id is 548 as against 547 identified by
 \tba-\id in case of ISO.
Similarly, the count of virtual call edges discovered in the call graph by \asb-\id is marginally less than that by \tba-\id.
Thus, type-based abstraction suffices for type dependent clients.

The count of monomorphic call-sites discovered by \id is larger,
the count of virtual call edges discovered in the call graph by \id is smaller, and
the count of class types discovered by \id is far smaller than \cd and \ex using both the abstractions.
It is interesting to note that with type-based abstraction, 
we identify 24\% fewer types in 7 cases with the reduction increasing significantly to
 50\% and 60\%  for \texttt{motti} and \texttt{dealII} programs respectively.
 Similar traits are seen when allocation-site-based abstraction is
used---the reduction in the number of types is 39\% and 49\% 
 for \texttt{gperf} and \texttt{gengetopt} programs respectively.
This is in concurrence with the formal proof 
 that \id is more precise than \cd and \ex (Section~\ref{sec:precision-proof}).

\section{Related Work}
\label{sec:related.work}

We have classified the factors governing precision and efficiency of pointer analysis in Section~\ref{sec:intro}.
Various dimensions related to precision of pointer analysis have been explored in~\cite{Ryder:2003:DPR:1765931.1765945}.
They have not considered the dimension of the quantum of information required.
Our classification focuses on the information computed by an analysis, because our work shows
 that demand-driven method
can be more precise than the exhaustive method.
We have discussed literature related to bidirectional analysis in detail in Section~\ref{sec:formalizing-bidirectional}.
In this section we focus on the literature related to demand-driven pointer analysis, devirtualization, type-based analysis and control flow abstractions.

\subsection{Demand-driven Methods}
The information computed by demand-driven methods is governed
by
a client or an application. 
In some cases, a demand can also be raised internally by an analysis.
Some client driven applications desire quick 
results~\cite{Sridharan:2006:RCP:1133981.1134027, Guyer:2003:CPA:1760267.1760284, 
Sridharan:2005:DPA:1094811.1094817, Yan:2011:DCA:2001420.2001440}, whereas 
taint analysis that deals with security of information flow require highly precise results~\cite{spth_et_al:LIPIcs:2016:6116,Arzt:2014:FPC:2594291.2594299,Huang:2016:DSD:2950290.2950348}.
 In our analysis, demand is not raised by a client but is governed by the application
which specifies the objects whose type information is sought for.

A demand-driven method has always been considered to enhance the efficiency of an analysis as it computes
only the information required to meet a given set of demands.
The work described in~\cite{Sridharan:2005:DPA:1094811.1094817} is client driven and 
their aim is to provide results within a given time limit.
This restriction affects the precision of the information being computed.
With respect to precision, the work described in ~\cite{Agrawal:2002:EDD:647478.727927,Heintze:2001:DPA:378795.378802,Zheng:2008:DAA:1328438.1328464,Duesterwald:1997:PFD:267959.269970,Saha:2005:IDP:1069774.1069785}
suggests that 
the precision of demand-driven methods is similar to that of the corresponding exhaustive methods.

A demand-driven points-to analysis~\cite{Heintze:2001:DPA:378795.378802}
based on Andersen's flow-insensitive points-to analysis~\cite{Andersen94programanalysis}
has been used to construct call graphs in the presence of function 
pointers.
% The client we have chosen is also call graph construction in the presence of
% virtual functions.
They have proved precision equivalence between demand-driven method and the exhaustive method.
This holds true for the subset of C language 
they have considered which does not deal with objects on the heap.

Liveness-based points-to analysis~\cite{lfcpa} has also been formulated with respect to the language which does not deal with 
objects on heap.
They perform the analysis in a demand-driven fashion by computing points-to information for the pointers that 
are live.
This computes a smaller amount of points-to information at each program point by 
safely ignoring the statements that are not related to the pointer variables
that are live.
Both these works~\cite{lfcpa,Heintze:2001:DPA:378795.378802} are governed by
the philosophy of raising demands for pointer indirection $*x$ appearing on \text{\sf lhs}, which
we refer to as the conventional speculation strategy.
Extending the notion of conventional speculation of demands to a language that deals with objects on heap,
results in achieving precision equivalent to the exhaustive method.
However, there is a loss of precision as compared to our proposed speculation,
as discussed in Section~\ref{sec:causes.of.imprecision} and Section~\ref{sec:precision-proof}.

Most of the demand-driven 
methods~\cite{spth_et_al:LIPIcs:2016:6116,Yan:2011:DCA:2001420.2001440,Sridharan:2005:DPA:1094811.1094817,Sridharan:2006:RCP:1133981.1134027}
draw comparison in terms of precision and efficiency with other demand-driven methods and 
do not compare with their exhaustive counterparts.
This is because the main focus with respect to demand-driven methods has been to improve the efficiency of the analysis.
However, our work emphasizes on the precision gain of demand-driven method along with achieving efficiency.

\subsection{Devirtualization}
Virtual function calls can be resolved using a type analysis based on class 
hierarchies~\cite{Aigner:1996:EVF:646154.679544,Bacon:1996:FSA:236337.236371,Dean:1995:OOP:646153.679523,Tip:2000:SPC:353171.353190}.
These analyses are efficient but are imprecise.
Hence, pointer analyses have also been used to construct call graphs in the
presence of
function pointers~\cite{Heintze:2001:DPA:378795.378802, callahan, p6, p5, p7, ryder} 
or virtual functions~\cite{Diwan:2001:UTA:383721.383732,Sundaresan:2000:PVM:353171.353189,Pande94, Pande95,Harini.types}.
Our focus is different from these methods because we wish to show the precision gain 
that can be achieved by a demand-driven method.

As noted by Hind ~\cite{Hind:2001:PAH:379605.379665}, it is more useful to create generic solutions instead
of solutions for a specific application. Our formulation is indeed generic and is useful
 for a wide range of demand-driven problems.
Specifically, the variant using the type-based abstraction is suitable for applications that rely on type
information like call graph construction, devirtualization, and identifying safety of casts (determining if the
cast may fail at runtime). 
All these require run time type information (RTTI) which needs the 
type of pointees where the pointer could point-to. 

\subsection{Type-based Abstraction}
Type-based abstraction suffices for type dependent applications.
A survey of type-based analysis is described in~\cite{Palsberg:2001:TAA:379605.379635}.
They discuss the fundamental type-based analysis for object oriented programs like CHA and RTA.
The work described in~\cite{Diwan:2001:UTA:383721.383732} have found significant improvement with the use of types for 
 virtual function resolution for  Modula-3. 
However, a language like C++ requires a more powerful alias analysis that considers pointer 
dereferences, the address-of operator, and pointer arithmetic.
Structure-sensitive points-to analysis~\cite{Balatsouras2016} uses type information to filter the points-to sets that results 
from analysis imprecision. 
Thus, they create multiple abstract objects per allocation-site based on the type information.
Types have also been looked at in a different light by using it to distinguish between calling contexts~\cite{lhotak}.
Their type-sensitivity is scalable and achieves a good precision as compared to object-sensitivity.
The use of types instead of allocation-sites makes the analysis more efficient 
without compromising on precision, 
as evidenced by our results as well.
Type-based analyses are simple as the type information is easily available and makes the analysis 
efficient~\cite{Palsberg:2001:TAA:379605.379635,Diwan:2001:UTA:383721.383732, lhotak}.

\subsection{Control Flow Abstractions}
Precision and efficiency requirement of an analysis helps select the right control flow abstractions.
Two well known analysis in the literature are Andersen's subset based points-to analysis~\cite{Andersen94programanalysis}
and Steensgaard's equality based points-to analysis~\cite{Steensgaard:1996:PAA:237721.237727}.
Both these analysis are flow-insensitive, computing imprecise results of varying degree.
Other analyses can be built on top of such quick and imprecise analyses to improve 
the precision~\cite{Hardekopf:2011:FPA:2190025.2190075,Hasti:1998:USS:277650.277668,Tan:2017:EPP:3062341.3062360}.
Path-sensitive analysis built on top of flow- and context-sensitive 
level by level analysis~\cite{level-by-level} is discussed in ~\cite{Sui:2011:SSP:2183641.2183659}.

Flow-sensitivity and context sensitivity increases the precision of an analysis as discussed in 
~\cite{lfcpa,Kahlon:2008:BTS:1375581.1375613,Zhu:2005:TSF:1065579.1065798,Li:2013:PSC:2491894.2466483,level-by-level}.
Various strategies have been explored to achieve context-sensitivity.
They include performing cloning based context sensitive analysis ~\cite{Whaley:2004:CCP:996841.996859},
refinement based context sensitive analysis ~\cite{Sridharan:2006:RCP:1133981.1134027}, and
imposing a restriction on the length of the call-strings for efficiency~\cite{Shivers91control-flowanalysis}.
However, restriction on the length of the call strings could introduce cycles in the program which could make the analysis slow 
as described in ~\cite{Oh:2010:AML:1840756.1840757} 
indicating that the need to compromise on precision to achieve efficiency might not always be true.

Object sensitivity~\cite{Milanova:2005:POS:1044834.1044835} is a different form of context sensitivity which is efficient and
improves on the precision of flow-insensitive analysis for Java.
Object sensitive analysis~\cite{Milanova:2005:POS:1044834.1044835} distinguishes between procedures based on the invoking object.
It uses allocation-site-based abstraction to model heap objects. As 
rightly noted by~\cite{lhotak}, apt name for object sensitive analysis would be allocation-site analysis.
Object sensitive analysis is based on the insight (which is left implicit) that data abstractions 
can be used for control flow abstractions.
This holds true for object oriented languages.

%%%\subsection{Level-by-level Analysis}
%%%\UKremark{Can this be merged into one of the above subsubsections? Why should it stand out on its own?}
%%%Level by level points-to analysis
%%%\UKdel{described in}~\cite{level-by-level} computes points-to information one level at a time.
%%%However, in case of dynamically allocated structures, levels are not known statically and
%%%cannot be computed without the knowledge of the memory layout.
%%%\UKremark{Pritam's reading suggests that they use allocation site based abstraction and
%%%use a flow-insensitive analysis analysis. Dynamically allocated structures then show up in a cycle in the
%%%flow-insensitive points-to graph and are assigned the same level implying that their pointees are computed
%%%in the same round. May be useful to discuss it with her to finalize this part of your section.}
%%%In spirit, our proposed analysis computes information level by level for dynamically allocated structures.
%%%We do not use the notion of static levels decided a-priori and discover them 
%%%% during analysis as and when required 
%%%with every 
%%%round of our analysis.

\section{Conclusions and Future Work}
\label{sec:conclusions}
Demand-driven methods are more efficient than the exhaustive methods because they compute only the information required
for meeting a set of demands.
In the presence of data abstraction, this reduction in the information required, 
can reduce the imprecision of a demand-driven method.
The conventional demand-driven methods raise additional demands speculatively
in the presence of indirect assignments. Such a 
speculation is inevitable for languages like C or C++ because indirect assignments could
modify variables whose addresses are taken.
We have proposed an alternative method that does not speculate 
a
 demand for indirect assignments.
Instead, it raises demands for address-of a variable which does not affect the precision of an analysis.
Thus, a demand-driven method is not only efficient as is always thought of, but it can also 
improve the precision of an analysis.

A demand-driven method is inherently bidirectional with demands flowing against the control flow and information 
related to the demands flowing along the control flow.
We classify bidirectional dependencies in order to understand the bidirectional flows of information.
We define \mop solution for bidirectional analyses which mandates defining paths for bidirectional analyses.
Since there is flow of information in both the directions, the definition of paths in \mop for unidirectional
analyses is not applicable for bidirectional analyses.
We define paths in \mop for bidirectional analyses as \emph{qualified control flow paths}.
This notion of paths is applicable to all demand-driven methods including slicing, taint analysis and 
liveness based points-to analysis. 
The definition of \mop facilitates proving the soundness of bidirectional methods; indeed we use it
to prove the soundness of our method.
%%%%\UKdel{It can also be used to define \mop for type inferencing
%%%%without the cross-over component.
%%%%However, analyses like PRE, coalescing, stack usage optimizations and 
%%%%type inferencing defined with cross-over component 
%%%%are different kinds of bidirectional analyses where the
%%%%information flow is not restricted only to the control flow paths in the program.
%%%%For such problems, paths need to be characterised in an undirected graph and has been referred to as 
%%%%\emph{information flow paths} in the literature.
%%%%}

We have formulated our demand-driven method as a bidirectional data flow analysis and
 implemented it for C++. We have compared its effectiveness with that
of the conventional demand-driven method and the exhaustive method.
We have used the following precision metric:
\begin{inparaenum}[(a)]
\item the count of monomorphic call-sites discovered,
\item the count of virtual call edges identified in the call graph, and
\item the count of class types identified for objects used in the program.
\end{inparaenum}
We have found that the precision of the 
conventional demand-driven
method and the exhaustive method for the demands raised is equivalent. 
Our proposed demand-driven method is more precise 
than the other two variants on all factors in the metric.
It is also more efficient than the exhaustive method and the conventional
demand-driven method with 
a small number of exceptions.

Our concepts are generic and are applicable to Java too.
We demonstrate this by formulating
our analysis for Java.
In case of Java, there is no need for any speculation due to the
absence of a pointer to a variable because Java does not support the
address-of (\&) operator thereby prohibiting indirect assignment to variables.
Thus a demand-driven method 
is expected to be more precise than
its exhaustive counterpart in case of Java. 
We have established this claim formally.

%%%%{\green
%%%%Although
%%%%we have presented and implemented our analysis in the context of C++, the concepts involved are
%%%%generic and are applicable to Java too.
%%%%In case of Java, there is no need for any speculation due to absence of a pointer to a variable 
%%%%because Java does not support the
%%%%address-of (\&) operator.
%%%%Thus a
%%%%demand-driven method 
%%%%is expected to
%%%%be more precise 
%%%%than
%%%%its exhaustive counterpart in case of Java. 
%%%%}

Some possible future directions of this work are as follows :
exploring the precision and efficiency trade-of by making 
the demand-driven alias analysis
context-sensitive~\cite{call_string.vbt,vasco};
creating a pointer-based slicing method that computes the relevant pointer
information precisely on a need basis; and 
using the demand-driven alias analysis for applications 
like taint analysis that 
require highly precise pointer information.
%%%%In future,
%%%%we would like to explore the possibility
%%%%of using value contexts~\cite{call_string.vbt,vasco} to perform 
%%%%demand-driven alias analysis context sensitively.
%%%%We expect it to improve the
%%%%precision significantly and hope to contain the inefficiency through better use of
%%%%data structures. 
%%%%We would also like to investigate the possibility
%%%%of creating a pointer-based slicing method that computes the relevant pointer information
%%%%precisely on a need basis.
%%%%{\blue We would also like to investigate the possibility 
%%%%of using our proposed method for application like taint analysis
%%%%that rely on highly precise pointer information.}

\begin{acks}
Swati Jaiswal is partially supported by a TCS Research Fellowship.
\end{acks}

% \newpage

%% Bibliography
% \bibliography{bibfile}

% \renewcommand{\bibname}{References}
\bibliography{demand-wisely}
\bibliographystyle{ACM-Reference-Format}

%% Appendix
\appendix
\section{Detailed Working of Demand-driven Approach Using Different Speculation Strategy}\label{detailed-working}
We present working of the conventional demand-driven alias
analysis~\cite{Heintze:2001:DPA:378795.378802,lfcpa,Hirzel:2002:UTL:586088.586089} 
in Figure~\ref{m-eg-working-lfcpa}.
Conventional methods raise demand for indirect assignment statements of the form $*p = x$ and $y\rightarrow f = new\ Z$,
speculating that such a demand could be an alias of the demand reaching that statement.
We have seen that such a speculation is necessary to ensure soundness in Example~\ref{exmp:demand.imprecision.1} in Section~\ref{sec:speculation}.
The virtual call statement at line 28 raises demand for $ t$, which in turn raises demand for $z$ at line 27.
Statement at line 15 kills the demand for $ z$ and raises demand for \texttt{p} speculating that $p$ could point to \texttt{z}.
Points-to information $\{p\rightarrow z\}$ is propagated in the forward direction.
Since \texttt{p} points-to \texttt{z}, demand for \texttt{x} will be raised by line 15.
Line 05 computes the points-to information $x\rightarrow X$. 
Line 15 uses that points-to information
and computes $z\rightarrow X$.
Now abstract name for $ z\rightarrow f$ can be computed at line 27 and thus demand for abstract name $ X.f$ is raised.
Speculating $y\rightarrow f$ at line 24 to be an alias of $X.f$, demand for $y$ is raised.
Because $x$ points-to $X$, points-to information $X\!\stackrel{f}\rightarrow Y$ is computed at line 23.
The spurious information that $ y$ points-to $ X$ results in computing the points-to
information $X\!\stackrel{f} \rightarrow Z$ at line 24. 
The spurious node alias between $x$ and $y$ results in spurious link aliases.
Thus imprecise points-to information $t \rightarrow \{X, Y\}$ is computed.

We present working of our proposed improved demand-driven method in Figure~\ref{m-eg-working-our-demand}.
The virtual call statement at line 28 raises demand for $ t$ which in turn raises demand for $ z$ and $ \&z$ at line 27.
Instead of speculating and raising demands at indirect assignment statements, we raise demand for address-of a variable,
when address of such a variable is taken in the program. 
Address of a variable seeks to find pointers to it. Thus demand for $\&z$ helps compute the points-to information
$q\rightarrow z$ and $p\rightarrow z$ at lines 03 and 04 respectively.
Since $ p$ points-to $z$, demand for $x$ will be raised by line 15.
This identifies points-to information $x\rightarrow X$ at line 05.
This information is used by line 15 to compute points-to information $z\rightarrow X$.
Now abstract name for $z\rightarrow f$ can be identified at line 27 and thus demand for abstract name $ X.f$ is raised.
Thus points-to information that $X\!\stackrel{f} \rightarrow Y$ is computed by line 23.
Note that since demand for $y$ has not been raised, abstract name for $y\rightarrow f$  cannot be deduced.
Thus statement in line 24 is not used to compute points-to information.
This helps in eliminating spurious link aliases by avoiding spurious node aliases which gets introduced due to data abstractions.
Thus, precise points-to information $t\rightarrow \{Y\}$ is computed.

\begin{figure*}[!t]
\centering

\renewcommand{\codeLine}[4]{\protect#1\protect#3}
\footnotesize
\begin{tabular}[t]{|l|c|c|c|c|c|c|}
\hline
\multirow{3}{*}{Program}
	& \multicolumn{6}{c|}{Data flow information at the program points just after the numbered statements}
	\\ \cline{2-7}
	& \multicolumn{2}{c|}{Round \#1}
	& \multicolumn{2}{c|}{Round \#2}
	& \multicolumn{2}{c|}{Round \#3}
	\\ \cline{2-7}
%%%\begin{tabular}[t]{|l|@{}c@{}|@{}c@{}|@{}c@{}|@{}c@{}|@{}c@{}|@{}c@{}|}
%%%\hline
%%%\multirow{3}{*}{Line}
%%%	& \multicolumn{6}{c|}{Data flow information at the program points just after the numbered statements}
%%%	\\ \cline{2-7}
%%%	& \multicolumn{2}{@{}c@{}|}{Round \#1}
%%%	& \multicolumn{2}{@{}c@{}|}{Round \#2}
%%%	& \multicolumn{2}{@{}c@{}|}{Round \#3}
%%%	\\ \cline{2-7}
	& \multicolumn{1}{@{}c@{}|}{\;\raisebox{-.3mm}{Demand}\;} & \multicolumn{1}{c|}{\raisebox{-.3mm}{PTG edges}}
	& \multicolumn{1}{@{}c@{}|}{\;\raisebox{-.3mm}{Demand}\;} & \multicolumn{1}{@{}c@{}|}{\raisebox{-.3mm}{PTG edges}}
	& \multicolumn{1}{@{}c@{}|}{\;\raisebox{-.3mm}{Demand}\;} & \multicolumn{1}{@{}c@{}|}{\raisebox{-.3mm}{PTG edges}}
	\\ \hline

	03 : \texttt{q = \&z;}
	& \rnode{n01}{$\emptyset$}
	& \rnode{n02}{$\emptyset$}
	& \rnode{n03}{$\emptyset$}
	& \rnode{n04}{$\emptyset$}
	& \rnode{n05}{$\{\texttt{X.f}\}$}
	& \rnode{n06}{$\emptyset$}
	\\ \hline
	04 : \texttt{p = \&z;}
	& \rnode{n11}{$\{\tt p\}$}
	& \rnode{n12}{$\{\tt p\!\rightarrow z\}$}
	& \rnode{n13}{$\emptyset$}
	& \rnode{n14}{$\emptyset$}
	& \rnode{n15}{$\{\tt X.f\}$}
	& \rnode{n16}{$\emptyset$}
	\\ \hline
	05 : \texttt{x = new X;}
	& \rnode{n21}{$\{\tt p\}$}
	& \rnode{n22}{$\{\tt p\!\rightarrow z\}$}
	& \rnode{n23}{$\{\tt x\}$}
	& \rnode{n24}{$\{\tt x\!\rightarrow X\}$}
	& \rnode{n25}{$\{\tt X.f\}$}
	& \rnode{n26}{$\emptyset$}
	\\ \hline
	14 : \texttt{y = new X;}
	& \rnode{n31}{$\{\tt p\}$}
	& \rnode{n32}{$\{\tt p\!\rightarrow z\}$}
	& \rnode{n33}{$\{\tt x\}$}
	& \rnode{n34}{$\{\tt x\!\rightarrow X\}$}
	& \rnode{n35}{$\{\tt X.f,y\}$}
	& \rnode{n36}{$\{\tt y\!\rightarrow X\}$}
	\\ \hline
	15 : \texttt{*p = x;}
	& \rnode{n41}{$\{\tt z\}$}
	& \rnode{n42}{$\{\tt p\!\rightarrow z\}$}
	& \rnode{n43}{$\emptyset$}
	& \rnode{n44}{$\{\tt x\!\rightarrow X,z\!\rightarrow X\}$}
	& \rnode{n45}{$\{\tt X.f,y\}$}
	& \rnode{n46}{$\{\tt y\!\rightarrow X\}$}
 	\\ \hline
	23 : \texttt{x->f = new Y;}
	& \rnode{n51}{$\{\tt z\}$}
	& \rnode{n52}{$\{\tt p\!\rightarrow z\}$}
	& \rnode{n53}{$\emptyset$}
	& \rnode{n54}{$\{\tt x\!\rightarrow X,z\!\rightarrow X\}$}
	& \rnode{n55}{$\{\tt X.f,y\}$}
	& \rnode{n56}{$\{\tt y\rightarrow X,X\!\stackrel{f}{\rightarrow}\!Y\}$}
 	\\ \hline

	24 : \texttt{y->f = new Z;}
	& \rnode{n61}{$\{\tt z\}$}
	& \rnode{n62}{$\{\tt p\!\rightarrow z\}$}
	& \rnode{n63}{$\emptyset$}
	& \rnode{n64}{$\{\tt x\!\rightarrow X,z\!\rightarrow X\}$}
	& \rnode{n65}{$\{\tt X.f\}$}
	& \rnode{n66}{$\{\tt y\!\rightarrow X,X\!\stackrel{f}{\rightarrow}\!\{Y, Z\}\}$}
	\\ \hline

	27 : \texttt{t = z->f;}
	& \rnode{n71}{$\{\tt t\}$}
	& \rnode{n72}{$\{\tt p\!\rightarrow z\}$}
	& \rnode{n73}{$\emptyset$}
	& \rnode{n74}{$\{\tt x\!\rightarrow X,z\!\rightarrow X\}$}
	& \rnode{n75}{$\emptyset$}
	& \rnode{n76}{$\{\tt y\!\rightarrow X, X\!\stackrel{f}{\rightarrow}\!\{Y, Z\}, t\!{\rightarrow}\{Y, Z\}\}$}
	\\ \hline

	28 : \texttt{t->vfun ();}

	& \rnode{n81}{}
	& \rnode{n82}{}
	& \rnode{n83}{}
	& \rnode{n84}{}
	& \rnode{n85}{}
	& \rnode{n86}{}
	\\ \hline
\end{tabular}
\caption{Multiple rounds of demand and points-to graph (PTG) propagation with conventional demand-driven method for static resolution of
virtual call at line 28 of the program in Figure~\ref{m-eg-type}.
The final PTG at a program point is the union of all edges added in each round at that program point.
	Demands are propagated in the backward direction whereas PTG edges are propagated in the forward direction.
}
\label{m-eg-working-lfcpa}
\end{figure*}

\begin{figure*}[!t]
\centering
\footnotesize
\begin{tabular}[t]{|l|c|c|c|c|c|c|}
\hline
\multirow{3}{*}{Program}
	& \multicolumn{6}{c|}{Data flow information at the program points just after the numbered statements}
	\\ \cline{2-7}
	& \multicolumn{2}{c|}{Round \#1}
	& \multicolumn{2}{c|}{Round \#2}
	& \multicolumn{2}{c|}{Round \#3}
	\\ \cline{2-7}

	& \multicolumn{1}{@{}c@{}|}{\;\raisebox{-.3mm}{Demand}\;} & \multicolumn{1}{@{}c@{}|}{\raisebox{-.3mm}{PTG edges}}
	& \multicolumn{1}{@{}c@{}|}{\;\raisebox{-.3mm}{Demand}\;} & \multicolumn{1}{@{}c@{}|}{\raisebox{-.3mm}{PTG edges}}
	& \multicolumn{1}{@{}c@{}|}{\;\raisebox{-.3mm}{Demand}\;} & \multicolumn{1}{@{}c@{}|}{\raisebox{-.3mm}{PTG edges}}
	\\ \hline

	03 : \texttt{q = \&z;}
	& \rnode{n01}{$\{\tt z, \&z\}$}
	& \rnode{n02}{$\{\tt q\!\rightarrow z\}$}
	& \rnode{n03}{$\emptyset$}
	& \rnode{n04}{$\emptyset$}
	& \rnode{n05}{$\{\texttt{X.f}\}$}
	& \rnode{n06}{$\emptyset$}
	\\ \hline
	04 : \texttt{p = \&z;}
	& \rnode{n11}{$\{\tt z, \&z\}$}
	& \rnode{n12}{$\{\tt q\!\rightarrow z, p\!\rightarrow z\}$}
	& \rnode{n13}{$\emptyset$}
	& \rnode{n14}{$\emptyset$}
	& \rnode{n15}{$\{\tt X.f\}$}
	& \rnode{n16}{$\emptyset$}
	\\ \hline
	05 : \texttt{x = new X;}
	& \rnode{n21}{$\{\tt z, \&z\}$}
	& \rnode{n22}{$\{\tt  q\!\rightarrow z, p\!\rightarrow z\}$}
	& \rnode{n23}{$\{\tt x\}$}
	& \rnode{n24}{$\{\tt x\!\rightarrow X\}$}
	& \rnode{n25}{$\{\tt X.f\}$}
	& \rnode{n26}{$\emptyset$}
	\\ \hline
	14 : \texttt{y = new X;}
	& \rnode{n31}{$\{\tt z, \&z\}$}
	& \rnode{n32}{$\{\tt  q\!\rightarrow z, p\!\rightarrow z\}$}
	& \rnode{n33}{$\{\tt x\}$}
	& \rnode{n34}{$\{\tt x\!\rightarrow X\}$}
	& \rnode{n35}{$\{\tt X.f\}$}
	& \rnode{n36}{$\emptyset$}
	\\ \hline
	15 : \texttt{*p = x;}
	& \rnode{n41}{$\{\tt z, \&z\}$}
	& \rnode{n42}{$\{\tt  q\!\rightarrow z, p\!\rightarrow z\}$}
	& \rnode{n43}{$\emptyset$}
	& \rnode{n44}{$\{\tt x\!\rightarrow X,z\!\rightarrow X\}$}
	& \rnode{n45}{$\{\tt X.f\}$}
	& \rnode{n46}{$\emptyset$}
 	\\ \hline
	23 : \texttt{x->f = new Y;}
	& \rnode{n51}{$\{\tt z, \&z\}$}
	& \rnode{n52}{$\{\tt  q\!\rightarrow z, p\!\rightarrow z\}$}
	& \rnode{n53}{$\emptyset$}
	& \rnode{n54}{$\{\tt x\!\rightarrow X,z\!\rightarrow X\}$}
	& \rnode{n55}{$\{\tt X.f\}$}
	& \rnode{n56}{$\{\tt X\!\stackrel{f}{\rightarrow}\!Y\}$}
 	\\ \hline

	24 : \texttt{y->f = new Z;}
	& \rnode{n61}{$\{\tt z, \&z\}$}
	& \rnode{n62}{$\{\tt  q\!\rightarrow z, p\!\rightarrow z\}$}
	& \rnode{n63}{$\emptyset$}
	& \rnode{n64}{$\{\tt x\!\rightarrow X,z\!\rightarrow X\}$}
	& \rnode{n65}{$\{\tt X.f\}$}
	& \rnode{n66}{$\{\tt X\!\stackrel{f}{\rightarrow}\!Y\}$}
	\\ \hline

	27 : \texttt{t = z->f;}
	& \rnode{n71}{$\{\tt t\}$}
	& \rnode{n72}{$\{\tt  q\!\rightarrow z, p\!\rightarrow z\}$}
	& \rnode{n73}{$\emptyset$}
	& \rnode{n74}{$\{\tt x\!\rightarrow X,z\!\rightarrow X\}$}
	& \rnode{n75}{$\emptyset$}
	& \rnode{n76}{$\{\tt X\!\stackrel{f}{\rightarrow}\!Y, t\!\rightarrow Y\}$}
	\\ \hline

	28 : \texttt{t->vfun ();}

	& \rnode{n81}{}
	& \rnode{n82}{}
	& \rnode{n83}{}
	& \rnode{n84}{}
	& \rnode{n85}{}
	& \rnode{n86}{}
	\\ \hline

\end{tabular}
\caption{Multiple rounds of demand and points-to graph (PTG) propagation with our demand-driven method for static resolution of
virtual call at line 28 of the program in Figure~\ref{m-eg-type}.
The final PTG at a program point is the union of all edges added in each round at that program point.
	Demands are propagated in the backward direction whereas PTG edges are propagated in the forward direction.
}
\label{m-eg-working-our-demand}
\end{figure*}

\end{document}